\definecolor{block-gray}{gray}{0.89}
\newtcolorbox{blockquote}{colback=block-gray,grow to right by=-1mm,grow to left by=-1mm,boxrule=0pt,boxsep=0pt,breakable}
\newtheorem{claim}{Claim}
\newcommand{\qpt}{\ensuremath{\mathsf{QPT}}}
\newcommand{\qed}{\hfill$\blacksquare$}
\newcommand{\lxy}[2]{\ensuremath{L(#1,#2)}}
\newenvironment{defi}{\begin{definition}}
{\renewcommand{\qed}{\hfill\ensuremath{\square}}\qed\end{definition}}
\DeclareMathAlphabet{\mathbbold}{U}{bbold}{m}{n}
\renewcommand{\hilbert}{\ensuremath{\mathscr{H}}}
\newcommand{\idt}{\ensuremath{}\mathbbold{1}}
\renewcommand{\id}{\idt}
\newcommand{\spr}{\ensuremath{\Sigma\text{--protocol}}}
\newcommand{\sprs}{\ensuremath{\Sigma\text{--protocols}}}
\newcommand{\sprnm}{\ensuremath{\Sigma_{n,m}\text{--protocol}}}
\newcommand{\sprsnm}{\sprnm s}
\newcommand{\siguni}{\ensuremath{\Sigma}\text{--universal}}
\newcommand{\siguninm}{\ensuremath{\Sigma_{n,m}}\text{--universal}}
\newcommand{\fsl}{\ensuremath{\text{\sf FS}}}
\newcommand{\FSH}[2]{\ensuremath{\Pi^{\fsl}_{#2}[#1]}}
\newcommand{\QFS}[1]{\ensuremath{\Pi^{\sf QFS}}[#1]}
\newcommand{\QFSs}{\ensuremath{\Pi^{\sf QFS}}}
\newcommand{\lossykeys}[1]{\ensuremath{S_{\text{lossy}}}}
\newcommand{\injectivekeys}[1]{\ensuremath{S_{\text{inj}}}}
\newcommand{\wotro}{\textsf{WOTRO}}
\newcommand{\pwotro}[1]{\ensuremath{\Pi_{\textsf{WRO}}^{#1}}}
\newcommand{\rom}{\ensuremath{\text{ROM}}}
\newcommand{\crs}{\ensuremath{\text{CRS}}}
\newcommand{\crrs}{\ensuremath{\text{CR\$}}}
\newcommand{\crqs}{\ensuremath{\text{CRQ\$}}}
\newcommand{\crqss}{\ensuremath{\text{CRQS}}}
\newcommand{\sff}[1]{\ensuremath{\mathsf{#1}}}
\newcommand{\dprover}{\ensuremath{\tilde{\prover}}}
\newcommand{\negli}[1]{\ensuremath{\mathsf{negl}(#1)}}
\newcommand{\oracle}{\ensuremath{\mathcal{O}}}
\newcommand*{\EE}{\mathrm{e}}
\newcommand{\epr}[2]{\ensuremath{\textsf{EPR}^{#1}_{#2}}}
\newcommand{\N}{\ensuremath{\mathbb{N}}}
\newcommand{\fq}{\mathbb{F}_q}
\newcommand{\Bg}{\ensuremath{\mathcal{B}}}
\newcommand{\dens}[1]{\ensuremath{\text{D}(#1)}}
\newcommand{\thunder}{\ensuremath{\text{\RainCloud}}}
\newcommand{\lightn}{\ensuremath{\text{\Lightning}}}
\newcommand{\qlgen}{\ensuremath{\mathsf{QLGen}}}
\newcommand{\qlsetup}{\ensuremath{\mathsf{QLSetup}}}
\newcommand{\qlver}{\ensuremath{\mathsf{QLVer}}}
\newcommand{\tqlgen}{\ensuremath{\mathsf{tQLGen}}}
\newcommand{\tqlsetup}{\ensuremath{\mathsf{tQLSetup}}}
\newcommand{\tqlver}{\ensuremath{\mathsf{tQLVer}}}
\newcommand{\verf}{\ensuremath{\mathsf{Ver}}}
\newcommand{\ql}{\ensuremath{\mathsf{QL}}}
\newcommand{\tql}{\ensuremath{\mathsf{tQL}}}
\newcommand{\mach}[1]{\ensuremath{\mathsf{#1}}}
\newcommand{\macho}[2]{\ensuremath{\mach{#1}^{#2}}}
\newcommand{\collshn}{\ensuremath{\mathbb{G}^{n,m}_\Gamma}}
\newcommand{\fbb}{$f$-BB}
\newcommand{\mytag}[2]{%
  \text{#1}%
  \@bsphack
  \begingroup
    \@onelevel@sanitize\@currentlabelname
    \edef\@currentlabelname{%
      \expandafter\strip@period\@currentlabelname\relax.\relax\@@@%
    }%
    \protected@write\@auxout{}{%
      \string\newlabel{#2}{%
        {#1}%
        {\thepage}%
        {\@currentlabelname}%
        {\@currentHref}{}%
      }%
    }%
  \endgroup
  \@esphack
}
\definecolor{mygreen}{rgb}{0,0.6,0}
\definecolor{mygray}{rgb}{0.1,0.1,0.1}
\definecolor{mymauve}{rgb}{0.58,0,0.82}
\tikzstyle{porte} = [fill=blue!25, draw]
\title{Fiat-Shamir for Proofs Lacks a Proof Even in the Presence of Shared Entanglement}
\author{Frédéric Dupuis}
\affiliation{Université de Montréal (DIRO), Montréal, Canada}
\author{Philippe Lamontagne}
\affiliation{National Research Council Canada, Ottawa, Canada}
\affiliation{Université de Montréal (DIRO), Montréal, Canada}
\author{Louis Salvail}
\affiliation{Université de Montréal (DIRO), Montréal, Canada}
\begin{document}
\maketitle

\begin{abstract}
We explore the cryptographic power of arbitrary shared physical resources. The most general such resource is access to a fresh entangled quantum state
at the outset of each protocol execution. We call this the \emph{Common Reference Quantum State (\crqss)} model, in analogy to the well-known \emph{Common Reference String (\crs)}. The \crqss{} model  is a natural generalization of the \crs{} model but appears to be more powerful: in the two-party setting, a \crqss{} can sometimes exhibit properties associated with a Random Oracle queried once by measuring a maximally entangled state in one of many mutually unbiased bases.
We formalize this notion as a \emph{Weak One-Time Random Oracle} (\wotro{}), where we only ask of the $m$--bit output to have \emph{some} randomness when conditioned on the $n$--bit input.

We show that when $n-m\in\omega(\lg n)$, any protocol for \wotro{} in the
\crqss{} model can be attacked by an (inefficient) adversary. Moreover, our
adversary is efficiently simulatable, which rules out the possibility of proving
the computational security of a scheme by a fully black-box reduction to a cryptographic game assumption.
On the other hand, we introduce a non-game quantum assumption for hash functions that implies \wotro{} in the \crqs{} model (where the \crqss\ consists only of EPR pairs). We first build a statistically secure \wotro{} protocol where $m=n$, then hash the output.

The  impossibility of \wotro{} has the following consequences. First, we show
the fully-black-box impossibility of a \emph{quantum} Fiat-Shamir transform,
extending the impossibility result of Bitansky \emph{et al.} (TCC 2013) to the
\crqss{} model. Second, we show a fully-black-box impossibility result for a
strenghtened version of quantum lightning (Zhandry, Eurocrypt 2019) where quantum
\emph{bolts} have an additional parameter that cannot be changed without
generating new bolts. Our results also apply to $2$--message protocols in the plain model.
\end{abstract}

\tableofcontents

\section{Introduction}
\label{sec:intro}
Cryptographic protocols can sometimes only be proven secure if some of their components are assumed to be ideal. For example, some protocols that make use of cryptographic hash functions can be proven secure if they are modelled as ideal random functions provided as a black box; this is called the \emph{random oracle model (ROM)}.  Another, but weaker, idealized resource is the \emph{common random string model (CRS)}, in which the participants get a freshly generated random string at the outset of each protocol execution.  Several cryptographic applications have  their most efficient protocols proven secure when provided access to such extra  resources, as all known protocols in the plain model are either  inefficient, or do not satisfy all security requirements. 

\paragraph{The Random Oracle Model (ROM).} 
Introduced by Bellare and Rogaway\,\cite{BR93} as a way to idealize cryptographic hash functions, 
the model has  been shown  to provide formal security proofs for a wide variety 
of   cryptographic protocols  that are not known to be  secure
under standard assumptions in the plain model. A random oracle
models a hash function
as one whose value for every input is chosen uniformly and independently at random
and afresh before each protocol execution. 
This is meant to model the assumption that a hash function is random, and that looking at its source code yields nothing useful beyond its input-output behaviour. Rigorous security proofs for practical and efficient applications like \emph{Full Domain Hash signatures (FDH-Signatures)}, \emph{Optimal Asymmetric Encryption Padding (OAEP)}, Schnorr's signatures\,\cite{Schnorr89,Seurin12}, and Fischlin's NIZK-PoK\,\cite{Fischlin05} are easy to obtain in the ROM but are still missing in the
plain model. The random oracle is a powerful primitive that provides all  the main  properties  of a cryptographic hash function at once: collision resistance, preimage resistance, and pseudorandomness. It also has properties that can never be satisfied by any hash function: programmability,  (query) extractability (also known as observability), and freshness.   

\paragraph{Common Reference String Model.}
A CRS  is nothing more than a fresh random string that
materializes upon each protocol execution (freshness) and to which all players
have access. This model was originally proposed by Blum, Feldman, and
Micali\,\cite{BFM88} to help remove interaction in zero-knowledge proof systems. 
 In \cite{BSMP91},
the model was shown to allow for non-interactive zero-knowledge 
for all NP languages.
The works of~\cite{Can01,CF01,DN02} extend  its  use as 
a resource enabling universally composable cryptographic
primitives. The common reference string model comes in
two main flavours. The weakest  consists of a random and uniform string of
polynomial length (in the security parameter) while the strongest 
consists of a string of polynomial length picked from some efficiently sampleable distribution. The
first flavour will be denoted by the \emph{\crrs\ model (i.e. the Common Random
  String Model)} while the second flavour will be denoted by the \emph{\crs\
  model (i.e. the Common Reference String Model)}.  

A customary application of both  the CRS model and the ROM is  the 
removal of interaction in  interactive proof systems. As mentioned above,
the CRS model was originally designed  for that purpose~\cite{BFM88}. 
Notice that 
 a random oracle  is a much more powerful  resource than  a CRS, since it 
provides random access to an exponential number of them. 
However, a random oracle is an immaterial resource as its properties
could never be satisfied by any efficient local process. This is in sharp
contrast to a CRS, which can be implemented in practice: we only need a way to publish fresh and public random strings of polynomial length.
Unfortunately, 
some basic and useful cryptographic primitives  
are only known to be securely realizable  in the ROM.

\paragraph{When Entanglement Behaves {Like} a Random Oracle.}
In order to see why entanglement could outperform a $\crs$ in some settings,
consider the following scenario where it seems to provide as much
randomness as the random oracle. Suppose Alice prepares $n$ EPR pairs of qubits and sends half of each pair to Bob. Each can then view their $n$ qubits as an access
to a weak random oracle implementing a random function
$f:\{0,1\}^n\rightarrow\{0,1\}^n$. The value $f(a)$ can be obtained the
following way. To each possible value $a\in\{0,1\}^n$, we associate a publicly
known orthonormal basis $\theta_a$ for $n$ qubits. The value of $f(a)$ is simply
defined as the outcome of the measurement of the $n$ qubits owned by each party
in basis $\theta_a$. Notice that this weak random oracle can be queried in only
one place by each party, as after the measurement is performed, the entangled
pairs have collapsed to a classical state. However, when both parties measure in
the same basis $\theta_a$ they obtain the same uniformly distributed outcome.
Moreover, when the bases $\{\theta_a\}_{a\in\{0,1\}^n}$ are chosen to be
mutually unbiased\,\cite{Schw60,WF89}\footnote{ $\{\theta_a\}_{a\in\{0,1\}^n}$
  is a set of mutually unbiased bases for $n$ qubits if for all $\ket{u}\in
  \theta_a$ and $\ket{v}\in \theta_{a'}$ with $a\neq a'$, we have
  $|\bracket{u}{v}|^2=2^{-n}$. There are $2^n+1$ mutually unbiased bases for $n$
  qubits.}, the value $f(a)$ does not provide any information about $f(a')$ for
any $a\neq a'$.

In this particular setting, $n$ EPR pairs seem to contain as much randomness as
a random oracle. It is therefore tempting to believe that an entangled state of
polynomial size could in certain cases provide a cryptographic resource
tantamount to the random oracle when only one query (or just a few) has to be
made by each player. Such a resource, which we call a \emph{Weak One-Time Random
  Oracle} (\wotro{}), would be a powerful primitive for removing interaction in
procotols, even if it only provides \emph{some} randomness: that the value
$f(a)$ is not a deterministic function of $a$. The above scheme can be made
non-interactive if Alice and Bob share EPR pairs ahead of time. This motivates
our study of a model in which parties have access to a pre-shared entangled
state.

\paragraph{The \crqss\ and \crqs\ Models.}
In this paper, we 
consider models where a quantum state plays the role of a common random
string in a situation involving two parties. In the \crqss\ (\emph{Common
  Reference Quantum State}) model, each party receives one half of a fixed pure
quantum state at the beginning of each protocol execution. The shared quantum
state is of polynomial size and can be generated by some polynomial size quantum
circuit. In the \emph{CRQ\$} model, each player is given halves of polynomially
many (in the security parameter) maximally entangled pairs of qubits (or qudits
in general). Although we could allow a \crqss\ or a \crqs\ to be shared between
more than two parties, in this work we only consider the two-party case. Notice
that the meaning of \emph{common} in \crqss\ and \crqs\ is narrower than for a
\crs\ and \crrs: even though a \crqss\ is common to both parties involved in a
protocol, it is completely unknown to anybody else, as both players share a pure
state. Even though a \crqss\ is obviously more difficult to deploy in practice
than a \crs, it remains a physical resource, unlike the random oracle.
Establishing limits on what a \crqss\ can provide would therefore contribute to
a better understanding of the cryptographic power provided by the sharing of a
\emph{physical} resource between the parties involved in a protocol.

\paragraph{\wotro{} in the \crqss{} model?}
We investigate the question of whether or not \wotro{} has a secure
instantiation in the \crqss{} model. 
Like the CRS and ROM, quantum entanglement is
known to allow the reduction of interaction, but it also enables
tasks that would be classically impossible using only a CRS. Watrous~\cite{watrous_pspace_2003} showed that every language in PSPACE
has $3$-message proof systems. Another example would be nonlocal games such as
the magic square game~\cite{a02,a03,bbt04}, where a pair of entangled
non-interacting provers can win a game that would classically require them to
communicate.

The \crqss{}\ model provides quantum non-local correlations\footnote{
In quantum mechanics, a \emph{non-local correlation} is the name
given to the statistics of local measurements applied to distinct
parts of a quantum states when they cannot be explained by a local 
realistic theory. Non-local correlations here (quantum or not) means
also that they do not allow for any form of communication as they must
be compatible with special relativity. 
}
between 
the prover and the verifier.
 Non-local correlations 
are often idealized by non-local (PR) boxes~\cite{PR97}. One PR box 
takes the first party's input $a\in\{0,1\}$ and the second party's input $b\in\{0,1\}$ 
to provide  $u\in\{0,1\}$ and $v\in\{0,1\}$  such that  $u\oplus v = a\wedge b$ 
to the first and second party respectively. EPR pairs achieve this functionality 
with probability of success $\cos^2{(\frac{\pi}{8})}$ while any \crs\ 
would not be able to provide the correct answer with probability better than $\frac{3}{4}$.
It is not too hard to see that access to sufficiently many PR boxes allows for a
secure implementation for \wotro{} (see details in Section~\ref{sec:wotro-from-non}).
While PR boxes are not physical objects, 
the question we are addressing here is
whether non-local \emph{quantum} correlations can be harnessed
to provide a functionality akin to the use of a random oracle queried once through the use of a \crqss{}.

One might argue that the \crqss{} model is not currently realistic given the technological
difficulties associated with distributing and coherently storing quantum
entanglement (although this is rapidly improving). However, we ask a more fundamental question on the power of setup
assumptions.
Are there physically realizable setup assumption
that allows to solve problems that, in the classical model, appear to require
a random oracle.

\paragraph{The Fiat-Shamir Transform.} One very useful 
primitive that needs an idealized cryptographic
resource like \wotro{} for establishing its security is the Fiat-Shamir transform, also
known as the Fiat-Shamir \emph{heuristic}, introduced in the pioneering
work of Fiat and Shamir in \cite{fs86} as a  way to transform 
identification schemes of a given form into practical  digital signature schemes.
More generally,
the FS-transform is a  simple and efficient primitive allowing to convert
sound interactive proof systems of a particular form into  non-interactive
arguments for the same language.
Its primary use 
 is to remove interaction
 in \sprs.

 \sprs\,\cite{CramerPhD,IvanSig10} are public-coin 3-message proof systems
where, from public input $x\in\{0,1\}^*$,  
the prover sends a \emph{commitment} $a\in \{0,1\}^n$
to the verifier as the first message. The verifier then 
replies with a random \emph{challenge} $c\in_R\{0,1\}^m$ (called \emph{public coins})
before the prover sends the answer $z(x,a,c)$
that the verifier can check for consistency.  Henceforth, \sprs\ with commitments of size $n$ and
public coins of size $m$ will be denoted by \emph{\sprsnm}. 
These proof systems can be proofs of knowledge, like their    use in
identification schemes, or proofs
of language membership. In this paper, 
 \sprs\ are always considered perfectly 
correct and special sound.
Special soundness\footnote{Special soundness is called \emph{optimal soundness}
in \cite{barak_lower_2006}.} for proofs of knowledge means that 
from any two successful conversations with the same commitment
$(a,c,z(x,a,c))$ and
$(a,c',z(x,a,c'))$ with $c\neq c'$, one can efficiently 
extract a witness $\pi$ for $x\in L$. For proofs of language
membership, special soundness means that when $x\notin L$ and
for each commitment $a$,
there exists at most one challenge $\sff{c}(a)$ for which 
a third message $\tilde{z}$  can ever be found such that
$(a,\sff{c}(a), \tilde{z})$ is accepted by the verifier.

The Fiat-Shamir transform applied to a \spr\ 
is implemented using  hash function $h_r:\{0,1\}^{*}\rightarrow\{0,1\}^m$
picked according to CR\$ $r$.
The prover then sends $(a,h_r(a), z(x,a,h_r(a)))$ 
to the verifier.  In other words, the verifier's challenge or public coin $c$
in the \spr\ is replaced 
by $c=h_r(a)$\footnote{Some works include the public instance $x$ as input to $h_r$,
  our results remain untouched if we include it.  We leave it out
  for simplicity.}. It is straightforward to see that when
$h_r$ is modelled by a random oracle, the
transform applied to a \spr\ produces a  sound  argument
The family of hash functions $\mathcal{H}=\{h_r\}_{r\in_D \{0,1\}^{\ell(n)}}$, 
for $D$ an efficiently sampleable distribution over $\{0,1\}^{\ell(n)}$, 
is a \emph{sound \siguninm\ instantiation} of
the Fiat-Shamir transform if $h_r$
  converts the special soundness of any \sprnm\ (as a proof of language membership)  
into a non-interactive argument.
Notice that when the hash function is modelled by  a random oracle, 
the prover and the verifier only have to query the oracle once at the same point.
Replacing the random oracle with a secure instantiation of \wotro{} would thus provide
a sound universal Fiat-Shamir  transform.

\paragraph{The Fiat-Shamir Transform in the ROM and QROM.}
As mentioned above,  the Fiat-Shamir transform was 
shown secure in the ROM by Pointcheval and Stern~\cite{pointcheval_security_1996} 
in 1996. The soundness of the  Fiat-Shamir transform is straightforward in the ROM.
The challenging part was to show that it also provides 
non-interactive proofs of knowledge. The same was shown to hold in the
quantum random oracle (QROM)
independently and differently by Don, Fehr, Majenz, and Schaffner in \cite{don_security_2019}  
and  by Liu and Zhandry in \cite{liu_revisiting_2019}.
\
\paragraph{Known Impossibility Results for the Fiat-Shamir Transform.}
The Fiat-Shamir 
transform does not guarantee computational soundness  
for all \sprs\ in the \crs\ model. 
 In particular, 
Goldwasser and Kalai have shown that the Fiat-Shamir
transform applied to some (contrived)  \sprs\  is not
sound for any  instantiation of the hash
function (i.e. instantiated using a \crs)\,\cite{GoldwKalai2003}. 
However, 
this impossibility result requires the \spr\
to be a proof of knowledge.

\citeauthor{aru14}~\cite{aru14} have shown that the Fiat-Shamir transform
cannot preserve the soundness of every \spr\ against quantum
adversaries, even when it is instantiated with a random oracle. More precisely,
they construct a proof system, which can be either a proof of knowledge or an
argument of language membership, which is sound classically
but unsound against quantum adversaries. The same holds true when the Fiat-Shamir transform is applied to these proof systems. In effect, their attack is against the
underlying \spr{} rather than against a physical instantiation of a random
oracle. Their results do not contradict the positive results of~\cite{don_security_2019,liu_revisiting_2019} since they show that the Fiat-Shamir transform preserves soundness in the QROM when the 
underlying \spr{} is sound against quantum adversaries.

Impossibility results for \sprs\ 
used as proofs of language membership 
are not known to be as strong as for proofs of knowledge.
One reason being that for language membership, 
the Fiat-Shamir transform is only asked to provide computational
soundness to a  \spr\  with statistical soundness whereas for a proof of knowledge the 
\spr\ is an argument.
Remember that
a \emph{cryptographic game}~\cite{HH09} is a standard way to define 
computational assumptions by requiring
that no adversary can win an interactive game against a  \emph{challenger}
with probability that is not overwhelmingly close to some constant value\,\cite{HH09}.
An assumption 
that can be formulated as  a cryptographic game with 
an efficient challenger  is
called a \emph{falsifiable assumption}\,\cite{GW11, Naor03}.  
Known impossibility results for the Fiat-Shamir transform applied to \sprs\
for proofs of language membership are about the impossibility 
of \emph{black-box} reducing its computational soundness to a cryptographic game.

In\,\cite{bitansky_why_2013}, Bitansky \emph{et al.} 
provide two  results on the impossibility of establishing the computational
soundness of  the Fiat-Shamir transform  in the \crs\ model.
First,  
if a language $L\notin BPP$ has an honest-verifier zero-knowledge (HVZK) \spr\
(with small
enough challenges) 
then the soundness of the Fiat-Shamir transform applied to it 
cannot be established by a black-box reduction\footnote{The security of protocol $\Pi$
is black-box reduced to an assumption expressed as a game if there exists
an oracle polynomial-time machine $\mathcal{R}^{P^*}$ 
that, with oracle access to any successful adversary $P^*$ for protocol $\Pi$, 
wins the game.}  to a falsifiable assumption\footnote{The reason why the result applies in the \crs\ model
is  because \cite{bitansky_why_2013,djkl12}  show how to get, 
from such a Fiat-Shamir transform, a
2-message zero-knowledge proof system for $L$ where the verifier
simply sends the identity of the hash function to the prover as first message. This is equivalent
to non-interactive schemes in the CRS model. These proofs systems are shown impossible
by an extension of the impossibility result for 2-round zero-knowledge for non-trivial languages 
by Goldreich and Oren~\cite{GoldOren1994}.}.  This impossibility result applies even
to Fiat-Shamir transforms tailor-made for specific \sprs.
Second,  they show the impossibility of black-box reducing the computational soundness of any 
universal instantiation 
of the Fiat-Shamir transform  to a 
cryptographic game, even a non falsifiable one where the challenger is not required to run in polynomial time.

\paragraph{Positive results \& related work.}

A series of results have been focusing 
on achieving soundness of
the Fiat-Shamir transform  from a cryptographic
assumptions that cannot be black-box reduced to 
cryptographic games.
Barak, Lindell
and Vadhan\,\cite{barak_lower_2006} introduce the notion of \emph{entropy
  preserving} hash functions (such function families are rather said to \emph{ensure
  conditional entropy} therein) 
  and show that assuming their existence, 
  there is no  
  constant-round auxiliary-input zero-knowledge proof system for non-trivial languages.
  The proof of this result implies the computational soundness 
  of Fiat-Shamir using entropy preserving hash functions. 
 Later, 
Dodis, Ristenpart and Vadhan\,\cite{drv12} gave a construction for entropy preserving
hash functions assuming the existence of robust randomness condensers with some extra properties, but without
providing any candidate construction.  Canetti, Goldreich, and Halevi\,\cite{CGH04} introduce
\emph{correlation intractable} families of hash functions.
Correlation
intractability is related to entropy preservation as the latter  implies the former.
Therefore, a consequence of~\cite{bitansky_why_2013} is  that  correlation intractability  cannot be proven by  black-box reduction to a
game.  
%
In~\cite{KRR17}, Kalai, Rothblum, and Rothblum provide a
construction for  correlation intractable family of hash functions
from 
 a subexponentially secure indistinguishability obfuscator,
an exponentially secure input-hiding obfuscator for the class
of multi-bit point functions, and the existence of a subexponentially secure 
puncturable PRF\footnote{Notice that the result of \cite{KRR17} is very general
as it allows to apply securely the Fiat-Shamir transform to any public-coin 3-message proof systems,
not only to $\sprs$ as we define them.  Some of their assumptions can be relaxed a little
when the Fiat-Shamir transform is applied to \sprs.
}. 
The subexponential indistinguishable  security of the 
IO-obfuscator and the exponential security of the multi-bit point functions obfuscator 
allow to evade the impossibility result of  \cite{bitansky_why_2013}.
In \cite{CCHLRR18},  Canetti, Chen, Holmgren, Lombardi, Rothblum, and Rothblum  
show how to construct a universal instance of the Fiat-Shamir transform
using correlation intractable hash functions built from a strong version 
of KDM-encryption. The resulting Fiat-Shamir transform also has security
black-box reducible to a cryptographic game with subexponential security.

The concept of shared entanglement as a setup was considered in previous works. 
In \cite{CVZ20}, Coladangelo, Vidick, and Zhang  have shown how to 
design non-interactive zero-knowledge arguments for QMA (i.e.\ quantum NP), with preprocessing.
\citeauthor{morimae_classically_2022}~\cite{morimae_classically_2022} use a
similar setup for classical verifiability of NIZK arguments for QMA. 
The preprocessing is essentially what we call here a CRQ\$. 
Non-interactivity  is obtained from pre-shared EPR pairs used as a teleportation 
channel. This can be viewed as a quantum version of the work of
Peikert and Shiehian~\cite{peikert_noninteractive_2019}
and, as such, is not  a \siguninm\  instantiation of the Fiat-Shamir transform.
The ability of a  \crqss\   to provide zero-knowledge against 
quantum dishonest verifiers has been 
investigated in~\cite{DFS04}. It  was shown that a \crqss\
allows  quantum zero-knowledge implementations of a \sprs\
against a relaxed form of honest verifiers, called \emph{non-oblivious}.

A model called \crqss\ was recently\footnote{The authors
  of~\cite{morimae_unconditionally_2023} were aware of our work but decided to
  use the same naming scheme, arguing that our model is mode akin to the quantum
  version of correlated randomness rather than CRS. } introduced
in~\cite{morimae_unconditionally_2023} as a trusted setup for provably
computationally secure quantum bit commitment (i.e.\ without relying on
complexity assumptions). In the model of~\cite{morimae_unconditionally_2023}, a
setup algorithm samples a classical key $k$ and distributes copies of a quantum
state $\ket {\psi_k}$ to each party. A similar model, called ``unclonable common
random state'', is independently introduced in~\cite{qian_unconditionally_2023}
for the same task of unconditionally secure quantum bit commitment. In the
context of a negative result, the more general the model is, the stronger is the
impossibility result. The models of
\cite{morimae_unconditionally_2023,qian_unconditionally_2023} are a special case
of ours by considering a \crqss{} of the form $\ket{\Psi}=\sum_k\sqrt{p(k)}
\ket{\psi_k}\ket{\psi_k}$ where $p(k)$ denotes the probability to  pick key $k$.

\subsection{Our Contributions}
\label{sec:results}

We introduce a cryptographic primitive called a
\emph{Weak One-Time Random Oracle}, denoted $\wotro^{n,m}$ and defined by the
box given in Fig.~\ref{fig:w1tro-box}, which takes place between a ``prover''
who controls the interfaces on the left-hand side of the box, and a ``verifier''
who controls the interfaces on the right.
A protocol instantiating
$\wotro^{n,m}$ is secure if for any function $f(\cdot)$, the adversary can't
produce an output of the form $(a,f(a))$ on the verifier's interface.
We ask whether this primitive has a secure non-interactive instantiation in the \crqss{} model.
 \begin{figure}
    \begin{center}
    \begin{tikzpicture}[thick]
        \draw (0,0) node[porte, minimum width=1.5cm, minimum height=1.5cm] (wotro) {$\wotro^{n,m}_{\Gamma}$};
        \draw
            (wotro.west) ++(0, .5) coordinate (gauche1)
            (wotro.west) ++(0, -.5) coordinate (gauche2)
            (wotro.east |- gauche1) coordinate (droite1)
            (wotro.east |- gauche2) coordinate (droite2)
            ;

        \draw
            (gauche1) ++(-1, 0) node[left] {$a \in \bool^n$} edge[->] (gauche1)
            (gauche2) ++(-1, 0) node[left] {$c \in_R \bool^m$} edge[<-] (gauche2)
            (droite1) ++(1, 0) node[right] {$a$} edge[<-] (droite1)
            (droite2) ++(1, 0) node[right] {$c$} edge[<-] (droite2)
            ;
    \end{tikzpicture}
    \end{center}
    \caption{$\wotro^{n,m}$ as a box. The prover on the left puts a chosen $a \in \bool^n$ into it, the box chooses  $c \in_R \bool^m$, and outputs $(a,c)$ to the verifier on the right-hand side.}
    \label{fig:w1tro-box}
\end{figure}
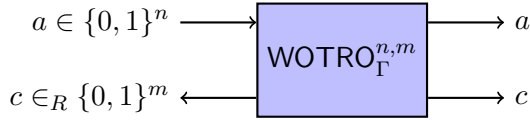
Our main contribution is showing that, despite the evidence to the contrary
presented above, this primitive has no statistically secure implementation in the \crqss{} model. 
Our impossibilities also apply for two-message protocols
 in the plain model (and even in the \crqss{} model) 
 since the \crqss{} could be prepared by the verifier.
 \begin{theorem}[informal version of Theorem~\ref{thm:attaque-simulable}]
   \label{thm:attaque-simulable-informel}
  If $n-m\in\omega(\lg n)$, there is no statistically secure non-interactive
  protocol for $\wotro^{n,m}$ in the \crqss{} model.
\end{theorem}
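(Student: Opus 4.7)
The plan is to build an unbounded cheating prover $P^*$ that, given any supposed non-interactive protocol $\Pi$ for $\wotro^{n,m}$ in the \crqss{} model, forces the verifier to output a pair of the form $(a,f(a))$ with non-negligible probability, where $f:\bool^n\to\bool^m$ is a function that $P^*$ commits to before the execution. The first step is to recast $\Pi$ in canonical form: a shared pure state $\ket{\Psi}_{AB}$ prepared by a polynomial-size circuit, a family $\{E_{a,\mu}\}_\mu$ of POVMs that the honest prover applies on $A$ given input $a$, and a family $\{D_{a,\mu,c}\}_c$ of POVMs that the verifier applies on $B$ upon receiving $(a,\mu)$. Statistical correctness reads
\[
\sum_\mu\bra{\Psi}E_{a,\mu}\otimes D_{a,\mu,c}\ket{\Psi}=2^{-m}\pm\negli{n}
\]
for every $a$ and $c$, and imposes no restriction on the POVM the \emph{cheating} prover is allowed to use on $A$.

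The attack is for $P^*$ to apply a non-honest POVM on $A$, observe outcome $i$, and then adaptively choose an input $a_i\in\bool^n$ and a message $\mu_i$ to forward to the verifier. By the Hughston--Jozsa--Wootters theorem, the freedom to pick any POVM on $A$ is equivalent to the freedom to realize any ensemble decomposition $\rho_B=\sum_i p_i\sigma_i$ of the reduced state $\rho_B:=\operatorname{Tr}_A\ket{\Psi}\bra{\Psi}$. The corresponding success probability is
\[
\sum_i p_i\,\max_{a,\mu}\operatorname{Tr}\bigl(D_{a,\mu,f(a)}\,\sigma_i\bigr),
\]
and the quantitative heart of the proof is to show that, for a suitable choice of $f$ and $\{(p_i,\sigma_i)\}$, this quantity exceeds $2^{-m}$ by a non-negligible additive amount under the hypothesis $n-m\in\omega(\lg n)$. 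I would approach this probabilistically: draw $f$ uniformly at random, so that $\mathbb{E}_f[\operatorname{Tr}(D_{a,\mu,f(a)}\sigma)]=2^{-m}$ for every fixed $(a,\mu,\sigma)$, and then exploit the $2^n$-fold freedom the adversary has over the choice of $a$ for each outcome $i$ to extract a maximum that beats this mean. A Markov-style derandomization then fixes a deterministic $f^\star$ for which the attack works.

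I expect the main obstacle to be this maximum/counting step, because the quantities $\operatorname{Tr}(D_{a,\mu,f(a)}\sigma_i)$ are not independent over $(a,\mu)$: they share the same state $\sigma_i$ and obey the constraint $\sum_c D_{a,\mu,c}=I$, while the ensembles $\{\sigma_i\}$ are themselves constrained to average to $\rho_B$. The fix will likely involve a careful choice of ensemble---plausibly the one obtained by measuring $A$ in a basis aligned with the Schmidt decomposition of $\ket{\Psi}$, so that the $\sigma_i$ are pure and sufficiently ``spread''---combined with a tail bound on the maximum that extracts a non-negligible gain exactly when $2^{n-m}$ is super-polynomial. Finally, the efficient simulatability of $P^*$, which is the crucial property for ruling out fully black-box reductions to cryptographic game assumptions, should follow by observing that the output distribution $(a,c)$ of $P^*$ on the verifier's interface is statistically close to that of the honest protocol executed on a uniformly random input, and can therefore be sampled in polynomial time by simply running the honest prover.
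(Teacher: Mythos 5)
Your plan hinges on a step you explicitly leave open --- the claim that $\sum_i p_i\,\max_{a,\mu}\operatorname{Tr}(D_{a,\mu,f(a)}\sigma_i)$ can be pushed from its mean value $\approx 2^{-m}$ to something non-negligible by exploiting the $2^n$ choices of $a$ --- and this is not a detail you can postpone: it is the whole theorem. The dependency problem you flag is genuine and you offer no mechanism to resolve it; ``a careful choice of ensemble'' and ``a tail bound on the maximum'' are not arguments. Worse, the HJW/steering route requires you to understand the verifier's accept operators $D_{a,\mu,c}$, and nothing in your setup controls them: a protocol could make $\max_\mu\operatorname{Tr}(D_{a,\mu,c}\sigma)$ tiny for \emph{every} pure $\sigma$ and every $(a,c)$ while still being statistically correct, because correctness is a statement only about the honest joint state $\ket{\Psi}_{AB}$, not about arbitrary decompositions of $\rho_B$. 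The paper circumvents this entirely with a different idea: the adversary applies, not an optimal steering POVM, but a \emph{rescaled subset of the honest prover's own operators} $P^f_{a,w}:=N^a_{f(a),w}/((1+\eta)2^{n-m})$, and then the operator Chernoff bound of Ahlswede--Winter shows that for a random $f$ this collection (completed by $P^f_\bot$) is a valid POVM with overwhelming probability. Because the attack literally performs a scaled-down honest measurement, the verifier's acceptance probability is pinned to the honest correctness parameter $\delta$ --- no analysis of $D_{a,\mu,c}$ or of the decomposition freedom of $\rho_B$ is ever needed.

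Your simulatability remark is also wrong for the attack you propose. You assert that $P^*$'s output $(a,c)$ is statistically close to the honest output on a uniformly random $a$, hence that running the honest prover is an efficient simulator. But a cheating prover that, after observing outcome $i$, \emph{maximizes} over $a$ has no reason to produce a near-uniform $a$; it is designed to bias the choice of $a$ towards accepting pairs. The reason the paper's attack \emph{is} simulatable by ``measure honestly on a uniform $a$'' is precisely that its attack operators, when averaged over the random $f$, reproduce the honest POVM exactly: $\mathbb{E}_f[P^f_{a,w}\cdot 2^{n}] = N^a_{y,w}$ summed over $y$ with uniform weight, so from the outside a random $f$-adversary is indistinguishable from the simulator up to Chernoff-bound error terms. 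That identity is a structural feature of the rescaled-honest-operators construction and does not hold for a max-over-$a$ steering attack. So both the existence of the attack and its simulatability rest, in the paper, on the same device (rescaled honest operators + operator Chernoff) that your proposal does not use.
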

For any protocol in the \crqss{} model, we construct an (inefficient) attack that will make the verifier accept an output of the form $(a,f(a))$ for a function $f$ chosen at random.
Our attack is a novel use of the operator Chernoff bound of Ahlswede and Winter~\cite{ahlswede-winter}.

What about \wotro{} protocols that provide computational security by relying on a hardness assumption? We show that such a protocol could not be proven secure using reductions that treat the adversary as a black-box (i.e.\ a CPTP map). We call this a quantum fully-black-box (or \fbb{}) reduction.
\begin{theorem}[informal version of Corollary~\ref{cor:bb-imposs-wotro}]
  \label{thm:bb-imposs-wotro-informel}
  If $n-m\in\omega(\lg n)$, there is no protocol for $\wotro^{n,m}$ whose
  security can be established by a quantum \fbb{} reduction to a cryptographic
  game assumption, unless that assumption is false.
\end{theorem}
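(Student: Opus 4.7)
The plan is to bootstrap Theorem~\ref{thm:attaque-simulable-informel} via the standard simulatable-adversary paradigm used to rule out black-box reductions from falsifiable games. Suppose for contradiction that some protocol $\Pi$ for $\wotro^{n,m}$ with $n-m\in\omega(\lg n)$ admits a quantum \fbb{} reduction $\mathcal{R}$ to a cryptographic game $G$: that is, $\mathcal{R}$ is a \qpt{} oracle machine such that $\mathcal{R}^{\mathcal{A}}$ wins $G$ with non-negligible advantage whenever $\mathcal{A}$ is an adversary breaking the soundness of $\Pi$ with non-negligible probability. The goal is to derive a \qpt{} winner against $G$, contradicting the assumption that $G$ is a secure cryptographic game.

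First I would invoke the full version of Theorem~\ref{thm:attaque-simulable-informel}, which promises not merely an (inefficient) attacker $\mathcal{A}$ that forces the verifier to accept outputs of the form $(a,f(a))$ for a function $f$ chosen at random, but also a \qpt{} simulator $\mathcal{S}$ reproducing the same input/output behavior as $\mathcal{A}$ as a CPTP map, up to negligible distance in the diamond norm. Efficient simulatability of the attack constructed from the Ahlswede--Winter operator Chernoff bound is precisely the feature that separates this impossibility from a purely information-theoretic one: because $\mathcal{S}$ is efficient, plugging it into the reduction will yield an efficient adversary against $G$.

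Substituting these into the reduction, the \fbb{} hypothesis gives that $\mathcal{R}^{\mathcal{A}}$ wins $G$ with non-negligible advantage $\varepsilon(n)$. Since $\mathcal{R}$ is \qpt{}, it makes at most polynomially many oracle calls to $\mathcal{A}$, and a standard hybrid over these calls, using per-call diamond-norm closeness of $\mathcal{A}$ to $\mathcal{S}$, yields
\[
  \bigl|\Pr[\mathcal{R}^{\mathcal{A}} \text{ wins } G] - \Pr[\mathcal{R}^{\mathcal{S}} \text{ wins } G]\bigr| \leq \negli{n},
\]
so $\mathcal{R}^{\mathcal{S}}$ still wins $G$ with advantage $\varepsilon(n)-\negli{n}$, which is non-negligible. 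But $\mathcal{R}^{\mathcal{S}}$ is itself \qpt{}, contradicting the security of $G$. This forces the negation of the \fbb{} reduction's existence (or the falsity of $G$).

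The main obstacle is ensuring that the simulatability delivered by Theorem~\ref{thm:attaque-simulable-informel} is of the \emph{quantum} sort needed here: a quantum \fbb{} reduction treats $\mathcal{A}$ as a (possibly stateful) oracle that $\mathcal{R}$ may invoke with quantum inputs and whose output registers can be coherently fed back into $\mathcal{R}$'s workspace. Matching $\mathcal{A}$ to $\mathcal{S}$ therefore requires diamond-norm closeness of two quantum channels, not just classical indistinguishability of single executions. Verifying this for the specific attack produced in Theorem~\ref{thm:attaque-simulable-informel} --- in particular, purifying the randomness of the random function $f$ and showing that the inefficient post-measurement state the attacker leaves on $\mathcal{R}$'s side can be coherently reproduced in \qpt{} by $\mathcal{S}$ --- is the delicate step, but once it is in place the hybrid above closes the argument.
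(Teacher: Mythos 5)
Your overall route is the paper's route: this is Wichs's simulatable-attack paradigm, instantiated with the Chernoff adversary of Theorem~\ref{thm:attaque-simulable} and the efficient simulator of Theorem~\ref{thm:simulation-wotro}, followed by the observation that $\macho{R}{\simulator}$ is an efficient winner of the game. The reduction to those two ingredients is correct.

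There is, however, one step that fails as literally written: the hybrid cannot be closed by ``per-call diamond-norm closeness of $\adv$ to $\simulator$.'' For any fixed $f$, the channels $\adv^f$ and $\simulator$ are \emph{far} in diamond norm --- a distinguisher holding $f$ detects $\adv^f$ in a single query by checking $y=f(a)$ (which $\adv^f$ satisfies always and $\simulator$ only with probability about $2^{-m}$), and even without knowing $f$, $\adv^f$ is perfectly $f$-consistent across calls while $\simulator$ is not. The statement that is actually true, and that Theorem~\ref{thm:simulation-wotro} proves, is the \emph{global, $f$-averaged} one of Eq.~\eqref{eq:2}: $\mathbb{E}_{f}[{\cal M}^{\adv^f}(\proj 0)]$ is trace-norm close to ${\cal M}^{\simulator}(\proj 0)$ for every $\poly[n]$-query machine. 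The hybrid must therefore be run inside the expectation over $f$, and the $j$-th step only goes through after one shows that, except with negligible probability, the first $j-1$ answers are never ``confused about $a$'' (the same $a$ returned twice with different challenges), which holds because each call returns an essentially uniform $a\in\bool^n$ and $q(n)^2 2^{-n}$ is negligible. One must also separately control the $\bot$ outcome and the non-physical maps $\adv^f$ for $f\notin\mathcal{F}^*$, using the operator Chernoff bound on $\sum_{a,w}P^f_{a,w}$; this is where $n-m\in\omega(\lg n)$ enters the simulation argument, not only the attack. Your closing paragraph points at the right difficulty (coherent access, purifying $f$), but the fix is not a stronger per-call statement --- no such statement holds --- it is the averaged hybrid with the collision analysis on $a$.
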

The statement above and its proof are similar to the impossibility results
of~\cite{bitansky_why_2013,bgw12} in the context of Fiat-Shamir in the CRS
model. We rely on a technique formalized by Wichs in \cite{wichs_barriers_2013}.
We show that the input/output behaviour of our attacker against any \wotro{}
protocol can be \emph{simulated} efficiently by a quantum
algorithm.
This means that no reduction can exist that breaks the security of a cryptographic game assumption 
using only the input/output behaviour of a successful adversary against \wotro{}, 
unless the assumption is false.
 Otherwise, the reduction together with the
simulator for the attack would yield an efficient algorithm for breaking the
assumption.

While \wotro{} implies Fiat-Shamir, the other direction does not hold. Still, we can use the attack from our impossibility of \wotro{} to obtain a similar result ruling out any universal instantiation of Fiat-Shamir in the \crqss{} model. 
\begin{theorem}[informal version of Theorem~\ref{thm:qfs-imposs}]
  For $n-m\in\omega(\lg n)$, there is no $\Sigma_{n,m}$--universal instantiation
  of the Fiat-Shamir transform whose security can be established by quantum \fbb{}
  reduction to a cryptographic game assumption, unless that assumption is false.
\end{theorem}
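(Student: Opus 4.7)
The plan is to reduce the black-box impossibility of universal Fiat-Shamir in the \crqss{} model to the already-established impossibility of \wotro{} in Corollary~\ref{cor:bb-imposs-wotro}. Concretely, I will show that any $\siguninm$ Fiat-Shamir instantiation $\mathcal{H}$ whose soundness admits a quantum \fbb{} reduction to a cryptographic game assumption $G$ also yields a \wotro{} protocol in the \crqss{} model whose security admits a quantum \fbb{} reduction to the same $G$; Corollary~\ref{cor:bb-imposs-wotro} then forces $G$ to be false.

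The \wotro{} protocol built from $\mathcal{H}$ is the natural one: the prover sends a commitment $a\in\bool^n$ to the verifier, who uses its share of the \crqss{} and the classical hash key $r$ to compute $c=h_r(a)$ and outputs $(a,c)$. To turn an adversary $A_{\wotro}$ against this protocol that forces the output to $(a, f(a))$ for a function $f$ (as provided by Theorem~\ref{thm:attaque-simulable-informel}) into an adversary against Fiat-Shamir, I would exhibit, for every efficiently computable $f\colon\bool^n\to\bool^m$, a \sprnm\ $\Pi_f$ such that (i) $\sff{c}^{\Pi_f}(\cdot)=f(\cdot)$ and (ii) an accepting third message $z$ for some instance $x\notin L_{\Pi_f}$ can be efficiently computed from any pair $(a, f(a))$. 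The Fiat-Shamir adversary then samples $f$ as $A_{\wotro}$ does, instantiates Fiat-Shamir on $\Pi_f$, runs $A_{\wotro}$ with this $f$ to obtain $(a, f(a))$, computes $z$ via (ii), and outputs $(a, z)$; since $h_r(a)=f(a)=\sff{c}^{\Pi_f}(a)$, the Fiat-Shamir verifier accepts.

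Composing this transformation with the assumed \fbb{} reduction from $\mathcal{H}$'s soundness to $G$ gives a \fbb{} reduction from \wotro{} security to $G$. The attack of Theorem~\ref{thm:attaque-simulable-informel} is efficiently simulatable and the construction of $(\Pi_f, z)$ from $f$ and $(a, f(a))$ is efficient by design, so the composed Fiat-Shamir adversary is also efficiently simulatable, which is exactly what the Wichs-style argument behind Corollary~\ref{cor:bb-imposs-wotro} needs in order to conclude that $G$ is false.

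The main obstacle is producing the family $\{\Pi_f\}_f$: one needs, for every efficiently computable $f$, a sound \sprnm\ for a non-trivial hard language whose bad-challenge function is exactly $f$ and which admits an efficient trapdoor for synthesising $z$ from $(a, f(a))$. Such a trapdoor might appear to contradict the soundness of $\Pi_f$ as an interactive proof, but it does not: the honest interactive verifier picks $c$ uniformly at random, so the trapdoor succeeds only with probability $2^{-m}$ in the interactive setting, while the Fiat-Shamir verifier deterministically sets $c=h_r(a)$, which the \wotro{} attack precisely forces to equal $f(a)$. A concrete construction of $\{\Pi_f\}_f$ would follow the trapdoor $\Sigma$-protocol paradigm of \cite{bitansky_why_2013, djkl12} adapted to embed an arbitrary efficiently computable $f$ as the bad-challenge function.
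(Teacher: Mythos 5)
Your high-level strategy (recycle the Chernoff attack and its simulatability) is the right one, but the proposal has a genuine gap, and the ``main obstacle'' you identify is a detour the paper never needs to take. The paper's proof of Theorem~\ref{thm:qfs-imposs} builds, for each \emph{uniformly random} $f\in\mathcal{F}^*$, a \spr{} $\Sigma^f$ for the \emph{empty} language whose verifier $\verifier^f$ accepts $(a,c,z)$ iff $c=f(a)$; the bad-challenge function is $f$ by construction, the dishonest prover simply sends $z=\bot$, and no trapdoor or hard language is required. Your insistence on \emph{efficiently computable} $f$ is incompatible with the attack you are invoking: Theorem~\ref{thm:attaque-simulable} only guarantees that the operators $\{P^f_{a,w}\}$ form a POVM, that the attack succeeds, and that it is simulatable \emph{on average over a uniformly random} $f$ (via the operator Chernoff bound); none of these is guaranteed for a fixed efficiently computable $f$, and a uniformly random $f$ has no polynomial-size description. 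So either $f$ is random and $\verifier^{\Pi_f}$ is inefficient, or $f$ is efficient and the attack's guarantees evaporate.

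The deeper missing idea is the \emph{joint simulation}. The reduction of Definition~\ref{def:qfs-redux} has black-box access to the \spr{} verifier as well as to the adversary. Your composed argument replaces only the adversary by its simulator, but the reduction can query $\verifier_{\Pi_f}$ on inputs of its choice to evaluate $f$ at polynomially many points and then test whether the $(a,c)$ returned by its adversary oracle satisfies $c=f(a)$: the real attack passes this test with probability $1-\negl$, while the stateless \wotro{} simulator passes it with probability about $2^{-m}$. The Wichs-style indistinguishability therefore fails outright unless the verifier is simulated \emph{consistently} with the adversary. The paper handles this with a pair of stateful simulators $(\simulator_{\mathcal{P}},\simulator_\verifier)$ sharing a lazily sampled partial function $f_A$ (Definition~\ref{def:sim-sim} and the claim inside the proof of Theorem~\ref{thm:qfs-imposs}); this is the step your proposal would need and does not supply.
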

Interestingly, our impossibility is more general than the classical
one~\cite{bitansky_why_2013}, even when restricted to classical shared
resources. A \crqss{} can capture as a special case asymmetric setups such as
giving the verifier the trapdoor to some primitive the prover uses or
pre-computed randomized oblivious transfers. We obtain this generality ``for
free'' by considering the cryptographic primitive \wotro{} instead of a family
of cryptographic hash functions, as in~\cite{bitansky_why_2013}.

Studying the $\wotro$ primitive instead of Fiat-Shamir directly has another
advantage in that our impossibility result also applies to any
cryptographic task which (black-box) implies $\wotro$. For instance, we
introduce a strengthened variant of Zhandry's quantum
lightning~\cite{zhandry_quantum_2019} that implies \wotro{}. Quantum lightning
(QL) is a primitive that produces a quantum state and an associated
serial number such that no adversary can produce two states
with the same serial number (hence the name ``lightning''). A consequence of
this property is that serial numbers are highly unpredictable. A natural
question is whether some form of metadata can be embedded into quantum lightning
such that changing the value of this metadata requires creating a new lightning
state. This metadata could for example contain ownership information and it
would thus be impossible, even to the emitter of the state, to change the owner
of a state without generating an entirely new state. It could also serve to
encode a denomination for quantum bank notes, such that not even the emitting
bank could change the denomiation of an existing quantum note.

 We introduce a variant of quantum lightning that allows such metadata by
adding a classical input to the state generation procedure. We call this variant
\emph{typed quantum lightning} (\tql{}) which is secure if the serial numbers
remain unpredictable conditioned on the input. We show that this variant implies
\wotro{} and thus inherits the same black-box impossibility.
\begin{theorem}[informal version of Corollary~\ref{cor:lightning}]
  \label{cor:lightning-informel}
  There is no quantum \fbb{} reduction from the security of a \tql{}
  scheme to the security of a cryptographic game assumption when type length $n$
  and serial length $m$ satisfy $n-m\in\omega(\lg n)$, unless that assumption is false.
\end{theorem}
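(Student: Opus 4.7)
The plan is to reduce this statement to the already established quantum \fbb{} impossibility for $\wotro^{n,m}$, namely Corollary~\ref{cor:bb-imposs-wotro}. The approach has three steps: exhibit a black-box construction of $\wotro^{n,m}$ from any \tql{} scheme, show that any adversary against this \wotro{} construction yields, in a black-box way, an adversary against \tql{}, and then compose the two reductions.

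First, I would give the following black-box construction of $\wotro^{n,m}$ in the \crqss{} model (where the \crqss{} absorbs the output of $\tqlsetup$ if the scheme has one). On input $a \in \bool^n$, the prover runs $\tqlgen$ with type $a$ to obtain a quantum bolt $\ket{\psi}$ together with its serial $c \in \bool^m$, and sends $(a, \ket{\psi}, c)$ to the verifier, who runs $\tqlver(a, \ket{\psi}, c)$ and, on acceptance, outputs $(a,c)$ on its right-hand interface of the box from Figure~\ref{fig:w1tro-box}. Correctness is inherited from \tql{} correctness, and the uniformity of $c$ conditioned on $a$ is precisely the unpredictability-of-serial-given-type requirement of \tql{}.

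Second, I would show that any adversary $A$ against the above \wotro{} construction yields an adversary $B^A$ against \tql{}. If $A$ makes the verifier accept some triple whose output projects to $(a,f(a))$ with non-negligible probability for an adversarially chosen function $f$, then $B$ runs $A$ (as a black box), intercepts the message $(a,\ket{\psi},c)$ delivered to the verifier, and outputs it as its own forgery. Since $\tqlver$ accepts and $c=f(a)$, the reduction $B$ predicts the serial from the type, breaking \tql{} unpredictability. Crucially, $B$ invokes $A$ only as a CPTP map, so this is a quantum black-box reduction.

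Finally, I would compose. Suppose toward contradiction that there exists a quantum \fbb{} reduction $R$ from \tql{} security to a true cryptographic game assumption $G$. Plugging the black-box reduction $B^{(\cdot)}$ above into $R$ yields a machine $R^{B^{(\cdot)}}$ that is a quantum \fbb{} reduction from \wotro{} security in the \crqss{} model to the same game $G$. By Corollary~\ref{cor:bb-imposs-wotro} this forces $G$ to be false, contradicting the hypothesis. The main obstacle is a syntactic one: verifying that the composition preserves the black-box-access-to-adversary property, i.e.\ that $B$ manipulates $A$ strictly as a CPTP box (no rewinding, no inspection of code, no cloning), so that $R^{B^{(\cdot)}}$ inherits from $R$ the form required by the definition of a quantum \fbb{} reduction. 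Once this check is made, the corollary applies directly and yields Corollary~\ref{cor:lightning-informel}.
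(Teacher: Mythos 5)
Your proposal is correct and matches the paper's route: the paper builds exactly this \wotro{} protocol from \tql{} (Protocol $\pwotro{\tql}$, where the bolt is teleported to the verifier via EPR pairs in the \crqs{} rather than sent as a quantum message, a cosmetic difference the paper notes is without loss of generality), states its security as Theorem~\ref{thm:lightning}, and then derives Corollary~\ref{cor:lightning} by composing with Corollary~\ref{cor:bb-imposs-wotro}. One small imprecision: \tql{} security gives super-logarithmic min-entropy of the serial conditioned on the type, not uniformity, but that is all that is needed both for the avoiding property and for the $\delta$-correctness that the Chernoff attack requires.
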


Why would \tql{} be a reasonable assumption? Clearly it is a very powerful primitive, but how much of a leap is it from ``vanilla'' quantum lightning? 
While we do not have a definitive answer to that question, we can show that \ql{} implies \tql{} with small types. More precisely, we construct in Section~\ref{sec:tql-justif} a \tql{} scheme from regular \ql{} for types of $O(\lg(n))$ bits.


\paragraph{Instantiating \wotro{} from a non-game assumption.}

We show that it is possible to construct a \wotro{} protocol for which security
is based on a cryptographic assumption that does not fit the game formalism. Our
result is based on a new hardness assumption on cryptographic hash functions
called \emph{collision-shelters}. Intuitively, a family of hash functions is a
collision-shelter if no adversary can produce many collisions \emph{in
  superposition}. As such it is an intrinsically quantum definition which cannot
be framed as a game since no challenger can verify that an adversary breaks the
assumption. Using this assumption, we show how to construct a secure
$\wotro^{n,m}$ protocol in the \crqs{} model. We first prove the security of a
construction for $\wotro^{n,n}$ similar to the one based on EPR pairs and
mutually unbiased bases sketched earlier. The proof involves computing bounds on
the optimal probability of distinguishing between states from many mutually
unbiased bases and might be of independent interest. A $\wotro^{n,m}$ protocol
for $m<n$ is obtained by hashing the output with a collision-shelter hash
function.
\begin{theorem}[informal version of Theorem~\ref{thm:wotro-col-shelter}]
  Under the collision-shelter assumption, there are secure instantiations of
  $\wotro^{n,m}$ in the \crqs{} model.
\end{theorem}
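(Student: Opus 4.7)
The plan is to proceed in two stages that mirror the construction sketched after Theorem~\ref{thm:wotro-col-shelter}: first, build and analyze a statistically secure $\wotro^{n,n}$ protocol in the \crqs{} model, and second, shrink the output length to $m<n$ by composing with a collision-shelter hash function, invoking the cryptographic assumption only at this second step.

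For the first stage, the protocol uses $n$ shared EPR pairs as the \crqs{}. Fixing a publicly known set $\{\theta_a\}_{a\in\{0,1\}^n}$ of mutually unbiased bases for $n$ qubits, the prover on input $a$ measures its share in basis $\theta_a$, obtains $c\in\{0,1\}^n$, and outputs $(a,c)$; the verifier measures its share in $\theta_a$ and accepts iff it sees $c$. I would model an arbitrary adversary as a CPTP map acting on the prover's side that outputs classical registers $a$ and $c$ and delivers a (possibly tampered) state to the verifier. Fixing any target function $f$, forcing the verifier's basis-$\theta_a$ measurement to yield $f(a)$ amounts to steering the verifier's share into the basis vector $\ket{f(a)}$ of $\theta_a$. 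The main technical step is then an upper bound on the probability of this steering event, derived from the fact that vectors drawn from distinct mutually unbiased bases have squared overlap $2^{-n}$; this pins down the optimal state-discrimination problem facing the adversary and yields success bounded by a quantity negligibly close to $2^{-n}$, uniformly in $f$.

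For the second stage, I apply a collision-shelter hash $h:\{0,1\}^n\to\{0,1\}^m$ on top of the first-stage output, producing $(a,h(c))$. The reduction is by contradiction: if some efficient quantum adversary $\adv$ produces, with non-negligible probability, outputs of the form $(a,f(a))$ for a fixed $f:\{0,1\}^n\to\{0,1\}^m$, then I would turn $\adv$ into a procedure that runs it coherently on a uniform superposition over inputs $a$ and thereby recovers, in superposition, pairs $(a,c)$ with $h(c)=f(a)$. Because $m<n$, each target value $f(a)$ has on average $2^{n-m}$ preimages under $h$, so repeated coherent runs yield many such pairs colliding under $h$, in exactly the superposition sense that the collision-shelter assumption forbids. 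Combined with the statistical security of $\wotro^{n,n}$ from the first stage — which already rules out a purely classical way to steer $c$ — this contradicts the assumption.

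The main obstacle is the first stage: proving statistical security of the MUB-based $\wotro^{n,n}$ protocol against an adversary that can entangle its classical output $a$ with the state handed to the verifier. A naive union bound over $a$ loses a factor of $2^n$ and fails; what is needed is an operator inequality, averaged over the distribution on $a$ induced by $\adv$, that simultaneously exploits the mutual unbiasedness of all the $\theta_a$'s. Once that bound is secured, the collision-shelter reduction in the second stage should follow a standard coherent-extraction template, provided the collision-shelter assumption is phrased to capture precisely the superposition-access pattern that our construction induces on the hash function.
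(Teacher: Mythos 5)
Your two-stage architecture (a $\wotro^{n,n}$ base protocol from shared maximal entanglement and MUBs, then a collision-shelter hash to shrink the output to $m<n$ bits) is exactly the paper's, but the first stage contains a genuine gap. You claim the single-batch protocol --- $n$ EPR pairs of qubits, prover and verifier both measure in $\theta_a$ --- is statistically secure, with adversary success ``negligibly close to $2^{-n}$.'' No such bound holds. The dishonest prover is not forced to commit to $a$ before measuring: it can apply a POVM built from the (sub-normalized) operators $\proj{f(a)}_a$ summed over all $a$, and since $\sum_a \proj{f(a)}_a$ has trace $2^n$ in a $2^n$-dimensional space, this yields a \emph{constant} (not exponentially small) probability of steering the verifier onto $\ket{f(a)}_a$ for some $a$ of the adversary's choosing. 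This is precisely why the paper states that the simple scheme ``cannot be proven secure as it stands'' and instead uses three batches of $p$-dimensional EPR pairs ($p\geq 3$ prime) with the challenge $c=x_3(x_1+x_2)^{-1}$, proving only $(\tfrac14-\negl)$--security via an SDP-duality argument whose key traces $\tr(S^2),\tr(S^3)$ are controlled by Weil sums and Deligne's bound --- a technique that requires odd characteristic and hence excludes the qubit MUBs you chose. The negligible soundness error you want is recovered only at the very end, by parallel repetition of the constant-security protocol; it is not available at the base level. Your closing remark that the missing ingredient is ``an operator inequality averaged over the distribution on $a$'' correctly identifies where the work lies, but the target you set for that inequality (negligible adversary advantage) is unattainable, so the fix must also change the protocol and the security goal.

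The second stage is closer in spirit but also drifts from the paper's argument. The dishonest prover takes no input, so there is no ``uniform superposition over inputs $a$'' to run it on; the paper instead purifies the adversary, decomposes its joint state with the verifier into a hitting component and a rest, observes that the hitting component is literally a state of the form forbidden by the collision-shelter definition unless it is $o(1)$-close to a state with a single dominant preimage $c^*(a)$ for each $a$, and then applies the $(\tfrac14-\negl)$--security of $\pwotro{n,n}$ against the induced target function $c^*(\cdot)$. In particular, the collision-shelter assumption is invoked to rule out a \emph{single} colliding superposition, not to extract many collisions from repeated coherent runs, and the final security parameter inherited from stage one is $\tfrac14-o(1)$, again amplified by repetition rather than being negligible outright.
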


\section{Technical Overview}\label{sec:tech-overview}

We call   \emph{Weak One-Time Random
  Oracle}, denoted 
  $\wotro^{n,m}$, the following simple non-interactive primitive. To any $a\in\{0,1\}^n$, it provides
  a {challenge} $c\in\{0,1\}^m$ 
  \emph{avoiding} with good probability 
  any function $\sff{c}:\{0,1\}^n\rightarrow \{0,1\}^m$.
  We say that an implementation of $\wotro^{n,m}$
  \emph{avoids} function $\sff{c}$ if no (efficient) dishonest prover
  is able to produce $(a,c)$ such that $c=\sff{c}(a)$.
  An implementation of $\wotro^{n,m}$ is said to be $\kappa$--secure
  if it behaves like a random oracle when the prover 
  is honest and avoids any function $\sff{c}$ 
   with probability at least $\kappa$, when the prover
  is dishonest. 
  It is easy to see that any non-interactive $\kappa$--secure implementation of $\wotro^{n,m}$
  can be used to implement the Fiat-Shamir transform with computational 
  soundness error upper-bounded by $1-\kappa$ (see Section~\ref{sec:wotro-impl-fs}). Any implementation
  of  $\wotro^{n,m}$ that avoids any function $\sff{c}(\cdot)$ would be a powerful 
  cryptographic primitive to remove interaction. An implementation 
  $\pwotro{n,m}=(\prover',\verifier')$ of $\wotro^{n,m}$ in the \crqss{} model is defined by two families
  of efficient POVMs $\prover'=\{\mathcal{P}^a\}_{a}$ and 
  $\verifier'=\{\mathcal{V}^{a,c,v}\}_{a,c,v}$ with $a\in\{0,1\}^n$, $c\in\{0,1\}^m$, and
  $v$ is an auxiliary string announced to $\verifier'$. $\pwotro{n,m}=(\prover',\verifier')$ is
  executed as folllows:
  \begin{enumerate}
  \item Upon input $a\in\{0,1\}^n$, $\prover'$ applies POVM 
  $\mathcal{P}^{a} := \{P^a_{c,v}\}_{c,v}$
  to register $P$ of the \crqss\
  to get classical outcome $(c,v)$. $\prover'$ then announces $(a,c,v)$ to $\verifier'$.
\item $\verifier'$  applies POVM 
  $\mathcal{V}^{a,c,v}:=\{V^{a,c,v}_0,V^{a,c,v}_1\}$
  to register $V$ of the \crqss{} and accepts iff  classical
  outcome $1$ is obtained.
\end{enumerate}
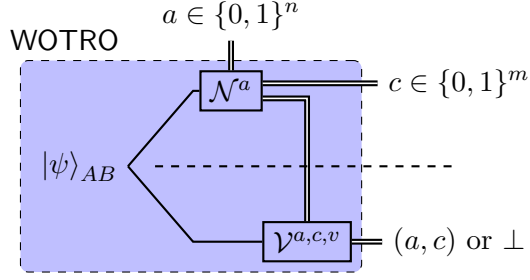
\begin{figure}
  \centering
  \begin{tikzpicture}[thick]
    \draw (0,1) node (epr) {$\ket{\psi}_{AB}$};
    \node[draw] at (2,2) (mp) {$\mathcal{N}^a$};
    \node[draw] at (3,0) (mv) {$\mathcal{V}^{a,c,v}$};

    \path[draw] (epr.east) -- (1.5,2) -- (mp) ;
    \draw[double] (mp.east)+(0,-0.1) -|  (mv);
    \path[draw] (epr.east) -- (1.5,0) -- (mv);
    \node (outv) at (5,0) {$(a,c)$ or $\bot$};
    \draw[double] (mv) -- (outv.west);
    \node (outp) at (5,2.1) {$c\in\bool^m$};
    \draw[double] (mp.east)+(0,0.1) -- (outp.west);

    \draw node[above of=mp] {$a\in\bool^n$} edge[double] (mp);

    \draw[dashed] (1,1) -- (5,1);

    \begin{pgfonlayer}{background}
      \node[draw,dashed,rounded corners=3pt,fit=(epr)(mp)(mv),fill=blue!25,label=120:$\wotro$]  {};
    \end{pgfonlayer}
  \end{tikzpicture}
  \caption{A $\wotro^{n,m}$ protocol in the \crqss{} model as described in Definition~\ref{def:wotro}. The prover's actions are
    above the dashed line and the verifier's actions are below. The classical
    wire crossing the dashed line represents the classical message sent from the
  prover to the verifier. }
  \label{fig:wotro-crqs}
\end{figure}

An adversary $\adv$ 
  against  $\pwotro{n,m}$ takes no input and applies a POVM 
  $\adv:= \{\mathcal{A}_{a,c,v}\}_{a,c,v}$  to register $P$ of the \crqss{} to 
  obtain $a$ along with the message $(c,v)$. 
  Notice that as defined, $\pwotro{n,m}$
  requires the message transmitted to $\verifier'$ to be classical.
  This can be done without loss of generality as a protocol
  asking $\prover'$ to send a quantum message can be transformed
  into one where $\prover'$ only sends a classical message by 
  adding to the \crqss\ enough EPR pairs for the quantum message
  to be teleported. The security of the original protocol remains untouched
  by this transformation.  
  
\subsection{The impossibility of $\wotro^{n,m}$ in the \crqss{} model.}

For main contribution (Theorem~\ref{thm:attaque-simulable-informel}), we
construct an (inefficient) adversary $\adv^f$ that picks a random function
$f:\bool^n\rightarrow\bool^m$ such that the verifier will always accept the
outcome $(a,f(a))$ in the protocol. $\adv^f$ mounts its attack using the
prover's honest POVM operators $\mathcal{P}^a_{c,v}$ using the following
\emph{attack operators}: $X^f_a:= \sum_v \mathcal{P}^a_{f(a),v}$. The success of
this attack relies on two crucial facts. First, the quantum operator Chernoff
bound of~\citeauthor{ahlswede-winter} allows us to show that the
$\{X^f_a\}_{a\in\bool^n}$ (almost) form a POVM with overwhelming probability
over the choice of $f$. Second, since the attack uses the honest prover
operators, the verifier will accept with the same probability as with the honest
prover.

As for the impossibility of computationally secure \wotro{}
(Theorem~\ref{thm:bb-imposs-wotro-informel}), we use a proof strategy similar to
that of Bitansky {\em et al.} in \cite{bitansky_why_2013,bgw12} when proving
that there exists no black-box reduction from any successful adversary against
the entropy preserving property of a family of hash functions to a cryptographic
game. That is, we show that our adversary against $\wotro^{n,m}$ is
\emph{simulatable} by an efficient quantum circuit. The main difference here is
that our quantum circuit is stateless while it is stateful
in~\cite{bitansky_why_2013,bgw12}. This prevents the security of $\wotro^{n,m}$
to be established by a reduction to any cryptographic game that treats the
adversary as a black-box (according to Definition~\ref{def:quantum-fully-bb}),
as if there was such a reduction the game would also be won using the efficient
simulator (in other words, the game assumption would be false).
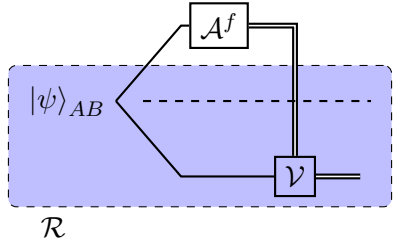
\begin{figure}
  \centering
  \begin{tikzpicture}[thick]
    \draw (0,1) node (epr) {$\ket{\psi}_{AB}$}; \node[draw] at (2,2)
    (mp) {$\adv^f$}; \node[draw] at (3,0) (mv) {$\mathcal{V}$};

    \path[draw] (epr.east) -- (1.5,2) -- (mp) ; \draw[double] (mp.east)
    -| 
    (mv); \path[draw] (epr.east) -- (1.5,0) -- (mv); \node (outv) at
    (4,0) {}; \draw[double] (mv) -- (outv.west);


    \draw[dashed] (1,1) -- (4,1);

    \begin{pgfonlayer}{background}
      \node[draw,dashed,rounded
      corners=3pt,fit=(epr)(mv)(outv),fill=blue!25,label=210:$\mathcal{R}$]
      {};
    \end{pgfonlayer}
  \end{tikzpicture}
  \caption{The interface between the \fbb{} reduction $\mathcal{R}$ and the
    \wotro{} adversary $\adv^f$. The reduction simulates the \crqss{} $\ket
    \psi_{AB}$ and the verifier. It provides register $A$ to the adversary
    $\adv^f$ and receives its classical outcome. }
  \label{fig:reduction-interface}
\end{figure}

\subsection{Quantum Black-Box Reductions.}
\label{sec:qbb-discussion}

We should make precise what we mean by quantum black-box reductions. The
classical notion of black-box reductions is uncontroversial; the reduction is an
efficient algorithm $\mach{R}$ having black-box (i.e.\ input/output) access to
an adversary $\adv$ breaking a scheme. In the quantum setting, the reduction
itself can be quantum, i.e.\ be an oracle access quantum circuit
(Definition~\ref{def:oracle-machine}), or it can be classical; and the adversary
\adv{} can also be quantum or classical. Firstly, for any reasonable definition
of fully black-box, $\mach{R}$ should not be able to tell if $\adv$ is quantum
or classical. It has been argued that if $\mach{R}$ is quantum,
$\mach{R}$ can be purified as a unitary circuit, and since unitary circuits are
reversible, $\mach{R}$ should have access to $\adv$ and its inverse $\adv^*$ in
order to preserve this property. While this is sometimes called quantum
black-box, it is closer to the classical \emph{semi-black-box} notion, where the
underlying primitive is treated as a black-box, but where the reduction can
depend on the adversary~\cite{BBF13}. These quantum reductions have
been called \emph{quantum
semi-black-box reductions} in~\cite{cao_gap_2022}. 


While this type of reduction is very useful (to allow rewinding, for instance), we argue that it cannot truly be considered ``black-box''. Most realistic models of quantum computation (including all current prototypes for quantum computers) include irreversible operations as part of their native gate sets (for control flow if nothing else). Given a quantum algorithm written in, say, openQASM, one would need access to the source code to get a unitary circuit that can be run backwards; it requires ``opening the box'' to the same extent as running a homomorphically-encrypted version of an adversary. 

Notice also that unlike (fully) black-box reductions, semi-black-box reductions
require the adversary to be efficient, otherwise the reduction implemented as a
quantum circuit with $\adv$ and $\adv^*$ gates, would potentially need to feed
exponentially-many auxiliary input wires to these unitaries. (Fully) Black-box
reductions should never be affected by how the adversary is implemented.
Basically, a blackbox reduction remains efficient even when the adversary is not
(when oracle calls to $\adv$ are at unit cost). This is a crucial property of
fully-blackbox reductions.


In view of the above, we adopt the following definition of quantum fully
black-box reductions (\fbb{}): \qpt{} algorithms $\mach{R}$ that have oracle
access to a CPTP map implementing the adversary $\adv$. This is the true quantum
analogue of the kind of black-box reduction in impossibility results such
as~\cite{bitansky_why_2013}. Note that by weakening the notion of black-box, for
example by giving reversible access to the adversary, the set of possible
reductions increases, so black-box impossibility or separation results become
harder to find. In particular, the result of~\cite{bitansky_why_2013} is not
known to, and probably does not, hold for this kind of reductions. Our results of
Section~\ref{sec:impo-fiat-shamir} therefore strictly
generalizes~\cite{bitansky_why_2013} since we consider the quantum variant of
classical \fbb{} reductions. 
We expect that extending our impossibility result to the semi-blackbox setting
would require completely new techniques.
On a positive note, the impossibility that we prove here
could be avoided when the reductions considered are not \fbb{}.
Notice, however, that most security proofs in post-quantum
cryptography proceed by \fbb{} reductions. 
Most relevant to what we are doing here is the
security of the Fiat-Shamir transform in the  QROM as shown in 
\cite{don_security_2019,liu_revisiting_2019}. In these two papers,
the soundness of the Fiat-Shamir transform is established by
\fbb{} reductions. Rewinding the adversary is only required to
extract the witness, and since this is only needed to show that 
the Fiat-Shamir transform is a proof of knowledge in the QROM,
this is irrelevant for the impossibility result we provide here.

\subsection{\siguni\ quantum Fiat-Shamir cannot be \fbb{} reduced to a game.} 
We then show that the black-box impossibility of \wotro{} 
(indirectly) implies that the soundness of any \siguninm\ quantum Fiat-Shamir
transform cannot be established  under the same conditions. 
As our basic impossibility result is about the security of a cryptographic primitive rather 
than a  property of a family of hash functions (as in \cite{bitansky_why_2013,bgw12}),
we follow a different path. First, let us discuss what distinguishes $\wotro^{n,m}$ 
from a \siguninm\ quantum Fiat-Shamir transform in the \crqss\ model.

Consider a Fiat-Shamir transform in the \crqss{} model. The general form of such
a protocol is similar to the generic \wotro{} protocol outlined above. I.e.\ the
pover performs a POVM specified by $a$, the first message in protocol $\Sigma$
and the verifier performs a binary outcome POVM on its part of the \crqss{} as
specified by the prover's message, along with additional checks according to
$\Sigma$ (see Section~\ref{sec:impo-fiat-shamir} for details). 

Although this is providing something very close to $\wotro^{n,m}$ in its inner
workings, it may not need to avoid all functions to be a computationally sound
\siguni\ implementation of the Fiat-Shamir transform. It only needs to avoid
functions $\mathsf{c}:\{0,1\}^n\rightarrow \{0,1\}^m$ such that for some \spr{}
for some language $L$, there exists $x\notin L$ for which upon commitment $a$,
only challenge $\mathsf{c}(a)$ has a third message $z$ such that
$(a,\mathsf{c}(a), z)$ is a valid transcript. We show that this relaxation on
the functions to be avoided by any \siguninm\ $\QFSs$ leads to the same
impossibility result than for $\wotro^{n,m}$.

 The proof follows from the existence of a \spr\
 $\Sigma^f=(\prover, \verifier^f)$ for membership  to the empty language, 
 where $f:\{0,1\}^n\rightarrow\{0,1\}^m$ is a random oracle. 
 Although $\Sigma^f$ only requires $\verifier^f$ to have access 
 to the oracle $f(\cdot)$ to run the protocol honestly, 
 the adversary $\adv^f$ has also access to $f(\cdot)$ to mount its attack
 against the soundness of  $\QFS{\Sigma^f}$. This is essentially the same adversary
 defined  as the one against $\wotro^{n,m}$ described above.  
  Notice that if the soundness of $\QFSs$ was \fbb{}
  reducible to game $\mathcal{G}$ then there would be an efficient
  algorithm $\mathcal{B}^f$, having oracle access to $f(\cdot)$, that wins game $\mathcal{G}$. 
  The strategy  used for $\wotro^{n,m}$ can then be applied.
  A possibly inefficient adversary $\adv^f$ is defined that almost all the time
  breaks the soundness of  $\QFS{\Sigma^f}$.
  We finally show that  both the adversary $\adv^f$ and
  $\verifier^f$   can be simulated by an efficient  stateful simulator. 
  As before, this prevents 
  the soundness of $\QFSs$ to be established by \fbb{} reduction
  to a cryptographic game unless the game is trivial.

\subsection{A quantum assumption allowing for $\wotro^{n,m}$.}

We introduce a strong variant of  collision resistant families of hash functions
allowing for a computationally sound  \siguni\ implementation of the Fiat-Shamir
transform  in the \crqs\ model. We call $\collshn:=\{G^n_s\}_{s}\subset 
\{0,1\}^n \times\{0,1\}^n \rightarrow \{0,1\}^m$ a \emph{collision-shelter}
if,  for 
any target function $\mathsf{c}:\{0,1\}^n \rightarrow\{0,1\}^m$,
no efficient quantum adversary 
can produce  
any state with inverse-polynomial trace distance to a state of the form
\[\ket{\psi}_{AX} = \sum_{a} \alpha_{a} \ket{a}_A \otimes
\sum_{x:G^n(a,x)=\mathsf{c}(a)}\beta^a_x \ket{x}_{X}\enspace,\]
that contains collisions to $\mathsf{c}(a)$ when $a$ is measured.

 In order to show that collision-shelters are sufficient
 for a sound \siguni\ Fiat-Shamir transform in the \crqs\ model, we 
start  with the weak random oracle implemented using $n$ shared EPR pairs
from the introduction.  We modify the scheme  slightly   
to get an unconditionnally $\frac{1}{4}$--secure\footnote{By $\frac{1}{4}$--secure,
we really mean $\left(\frac{1}{4}-\negl\right)$--secure.} 
implementation 
$\pwotro{n,n}=(\prover',\verifier')$ of $\wotro^{n,n}$
in the \crqs\ model. This forms the basis upon which $\wotro^{n,m}$, with $m<n$,
is constructed  using a  collision-shelter. We prove that 
$\pwotro{n,n}$ is $\frac{1}{4}$--secure using shared maximally entangled pairs
of qutrits  as the \crqs\  to allow the use of a  particular set $\{\theta_a \}_{a\in\{0,1,2\}^n}$
of mutually unbiased bases, introduced by Wootters and Fields~\cite{WF89}. The
set  $\{\theta_a \}_{a}$  is shown to prevent any adversary 
$\adv:= \{\mathcal{A}_{a,c,v}\}_{a,c,v}$ from observing
$\mathcal{A}_{a,\mathsf{c}(a),v}\otimes \mathcal{V}^{a,\mathsf{c}(a),v}_1$
with probability better than $\frac{3}{4}$
when the \crqs\ is measured by  $\prover'$ and $\verifier'$. This
result may be of independent interest and is made possible
as  $\adv$'s success probability is given by an instance of a Weil sum that can
be upper bounded by Deligne's resolution of one of  Weil's conjectures~\cite{d74}. 

A protocol  $\pwotro{n,m}[\collshn]=(\prover'', \verifier'')$ for 
$\wotro^{n,m}$ with $m<n$ can then be constructed using a collision-shelter 
$\collshn$ in the obvious way. Upon input $a\in\{0,1,2\}^n$,
$\prover''$ runs $\prover'$ upon input $a$ to get $(c',v)\in\{0,1\}^n\times\{0,1\}^n$. $\prover''$   announces
$(a,c',v)$ to $\verifier''$.  The challenge 
produced by  $\pwotro{n,m}$ is simply set to $c:=G^n_s(a,c')\in\{0,1\}^m$
for $s$ a  $\crs$. 
$\verifier''$ simply runs $\verifier'$ on $(a,c',v)$ and accepts if $\verifier'$
accepts. It is not difficult to see that if $\collshn$
is a collision-shelter then no efficient adversary $\adv$ can do better against
$\pwotro{n,m}$ than an unconditional adversary against $\pwotro{n,n}$. 
As a result, $\pwotro{n,m}$ 
avoids all functions with probability $\frac{1}{4}$. Negligible soundness error 
can then be achieved  by parallel repetitions.

\subsection{\wotro\ and Quantum Lightning.}

Quantum lightning (\ql), introduced by Zhandry~\cite{zhandry_quantum_2019}, is a
quantum cryptographic task allowing anyone to generate quantum states of which
they can make exactly one copy (called the \emph{uniqueness property}).
The original construction of Zhandry based on an ad hoc assumption was shown insecure by Roberts~\cite{Roberts21}.

Informally, a \ql{} scheme consists of a quantum
algorithm $\thunder$ instructing how to construct bolts $\ket{\lightn}$ and of a
verification algorithm $\verf$ that on input $\ket \lightn$ returns a serial
number $s\in\bool^n$ without disturbing state $\ket\lightn$ such that no
efficient adversary can create two valid states with the same serial number. For
this to hold, there must be uncertainty in the serial number of newly
created bolts: for every \qpt{} adversary \adv{}, $\ket\lightn \leftarrow
\adv(\thunder)$ must satisfy $H_\infty(\verf(\ket\lightn))\in \omega(\lg n)$,
otherwise polynomially many tries would give two bolts with the same serial
number, contradicting uniqueness. Note that an efficient reduction does not
necessarily exist in the other direction: an adversary could for example produce
two valid states with identical serial numbers that each have maximal
min-entropy. Such an adversary appears useless for producing a single lightning
state with low min-entropy in the serial number.

We introduce a variant of quantum lightning where the bolt generation procedure
accepts an input. \emph{Typed quantum lightning} (\tql) is a new primitive
similar to \ql{} where $\thunder$ takes an additional parameter (or type)
$a\in\bool^n$. Intuitively, security asks that when we fix the type $a$, the
resulting scheme still produces unpredictable serial numbers. This is formalized
by requiring that the conditional min-entropy $H_\infty(S\mid A)$ is large. We
show that a \tql{} scheme with type length $n$ and serial number length $m$
implies the existence of a protocol for $\wotro^{n,m}$. The scheme asks the
prover to generate a typed QL state with type $a$ and teleport that state to the
verifier using EPR pairs from a \crqs{}, the verifier accepts if the teleported
state is a valid \tql{} state. A consequence if this scheme is that no \tql{}
scheme satisfying $n-m\in\omega(\lg n)$ can have its security be \fbb{}
reducible to a cryptographic game assumption.

\section{Notations \& Preliminaries}
We use $n\in\naturals$ as the security parameter throughout the paper.
We use $\poly[n]$ to denote a polynomial in $n$.  
A function $f:\N\rightarrow\N$ is said to be \emph{negligible} if 
for all polynomials $p(\cdot)$ and for $n\in\N$ sufficiently large, $f(n)\leq 1/p(n)$.
We denote a negligible function by $\negli{\cdot}$. We use ``\qpt{}'' as a shorthand
for \emph{quantum polynomial time}. We use $\ln(\cdot)$ and $\lg(\cdot)$ to respectively denote the base $e$ and 2 logarithms. To denote a Hilbert space of dimension $2^n$, we write
$\hilbert_n$.

For a set $A$, its cardinality is denoted $|A|$ and its complement $\bar A$. We
write $x\in_R A$ to indicate that $x$ is chosen uniformly at
random from $A$.

We often use the notation $f(\cdot)$ to denote functions as a way to
differentiate them from variables. If $f(\cdot,\cdot)$ is a function of two
arguments, we denote by $f(x,\cdot)$ the function of one argument defined by
restricting the first argument to value $x$. For two sets $A$ and $B$, we denote
the set of functions from $A$ to $B$ as $A\rightarrow B$.
Let $\mathcal{F}^{n,m}$ be the set of functions $\{0,1\}^n\rightarrow\{0,1\}^m$.
We will often view $m$ as a function $m(n)$ of the security parameter $n$. 
To simplify the notation, when $n$ and $m$ are clear in the context,
we will write $\mathcal{F} := \mathcal{F}^{n,m}$.

For a random variable $X$, $\expect{X}$ denotes its expected value
and for $X(r)$ a random variable function of $r$, $\expsub{r}{X}$ denotes
its expected value when $r$ is picked at random.
 Let $\Delta(A,B)= \frac 12\sum_a
  |\Pr[A=a]-\Pr[B=a]|$ denote the statistical distance between the distribution
  of two random variables $A$ and $B$ with the same domain.
For an operator $A\in\complex^{n\times n}$,  $\|A \|_1=\trace{\sqrt{A^*A}}$ denotes its trace norm.

\subsection{\sprs{} and the Fiat-Shamir Transform}
\label{sec:sprs-fiat-shamir}

Let $R\subseteq \bool^*\times\bool^*$ denote an arbitrary efficiently computable binary
relation such that if $(x,w)\in R$ then $|w|\leq p(|x|)$
for some polynomial $p(\cdot)$. We call $x$ a \emph{public instance} and $\pi$ a
\emph{witness} for $x$. The condition above ensures that the witness of any
public instance can be conveyed efficiently. From the binary relation $R$, we
define the language $L_R =\{x \,|\,(\exists \pi)[(x,\pi)\in R] \}\in \mathbf{NP}$ of
public instances with witnesses for them.
\begin{defi}[\spr~\cite{IvanSig10}]\label{def:sigma}
  A \spr\ $\Sigma=(\prover,\verifier)$ for a binary relation $R$ is a 3-message
  protocol with \emph{conversation alphabet} $\bool$. On public input $x\in
  L_R$ and on private input $\pi$ to $\prover$ such that $(x,\pi)\in R$, the
  protocol structure is as follows:
  \begin{itemize}
  \item The prover sends a message $a=\prover_1(x,\pi)\in\bool^n$ called the
    \emph{commitment}.
  \item The verifier sends a \emph{challenge} $c\in\bool^m$.
  \item The prover sends a \emph{reply} $z=\prover_2(a, x, \pi,c)\in\bool^*$, and the
    verifier outputs $\verifier(x,a,c,z)\in\{{\tt accept},{\tt reject}\}$.
  \end{itemize}
  Moreover, the protocol satisfies the following requirements:
  \begin{description}
  \item[\emph{Random public coins:}] The challenge $c\in \bool^m$ is 
    chosen uniformly at random in $\bool^m$ without any extra processing (i.e.
    no need for private information to generate $c$).
  \item[\emph{Perfect correctness:}] When $x\in L_{R}$, \verifier\ accepts \prover\ with probability $1$.
  \item[\emph{Special soundness:}] When $x\in L_R$, given two accepting conversations for the
    same commitment $(a,c,z)$ and $(a,c',z')$ with  $c\neq c'$, 
    there exists a \ppt\
    algorithm $\sff{E}$ such that $(x, \sff{E}(a,c,z,c',z'))\in R$.
  \end{description}
\end{defi}
We should mention here that \sprs\ are also often used as a synonym
of 3-message public-coins protocols (as in~\cite{KRR17, peikert_noninteractive_2019}, for instance)  
irrespectively of whether the proof system satisfies perfect correctness or
special soundness. However, since we are proving a negative result, there is no loss in generality in adopting the more restrictive definition of~\cite{IvanSig10}. 

By special soundness, if $x\notin L_R$ then for any commitment $a\in
\bool^n$, there is at most one challenge $c\in \bool^m$ such that for
some response $z$, $(a,c,z)$ is an accepting conversation. For some $\spr$ $\Sigma_L$ for a language $L$ and some $x\notin L$, we call the function that maps $a$ to this one challenge $c$ the \emph{bad challenge} function.

In the ROM,
the Fiat-Shamir transform $\FSH{\Sigma}{}=(\prover^{\fsl},\verifier^{\fsl})$ applied 
to  a \spr{} $\Sigma=(\prover,\verifier)$ with first message length $n$ and
challenge length $m$ for a proof of language membership is a 
non-interactive argument where, on public input $x\in L$ and random oracle $H:\bool^n\rightarrow \bool^m$,
\begin{enumerate}
\item $\prover^{\fsl}$ runs $a=\prover(x,w)$ computes $c=H(a)$
  and  $z=\prover_2(a,x,w,c)$, and sends  $(a,c,z)$ to $\verifier^{\fsl}$.
\item $\verifier^\fsl$ rejects if $c\neq H(a)$, otherwise outputs $\verifier(x,a,c,z)$.
\end{enumerate}
In the CRS model, the protocol is the same with the random oracle replaced with a family of  cryptographic hash functions
$\mathcal{H}=\{h_r\}_r$ where $h_r:
\bool^n\rightarrow\bool^m$ is sampled using a CRS.

\subsection{Black-Box Impossibility Results}\label{bbi}
The following define what is meant by a cryptographic game assumption.

\begin{defi}[\cite{HH09,bitansky_why_2013}]\label{game}
  A \emph{cryptographic game} is a tuple ${\cal G}=(\Gamma,c)$ composed of an interactive Turing machine $\Gamma$ and a constant $c\in[0,1]$. On security parameter $n\in\naturals$, the challenger $\Gamma(1^n)$ interacts with an adversary ${\cal A}_n$ and outputs a bit $b$. The output of this interaction is denoted by $\langle {\cal A}_n \rightleftharpoons \Gamma(1^n)\rangle$. The \emph{advantage} of the family of adversaries $\adv=\{\adv_n\}_{n\in\naturals}$ in game ${\cal G}$ is defined as
  \begin{equation*}
    \advantage{}{\adv,{\cal G}}[(n)] = \Pr[\langle {\cal A}_n \rightleftharpoons \Gamma(1^n)\rangle=1]-c \enspace.
  \end{equation*}
  A cryptographic game ${\cal G}$ is secure if for all \ppt{} adversary $\adv$, the advantage $ \advantage{}{\adv,{\cal G}}[(n)]$ is $\negl[n]$. The communication can be classical or 
  quantum.
\end{defi}

Intuitively, a protocol $\Pi$ for
 \wotro{} has its security reducible to a cryptographic game assumption $\cal G$ if there  
 exists an efficient way to transform any successful adversary $\adv$ against $\Pi$ 
into a challenger winning game $\cal G$. If this transformation works 
only provided the standard input-output behaviour of $\adv$ then we say that the security
of $\Pi$ is \fbb{} reducible to game $\cal G$. 
Quantum black-box reductions are
defined formally in Section~\ref{sect:bbox}.

In this paper, we show the impossibility of black-box reducing the security of 
cryptographic primitive, called \wotro, to any cryptographic game.
Our proof uses the general technique of simulatable attacks formalized by Wichs~\cite{wichs_barriers_2013} and applied by~\cite{bitansky_why_2013} to the Fiat-Shamir transform. An inefficient adversary $\adv$ against some primitive is \emph{simulatable} if there exists a simulator $\simulator$ such that no efficient algorithm can distinguish between $\adv$ and $\simulator$ from black-box query access. A cryptographic task having a simulatable attack cannot be black-box reduced to a secure cryptographic game since the reduction $\macho{R}{(\cdot)}$ cannot distinguish between the inefficient $\adv$ and the efficient $\simulator$, which means that $\macho{R}{\simulator}$ would yield an efficient algorithm for the game ${\cal G}$ with non-negligible advantage, 
contradicting the assumption.

\section{A Simple Non-Interactive Primitive}\label{construction} 
In this paper, we consider a simple non-interactive cryptographic primitive,
called a \emph{weak one-time random oracle} ($\wotro^{n,m}$) and illustrated in
Fig.~\ref{fig:w1tro-box} where the prover inputs $a\in\bool^n$ into the box and
gets $c\in\bool^m$ as output while the verifier inputs nothing and gets $(a,c)$
as output. An implementation of this primitive is a protocol taking place
between the prover and the verifier. The verifier $\verifier$ is a machine that
takes no input, interacts with the prover in the way prescribed by the protocol,
and either accepts and outputs $(a,c)$ or rejects and outputs $\bot$. In an
honest implementation, the prover is a machine $\prover$ taking as input an $a
\in \bool^n$ and interacts with the verifier as specified by the protocol, in
such a way that the verifier accepts and outputs the same $c$. The strings $a$
and $c$ can then be determined from the transcript of the protocol. We can then
view the whole protocol in the honest case as a conditional distribution
$\Pi(c|a)$ that tells us the probability of getting the challenge $c$ given that
the prover was given $a$ as input. 

In a dishonest implementation, the prover $\dprover$ takes no input at all (it is free to choose $a$) and might behave in a way that will cause the verifier to reject. The protocol is then simply a joint probability distribution $\tilde{\Pi}_{\dprover}(a,c,v)$, representing the distribution one obtains when $\dprover$ runs the protocol with the honest verifier $\verifier$, and where $v \in \bin$ is 1 when the verifier accepts and 0 if he rejects.

We now define correctness and security of an implementation. In a correct implementation of this primitive, $\Pi(c|a)$ will reflect exactly the same distribution over $a$ and $c$ given by the ideal box, namely $c$ will be uniformly distributed and independent of $a$, and the verifier always accepts when the prover is honest:

\begin{defi}[$\epsilon$-correctness]\label{def:correct-impl}
  A protocol $\Pi$ is a \emph{$\epsilon$-correct} implementation of
  $\wotro^{n,m}_{\Gamma}$ if for all $a \in \bool^n$ the conditional
  distribution $\Pi(c|a)$ is $(1-\epsilon)$-close (in statistical distance) to the
  uniform distribution over $c$ and if $\verifier$ 
  accepts with probability at least $1-\negl[n]$ when the
  prover is honest. $\Pi$ is said to be \emph{statistically correct}
  if it is $(1-\negl[n])$--correct.
  \end{defi}

As for our security definition, it will be rather weak (hence the ``weak'' in
the name of the primitive): we will only require that in a secure
implementation, a dishonest prover $\dprover$ cannot steer the choice of $c$
towards a deterministic function of $a$. Rather than require that $c$ be almost
uniform and independent, we will only demand that there be \emph{some}
randomness left in this choice. 

\begin{defi}[$\delta$--avoiding]\label{def:d-avoiding}
    For $0\leq \delta \leq 1$, we say that a tuple of random variables $(A,C,V)$ taking values in $\bool^n \times \bool^m \times \bin$ \emph{$\delta$--avoids} the function $\sff{c}: \bool^n \rightarrow \bool^m$ if 
 \[ \Pr\left[ V = 1 \wedge C = \sff{c}(A) \right] \leq 1- \delta \enspace.  \]
\end{defi}

This then leads to the following definition of security for an implementation of
\wotro{}.

\begin{defi}[$\delta$--security]\label{def:secure-impl}
  A protocol is a statistically (resp. computationally) $\delta$--secure
  implementation of $\wotro^{n,m}_{\Gamma}$ if for all dishonest provers (resp.
  all \qpt{} dishonest provers) $\dprover$, the random variable tuple $(A,C,V)$
  with joint distribution $\tilde{\Pi}_{\dprover}(a,c,v)$ $\delta$--avoids all
  functions $\sff{c}: \bool^n \rightarrow
  \bool^m$. We say that a protocol for \wotro{} is \emph{statistically (resp. computationally)
secure} if it is statistically (resp. computationally) $(1-\negl)$--secure.
\end{defi}

\paragraph{Basic Facts About \wotro.}
Observe that there is a trivial perfectly secure 2--message protocol for \wotro{} where \prover{} sends $a$ and \verifier{} sends a uniformly random $c$. Therefore, we will focus on non-interactive  (or 1--message) implementations of \wotro. 
A secure non-interactive \wotro{} protocol provides enough conditional randomness for sound instantiation of the Fiat-Shamir transform when applied to public-coin special-sound 3--message interactive proofs (\sprs{}). 

In the plain model, there is no secure \wotro\ protocol as the honest prover program defines the output $c$ as a function of $a$ that can never be avoided. In the CR\$ model, there exists a simple statistically $(1-\negli{m-n})$--secure one-message protocol when $m>n$, a statistically $\frac 1e$--secure protocol when $m=n$ and there is no protocol for $m<n$ whose computational security can be black-box reduced to a cryptographic game assumption, as a consequence of~\cite{bitansky_why_2013}. A detailed examination of these facts is provided in Appendix~\ref{sec:basicfactsproofs}.

\paragraph{\wotro{} in the \crqss{} Model.}

Since the object of study is the (im)possibility of the \wotro{} primitive in
the \crqss{} model, we present a general form for a 1--message \wotro{} protocol
in this model. 

\begin{defi}[\wotro{} in the \crqss{} model]\label{def:wotro}
A $\wotro{}^{n,m}$ protocol $\Pi=(\prover,\verifier)$ in the \crqss{} model consists of 
  \begin{itemize}
  \item A \crqss{} $\Psi_{PV}\in \density{\hilbert_{PV}}$ 
  \item A mapping of $a\in\bool^{n}$ to an efficient POVM $\mathcal{N}^a=\{N^a_{y,w}\}_{(y,w)\in\bool^{m\times \ell}}$ on register $P$.
  \item A mapping of $a\in\bool^{n}$, $y\in\bool^m$ and $w\in\bool^\ell$ to an efficient POVM $\mathcal{V}^{a,y,w}=\{V^{a,y,w}_0,V^{a,y,w}_1\}$ on register $V$.
  \end{itemize}
    On input $a\in\bool^n$:
  \begin{enumerate}
  \item $\prover$ applies POVM $\mathcal{N}^a$ on
    register $P$ of $\Psi_{PV}$ to obtain $y$ and an auxiliairy
    verification string $w$ and sends $(a,y,w)$ to the verifier.
  \item $\verifier$ applies POVM $\mathcal{V}^{a,y,w}$ on register $V$ of
    $\Psi_{PV}$, accepts and outputs $(a,y)$ if the result is $1$, and
    rejects and outputs nothing if the result is $0$.
  \end{enumerate}
  The conditional distribution of $y$ given $a$ is
  $\Pi(y|a)=\sum_w\trace{N^a_{y,w}\otimes V_1^{a,y,w}\Psi_{PV}}$ for $y\neq
  \bot$ and $\Pi(\bot|a)=\sum_{y,w}\trace{N^a_{y,w}\otimes
    V_0^{a,y,w}\Psi_{PV}}$. We say that $\Pi$ is \emph{$\delta$--correct} if
  $\frac 1{2^n}\sum_{a\in\bool^n}\sum_{y\neq \bot}\Pi(y|a)\geq \delta$.
\end{defi}
Note that requiring the auxiliary verification string $w$ to be classical is not
a restriction since the \crqss{} can contain EPR pairs for the teleportation of
an arbitrary quantum state from the prover to the verifier.

\subsection{\wotro\ to Implement the Fiat-Shamir Heuristic}
\label{sec:wotro-impl-fs}

Let $R_L$ be a relation for a language $L$ and let $\Sigma_L=(\prover_L,
\verifier_L)$ be a \spr{} for $R_L$ with commitments in $\Gamma^n$ and
challenges in $\Gamma^m$. Consider a secure implementation
$\Pi_\wotro=(\prover_\wotro,\verifier_\wotro)$ of $\wotro^{n,m}_{\Gamma}$. We
construct a non-interactive zero-knowledge proof (argument) system for $L$ by
applying the Fiat-Shamir transform to $\Sigma_L$ using the protocol $\Pi_\wotro$
as the instantiation of the hash function.

\begin{blockquote}
  \rule{\linewidth}{1pt}
  {\bf Protocol} $\Pi_\wotro[\Sigma_L]$ \\
  {\bf Setup: } A $\spr$ $\Sigma_L=(\prover_L,\verifier_L)$ where
  $\prover_L=(\prover_L^1,\prover_L^2)$ with commitments of size $n$ and challenges
  of size $m$ and a protocol $\Pi_\wotro=(\prover_\wotro, \verifier_\wotro)$ for
  $\wotro^{n,m}_\Gamma$.

  \vspace{1em}
  
  {\bf Prover message:} on public input $x\in L$ and private input $w$
  \begin{enumerate}
  \item compute $a\leftarrow \prover_L^1(x,w)$,
  \item compute $c\leftarrow \prover_\wotro(a)$,
  \item compute $z=\prover_L^2(a,x,w,c)$ and
  \item send $z$ to the verifier.
  \end{enumerate}

  {\bf Verification:} on public input $x\in L$ and upon reception of $z$,
  \begin{enumerate}
  \item compute $(a,c)\leftarrow \verifier_\wotro()$ 
  \item if $\verifier_\wotro$ rejected, output ${\tt reject}$ else
    output $\verifier_L(x,a,c,z)$.
  \end{enumerate}
  \rule{\linewidth}{1pt}
\end{blockquote}

\begin{theorem}\label{thm:wotro-impl-fs}
  If $\Sigma_L$ is a $\spr$ for language $L$ and if $\Pi_\wotro$ is a
  statistically (resp. computationally) $(1-\delta)$--secure and correct
  implementation of $\wotro$, then $\Pi_\wotro[\Sigma_L]$ is a statistically
  (resp. computationally) sound (with soundness error $\delta$) and perfectly
  correct non-interactive proof system for language membership in $L$.
\end{theorem}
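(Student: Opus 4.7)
The plan is to prove the two properties (perfect correctness and soundness) separately, with the soundness direction being a straightforward reduction built on the special soundness of $\Sigma_L$.

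For correctness, I would argue that on honest execution with $x \in L$ and witness $w$, the verifier $\verifier_\wotro$ accepts (by correctness of $\Pi_\wotro$) and outputs $(a,c)$ with $c$ statistically close to uniform. Then $z = \prover_L^2(a,x,w,c)$ is a valid third message of $\Sigma_L$, and by perfect correctness of the \spr{}, $\verifier_L(x,a,c,z)$ accepts for \emph{every} challenge $c$, so the overall verifier accepts.

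For soundness, I would fix $x \notin L$ and use the special soundness of $\Sigma_L$ to define a \textbf{bad challenge function} $\sff{c}: \bool^n \rightarrow \bool^m$ as follows: for each $a$ such that there exists a $c$ and $z$ with $\verifier_L(x,a,c,z)=\texttt{accept}$, special soundness guarantees that $c$ is unique, so set $\sff{c}(a) := c$; for all other $a$, set $\sff{c}(a)$ arbitrarily. By construction, an accepting transcript $(a,c,z)$ of $\verifier_L(x,\cdot)$ must satisfy $c = \sff{c}(a)$.

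Given any (possibly inefficient, or \qpt) dishonest prover $\dprover^*$ for $\Pi_\wotro[\Sigma_L]$, I construct a dishonest prover $\tilde{\prover}$ for $\Pi_\wotro$ that simulates $\dprover^*$'s interaction with $\verifier_\wotro$ and simply discards the final message $z$. The joint distribution $(A,C,V)$ produced by $\tilde{\prover}$ running against $\verifier_\wotro$ coincides exactly with the distribution of $(a,c,\mathbf{1}[\verifier_\wotro \text{ accepts}])$ in a run of $\Pi_\wotro[\Sigma_L]$ with prover $\dprover^*$. Since the verifier of $\Pi_\wotro[\Sigma_L]$ rejects when $\verifier_\wotro$ rejects and otherwise outputs $\verifier_L(x,a,c,z)$, its acceptance probability is upper bounded by $\Pr[V = 1 \wedge C = \sff{c}(A)]$, which is at most $\delta$ by the $(1-\delta)$--security of $\Pi_\wotro$ applied to $\sff{c}$. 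In the computational case, observe that $\tilde{\prover}$ is \qpt{} whenever $\dprover^*$ is, so the same bound applies to \qpt{} adversaries under the computational hypothesis on $\Pi_\wotro$.

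There is no significant technical obstacle here; the argument is essentially a direct unpacking of definitions, and the only step requiring some care is the correct definition of the bad challenge function (handling the $a$'s with no accepting challenge) and the observation that running $\dprover^*$ and discarding $z$ yields a legitimate $\wotro$ adversary whose induced distribution matches the relevant marginals of the composed protocol.
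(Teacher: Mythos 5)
Your proposal is correct and follows essentially the same route as the paper's proof: correctness via the correctness of $\Pi_\wotro$ plus the perfect correctness of $\Sigma_L$, and soundness by extracting the bad challenge function from special soundness and invoking the $(1-\delta)$--security of $\Pi_\wotro$ against it. Your version is only marginally more explicit than the paper's in spelling out the bad challenge function on commitments with no accepting challenge and in exhibiting the induced $\wotro$ adversary, but these are details the paper's argument implicitly covers.
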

\begin{proof}
  We first show correctness. By the correctness of $\Pi_\wotro$, it holds that
  the challenge $c\in\Gamma^m$ produced by $\Pi_\wotro$ is uniformly
  distributed. When both parties are honest, the probability that $\verifier_L$
  accepts when $c$ is taken as the output of $\Pi_\wotro$ in protocol
  $\Pi_\wotro[\Sigma_L]$ is the same as the probability that $V_L$ accepts in an
  execution of $\Sigma_L$. Since \sprs{} are perfectly correct by definition,
  this probability is one.

  Now for soundness, again by the definition of $\sprs$, protocol $\Sigma_L$
  satisfies special soundness. That is, for $x\notin L$, for any commitment
  $a\in\Gamma^n$, there exist at most one challenge $c\in\Gamma^m$ that leads to
  an accepting conversation. Let $\sff{c}:\Gamma^n\rightarrow\Gamma^m$ be the
  function that maps commitment $a$ to this unique challenge $c$ that makes
  $\verifier_L$ accept. If $\Pi_\wotro$ is a statistically $(1-\delta)$--secure implementation
  of $\wotro^{n,m}$, then the output of $\Pi_\wotro$ $(1-\delta)$--avoids any
  function for any dishonest $\tilde \prover_\wotro$. The probability that
  $\verifier$ for protocol $\Pi_\wotro[\Sigma_L]$ accepts when $x\notin L$ is
  equal to the probability that $\verifier_\wotro$ accepts output $(A,C)$
  \emph{and} that $\verifier_L$ accepts on input $(x,A,C,Z)$ for some $Z$. By
  special soundness, this probability is at most the probability that $\tilde
  \prover_\wotro$ can make $\verifier_\wotro$ accept the output
  $(A,\sff{c}(A))$. By the statistical $(1-\delta)$--security of $\Pi_\wotro$, this
  probability is at most $\delta$.

  The reasoning for computational soundness is the same, but where we instead
  restrict to \qpt{} adversarial provers $\tilde \prover_\wotro$ against
  $\Pi_\wotro$.\qed
\end{proof}

\subsection{\wotro{} from Non-Local Correlations}
\label{sec:wotro-from-non}

A non-local box (PR box) is a hypothetical device distributed between two
parties such that party $A$ inputs $x\in\bool$ into the device and gets an output $u\in\bool$ and party
$B$ inputs $y\in\bool$ and gets $v\in\bool$. The input/output behaviour of
the PR box is described by
\begin{align}\label{eq:1}
  \Pr[u,v\mid x,y] =
  \left\{
  \begin{array}{ll}
    \frac 12 &\text{ if } u\oplus v=x\wedge y\\
    0 &\text{ otherwise. }
  \end{array}
  \right.
\end{align}

Let ${\cal C}:\bool^n\rightarrow \bool^{N}$ be an error correcting code with
minimum distance $\epsilon n$ (for any distinct $x,x'\in\bool^n$, the Hamming distance
between ${\cal C}(x)$ and ${\cal C}(x')$ is at least $\epsilon n$). Let
$\{h_r:\bool^N\rightarrow\bool^m\}_{r\in {\cal R}}$ be a universal$_2$ family of
hash functions. The $\wotro^{n,m}$ protocol is as follows:
\begin{enumerate}
\item On CR\$ $r$, and using $N$ PR boxes,
\item Prover: on input $a\in\bool^n$, compute codeword $x:={\cal C}(a)$ and
  input $x$ into its interface of the $N$ PR boxes. Let $u\in\bool^N$ be the
  result. Send $(a,x,u)$ to the verifier and use $(a,h_r(u))$ as output.
\item Verifier: On reception of $a,x$, check that $x={\cal C}(a)$. Pick
  $y\in\bool^N$ uniformly at random and input $y$ into its interface of the $N$
  PR boxes. Let $v\in\bool^N$ be the result. Check that $u\oplus v=x\wedge y$.
  If any of the checks failed, output $\bot$, otherwise output $(a,h_r((x\wedge y) \oplus v))$.
\end{enumerate} 
\begin{theorem}
  The above protocol avoids every function $\sff{c}:\bool^n\rightarrow\bool^m$.
\end{theorem}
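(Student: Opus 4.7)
The plan is to show that, for every dishonest prover $\dprover$ and every function $\sff{c}:\bool^n\to\bool^m$, the probability $\Pr[V=1\wedge C=\sff{c}(A)]$ is at most $O(2^{-m})$ once the code $\mathcal{C}$ and the universal hash family $\{h_r\}_r$ are chosen with appropriate parameters (e.g.\ $N\in O(n+m)$ via Gilbert--Varshamov). The key leverage is the non-signaling constraint of the NL-box, which forces $\dprover$ to commit to an input $x'\in\bool^N$ before learning the verifier's challenge $y\in\bool^N$.

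First I would pin down the verifier's checks as a random constraint on $y$. Let $x'$ be $\dprover$'s NL-box input, $u$ its NL-box output, and $(a,x,u')$ the message sent to $\verifier$, where $x=\mathcal{C}(a)$ (otherwise $\verifier$ rejects outright). Combining the NL-box relation $u\oplus v = x'\wedge y$ with the verifier's check $u'\oplus v = x\wedge y$ yields $u'\oplus u = (x\oplus x')\wedge y$. Writing $\Delta_a := \{i : \mathcal{C}(a)_i \neq x'_i\}$, this imposes $u'_i = u_i$ deterministically for $i\notin\Delta_a$ and demands $u'_i\oplus u_i = y_i$ for every $i\in\Delta_a$; since $y$ is uniform in $\bool^N$ and independent of $\dprover$'s view, the overall check passes with probability exactly $2^{-|\Delta_a|}$.

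I would then split the bound on $\Pr[V=1\wedge C=\sff{c}(A)]$ according to $\dprover$'s choice of $a$, using the minimum-distance property of $\mathcal{C}$: for any fixed $x'$, at most one codeword $\mathcal{C}(a_0)$ lies within Hamming distance $\epsilon n/2$ of $x'$. For the ``close'' choice $a=a_0$, the prover has only $|\Delta_{a_0}|$ free positions in $u'$; averaging over $r$ and $u$ and union-bounding over those $2^{|\Delta_{a_0}|}$ candidate values of $u'$, the probability of meeting the hash condition $h_r(u')=\sff{c}(a_0)$ is at most $\min(1,\,2^{|\Delta_{a_0}|}\cdot 2^{-m})$, which combined with the $2^{-|\Delta_{a_0}|}$ check-pass factor contributes at most $2^{-m}$ to the success probability. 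For every ``far'' choice $a\neq a_0$ we have $|\Delta_a|\geq \epsilon n/2$, so the check alone passes with probability at most $2^{-\epsilon n/2}$, and a union bound over the at most $2^n$ such codewords contributes $2^n\cdot 2^{-\epsilon n/2}$.

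Summing the two regimes yields $\Pr[V=1\wedge C=\sff{c}(A)] \leq 2^{-m} + 2^{n-\epsilon n/2}$, which becomes $O(2^{-m})$ once the code is chosen with $\epsilon n\geq 2(n+m)+\omega(\lg n)$, giving $(1-\negl[n])$--avoidance whenever $m\in\omega(\lg n)$. The main obstacle I anticipate is handling the hash family rigorously under only the stated universal$_2$ assumption, since universal$_2$ bounds $\Pr_r[h_r(u_1)=h_r(u_2)]$ for $u_1\neq u_2$ but not $\Pr_r[h_r(u)=c]$ for a fixed pair $(u,c)$. I would address this either by strengthening to a strongly $2$--universal family (e.g.\ $h_r(u)=Mu+b$ with uniform $(M,b)$), or by noting that in the ``close'' regime the min-entropy of $u$ conditioned on $\dprover$'s view is at least $N-|\Delta_{a_0}|\gg m$, so that the leftover hash lemma delivers the bound using only the universal$_2$ property.
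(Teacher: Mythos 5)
Your proposal is correct and follows essentially the same route as the paper: the non-signaling constraint pins the prover to an NL-box input $x'$ before $y$ is drawn, the minimum distance of $\mathcal{C}$ leaves at most one codeword within $\epsilon n/2$ of $x'$, the consistency check passes with probability $2^{-|\Delta_a|}$ over the verifier's uniform $y$, and the universal hash bounds the probability of landing on $\sff{c}(a_0)$ for the unique close codeword. Your refinements are all sound and in fact tighten points the paper treats loosely — the explicit union bound over far codewords cleanly handles the adaptive choice of $a$ (at the cost of requiring $\epsilon n/2 - n \in \omega(\lg n)$), and your caveat about universal$_2$ is warranted, since the paper's own step asserting $\mathbb{E}_r\left[|h_r^{-1}(z)|\right]=2^{N-m}$ for a fixed $z$ does not follow from universal$_2$ alone; either your strongly $2$--universal fix, or the paper's observation that the verifier's computed value $(x_0\wedge y)\oplus v = \hat u \oplus (\hat x\oplus x_0)\wedge y$ is uniform and independent of $r$ (which sidesteps the $r$-dependence of the prover's chosen $u'$ in your union-bound variant), closes it.
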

\begin{proof}
  We begin by describing the most general strategy for an adversary $\adv$
  against the protocol. $\adv$ can input arbitrary values in the PR boxes in any
  order an such that input bits can depend on the CR\$ $r$ and on the boxes'
  outputs to previous inputs. Let $\hat x\in\bool^N$ and $\hat u\in\bool^N$
  denote the input and output bits to the $N$ PR boxes, respectively. \adv{} is
  then free to choose $a$, $x$ and $u$ adaptively based on $\hat x$ and $\hat u$
  and send $(a,x,u)$ to the verifier. Since the verifier checks that $x={\cal
    C}(a)$ and aborts otherwise, we can assume that $x$ is indeed the codeword
  that corresponds to $a$.

  We show that \adv{} has little freedom in the choice of $a$ due to the
  error-correcting code and input/output behaviour of the PR boxes. Since $\cal
  C$ has minimal distance $\epsilon n$, there is at most one codeword $x_0$ such
  that $d(x_0,\hat x)\leq \frac\epsilon2 n$. Let $a_0={\cal C}^{-1}(x_0)$. If
  \adv{} tries to send $(a,x={\cal C}(a),u)$ for any $a\neq a_0$, then the
  verifier will abort with overwhelming probability as the following argument
  shows. Let $(y,v)$ denote the input/output pair of the verifier. Then,
  \begin{align*}
    &\Pr[x\wedge y=u\oplus v]\\
    &= \Pr[ x\wedge y = u\oplus(\hat u\oplus \hat x\wedge y) ]\\
    &= \Pr[ x\wedge y\oplus u = \hat x \wedge y \oplus \hat u]\\
    &= \prod_i \Pr[ x_i\wedge y_i\oplus u_i = \hat x_i \wedge y_i \oplus \hat u_i]\enspace.
  \end{align*}
  Now, consider the set of positions where $\hat x$ and $x$ differ: ${\cal
    S}=\{i: \hat x_i\neq x_i\}$. For any $i\in{\cal S}$,
  \begin{itemize}
  \item When $y_i=0$, the expression becomes $u_i=\hat
    u_i$.
  \item When $y_i=1$, the expression becomes $x_i\oplus u_i=\hat x_i\oplus \hat
    u_i$ and it is satisfied when $u_i\neq \hat u_i$.
  \end{itemize}
  Since $y$ is chosen independently and uniformly at random by the verifier, for
  every $i\in {\cal S}$, the expression $x_i\wedge y_i\oplus u_i = \hat x_i
  \wedge y_i \oplus \hat u_i$ has probability $\frac 12$ of not being satisfied.
  Therefore since $|{\cal S}|\geq \frac \epsilon 2n$ whenever $x\neq x_0$, the verifier rejects with
  probability at least $2^{-\frac\epsilon 2 n}$.

  Finally, since \adv{} is obligated to send $a_0$ and $x_0$ as described
  above and $u$ that satisfies $u\oplus v=x_0\wedge y$ as argued above, the
  output of the verifier satisfies
  \begin{align*}
    \Pr[c=\sff{c}(a_0)]&=\Pr[h_r(u)=\sff{c}(a_0)]=\Pr[h_r(v\oplus x_0\wedge y)=\sff{c}(a_0)]\\
                       &=\Pr[v\oplus x_0\wedge y\in h_r^{-1}(\sff{c}(a_0))]= \frac{|h_r^{-1}(\sff{c}(a_0))|}{2^{-N}}
  \end{align*}
  since $v\oplus x_0\wedge y$ is uniformly distributed.
  On average over the choice of $h_r$, the above expression equals $2^{-m}$
  because the universal$_2$ condition implies $\mathbb{E}_r
  |h_r^{-1}(z)|=2^{N-m}$ for any $z\in\bool^m$.
  \qed
\end{proof}

It is well known that quantum mechanics can approximate the correlations of
``noisy'' PR boxes with success probability of around $85\%$ whereas the best
classical strategies can only achieve $75\%$ success probability. We show in
Section~\ref{sec:nouveaux-trucs} that \wotro{} is impossible in the \crqss{}
model, i.e.\ in a model where PR boxes can be approximated with probability
$85\%$. A natural question is: what is the level of noise at which \wotro{} is
no longer possible? 

We must point out to the reader that our results do not imply that \wotro{} is
impossible using $85\%$ PR boxes. There is a fundamental difference between
(noisy) PR boxes -- which capture classical correlations -- and entangled
states: the latter can be measured coherently with a collapse happening on the
other end. Our impossibility crucially relies on the adversary's ability to
perform a coherent measurement on its register. In other words, if there is a
secure \wotro{} protocol using $85\%$ PR boxes, then this protocol is no longer
secure when the boxes are instantiated using EPR pairs. The question above thus
remains open.

\section{Impossibility of \wotro{} in the \crqss{} Model}
\label{sec:nouveaux-trucs}

In this section, we prove our main result: there exists no protocol for $\wotro$
in the \crqss\ model with statistical security or  
with computational security established by black-box reduction to a cryptographic game, even
a non falsifiable one. Our black-box impossibility result is proven using a similar technique as \cite{bitansky_why_2013,bgw12}. In Section~\ref{grosseattaque}, 
we  define an inefficient adversary that  breaks completely any protocol
implementing $\wotro^{n,m}$ with $n-m\in\omega(n)$ 
in the \crqss\ model. We call this adversary the Chernoff adversary
or the Chernoff attack.
In Section~\ref{sect:bbox}, we define what we mean by the
security of $\wotro$ to be established by quantum black-box reduction
to crypto game.
We generalize to the quantum case this standard way
of proving the security of  cryptographic protocols.   
In Section~\ref{sec:simul-above-attack}, we show how to 
efficiently simulate the attack described in Section~\ref{grosseattaque}.
We then conclude that the security of any protocol for $\wotro^{n,m}$
in the \crqss\ model
cannot be established by a quantum black reduction to crypto
games. As a consequence, feeding the 
reduction with the simulator of the Chernoff adversary instead
of the Chernoff adversary will win the game while
running efficiently. It follows that the game is trivial
if such a reduction existed.

\subsection{The Chernoff Attack Against Any Implementation of \wotro}\label{grosseattaque}
  In this section, we show that there exist (inefficient) attacks against any
 \mbox{1--message} $\wotro^{n,m}$ protocol in the \crqss{} model for $m$ (sufficiently) smaller than $n$. 
  The following definition describes a general strategy for an attack against $\wotro$.
  \begin{defi}\label{def:attack}
    An attack $\adv^f_n$ against a $\wotro{}^{n,m}$ protocol
    (Definition~\ref{def:wotro}) is characterized by a  \emph{target function}
    $f:\bool^n\rightarrow \bool^m$ and a (possibly inefficient) POVM
    $\{P^f_{a,y,w}\}_{(a,y,w)\in\bool^{n\times m\times \ell}}$. The adversary performs this POVM on register $P$ of \crqss{} $\Psi_{PV}$ and sends the result $(A,Y,W)$ to the verifier.
We say that this attack \emph{hits function} $f$ except with probability $\epsilon(\adv^f_n)$ if
    \begin{equation*}
      1-\epsilon(\adv^f_n) = \Pr[Y=f(A) \wedge \text{ \verifier\ \textsc{accepts}}]
      = \sum_{a,w}\trace{(P^f_{a,y,w}\otimes V_1^{a,f(a),w}) \Psi_{PV}}\enspace .
    \end{equation*}
  \end{defi}

  We construct an attack whose success is based on the Chernoff bound for
  operators proven by Ahlswede and Winter in~\cite{ahlswede-winter} and stated below. 
  For  operators $A$ and $B$ and $0\leq\eta\leq 1$, the notation $A\in[(1-\eta)B; (1+\eta)B]$ means that $A\geq (1-\eta)B$ and  $A\leq (1+\eta)B$. 
  \begin{lemma}[Operator Chernoff bound]\label{chernono}
    Let $X_1,\ldots,X_M$ be i.i.d.~random variables taking values in the operators
    ${\cal D}({\cal H})$ 
 on the $D$--dimensional Hilbert space ${\cal H}$ such that
    $0\leq X_j\leq \id$, with $A=\mathbb{E} [X_j]\geq\alpha \id$, and let $0<\eta \leq 1/2$. Then
    \begin{equation}
      \Pr \left[ \frac{1}{M}\sum_{j=1}^M X_j \not\in [(1-\eta)A;(1+\eta)A] \right] \leq 2D \exp\left( -M\frac{\alpha\eta^2}{2\ln 2} \right).
    \end{equation}
  \end{lemma}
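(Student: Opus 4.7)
The plan is to follow the Ahlswede--Winter approach, controlling the two-sided event by a union bound over the upper-tail event $\frac{1}{M}\sum_j X_j \not\leq (1+\eta)A$ and the lower-tail event $\frac{1}{M}\sum_j X_j \not\geq (1-\eta)A$, losing only a factor of two in front of $D$. To eliminate the non-scalar character of the mean, I would conjugate by $A^{-1/2}$ and set $Y_j := \alpha\,A^{-1/2} X_j A^{-1/2}$; the $Y_j$ are then i.i.d., satisfy $0 \leq Y_j \leq \alpha A^{-1} \leq \id$ (using $A \geq \alpha\,\id$), and have mean $\mathbb{E}[Y_j] = \alpha\,\id$. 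The upper-tail event becomes $\lambda_{\max}(S_M) > (1+\eta)\alpha M$ for $S_M := \sum_{j=1}^M Y_j$, which is the standard form addressed by the scalar-mean matrix Chernoff argument.

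For this scalar-mean step I would apply Markov's inequality to the trace of the matrix exponential: for $\lambda > 0$,
\begin{equation*}
\Pr\!\left[\lambda_{\max}(S_M) > t\right] \leq e^{-\lambda t}\,\mathbb{E}\!\left[\trace{e^{\lambda S_M}}\right].
\end{equation*}
I would then peel off one $Y_j$ at a time using the Golden--Thompson inequality $\trace{e^{H+K}} \leq \trace{e^H e^K}$ combined with the independence of the summands to push $\mathbb{E}_{Y_j}$ inside the trace, and bound each factor by the operator inequality $e^{\lambda Y} \leq \id + (e^\lambda - 1)Y$, which holds eigenvalue-wise for $0 \leq Y \leq \id$ and $\lambda > 0$. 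Iterating yields $\mathbb{E}[\trace{e^{\lambda S_M}}] \leq D\,(1 + (e^\lambda - 1)\alpha)^M \leq D\exp(M\alpha(e^\lambda - 1))$. Setting $t = (1+\eta)\alpha M$ and optimizing with $\lambda = \ln(1+\eta)$ gives the Chernoff exponent $D\exp(-M\alpha\,[(1+\eta)\ln(1+\eta) - \eta])$, and a standard estimate for $(1+\eta)\ln(1+\eta) - \eta$ on the range $0 < \eta \leq 1/2$ produces the factor $\alpha\eta^2/(2\ln 2)$ in the exponent claimed in the lemma.

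The lower tail is obtained symmetrically, either by running the same Markov--Golden-Thompson computation with $\lambda < 0$ or by applying the upper-tail argument to the operators $\id - Y_j$ (whose mean is $(1-\alpha)\id$); combining the two via union bound delivers the factor $2D$. The main obstacle throughout is the non-commutativity of the summands: the Laplace transform $\mathbb{E}[\trace{e^{\lambda S_M}}]$ does not factor over the independent $Y_j$ without Golden--Thompson, and once an operator bound $\mathbb{E}[e^{\lambda Y_j}] \leq c\,\id$ is in hand it must be inserted inside a trace against the positive semidefinite operator $e^{\lambda S_{k-1}}$, a step that relies on the monotonicity $\trace(PQ) \leq c\,\trace(P)$ valid whenever $P \geq 0$ and $Q \leq c\,\id$.
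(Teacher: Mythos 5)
The paper offers no proof of this lemma: it is imported verbatim from Ahlswede and Winter, so your proposal is being judged on its own rather than against an in-paper argument. Your route --- conjugating by $A^{-1/2}$ to reduce to i.i.d.\ summands $Y_j$ with $0\leq Y_j\leq\id$ and scalar mean $\alpha\id$, a union bound over the two tails, Markov applied to $\trace{e^{\lambda S_M}}$, Golden--Thompson peeling combined with $e^{\lambda Y}\leq \id+(e^\lambda-1)Y$ and $\trace{PQ}\leq c\,\trace{P}$ for $P\geq 0$, $Q\leq c\,\id$ --- is exactly the Ahlswede--Winter argument, and every step up to the upper-tail exponent $M\alpha\bigl[(1+\eta)\ln(1+\eta)-\eta\bigr]$ is sound.

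The gap is your final sentence about the exponent. There is no ``standard estimate'' giving $(1+\eta)\ln(1+\eta)-\eta\geq \eta^2/(2\ln 2)$ on $(0,1/2]$; the inequality runs the other way. Since $g(\eta)=\eta^2/2-\bigl[(1+\eta)\ln(1+\eta)-\eta\bigr]$ satisfies $g(0)=0$ and $g'(\eta)=\eta-\ln(1+\eta)\geq 0$, one has $(1+\eta)\ln(1+\eta)-\eta\leq \eta^2/2<\eta^2/(2\ln 2)$ for every $\eta>0$ (because $\ln 2<1$), so the bound you derive is strictly weaker than the one claimed. What your argument does prove is the lemma with $4\ln 2$ in place of $2\ln 2$, via the Bernstein-type estimate $(1+\eta)\ln(1+\eta)-\eta\geq \frac{\eta^2}{2(1+\eta/3)}\geq\frac{3\eta^2}{7}>\frac{\eta^2}{4\ln 2}$ for $\eta\leq 1/2$. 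The constant as printed cannot in fact be reached by any argument: already for $D=1$ and $X_j$ Bernoulli$(\alpha)$ with $\alpha$ small and $\eta=1/2$, the true upper-tail probability decays at rate $d((1+\eta)\alpha\,\|\,\alpha)\approx 0.108\,\alpha$ per sample, which for large $M\alpha$ exceeds $2e^{-0.180\,M\alpha}$. This is a known slip in the original statement (corrected versions in the literature use $4\ln 2$), and it is harmless here since the paper only needs negligibility, but your write-up should record the constant your derivation actually yields rather than assert the false estimate. A smaller point: your fallback for the lower tail via $\id-Y_j$ does not work, since the event becomes a deviation of size $\eta\alpha$ about a mean of $(1-\alpha)\id$ and yields an exponent of order $M\alpha^2\eta^2/(1-\alpha)$; use the direct $\lambda<0$ computation, whose exponent $M\alpha\bigl[(1-\eta)\ln(1-\eta)+\eta\bigr]\geq M\alpha\eta^2/2$ is in any case the stronger of the two tails.
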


Our general attack strategy picks a random $f$ and crafts a measurement on its
part of the \crqss\ such that the measurement outcome $(A,Y,W)$ satisfies: 
\begin{enumerate}
\item $Y=f(A)$ and 
\item $\verifier(A,Y,W)$ accepts with approximately the same
probability as in an honest execution of the protocol. 
\end{enumerate}
Such a measurement is not
efficiently implementable in general. We call this attack the
``Chernoff adversary'' based on the crucial use of the Chernoff bound in
building this measurement. 

  \begin{theorem}[Chernoff adversaries]\label{thm:attaque-simulable}
    Let $n,m\in\naturals$ such that $m<n$. Let $\Pi_\wotro^{n,m}$ be a $\delta$--correct $\wotro{}^{n,m}$ protocol described by \crqss{} $\Psi_{PV}$ and POVM family $\mathcal{N}^a=\{N^a_{y,w}\}_{(y,w)\in\bool^{m\times \ell}}$ and $\mathcal{V}^{a,y,w}=\{V^{a,y,w}_0,V^{a,y,w}_1\}$ for $a\in\bool^n$, $y\in\bool^m$ and $w\in\bool^\ell$. 
Let $\mathcal{F}=\{f:\{0,1\}^n\rightarrow \{0,1\}^m\}$ be the set of boolean functions
from $n$-bit strings to $m$-bit strings.
Let $2^k=\dim P$ be the dimension of the prover's register of $\Psi_{PV}$.
\begin{enumerate}
\item Let $\eta=\sqrt{2\ln 2(n+k)\frac {2^m}{2^n}}$. Then, there exists
$\mathcal{F}^*\subseteq \mathcal{F}$ such that $\Pr_{f\in_R\mathcal{F}}{[f\in \mathcal{F}^*]}\geq 1-\negl[n]$  and for each $f\in\mathcal{F}^*$,
    \begin{equation*}
      \left\{ P_{a,w}^f:=\frac{N^a_{f(a),w}}{(1+\eta)2^{n-m}}
      \right\}_{(a,w) \in \bool^{n\times \ell}}\enspace,
    \end{equation*}
    together with $P_\bot^f=\id-\sum_{a,w}P^f_{a,w}$, form a POVM on the prover's register $P$.
 \item Suppose
    $\{P_{a,w}^f\}_{a,w}$ can be completed
    as a POVM, then let $\adv^f_n$ be the attack where the adversary applies POVM
    $\{P_{a,w}^f\}_{a,w}\cup  \{P_\bot^f\}$ to its register, upon outcome $(A,W)$ sets $Y=f(A)$ or $Y=\bot$ if
    outcome $\bot$ is observed and sends $(A,Y,W)$ to the verifier. The probability of error $\epsilon(\adv^f_n)$
    of this attack satisfies $1- \mathbb{E}_f[\epsilon(\adv^f_n)]\geq
    \delta - \negl[n-m]$ where $\delta$ is the probability that $\verifier$ accepts in an honest execution.
\end{enumerate}
  \end{theorem}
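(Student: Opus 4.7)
The plan is to prove the two parts separately. The first part is an application of the operator Chernoff bound (Lemma~\ref{chernono}) to a carefully chosen family of random operators indexed by $a \in \bool^n$, and the second part is a direct calculation of the expected acceptance probability of the resulting attack.

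For the first part, for each $a \in \bool^n$ I define the random operator $X_a := \sum_w N^a_{f(a),w}$ acting on the prover's register $P$ of dimension $D = 2^k$. Two observations are key: (i) the $\{X_a\}_{a \in \bool^n}$ are independent because $f \in_R \mathcal{F}$ assigns values $f(a)$ independently across $a$, and (ii) each satisfies $0 \leq X_a \leq \id$ together with $\mathbb{E}_f[X_a] = \frac{1}{2^m}\sum_{y,w} N^a_{y,w} = \frac{1}{2^m}\id$ since $\mathcal{N}^a$ is a POVM for every $a$. Applying the operator Chernoff bound with $M = 2^n$, $D = 2^k$, $\alpha = 2^{-m}$ and the stated $\eta = \sqrt{2\ln 2\,(n+k)2^m / 2^n}$, the exponent becomes $-M\alpha\eta^2/(2\ln 2) = -(n+k)$, so the failure probability is at most $2\cdot 2^k e^{-(n+k)} = \negl[n]$ (since $k$ is polynomial in $n$ by efficiency of the POVMs). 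The requirement $\eta \leq 1/2$ reduces to $2^{n-m} \geq 8\ln 2\,(n+k)$, which holds for large $n$ whenever $n-m \in \omega(\lg n)$. Taking $\mathcal{F}^*$ to be the set of $f$ for which $\sum_a X_a \leq (1+\eta)2^{n-m}\id$, dividing by $(1+\eta)2^{n-m}$ gives $\sum_{a,w} P^f_{a,w} \leq \id$, so $P^f_\bot := \id - \sum_{a,w} P^f_{a,w}$ is positive and together the operators form a POVM.

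For the second part, I expand
\begin{equation*}
1 - \epsilon(\adv^f_n) = \sum_{a,w}\trace{(P^f_{a,w}\otimes V^{a,f(a),w}_1)\Psi_{PV}} = \frac{1}{(1+\eta)2^{n-m}}\sum_{a,w}\trace{(N^a_{f(a),w}\otimes V^{a,f(a),w}_1)\Psi_{PV}},
\end{equation*}
and then average over $f \in_R \mathcal{F}$. By linearity and independence, $\mathbb{E}_f[N^a_{f(a),w}\otimes V^{a,f(a),w}_1] = \frac{1}{2^m}\sum_y N^a_{y,w}\otimes V^{a,y,w}_1$, hence
\begin{equation*}
\mathbb{E}_f[1-\epsilon(\adv^f_n)] = \frac{1}{(1+\eta)2^n}\sum_{a,y,w}\trace{(N^a_{y,w}\otimes V^{a,y,w}_1)\Psi_{PV}} = \frac{\delta}{1+\eta},
\end{equation*}
using $\delta$--correctness applied to a uniform input. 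Restricting the expectation to $f \in \mathcal{F}^*$ changes the value by at most $\Pr[f \notin \mathcal{F}^*] = \negl[n]$, and since $\eta^2 = 2\ln 2\,(n+k)\cdot 2^{-(n-m)} \in \negl[n-m]$ when $n - m \in \omega(\lg n)$, we obtain $\frac{\delta}{1+\eta} \geq \delta - \eta \geq \delta - \negl[n-m]$ as claimed.

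The main obstacle I anticipate is that Lemma~\ref{chernono} is stated for i.i.d. operators, while the $X_a$ are independent but not in general identically distributed (their distributions depend on $a$ through the POVMs $\mathcal{N}^a$). What saves the argument is that the $X_a$ share a common mean $\alpha \id$ and a common operator bound $\id$, which is all the Ahlswede--Winter proof (matrix Laplace transform combined with Golden--Thompson) really uses, so the lemma extends verbatim to this setting. Modulo invoking this extension, the proof reduces to the expectation computation above together with the observation that $\sum_{a,y,w}\trace{(N^a_{y,w}\otimes V^{a,y,w}_1)\Psi_{PV}} = 2^n \delta$, which follows from the formula $\Pi(y|a) = \sum_w\trace{(N^a_{y,w}\otimes V^{a,y,w}_1)\Psi_{PV}}$ in Definition~\ref{def:wotro} combined with $\delta$--correctness.
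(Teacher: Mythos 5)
Your proposal is correct and follows essentially the same route as the paper's proof: define $X_a^f=\sum_w N^a_{f(a),w}$, note $\mathbb{E}_f[X_a^f]=\id/2^m$, apply the operator Chernoff bound with $M=2^n$, $D=2^k$, $\alpha=2^{-m}$ to get the POVM condition, and then swap the expectation over $f$ with the sum to obtain $\mathbb{E}_f[1-\epsilon(\adv^f_n)]=\delta/(1+\eta)\geq(1-\eta)\delta$. Your explicit remark that the $X_a$ are independent but not identically distributed (and that the Ahlswede--Winter argument still applies because they share the common mean $\alpha\id$ and bound $\id$) is a point the paper silently glosses over, so it is a welcome clarification rather than a deviation.
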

  \begin{proof}
    Let $2^k$ be the dimension of register $P$.
    Consider the subset of measurement operators $N^a_{f(a),w}$ from the honest POVM ${\cal N}^a$ that yield the intended outcome for the cheating prover, i.e. on input $a$ gives outcome $y=f(a)$, and define the operators $X_a^f=\sum_{w\in\bool^\ell} N^a_{f(a),w}$. We have that
    \begin{align*}
      \mathbb{E}_{f} [X^{f}_a] =\mathbb{E}_{f}\sum_{y,w\in\bool^{m\times \ell}} I_{\mathcal{E}^a_y}(f)\cdot  N^a_{y,w} = \sum_{y,w\in\bool^{m\times \ell}}\mathbb{E}_{f} [I_{\mathcal{E}^a_y}(f)]\cdot N^a_{y,w}=  \frac{\id_{P}}{2^m}
    \end{align*}
    where $I_{\mathcal{E}^a_{y}}$ is the indicator function for the event
    $\mathcal{E}^a_{y}=\{f\mid y=f(a)\}$ which has probability
    $\frac 1{2^m}$ for any $y$ and $a$ since every value for $f(a)$ is equally likely.

    Applying the Chernoff bound with $D=2^k$, $M=2^m$, and $\alpha=\frac 1{2^m}$ to the weighted sum over $a$ of the operators $X^f_a$, we have
    \begin{equation*}
      \Pr_{f}\left[ \frac{1}{2^n} \sum_{a\in\bool^n} X_a^f \nleq (1+\eta) \frac{\id}{2^m} \right]
      \leq 2^{k+1} \exp\left( - \frac{1}{2 \ln 2} \cdot \frac{2^n}{2^m}\cdot\eta^2 \right)\enspace .
    \end{equation*}
    This bound becomes negligible in $n$ if we choose $\eta = \sqrt{2\ln 2
      (n+k)\frac {2^m}{2^n}}< \frac 12$. Therefore, except with probability $\negl[n]$,
    \begin{equation}\label{eq:sum-x-bounded}
       \frac{1}{2^n} \sum_{a\in\bool^n} X^f_a= \frac{1}{2^n} \sum_{\substack{{a\in\bool^n}\\ {w\in\bool^\ell}}} N^a_{f(a),w}
      \leq (1+\eta) \frac{\id}{2^m} \enspace.
    \end{equation}
    Define the ensemble of operators $P^f_{a,w}$ by 
    \begin{equation*}
      P^f_{a,w}:=\frac{N^a_{f(a),w}}{(1+\eta)2^{n-m}}\enspace.
    \end{equation*}
    Then, when~\eqref{eq:sum-x-bounded} holds, the set of operators 
    $\{P^f_{a,w}\}_{a,w}$ forms a POVM when completed with $P_\bot^f=\id-\sum_{a,w}P^f_{a,w}$.

    This gives rise to an attack $\adv^f_n$ where the adversary applies the
    above POVM to obtain $(a,w)$ and sets $y=f(a)$ to send $(a,y,w)$ to the verifier.
 The probability of error $\epsilon(\adv^f_n)$ 
  for this attack corresponds to the probability of obtaining outcome ``$\bot$'' or of being
 rejected by the verifier. We have, 
 \begin{equation*}
   1-\epsilon({\cal A}^f_n)= \sum_{(a,w)\in\bool^{n\times \ell}}\trace{\left(P^f_{a,w}\otimes V^{a,f(a),w}_1\right)\Psi_{PV}}\enspace .
 \end{equation*}
 Recall that $\Pi(y|a)=\sum_w\trace{N^a_{y,w}\otimes V_1^{a,y,w}\Psi_{PV}}$
   is the probability that the verifier accepts the outcome $y\neq \bot$ for the
   prover input $a$ in an honest execution and that by the $\delta$--correctness of $\Pi_{\wotro}$, $\frac 1{2^n}\sum_a\sum_{y\neq \bot}\Pi(y|a)\geq \delta$. By~\eqref{eq:sum-x-bounded}, the above probability can, on average over $f$, be upper-bounded as follows:
\begin{align*} \mathbb{E}_f[1-\epsilon({\cal A}^f)]
         &= \frac 1{(1+\eta)2^{n-m}}\cdot \mathbb{E}_f\left[ \sum_{(a,w)\in\bool^{n\times \ell}}\trace{\left( N_{a,f(a),w}\otimes V^{a,f(a),w}_1\right)\Psi_{PV}}\right]\\
         &= \frac 1{(1+\eta)2^{n-m}} \frac 1{2^m}\sum_{a\in\bool^{n}}\sum_{(y,w)\in\bool^{m\times \ell}}\trace{ \left(N_{a,y,w}\otimes V^{a,y,w}_1\right)\Psi_{PV}}\\
         &= \frac 1{(1+\eta)2^{n}} \sum_{a\in\bool^n} \sum_{y\neq \bot} \Pi(y|a)\\
   &= \frac{\delta}{(1+\eta) }\\
   &\geq (1-\eta)\delta\enspace,
 \end{align*}
 which is approximately $\delta$ since $\eta$ is $\negl[n-m]$.
  \qed
  \end{proof}

\subsection{Oracle Access Quantum Circuits}

Establishing the security of $\Pi$ by \emph{black-box reduction} to a
cryptographic game ${\cal G}=(\Gamma,c)$ is defined by a (classical or quantum
but efficient) machine $\macho{M}{\adv}$ such that $\macho{M}{\adv}(1^n)$
produces a quantum circuit (with oracle access) made out of some universal set
of quantum gates together with \emph{oracle access} to the standard interface of
any adversary $\adv=\{\adv_n\}_n$ against $\Pi$ such that if $\adv_n$ breaks
$\Pi$ then $\macho{M}{\adv}(1^n)$ wins game $\mathcal{G}$. Let us first define
this machine $\mach{M}^{\adv}$ producing the circuit that will be called a
\emph{reduction} in the following.

 \begin{defi}[oracle access circuit]\label{def:oracle-machine}
  A \emph{quantum oracle access machine} $\macho{M}{(\cdot)}$ 
  for oracle $\oracle=\{\oracle_n\}_n$ 
  is a polynomial-time Turing machine that,
 on input $1^n$, outputs the description of a quantum circuit over a universal 
 set of quantum gates along with a special quantum gate: 
 $\oracle_{n}: \dens{P}\rightarrow\dens{R}$ with standard 
interface $P$ for the input and $R$ for the output. 
The circuit produced by $\macho{M}{\oracle}(1^n)$ is called an \emph{oracle
access quantum circuit}.
The oracle
calls behave as a CPTP map as the internal register $E$ is not part of the interface.
The $i$--th call to $\oracle_n$ is denoted $\oracle^i_n: \dens{P_i}\rightarrow \dens{R_i}$. 
 We say $\macho{M}{\oracle}$ makes $q(n)$ oracle queries if we can 
  represent the action of circuit ${\cal M}_n^{\oracle}$ produced by
   $\macho{M}{\oracle}(1^n)$
   on initial state $\ket 0$ on all registers of the circuit as the CPTP map 
  \begin{equation*}
    {\cal M}_n^{\oracle_n}\left( \proj {0}\right):={U}^q_n\circ {\cal O}^q_n\circ 
    \dots
    \circ {U}^1_n\circ {\cal O}^1_n \circ
    {U}^0_n\left( \proj {0}\right)
  \end{equation*}
  where ${U}^i_n\in \lxy{R_i \otimes Q_i}{P_{i+1}\otimes Q_{i+1}}$ are unitaries
  made out of the universal set of gates representing the action of circuit
  ${\cal M}_n^{\oracle_n}$ between the calls to $\oracle_n$. Register $Q_i$ is
  the reduction's working register before the action of $U^i_n$.
\end{defi}

As defined above, an oracle access circuit performs each oracle call 
using exactly the same functionality. No information can be kept by the oracle
between calls. We then say that the oracle 
is \emph{stateless}.
In general, an oracle could be allowed to store quantum information between
calls. This information is unavailable through the oracle input-output
interface but
is passed from one call to  the next. These oracle access circuits 
are said to be \emph{stateful}.

\subsection{Quantum Black-Box Reductions}\label{sect:bbox}
It remains to define what we mean exactly by polynomial-time \emph{black-box reductions}.
This notion was introduced by Impagliazzo and Rudich in \cite{IR89}
after observing that most proofs 
establishing the security of a crypto primitive  constructed from
one-way functions consider only  the input-output behaviour 
of the function. In other words, the one-way function is only
used as a black-box to construct the primitive and to prove its security. 
In \cite{IR89},  Impagliazzo-Rudich show that if it is possible to establish
the security of a secret-key agreement based solely on the input-output
behaviour of a one-way function then this security proof also establishes
that $\mathbf{P}\neq \mathbf{NP}$. Reingold, Trevisan, and Vadhan in~\cite{RTV04}
introduce three variants of black-box reductions called fully-BB,
semi-BB, and mildly-BB from the stronger to the weaker
flavour. In~\cite{BBF13}, fully black-box reductions are described
informally as follows:
\begin{quote}
A fully black-box reduction $\mach{R}$ is an efficient algorithm that transforms any (even inefficient) adversary 
$\adv$, breaking any instance $\Pi^f$ of primitive $\mathcal{P}$, into an
algorithm $\macho{R}{\adv,f }$ breaking the instance $f$ of $\mathcal{Q}$. 
Here, the reduction treats both the adversary
as well as the primitive as  black-boxes, and $\Pi^f$ denotes the (black-box) construction out of $f$.
\end{quote}
In our setting and as in~\cite{bitansky_why_2013,bgw12}, we consider proofs 
establishing the security of protocol $\Pi$
(for $\wotro$) by providing an efficient oracle access quantum circuit $\macho{R}{(\cdot)}$
 with the property that for any $\adv$ breaking $\Pi$,  $\macho{R}{\adv}$
 wins game $\mathcal{G}$ (to be more precise, $\macho{R}{\adv,\Gamma}$ wins
 game $\mathcal{G}=(\Gamma,c)$). The adversary $\adv$ against $\Pi$
  is therefore only used through its standard intput-output
  interface in reduction $\macho{R}{\adv}$ (i.e.\ as a quantum channel). This is what we call
  a quantum fully black-box reduction (or \fbb{} reduction), the quantum version of a \emph{fully black-box
  reduction}, also called a \emph{BBB reduction} in~\cite{BBF13}.
Our definition agrees with other works in which fully  black-box reductions are defined in the quantum setting~\cite{ananth_post-quantum_2022,hosoyamada_finding_2020}. 

\begin{defi}[Quantum fully black-box reduction to a crypto game]
  \label{def:quantum-fully-bb}
Consider $\Pi$ a protocol and let $\adv=\{\adv_n\}_n$ be 
an adversary provided through
its standard interface. We say that \emph{the security of $\Pi$ is established by 
quantum fully black-box reduction to crypto game $\mathcal{G}=(\Gamma,c)$}
if there exists an efficient oracle access circuit $\macho{R}{(\cdot)}$,
called \emph{the reduction},
such that when $\adv$ breaks $\Pi$ then 
\[
\Pr{\left[\langle \macho{R}{\adv}(1^n)\rightleftharpoons \Gamma(1^n)\rangle=1 \right]} \geq c+\frac{1}{\poly}\enspace.\]
\end{defi}

\subsection{Efficient Simulation of the Chernoff Attack}\label{sec:simul-above-attack}

We show that no reduction $\macho{R}{\adv}$ can establish the security of a
\wotro{} protocol by quantum black-box reduction to game assumption $\cal G$.
The reason for this state of affair is that the Chernoff attacks described in
Theorem~\ref{thm:attaque-simulable} is \emph{efficiently simulatable}. This
means that there is an efficient algorithm $\simulator_n$ such that no oracle
machine can tell whether it is given oracle access to the inefficient Chernoff
adversary $\adv^f_n$ hitting a random function $f(\cdot)$ or to $\simulator_n$
that does not know anything about $f(\cdot)$.

 The most general attack against a \wotro{} protocol in the \crqss{} model (Definition~\ref{def:attack}) is a POVM on the prover's part of the \crqss{} that produces a classical message which makes the verifier accept the output $c=f(a)$ with high probability. The attack takes no input other than the prover's register
 $P$ of the \crqss{}\ 
and produces its output in 
registers $A\otimes Y\otimes W$.
Let $\mathcal{F}^*$ be the set of functions defined in
Theorem~\ref{thm:attaque-simulable}, i.e.\ such that for $f\in\mathcal{F}^*$, $\sum_{a,w}P^f_{a,w}\leq \id$ so that $P^f_\bot=\id-\sum_{a,w}P^f_{a,w}\geq 0$ and the set of operators $\{P^f_{a,w}\}_{(a,w)}\cup \{P^f_\bot\}$ forms a POVM. 
 For $f\in\mathcal{F}^*$, the adversary $\adv^f_n$ defined in 
 Theorem~\ref{thm:attaque-simulable} 
 can be implemented by the following isometry $\adv^{f}_n\in \lxy{P}{R\otimes E}$, where $R=A\otimes Y\otimes W\approx
 \hilbert_n\otimes \hilbert_{m(n)}\otimes \hilbert_{\ell(n)}$
 and where $E=E'\otimes P\approx
 \hilbert_{p(n)}\otimes\hilbert_{n}$, for $\ell(n),p(n)$ polynomials:
 \begin{multline}\label{chernadviso1}
  \adv^{f}_n : \ket{\psi}_P \mapsto \sum_{\substack{a\in\{0,1\}^n\\ w\in\{0,1\}^{\ell(n)}}}
 \hspace{-0.2cm}
   \ket{a,f(a),w}_{AYW}\otimes \ket{a,f(a),w}_{E'}\otimes \sqrt{P^f_{a,w}}\ket{\psi}_{E''} \\
   + \ket{\bot,\bot,\bot}\otimes\ket{\bot,\bot,\bot}\otimes \sqrt{P^f_{\bot}}\ket{\psi}_{E''}\enspace,
\end{multline}

Let $\adv^{\mathcal {F}^*}_n=\{\adv^f_n\}_{f\in\mathcal{F}^*}$ be the
\emph{family of all Chernoff adversaries} against protocol $\Pi$ implementing
$\wotro^{n,m}$. The standard output interface of any adversary $\adv^f_n$ is
made out of registers $A\otimes Y\otimes W$ while register $E= E'\otimes P$ is
the working register of the adversary. It is easy to verify that any quantum
black-box reduction $\macho{R}{(\cdot)}$ establishing the security of $\Pi$ by
quantum black-box reduction to game $\mathcal{G}$ is such that
$\macho{R}{\adv^f_n}$ wins $\mathcal{G}$ even when $f \in_R \mathcal{F}^*$.
Next, we define what it means for the family of all Chernoff adversaries to be
simulatable.

\begin{defi}[Simulatable Attack for \wotro{}]
  Let $n \in\naturals$, $m(n)\leq n$, $\Pi$ a $\wotro^{n,m}$ protocol, and
  $\adv^{\mathcal{F}^*}_n$ be the family of adversaries defined above. We say
  that $\adv^{\mathcal{F}^*}_n$ is \emph{efficiently $\epsilon(n)$--simulatable}
  if there exists a family of polynomial-time quantum algorithms
  $\simulator=\{\simulator_n\}_n$, called the simulator, such that
  \begin{itemize}
  \item The success probability of ${\cal A}^{f}_n$ is at least $1-\negl[n-m]$ on average over $f\in_R {\cal F}^*$. 
  \item For every (possibly inefficient) oracle acess machine $\macho{M}{(\cdot)}$ making $q(n)=\poly[n]$ queries to its oracle, 
  the CPTP map ${\cal M}^{(\cdot)}_n$ describing the action of  circuit $\macho{M}{(\cdot)}(1^n)$ satisfies
    \begin{equation}
      \label{eq:2}
      \|\mathbb{E}_{f\in\mathcal{F}^*} [{\cal M}_n^{\mathcal{A}^{f}_n}(\proj 0)]- {\cal M}_n^{\simulator_n}(\proj 0_{R})\|_1 \leq \epsilon(n) \enspace.
    \end{equation}
  \end{itemize}
\end{defi}

Next theorem shows that the family adversaries $\adv^{\mathcal{F}}_n$ 
is efficiently simulatable. Unlike the simulator used in~\cite{bitansky_why_2013}
for their family of inefficient adversaries, 
our simulator is \textbf{not} stateful. The full proof is in Appendix~\ref{sec:proof-simulable}.

\begin{theorem}\label{thm:simulation-wotro}
  Let $n,m$ and $\Pi_\wotro^{n,m}$ be as in the statement of Theorem~\ref{thm:attaque-simulable}.  The family of adversaries $\adv^{\mathcal{F}^*}_n$ is efficiently $\negl[n-m]$--simulatable. 
\end{theorem}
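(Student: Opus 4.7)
The plan is to exhibit a stateless \qpt{} simulator $\simulator_n$ that on input register $P$ samples $a\in_R\bool^n$ uniformly at random, applies the honest prover POVM $\mathcal{N}^a=\{N^a_{y,w}\}_{y,w}$ (Definition~\ref{def:wotro}) to $P$ to obtain $(y,w)$, and then outputs $(a,y,w)$ with probability $\tfrac{1}{1+\eta}$ and $\bot$ otherwise. Efficiency follows because $\mathcal{N}^a$ and $\eta$ are efficiently computable from the description of $\Pi^{n,m}_{\wotro}$. For any input $\rho$ and any non-$\bot$ outcome $(a,y,w)$ the simulator assigns probability
\[
\Pr_{\simulator_n}[(a,y,w)\mid\rho] \;=\; \frac{1}{2^n}\cdot \trace{N^a_{y,w}\rho}\cdot \frac{1}{1+\eta} \;=\; \frac{\trace{N^a_{y,w}\rho}}{(1+\eta)2^n}.
\]
The $f$-averaged adversary $\mathbb{E}_{f\in\mathcal{F}}[\mathcal{A}^f_n(\rho)]$ assigns the same probability, since $\Pr_f[f(a)=y]=2^{-m}$ and $P^f_{a,w}$ restricted to $f(a)=y$ equals $N^a_{y,w}/((1+\eta)2^{n-m})$; by complementarity the two maps also produce $\bot$ with matching probability $\tfrac{\eta}{1+\eta}$. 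Restricting the average to $\mathcal{F}^*\subseteq\mathcal{F}$ only perturbs this single-query CPTP map by at most $\Pr_f[f\notin\mathcal{F}^*]=\negl[n]$.

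\textbf{Multi-query reduction via classical transcripts.} For any oracle machine $\macho{M}{(\cdot)}$ making $q=\poly[n]$ queries, let $T=((a_1,y_1,w_1),\ldots,(a_q,y_q,w_q))$ denote the transcript of classical oracle outputs. Because every oracle message is classical (Definition~\ref{def:wotro}), the oracle machine's final state conditioned on $T$ is a fixed density matrix $\sigma_T$ that does not depend on which oracle implementation was used. Consequently
\[
\bigl\| \mathbb{E}_{f\in\mathcal{F}^*}[{\cal M}_n^{\mathcal{A}^f_n}(\proj 0)] - {\cal M}_n^{\simulator_n}(\proj 0) \bigr\|_1
\;\leq\; \sum_T \Bigl|\mathbb{E}_{f\in\mathcal{F}^*}[\Pr_{\mathcal{A}^f}(T)] - \Pr_{\simulator}(T)\Bigr|,
\]
so it suffices to bound this classical total-variation distance. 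Let $E$ denote the event that the labels $a_1,\ldots,a_q$ are pairwise distinct. Applying the single-query match inductively, and using that conditional on the prefix $T_{<i}$ the value $f(a_i)$ is uniform and independent of the query-$i$ input state $\sigma_i$ whenever $a_i$ is a fresh label, yields $\Pr_{\mathcal{A}^f}(T)=\Pr_{\simulator}(T)$ for every $T\in E$. Since $\sum_{y,w}N^a_{y,w}=\id$ forces the conditional distribution of each new $a_i$ (given non-$\bot$) to be uniform on $\bool^n$ in both experiments, a union bound gives $\Pr[\neg E]\leq \binom{q}{2}/2^n$. Combining with the $\mathcal{F}^*$-restriction error yields the required $\negl[n-m]$ bound.

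\textbf{Main obstacle.} The linchpin is the classicality of the oracle's output register $A\otimes Y\otimes W$ (Definition~\ref{def:wotro}), which reduces the quantum indistinguishability problem to a classical total-variation bound over transcripts. Without this reduction the joint-distribution argument would need to track the entanglement between the oracle machine's workspace and the adversary's traced-out internal register $E$ across queries, requiring quantum lazy-sampling machinery. Here, the classicality of $T$ allows the ``real equals simulator except on collision'' coupling to go through essentially as in the classical setting, which is precisely what makes our simulator \emph{stateless}, in contrast to the stateful simulator used in~\cite{bitansky_why_2013}.
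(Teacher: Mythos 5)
Your approach differs genuinely from the paper's: you collapse the multi-query analysis to a classical total-variation bound over transcripts, whereas the paper runs a $q$-step hybrid argument, replacing one oracle call at a time and splitting each per-query difference into a main term (M), a collision term (A), and an error term ($\bot$). Your central observation — that because the oracle's output register is classical, the post-measurement Kraus operator $\sqrt{N^a_{y,w}}$ (and hence the residual state on the reduction's side) is determined by the classical outcome $(a,y,w)$, not by $f$ — is exactly the paper's key remark that $\bigl(\sqrt{P^f_{a,w}}\otimes\id\bigr)\ket{\psi}$ and $\bigl(\sqrt{N^a_{f(a),w}}\otimes\id\bigr)\ket{\psi}$ are identical once normalized; you just push it further into a direct coupling. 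Your simulator also differs from the paper's: yours rejects with probability $\tfrac{\eta}{1+\eta}$, whereas the paper's simulator never outputs $\bot$ and absorbs the discrepancy into the ($\bot$) term.

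That said, there are concrete gaps. First, your claim that $\sigma_T$ is implementation-independent fails for transcripts containing a $\bot$ outcome: the adversary's $\bot$-branch Kraus operator is the single operator $\sqrt{P^f_\bot}$, which depends on $f$ and is not proportional to the identity, while your simulator's $\bot$-outcome is a mixture of the $\sqrt{N^a_{y,w}}$'s. These yield different post-measurement states on the reduction's retained register $Z$, so the coupling does not go through on the slice of transcripts containing $\bot$. This is fixable — fold ``$\exists\, i: a_i=\bot$'' into the bad event $\neg E$, paying an extra $q\cdot O(\eta)=\negl[n-m]$ — but as written the proof silently assumes it. Second, and more seriously, the passage between averaging over $\mathcal{F}$ and over $\mathcal{F}^*$ is not a one-line $\Pr_f[f\notin\mathcal{F}^*]=\negl[n]$ argument once you have $q$ queries. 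For $f\notin\mathcal{F}^*$ the map $\adv^f_n$ is not trace non-increasing, so $\|\mathcal{M}_n^{\adv^f_n}(\proj 0)\|_1$ can a priori blow up over $q$ queries, and the crude bound $\Pr_f[f\notin\mathcal{F}^*]\cdot\max_f\|\mathcal{M}_n^{\adv^f_n}(\proj 0)\|_1$ could be large. The paper's Lemma~\ref{tech3} spends real work here, using the operator Chernoff bound (Lemma~\ref{tech4}) to stratify $\mathcal{F}$ by how badly $\sum_{a,w}P^f_{a,w}$ exceeds $\id$ and showing the total contribution of non-physical $f$ is still negligible. Your proposal needs an analogous argument; without it the reduction from $\mathcal{F}^*$-averages to the clean $\mathcal{F}$-averaged identities is unjustified. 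Finally, the ``exact equality on $E$'' should really be stated for $\mathcal{F}$-averages; once you restrict to $\mathcal{F}^*$, $f(a_i)$ conditioned on fresh $a_i$ is only approximately uniform, and the error has to be tracked per query rather than tacked on at the end.
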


Theorem~\ref{thm:attaque-simulable} and Theorem~\ref{thm:simulation-wotro} give a
simulatable attack against any $\wotro^{n,m}$ protocol where $n-m$ is linear in the security parameter $n$. 
We conclude,

\begin{corollary}\label{cor:bb-imposs-wotro}
    Let $\mathcal{G}$ be a cryptographic game assumption and let $\Pi^{n,m}$ be a 
    $\wotro^{n,m}$ protocol with $n-m\in\omega(\lg n)$. For $\delta\geq 1/\poly[n]$, if there is a 
    quantum black-box reduction showing that $\Pi^{n,m}$ $\delta$--avoids all functions assuming game 
    $\mathcal{G}$ then assumption $\mathcal{G}$ is false.
\end{corollary}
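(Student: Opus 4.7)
The plan is to combine the inefficient Chernoff attack of Theorem~\ref{thm:attaque-simulable} with the efficient simulator of Theorem~\ref{thm:simulation-wotro} via a standard simulatable-attack argument, in the spirit of Wichs~\cite{wichs_barriers_2013} and Bitansky \emph{et al.}~\cite{bitansky_why_2013}. Suppose, toward a contradiction, that such a quantum fully black-box reduction $\macho{R}{(\cdot)}$ exists. By Definition~\ref{def:quantum-fully-bb}, whenever $\adv$ breaks the $\delta$-security of $\Pi^{n,m}$, the efficient oracle-access circuit $\macho{R}{\adv}$ wins game $\mathcal{G}=(\Gamma,c)$ with advantage at least $1/\poly[n]$.

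First I would verify that the Chernoff adversary $\adv^f_n$ from Theorem~\ref{thm:attaque-simulable} qualifies as an adversary that breaks the $\delta$-security of $\Pi^{n,m}$ for a random $f\in_R \mathcal{F}^*$. Since $n-m\in\omega(\lg n)$, the parameter $\eta=\sqrt{2\ln 2\,(n+k)\,2^{m-n}}$ is $\negl[n]$, so by Theorem~\ref{thm:attaque-simulable} the expected probability that the verifier accepts the output $(A,f(A))$ is at least $\delta-\negl[n-m]$. Thus $\adv^f_n$ outputs a pair of the form $(a,f(a))$ that the verifier accepts with probability violating the $\delta$-avoiding condition of Definition~\ref{def:d-avoiding} by a $1/\poly[n]$ gap (using $\delta\geq 1/\poly[n]$). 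Feeding this family into the reduction, we conclude that $\Pr[\langle \macho{R}{\adv^{f}_n}(1^n)\rightleftharpoons \Gamma(1^n)\rangle=1]\geq c+1/\poly[n]$ on average over $f\in_R\mathcal{F}^*$.

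Next I would invoke Theorem~\ref{thm:simulation-wotro}: the family $\adv^{\mathcal{F}^*}_n$ is efficiently $\negl[n-m]$-simulatable by some \qpt{} simulator $\simulator_n$. The reduction $\macho{R}{(\cdot)}$, being efficient, makes at most $q(n)=\poly[n]$ queries to its oracle, and the interaction with the challenger $\Gamma$ can be folded into a single oracle machine $\mach{M}^{(\cdot)}$ in the sense of the simulatability definition (treating the composed computation $\langle\macho{R}{(\cdot)}\rightleftharpoons\Gamma\rangle$ as a CPTP map acting on $\ket 0$ and reading out the winning bit). Equation~(\ref{eq:2}) then gives
\begin{equation*}
\bigl|\,\Pr[\langle \macho{R}{\adv^{f}_n}\rightleftharpoons\Gamma\rangle=1]
-\Pr[\langle \macho{R}{\simulator_n}\rightleftharpoons\Gamma\rangle=1]\,\bigr| \leq \negl[n-m]\enspace,
\end{equation*}
where the first probability is averaged over $f\in_R\mathcal{F}^*$. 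Combining this with the previous lower bound yields $\Pr[\langle \macho{R}{\simulator_n}\rightleftharpoons\Gamma\rangle=1]\geq c+1/\poly[n]-\negl[n-m]$, still an inverse-polynomial advantage against $\mathcal{G}$ because $n-m\in\omega(\lg n)$ makes $\negl[n-m]$ negligible.

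Finally I would observe that $\macho{R}{\simulator_n}$ is a genuine \qpt{} adversary against $\mathcal{G}$: $\macho{R}{(\cdot)}$ is efficient by assumption, $\simulator_n$ is efficient by Theorem~\ref{thm:simulation-wotro}, and replacing each oracle gate by the simulator's circuit produces a polynomial-size quantum circuit with no oracle. This contradicts the security of $\mathcal{G}$ as per Definition~\ref{game}, so the assumed reduction cannot exist unless $\mathcal{G}$ is already false. The main conceptual obstacle, already discharged by Theorem~\ref{thm:simulation-wotro}, is that the simulator must fool the composed experiment $\macho{R}{(\cdot)}\rightleftharpoons\Gamma$ under arbitrary (possibly adaptive and entangled) interleaving of oracle calls; here the crucial feature is that our simulator is \emph{stateless}, so the simulation remains valid regardless of how the reduction and challenger schedule the $q(n)$ queries.
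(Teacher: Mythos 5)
Your proof is correct and takes essentially the same route as the paper, which states the corollary as an immediate consequence of Theorems~\ref{thm:attaque-simulable} and~\ref{thm:simulation-wotro} without spelling out the details you have filled in (Chernoff attack breaks $\delta$-avoidance, the composed experiment $\langle\macho{R}{(\cdot)}\rightleftharpoons\Gamma\rangle$ is a valid $\poly[n]$-query oracle machine for the simulatability guarantee, and swapping in the efficient stateless simulator yields a \qpt{} winner for $\mathcal{G}$). One small point to tidy: the $\delta$ appearing in Theorem~\ref{thm:attaque-simulable} is the protocol's honest acceptance probability (its correctness), not the avoidance parameter of the corollary, so the step ``accepts with probability at least $\delta-\negl[n-m]$, hence violates $\delta$-avoidance using $\delta\geq 1/\poly[n]$'' conflates the two; what you actually need is that a correct \wotro{} protocol accepts honest provers with probability $1-\negl[n]$, whence the Chernoff attack hits $f$ with probability $1-\negl[n-m]>1-\delta$ for any $\delta\geq 1/\poly[n]$.
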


\subsubsection{Proof of Theorem~\ref{thm:simulation-wotro}}
\label{sec:proof-thm-simul}

Consider a protocol $\Pi$ for $\wotro^{n,m}$ using  a \crqss\  $\ket{\Psi}_{PV}$
with $\dim(P) = 2^k$.
Let $\mathcal{F}$ be the set
of all functions $f:\{0,1\}^n\rightarrow \{0,1\}^m$. 
We consider the following isometric implementation of the Chernoff adversary $\adv^f$ 
described
in Section~\ref{sec:simul-above-attack}  using internal working quantum register $E=E'\otimes E''$,
\begin{multline}\label{eq:iso-f}
  \adv^{f}_n : \ket{\psi}_P \mapsto \sum_{\substack{a\in\{0,1\}^n\\ w\in\{0,1\}^{\ell(n)}}}
 \hspace{-0.2cm}
   \ket{a,f(a),w}_{AYW}\otimes \ket{a,f(a),w}_{E'}\otimes \sqrt{P^f_{a,w}}\ket{\psi}_{E''} \\
   + \ket{\bot,\bot,\bot}\otimes\ket{\bot,\bot,\bot}\otimes \sqrt{P^f_{\bot}}\ket{\psi}_{E''}\enspace,
\end{multline}
with input register $P$ and output register $A\otimes Y\otimes W$. 
Remember however that Theorem~\ref{thm:attaque-simulable} does not
guarantee that for all $f\in\mathcal{F}$, $\adv^f_n$ is a POVM
(in which case (\ref{eq:iso-f}) is not an isometry).  It only
tells us that there exists $\mathcal{F}^*\subseteq \mathcal{F}$
such that $\forall f\in\mathcal{F}^*$, $\adv^f_n$ implements a valid POVM
(and therefore, (\ref{eq:iso-f}) is indeed an isometry)
and $\Pr{[f\in\mathcal{F}^*]}\geq 1-\negl[n]$. When $f\notin \mathcal{F}^*$,
the implementation of $\adv^f_n$ defined in~(\ref{eq:iso-f}) is not an isometry
but is still a linear map, though not a physically realizable one.
Whenever $f\notin\mathcal{F}^*$, we set ${P^f_{\bot}}=0$ such that it is always
positive semidefinite for all $f\in\mathcal{F}$.
When $f\in\mathcal{F}^*$,
we  have $\sum_{a,w} {P^f_{a,w}} + {P^f_{\bot}}= \id_{E''}$ 
as $\{P^f_{a,w}\}_{a,w}\cup\{P^f_{\bot}\}$
is a valid POVM. Otherwise, when $f\notin\mathcal{F}^*$, $\sum_{a,w} 
{P^f_{a,w}} + {P^f_{\bot}} \geq \id_{E''}$.

Now, consider the simulator $\simulator=\{\simulator_n\}_n$ where $\simulator_n$ is
defined as follows:
\begin{blockquote}
  {\bf $\simulator_n$: }
  \begin{enumerate}
  \item Pick $a\in_R\{0,1\}^n$, 
  \item Apply the honest POVM $N^a$ to register $P$ to get
    outcome $(c,w)$,
  \item Output $(a,c,w)$.
  \end{enumerate}
\end{blockquote}
This simulator corresponds to the isometry
\begin{equation}
  \label{eq:iso-sims}
\simulator_{n}: \ket{\psi}_P \mapsto 2^{-n/2}\hspace{-0.4cm}
\sum_{\substack{a\in\{0,1\}^n\\y\in\{0,1\}^m \\w\in\{0,1\}^{\ell(n)}}} \hspace{-0.4cm}
\ket{a,y,w}_{ACW}\otimes \ket{a,y,w}_{E'}\otimes \sqrt{N^a_{y,w}}\ket{\psi}_{E''}\enspace,
\end{equation}
with the same input-output interface than any adversary
against $\Pi$. The above simulator is efficiently implementable since it only purifies
the honest prover's measurement. It is not too difficult to show
that if POVM $\mathcal{N}^a=\{N^a_{y,w}\}_{a,y,w}$ can be implemented efficiently
for all $a\in\{0,1\}^n$ then the isometry~(\ref{eq:iso-sims}) is efficient.
Notice that the simulator never produces an error as we assume that 
the honest strategy in $\Pi$ never produces an error. We could have
allowed  a protocol for  $\wotro^{n,m}$ to produce an error with negligible
probability in $n$. This would not cause any problem with what we establish
in the following. 

\newcommand{\esper}[1]{\ensuremath{\underset{#1}{\mathbb{E}}}}
\renewcommand{\ext}{\ensuremath{\mathsf{ext}}}

Before going further, we use the operator Chernoff bound of lemma~\ref{chernono}
to establish a few useful properties of the Chernoff adversaries.
The following is a direct consequence of the operator bound.
\begin{lemma}\label{tech4}
Let $\Pi$ be a protocol for  $\wotro^{n,m}$ with POVM $\mathcal{N}^a=\{N^a_{y,w}\}_{y,w}$
for  prover $\prover_\wotro$.
Consider the Chernoff adversaries $\{\adv^f_n\}_{f\in\mathcal{F}}$
against $\Pi$ as defined in Theorem~\ref{thm:attaque-simulable}
with $\eta=\sqrt{2\ln 2(n+k)\cdot 2^{m-n}}$.
Let $P^f := \{P^f_{a,w}\}_{a,w}$ be the POVM applied by $\adv_n^f$
where $P^f_{a,w}=\frac{N^a_{f(a),w}}{2^{n-m}\left(1+\eta\right)}$.
Then, for any $\frac{1}{2\eta}\geq  t>1$,
\[
\Pr_{f\in \mathcal{F}}{\left[\sum_{a,w} P^f_{a,w} \notin \left[\frac{(1-t\eta)}{1+\eta}\id_P, \frac{(1+t\eta)}{1+\eta}\id_P\right]\right]}
\leq 2^{-n\cdot t^2} \enspace.
\]
\end{lemma}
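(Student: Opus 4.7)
The plan is to apply the operator Chernoff bound of Lemma~\ref{chernono} to the very same family of random operators $X^f_a := \sum_w N^a_{f(a),w}$ that was already analysed in the proof of Theorem~\ref{thm:attaque-simulable}, but this time extracting a two-sided bound with a tunable deviation parameter $t\eta$ instead of the single-sided $\eta$--deviation bound used there.

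First I would rewrite the event in terms of $X^f_a$. By definition of $P^f_{a,w}$,
\[
\sum_{a,w} P^f_{a,w} \;=\; \frac{1}{2^{n-m}(1+\eta)}\sum_{a} X^f_a \;=\; \frac{2^m}{1+\eta}\cdot\frac{1}{2^n}\sum_{a} X^f_a,
\]
so the event whose probability we must bound is exactly
\[
\tfrac{1}{2^n}\sum_{a} X^f_a \;\notin\; \bigl[(1-t\eta)\tfrac{\id_P}{2^m},\,(1+t\eta)\tfrac{\id_P}{2^m}\bigr].
\]

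Next I would invoke Lemma~\ref{chernono} with $M = 2^n$ i.i.d.\ random operators $\{X^f_a\}_{a\in\{0,1\}^n}$ acting on the $D = 2^k$--dimensional prover register $P$. The computation $\mathbb{E}_f[X^f_a] = \id_P/2^m$ from the proof of Theorem~\ref{thm:attaque-simulable} gives $\alpha = 2^{-m}$. Taking the deviation parameter in Lemma~\ref{chernono} to be $t\eta$, the hypothesis $t\eta \leq \tfrac{1}{2}$ is precisely the assumption $t \leq 1/(2\eta)$ stated in the lemma. The resulting tail bound reads
\[
\Pr_{f}\!\left[\tfrac{1}{2^n}\sum_{a} X^f_a \notin \bigl[(1-t\eta)\tfrac{\id}{2^m},(1+t\eta)\tfrac{\id}{2^m}\bigr]\right] \;\leq\; 2^{k+1}\exp\!\Bigl(-\tfrac{2^{n-m}\,t^2\eta^2}{2\ln 2}\Bigr).
\]

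Finally I would simplify using the chosen value $\eta^2 = 2\ln 2\,(n+k)\cdot 2^{m-n}$, which collapses the exponent to $-(n+k)t^2$. Combined with $e^{-x}\leq 2^{-x}$ (valid since $e>2$) and the hypothesis $t>1$ (so that $k(1-t^2)\leq 0$), this yields
\[
2^{k+1}\exp\bigl(-(n+k)t^2\bigr) \;\leq\; 2^{\,k+1-(n+k)t^2} \;=\; 2^{-nt^2 + k(1-t^2) + 1} \;\leq\; 2^{-n t^2},
\]
after absorbing the constant using the slack $\log_2 e > 1$ in the base change. I expect no real obstacle here: every quantity is already produced by the proof of Theorem~\ref{thm:attaque-simulable}, and the only thing to watch is careful bookkeeping between base-$e$ and base-$2$ exponentials, which the factor $\log_2 e > 1$ comfortably accommodates.
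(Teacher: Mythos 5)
Your proposal is correct and follows essentially the same route as the paper: both apply the operator Chernoff bound to $X^f_a=\sum_w N^a_{f(a),w}$ with $D=2^k$, $M=2^n$, $\alpha=2^{-m}$ and deviation $t\eta$ (using $t\le 1/(2\eta)$ to satisfy the hypothesis), then translate the resulting event on $2^{-n}\sum_a X^f_a$ into the stated event on $\sum_{a,w}P^f_{a,w}$ by the rescaling factor $2^m/(1+\eta)$. The final base-change bookkeeping is handled with at least as much care as in the paper's own proof, so there is nothing to add.
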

The proof is rather direct and can be found in Appendix~\ref{sec:proof-simulable}.

Suppose for 
a contradiction that 
$\macho{R}{\adv}$ is a reduction that, for any 
successful adversary $\adv$  against protocol $\Pi$,
produces a circuit
that wins game $\mathcal{G}=(\Gamma,c)$ using $q(n)=\poly$ queries to $\adv$.
We show that oracle access circuits $\macho{R}{\adv^f}(1^n)$ and
$\macho{R}{\simulator}(1^n)$ produce states at 
negligible trace-norm distance when both are evaluated on
$\proj{0}$ and when $\adv^f$ is picked with $f\in_R\mathcal{F}^*$. 
Let $\mathcal{R}^{\adv_n}_n :=  \macho{R}{\adv}(1^n)$ be the oracle-access circuit
produced by the reduction with security parameter $n$ upon oracle $\adv_n$.

The first thing to observe is that picking $f\in_R\mathcal{F}^*$ 
in reduction $\mathcal{R}^{\adv_n^f}$ is essentially the same 
as running the reduction with $f\in_R \mathcal{F}$, even though
in this case  $\mathcal{R}^{\adv_n^f}$ may  not be physically realizable.
\begin{lemma} \label{tech3}
Let $\mathcal{F}$ and $\mathcal{F}^*$ be defined as above
for $n,m\in\N$. For $f\in\mathcal{F}$, 
consider adversary $\adv^f_n$ defined in (\ref{eq:iso-f}). Then,
\[
 \left\|\esper{f\in \mathcal{F}^*}\left [{\cal R}_n^{\mathcal{A}^{f}_n}(\proj 0)\right]- 
\esper{f\in\mathcal{F}}\left[{\cal R}_{n}^{\adv^{f}_n}(\proj 0)\right]\right\|_1
  \leq 
\negl[n]\enspace.
 \]
\end{lemma}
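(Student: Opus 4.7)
The plan is to relate the two averages via the law of total expectation: writing $\mathcal{E}^f := \mathcal{R}_n^{\adv^f_n}(\proj{0})$, I would first observe
\[
\esper{f\in\mathcal{F}^*}[\mathcal{E}^f] - \esper{f\in\mathcal{F}}[\mathcal{E}^f] \;=\; \Pr[\bar{\mathcal{F}^*}]\bigl(\esper{f\in\mathcal{F}^*}[\mathcal{E}^f] - \esper{f\in\bar{\mathcal{F}^*}}[\mathcal{E}^f]\bigr),
\]
so that the trace norm is bounded above by $\Pr[\bar{\mathcal{F}^*}] + \esper{f}[\mathbb{1}_{\bar{\mathcal{F}^*}}(f)\,\|\mathcal{E}^f\|_1]$, using that $\mathcal{E}^f$ is a genuine density operator whenever $f\in\mathcal{F}^*$ (since $\adv^f_n$ is then an isometry). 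Theorem~\ref{thm:attaque-simulable} makes the first summand $\negl[n]$ immediately, so the task reduces to bounding the expected trace norm of $\mathcal{E}^f$ on the bad $f$'s.

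Next I would exploit the fact that $\adv^f_n$ acts as a single Kraus operator $A := \adv^f_n$: each oracle call sends a positive operator $\rho$ to a positive operator of trace at most $\|A^*A\|_\infty\cdot\trace{\rho}$, while the unitary steps in $\mathcal{R}_n$ preserve trace. Iterating over $q=\poly[n]$ queries gives $\|\mathcal{E}^f\|_1 = \trace{\mathcal{E}^f} \le \|A^*A\|_\infty^q$. Taking $P^f_\bot := (\id - \sum_{a,w}P^f_{a,w})_+$ (a valid realisation of the paper's convention), a one-line spectral argument shows
\[
\|A^*A\|_\infty \;=\; \bigl\|\textstyle\sum_{a,w}P^f_{a,w}+P^f_\bot\bigr\|_\infty \;=\; \max\bigl(\|\textstyle\sum_{a,w}P^f_{a,w}\|_\infty,\,1\bigr),
\]
so the operator norm of $\sum_{a,w}P^f_{a,w}$ is the only thing that still needs to be controlled.

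I would then apply Lemma~\ref{tech4} with parameter $T := \lfloor 1/(q\eta) \rfloor$ (valid since $T\le 1/(2\eta)$ for $q\ge 2$) to split $\bar{\mathcal{F}^*}$ into the ``near-good'' region $\mathcal{F}^*_T\setminus\mathcal{F}^*$, where $\sum_{a,w}P^f_{a,w}$ lies in the bracketed interval of Lemma~\ref{tech4} with this $T$, and its complement $\overline{\mathcal{F}^*_T}$. On the near-good region, $\|A^*A\|_\infty \le 1+T\eta$, so $\|\mathcal{E}^f\|_1\le (1+T\eta)^q \le e$, and the contribution is at most $\Pr[\bar{\mathcal{F}^*}]\cdot e \le \negl[n]$. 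On the tail, only the crude bound $\|\sum_{a,w}P^f_{a,w}\|_\infty\le 2^m/(1+\eta)$ is available, yielding $\|\mathcal{E}^f\|_1\le 2^{mq}$; but Lemma~\ref{tech4} caps the tail's probability by $2^{-nT^2} = 2^{-n/(q\eta)^2}$, and since $\eta^2 = \Theta\bigl((n+k)\,2^{m-n}\bigr)$ with $n-m\in\omega(\lg n)$, the exponent $n/(q\eta)^2$ is super-polynomial in $n$ and therefore easily dominates $mq$. Both contributions are thus $\negl[n]$.

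The main obstacle will be the tail region: a naive operator-norm bound on $\|\mathcal{E}^f\|_1$ for $f\notin\mathcal{F}^*$ is of order $2^{mq}$, which $\Pr[\bar{\mathcal{F}^*}]\le 2^{-n+1}$ cannot absorb when $q$ is a large polynomial. The crucial point is that Lemma~\ref{tech4} delivers a \emph{two-sided} concentration of $\sum_{a,w}P^f_{a,w}$ around $\id$, which is what lets the scale $T$ be chosen so that $(1+T\eta)^q$ remains $O(1)$ on most of $\bar{\mathcal{F}^*}$; the hypothesis $n-m\in\omega(\lg n)$ then plays the precise role of making $\eta$ small enough for such a $T$ to simultaneously force $nT^2\gg mq$.
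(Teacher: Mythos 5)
Your proposal is correct and follows essentially the same route as the paper's proof: reduce to bounding $\esper{f}\bigl[\mathbb{1}_{\bar{\mathcal{F}^*}}(f)\,\|{\cal R}_n^{\adv^f_n}(\proj 0)\|_1\bigr]$, control the per-query norm growth via $\|A^*A\|_\infty$, and split the bad $f$'s with Lemma~\ref{tech4} into a concentrated region where the growth stays $O(1)$ and a doubly-exponentially unlikely tail where the crude $2^{mq}$ bound suffices. The only differences are cosmetic — your law-of-total-expectation decomposition versus the paper's comparison of the two classical mixtures $\rho_{\mathcal{F}^*},\rho_{\mathcal{F}}$, and your threshold $T=\lfloor 1/(q\eta)\rfloor$ versus the paper's $t=2^{(n-m)/4}$ — and both choices yield the same $\negl[n]$ conclusion.
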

The proof of this lemma can also be found in Appendix~\ref{sec:proof-simulable}. 

As a direct consequence of lemma~\ref{tech3}, we get
\begin{equation}
\begin{split}
\Big\|\Big.{\cal R}_n^{\simulator}(\proj 0) &- \Big.
 \esper{f\in\mathcal{F}^*}\left [{\cal R}_n^{\mathcal{A}^{f}}(\proj 0)\right]\Big\|_1
 \leq \\
 &\hspace{0.7in} \negl[n] + \left\|
 {\cal R}_n^{\simulator}(\proj 0)-\esper{f\in\mathcal{F}}\left [{\cal R}_n^{\mathcal{A}^{f}}(\proj 0)\right]\right\|_1 \enspace.
\end{split}
\end{equation}

To bound the trace-distance between $\mathcal{R}^{\adv^f}_n(\proj 0)$ for
$f\in_R\mathcal{F}$ and $\mathcal{R}^{\simulator_n}_n(\proj 0)$ when
$\mathcal{R}^{(.)}_n$ is an oracle access circuit with $q:=q(n) \in \poly$
queries, we use $q+1$ hybrid reductions where hybrid $i$ acts as $\simulator_n$
on the first $i$ queries and acts as $\adv^f_n$ on the remaining $q-i$ queries.
In the following, we denote by $\mathcal{R}^{\simulator, \adv^f}_{n,j}$ the
oracle-access circuit $\macho{R}{\adv}(1^n)$ where the first $j$ calls are made
to oracle $\simulator_n$ and the last $q-j$ calls are made to $\adv^f_n$. We
therefore have that $\mathcal{R}^{\simulator, \adv^f}_{n,q}$ corresponds to
$\mathcal{R}_n^{\simulator}$ and $\mathcal{R}^{\simulator, \adv^f}_{n,0}$
corresponds to $\mathcal{R}_n^{\adv^f}$, and
 \begin{equation}\label{lebordel}
 \begin{split}
 \left\|
 {\cal R}_n^{\simulator}(\proj 0_{R})- \esper{f\in \mathcal{F}}\left [ {\cal R}_n^{\mathcal{A}^{f}}(\proj 0)\right]\right\|_1
 = \\
\left\|\esper{f\in\mathcal{F}}\left[{\cal R}_{n,q}^{\simulator, \adv^f}(\proj 0) - \right.\right.&
\left. \left.{\cal R}_{n,0}^{\simulator, \adv^f}(\proj 0)\right] \right\|_1\enspace.
 \end{split}
 \end{equation}
 By a standard hybrid argument, the right-hand side of (\ref{lebordel})  is upper bounded 
 as follows:
 \begin{equation}\label{notrebut}
 \begin{split}
 \left\| \esper{f\in \mathcal{F}}\left[
 {\cal R}_{n,q}^{\simulator, \adv^f}(\proj 0)- \right.\right.& \left.\left.
{\cal R}_{n,0}^{\simulator, \adv^f}(\proj 0) \right]\right\|_1
 \leq  \\ & \sum_{j=1}^q \left\| \esper{f\in\mathcal{F}}\left[ 
 {\cal R}_{n,j}^{\simulator, \adv^f}(\proj 0)- {\cal R}_{n,j-1}^{\simulator, \adv^f}(\proj 0)
 \right]  \right\|_1\enspace.
 \end{split}
 \end{equation}
 We now upper bound $ \left\| \esper{f\in\mathcal{F}}\left[ {\cal
       R}_{n,j}^{\simulator, \adv^f}(\proj 0)- {\cal R}_{n,j-1}^{\simulator,
       \adv^f}(\proj 0) \right] \right\|_1$ for $j\in\{1,\ldots,q\}$. Notice
 that the working registers (register $E'$ in (\ref{eq:iso-f})) of any oracle
 call can be measured without modifying the behaviour of the reduction as these
 registers are not under its control, these measurements all commute with the
 operations of the reduction. For any $j\in\{1,\ldots,n\}$, circuits ${\cal
   R}_{n,j}^{\adv^{f},\simulator}$ and $ {\cal R}_{n,j-1}^{\simulator, \adv^f}$
 differ only in the $j$-th query, which is made to $\simulator_n$ in ${\cal
   R}_{n,j}^{\simulator, \adv^f}$ and to $\adv^f_n$ in ${\cal
   R}_{n,j-1}^{\simulator, \adv^f}$. Otherwise, both ${\cal
   R}_{n,j}^{\simulator, \adv^f}$ and ${\cal R}_{n,j-1}^{\simulator, \adv^f}$
 query $\simulator_n$ for all queries prior to the $j$-th and both query
 $\adv^f_n$ for all queries following the $j$-th. Let
 $\mathcal{S}=\{0,1\}^n\times\{0,1\}^m\times \{0,1\}^{\ell(n)}$ be the set of
 possible announcements $(a,y,w)$ for a prover in $\Pi$ except when an error
 occured (i.e. when $a=\bot$ is obtained). For $j\geq 1$, let
 $S^{j-1}=(S^{j-1}_1,\ldots,S^{j-1}_{j-1})\in\mathcal{S}^{j-1}$ be the random
 variable for the outcomes of the $j-1$ first queries to $\simulator_n$ in
 ${\cal R}_{n,j}^{\simulator, \adv^f}(\proj{0})$ and ${\cal
   R}_{n,j-1}^{\simulator, \adv^f}(\proj{0})$, where $S^{j-1}_h$, for
 $h\in\{1,\ldots, j-1\}$, represents the \emph{result} of the $h$--th call.
 Remember that the portion of the adversary's circuit up to but not including
 the $j$--th call is an isometry as it is independent of $f\in\mathcal{F}$. This
 independence of all $j-1$ first outcomes is important in applying the hybrid
 argument. Only querying $\adv^f_n$ can produce the special error outcome
 $(\bot,\bot,\bot)$ and only querying $\adv^f_n$ with $f\notin\mathcal{F}^*$ for
 the $j$--th query can transform the state of the reduction before the $j$--th
 query into a non-physical one, as its trace-norm could exceed $1$. Remember
 that outcome $S^{j-1}_h=(a,y,w)$ corresponds to the outcome when
 \emph{measuring} in the computational basis the internal register $E'$ of the
 $h$--th call to $\adv^f_n$. We say that $S^{j-1}$ \emph{is confused about $a$}
 if $S^{j-1}_h = (a,y,w)$ and $S^{j-1}_{h'}=(a,y',w')$ for some $h\neq h'$ and
 $y \neq y'$. For $s\in \mathcal{S}^{j-1}$, we denote by $Q_{S^{j-1}}(s)$ the
 probability of results $s$ for the $j-1$ first calls (to $\simulator_n$) in
 ${\cal R}_{n,j}^{\simulator, \adv^f}(\proj{0})$. By construction, this also
 corresponds to the probability of $s$ for the $j-1$ first calls in ${\cal
   R}_{n,j-1}^{\simulator, \adv^f}(\proj{0})$.
\newcommand{\ssfa}{\ensuremath{s[\textsf{a}]}}
\newcommand{\error}[1]{\ensuremath{\Delta^{#1}_{\mbox{\tt e}}}}
For $s\in\mathcal{S}$, we let
$\ket{\psi_{j}(s)}$ be the state 
  obtained just prior the $j$-th query in both
  ${\cal R}_{n,j-1}^{\simulator, \adv^{f}}(\proj{0})$
  and
  ${\cal R}_{n,j}^{\simulator, \adv^{f}}(\proj{0})$
  given that registers $E''_1,\ldots,E''_{j-1}$ have each 
  been measured in the computational basis to get $s$.
  In the following, we abuse the notation and 
  write ${\cal R}_{n,j-1}^{\simulator, \adv^{f}}(\proj{\psi_j(s)})$ and
  ${\cal R}_{n,j}^{\simulator, \adv^{f}}(\proj{\psi_j(s)})$
  to denote the result of the each hybrid reductions 
  when $\ket{\psi_{j}(s)}$  is used for the $j$--th query 
  onward. ${\cal R}_{n,j-1}^{\simulator, \adv^{f}}(\proj{\psi_j(s)})$
  will make all its remaining queries to $\adv^f_n$ while
  ${\cal R}_{n,j}^{\simulator, \adv^{f}}(\proj{\psi_j(s)})$
  will query $\simulator_n$ one last time before querying
  $\adv^f_n$.
\newcommand{\pjs}[1]{\ensuremath{\mathbb{S}_{#1}}}

   Let $F$ be a random variable
  uniformly distributed in $\mathcal{F}$. For $s\in\mathcal{S}^{j-1}$, 
  let $\mathbb{S}_s$ be the projector on the subspace producing 
  outcomes $s$ when registers $E''_1,\ldots, E''_{j-1}$ of the $(j-1)$--th
  first calls to $\simulator_n$  are each measured in the computational basis.
  By construction of the simulator, $\{\pjs{s}\}_{s\in\mathcal{S}^{j-1}}$ defines
  a complete Von Neumann measurement
  of register $\bigotimes_{i=1}^{j-1}E''_i$ provided by the $j-1$ first calls to $\simulator_n$.
  We have,
  \begin{align}
  \left\| \esper{f}\right.&\left.\left[
 {\cal R}_{n,j}^{\simulator,\adv^{f}}(\proj 0)- {\cal R}_{n,j-1}^{\simulator,\adv^{f}}(\proj 0)
 \right] \right\|_1 \nonumber \\
 =& \left\| \sum_{f\in\mathcal{F}} \Pr{[F=f]}\left( {\cal R}_{n,j}^{\simulator,\adv^{f}}(\proj 0)- 
 {\cal R}_{n,j-1}^{\simulator,\adv^{f}}(\proj 0)\right)\right\|_1 \nonumber \\
=& \left\|\sum_{\substack{s\in \mathcal{S}^{j-1}\\ f\in\mathcal{F}}}\hspace{-0,1cm}
\Pr{[F=f]}  
 \left( \pjs{s}{\cal R}_{n,j}^{\simulator,\adv^{f}}(\proj 0)\pjs{s}- \pjs{s}{\cal R}_{n,j-1}^{\simulator,\adv^{f}}(\proj 0)\pjs{s} \right)\right\|_1\nonumber \\
= & \left\|\sum_{\substack{s\in \mathcal{S}^{j-1}\\ f\in\mathcal{F}}}\hspace{-0,1cm}
\Pr{[F=f]}  Q_{S^{j-1}}(s)
 \left( {\cal R}_{n,j}^{\simulator,\adv^{f}}({\psi_j(s)})- {\cal R}_{n,j-1}^{\simulator,\adv^{f}}({\psi_j(s)})\right)\right\|_1 \nonumber\\
 \leq & \underbrace{
\sum_{\substack{s\in \mathcal{S}^{j-1}\\ f\in\mathcal{F}}}\hspace{-0,1cm}
\Pr{[F=f]}  Q_{S^{j-1}}(s) \left\|
    {\cal R}_{n,j}^{\simulator,\adv^{f}}({\psi_j(s)})- {\cal R}_{n,j-1}^{\simulator,\adv^{f}}({\psi_j(s)})\right\|_1}_{\text{(D)}} \enspace. \label{acdc} 
\end{align}
We now find a negligible upper bound  for (\hyperref[acdc]{D}) in (\ref{acdc}).
This is where the work  to apply the hybrid argument
to the $j$--th query is done.
Given $s\in\mathcal{S}^{j-1}$, remember that 
$\ket{\psi_{j}(s)}$ is the state obtained (a pure state)
just prior to $j$--th query in both 
 ${\cal R}_{n,j}^{\simulator,\adv^{f}}
 (\proj 0)$ and ${\cal R}_{n,j-1}^{\simulator,\adv^{f}}(\proj 0)$.
Remember also that we denote by 
 ${\cal R}_{n,j}^{\simulator,\adv^{f}}
 (\psi_j(s))$ and ${\cal R}_{n,j-1}^{\simulator,\adv^{f}}(\psi_j(s))$
 the action of the circuit from the $j$--th query onward 
 with initial state $\ket{\psi_j(s)}$. The $j$--th query is made
 to $\simulator_n$ in  ${\cal R}_{n,j}^{\simulator,\adv^{f}}$ 
 and to $\adv^f_n$ in ${\cal R}_{n,j-1}^{\simulator,\adv^{f}}$.

 Let $q^{\adv^f_n}_{a,w}(\psi_{j}(s)):= \trace{(P^f_{a,w}\otimes \id_{Z_j})\proj{\psi_{j}(s)}_{P_jZ_j}}$
 be the \emph{likelihood} of outcome $(a,f(a),w)$ for the $j$--th query
made to $\adv^f_n$ upon  $\ket{\psi_{j}(s)}$. 
 Let $q_{\bot}^{\adv^f_n}(\psi_{j}(s))=
 \trace{P^f_{\bot}\proj{\psi_{j}(s)}}$ 
 be the likelihood
  that the $j$--th query to $\adv^f_n$ produces an error and let $\ket{\psi^{f,\bot}_j(s)}$
  the normalized vector obtained after the $j$--th query has produced an error
  (notice that if the $j$--th query is made to $\simulator_n$ then no error can be produced).
The set $\{q^{\adv^f_n}_{a,w}(\psi^f_{j}(s))\}_{a,w}\cup\{q_{\bot}^{\adv^f_n}(\psi^f_{j}(s))\}$
is not guaranteed to be a probability distribution when $f\notin \mathcal{F}^*$ (this is the reason
why we call these values \emph{likelihoods} instead of probabilities).  

Likewise, 
 $q^{\simulator_n}_{a,y,w}(\psi_{j}(s)):= 2^{-n}\trace{(N^a_{y,w}\otimes \id_{Z_j})\proj{\psi_{j}(s)}}$
 be the probability that $\simulator_n$ picks $a$ uniformly at random and observes $(y,w)$ when applying
 $\Pi$'s honest measurement $\mathcal{N}^a$ on vector $\ket{\psi_{j}(s)}$.
 Notice that by definition of  $\{P^f_{a,w}\}_{a,w}$,
 for all $(a,f(a),w)\in\mathcal{S}$,
 states 
 \[
 \left(\sqrt{P^f_{a,w}}\otimes \id_{Z_j}\right)\ket{\psi_{j}(s)} \text{ and } 
 \left(\sqrt{N^a_{f(a),w}}\otimes \id_{Z_j}\right)\ket{\psi_{j}(s)} 
 \]
 are identical once normalized.  Let
$\ket{\psi^{a,f(a),w}_j(s)}$ be that  state. 
  In the following, we write $(a,z)\in s$ if there exists $w\in\{0,1\}^{\ell(n)}$
  such that $(a,z,w)\in s$. We also write $a\in s$ if there exist $z,w$ such that
  $(a,z,w)\in s$. Let $\delta(s):=\{a\in\{0,1\}^n\,|\,a\in s\}$.
The sum over $f$ in (\hyperref[acdc]{D})
 can now be written as
 \begin{align}
&\frac{1}{\#\mathcal{F}}  \sum_{f\in \mathcal{F}}
 \text{tr}_{E_j}\left((\simulator_n \otimes \id_{Z_j})
 \proj{\psi_{j}(s)}(\simulator_n \otimes \id_{Z_j})^* \right.\nonumber \\
& \hspace{3.9cm} - \left.(\adv^f_n \otimes \id_{Z_j})
 \proj{\psi_{j}(s)}(\adv^f_n \otimes \id_{Z_j})^* \right) \nonumber \\
 &= 
   \sum_{\substack{\substack{a\in\{0,1\}^n\setminus \delta(s)\\y\in\{0,1\}^m \\ w\in\{0,1\}^{\ell(n)} }}} 
   \underbrace{\frac{2^m}{\#\mathcal{F}} \sum_{\substack{f\in\mathcal{F} \\ f(a)=y}}\hspace{-0.1cm}\left|q^{\simulator_n}_{a,y,w}(\psi_{j}(s))- 2^{-m}q^{\adv^f_n}_{a,w}(\psi_{j}(s))\right|}_{\text{(M)}}\label{pppM} \\
& \hspace{5,5cm} \proj{a,y,w}
  \otimes \proj{\psi^{a,f(a), w}_j(s)}\nonumber  \\
 & \hspace{0,5cm}+\underbrace{ \frac{1}{\#\mathcal{F}}\hspace{-0,5cm}\sum_{\substack{f\in\mathcal{F}\\a\in\delta(s) \\ z\in\{0,1\}^m\\ w\in\{0,1\}^{\ell(n)}}}
\hspace{-0,5cm}\left| q^{\simulator_n}_{a,z,w}(\psi_{j}(s))- q^{\adv^f_n}_{a,w}(\psi_{j}(s))\right|}_{\text{(A)}}  \proj{a,z,w}
   \otimes \proj{\psi^{a,f(a),w}_j(s)}\label{qqqA} \\
 &\hspace{0.5cm} + 
   \underbrace{\frac{1}{\#\mathcal{F}}
   \sum_{\substack{f\in \mathcal{F}}} q^{\adv^f_n}_{\bot}(\psi_{j}(s))}_{(\bot)}
 \proj{\bot,\bot,\bot}\otimes \proj{\psi^{f,\bot}_j(s)}
 \enspace.\label{patiel1}
 \end{align}
 In the equation above, (\hyperref[pppM]{M}) is the main difference between
 $\adv^f_n$ and $\simulator_n$, (\hyperref[qqqA]{A}) represents the outcomes in
 which $s$ is confused about $a$ and (\hyperref[patiel1]{$\bot$}) represents the
 adversary's inconclusive outcome. We refer to Appendix~\ref{sec:proof-simulable}
 for the proofs that (\hyperref[qqqA]{A}), (\hyperref[patiel1]{$\bot$}), and the
 main part (\hyperref[pppM]{M}) are all negligible in $n-m$.

 Putting things together using the bounds on
 (\ref{lapartieA}),(\ref{petitapetit}), and (\ref{unpasdeplus}), we conclude
 that
 \begin{equation}
   (\hyperref[acdc]{D}) \leq \negl[n-m] \enspace,
 \end{equation}
 and this negligible upper bound on 
 (\hyperref[acdc]{D}) is independent of $s\in\mathcal{S}^{j-1}$ and $1\leq j\leq q(n)$.
 We conclude from (\ref{acdc}) that for all
 $1\leq j \leq q(n)$,
 \begin{equation}\label{presquerendu}
   \left\| \esper{f}\left[ 
       {\cal R}_{n,j}^{\adv^{f},\simulator}(\proj 0)- {\cal R}_{n,j-1}^{\adv^{f},\simulator}(\proj 0)
     \right]  \right\|_1 \leq \negl[n-m]  \enspace.
 \end{equation}
 Finally, plugging (\ref{presquerendu}) into (\ref{notrebut}) completes the
 proof of Theorem~\ref{thm:simulation-wotro}.

\section{Black-Box Impossibility of Fiat-Shamir in the \crqss{} Model}
\label{sec:impo-fiat-shamir}

We assume the reader is familiar with $\sprs$ and the Fiat-Shamir transform. For more information, we refer to Section~\ref{sec:sprs-fiat-shamir}.

In this section, we consider the natural extension of the Fiat-Shamir transform in the CRQS model where the prover and verifier share an arbitrary entangled state $\ket{\varphi_{n,m}}$, the prover performs some measurement specified by $a$ on its part of the CRQS, sends the result to the verifier who performs its own measurement based on the prover's message. Since a universal instantiation of the Fiat-Shamir is required to transform any $\spr$ into a sound argument, the CRQS $\ket{\varphi_{n,m}}$, as well as the measurement operators of the prover and verifier must be independent of the actual $\spr$ and of the statement $x$. The quantum Fiat-Shamir transform proceeds as follows:
\begin{enumerate}
\item $\prover^{\fsl}$ computes $a=\prover(x,w)$ and performs some measurement $\mathcal{N}^a$ on its part of $\ket{\varphi_{n,m}}$ that yield classical outcomes $(c,v)$. It computes  $z=\prover_2(a,x,w,c)$, and sends  $(a,c,v,z)$ to $\verifier^{\fsl}$.
\item $\verifier^\fsl$ performs a binary-outcome measurement $\mathcal{V}^{a,c,v}$ on its part of $\ket{\varphi_{n,m}}$ and rejects if the outcome is $0$, and otherwise outputs $\verifier(x,a,c,z)$.
\end{enumerate}
We consider without loss of generality that all communication remains classical, since the CRQS could contain polynomially many EPR pairs allowing for the teleportation of quantum states from the prover to the verifier.

 An \emph{abstract} Fiat-Shamir transform that captures all of the above would look like the following. Since we are proving a negative result, we only ask that a universal instantiation of the Fiat-Shamir transform has constant soundness error (instead of $\negl[n]$). 
 \begin{defi}
   The Fiat-Shamir transform is given by  $\Pi_\fsl^{n,m}=(\prover_\fsl,\verifier_\fsl)$ where $\prover_\fsl$ takes as input the commitment $a\in\bool^n$ and outputs a challenge $c\in\bool^m$ and a auxiliary verification information $v$. $\verifier_\fsl$ takes input $(a,c,v)$ and outputs $\tt accept$ or $\tt reject$.  For a \spr{} $\Sigma=(\prover_\Sigma,\verifier_\Sigma)$, the Fiat-Shamir transform applied to $\Sigma$ is the non-interactive protocol $\Pi_\fsl^{n,m}[\Sigma]=(\prover,\verifier)$ defined as
\begin{enumerate}
\item $\prover$ computes $a=\prover_\Sigma^1(x,w)$ and runs $(c,v)\leftarrow \prover_\fsl(a)$. It computes  $z=\prover_\Sigma^2(a,x,w,c)$, and sends  $(a,c,v,z)$ to $\verifier$.
\item $\verifier$ runs $\verifier_\fsl(a,c,v)$ and rejects if $\verifier_\fsl$ rejects, and otherwise outputs $\verifier_\Sigma(x,a,c,z)$.
\end{enumerate}
The Fiat-Shamir transform $\Pi_\fsl^{n,m}$ is $(n,m)$--universal if for any \spr{} $\Sigma$, $\Pi_\fsl^{n,m}[\Sigma]$ is an argument with soundness error bounded above by some constant greater than zero. 
\end{defi}

Note that an instantiation of the Fiat-Shamir transform is also one for $\wotro$ (and vice-versa). More precisely, the \wotro{} protocol implied by Fiat-Shamir is the protocol where $\prover_\wotro$ invokes $\prover_\fsl$, sends $(a,c,v)$ to $\verifier_\wotro$ that outputs $(a,c)$ if $\verifier_\fsl(a,c,v)$ accepts. The main distinction between the two is that a secure protocol for \wotro{} needs to avoid all functions, whereas a universal instantiation of Fiat-Shamir only needs to avoid functions that are ``bad challenges'' functions for some \spr\ for language membership to $L$ upon some public parameter $x\notin L$ .

\subsection{Black-Box Impossibility of Universal Fiat-Shamir}
\label{sec:fs-implique-wotro}

We begin by defining what is a black-box reduction from \fsl{} to a cryptographic game assumption
similarly to how it is done in~\cite{bitansky_why_2013}.
\begin{defi}[Black-Box Reduction for Quantum Fiat-Shamir]\label{def:qfs-redux}
  Let $\mathcal{G}=(\Gamma,c)$ be a cryptographic game assumption and let $\Pi_{\fsl}^{n,m}$ be an instantiation of the Fiat-Shamir transform in the \crqss{} model. A \emph{black-box reduction showing the $(n,m)$--QFS--universality of $\Pi_\fsl^{n,m}$ under the assumption $\mathcal{G}$ in the \crqss{} model} is an oracle-access machine $\mathcal{B}^{(\cdot,\cdot,\cdot)}$ such that the following holds. Let
  \begin{enumerate}
  \item $\Sigma=(\prover,\verifier)$ be a \spr{} for a language $L$ with commitment length $n$ and challenge length $m$ that has perfect completeness and special soundness, and
  \item $\adv$ be a (possibly inefficient) attacker that breaks the computational soundness of the non-interactive proof system $\Pi_\fsl^{n,m}[\Sigma]$ with advantage $1-\negl[n]$.
  \end{enumerate}
  The reduction $\mathcal{B}$ has black-box access to \prover, \verifier{} and \adv{}, runs in time polynomial in the running times of \prover, \verifier{} and \adv{}, and $\mathcal{B}^{\prover,\verifier,\adv}$ has advantage at least $1/\poly[n]$ in game $\mathcal{G}$.
\end{defi}

 As mentioned previously, a \fsl{} protocol is essentially a \wotro{} protocol, albeit satisfying a 
weaker notion of security. 
 In particular, a \wotro{} protocol avoiding \emph{only} the ``bad challenge'' functions
 of  \sprs\ would be enough for \fsl{}. The impossibility to black-box reduce the
 security of \wotro{} to a cryptographic game, as expressed in Corollary~\ref{cor:bb-imposs-wotro},
 does not apply directly to Fiat-Shamir.

 To show black-box impossibility of \fsl{} in the \crqss{} model, we construct a family of $\sprs$ $\{\Sigma^f\}_{f:\bool^n\rightarrow \bool^m}$ such that $\Sigma^f$ has bad challenge function $f(\cdot)$ for any $f$. The verifier $\verifier^f$ in $\Sigma^f$ is not necessarily efficient, but we again exploit the simulation paradigm, where the inefficient adversary is replaced by an efficient indistinguishable simulator, to simulate this verifier in a way that is consistent with the adversarial prover. By definition of the reduction $\mathcal{B}^{(\cdot,\cdot,\cdot)}$, if an adversary $\adv^f$ breaks the soundness of $\Pi_\fsl[\Sigma^f]$, $\mathcal{B}^{\prover,\verifier^f,\adv^f}$ wins game $\mathcal{G}$. By replacing $(\verifier^f,\adv^f)$ with a pair of simulators $(\simulator_\verifier,\simulator_\adv)$ such that no $\poly[n]$--query machine can distinguish between the two pairs, we obtain an efficient algorithm $\mathcal{B}^{\prover,\simulator_\verifier,\simulator_\adv}$ breaking the security of $\mathcal{G}$. We formalize this joint simulation below and then prove the black-box impossibility result using the strategy outlined above and pictured in Fig.~\ref{fig:qfs-redux}.

\begin{defi}[Joint Simulatability]\label{def:sim-sim}
  A family of (possibly inefficient) algorithms $\{(\adv^f,V^f)\}_f$ that have access to the same (possibly inefficient) resource $f:\bool^n\rightarrow \bool^m$ are \emph{jointly simulatable} if there exist two \qpt{} stateful algorithms $\simulator_1$ and $\simulator_2$ that share a \emph{common state} and such that for any $\poly[n]$--query oracle access machine $M^{(\cdot,\cdot)}$,
  \begin{equation*}
    \left| \Pr_f[ M^{(\adv^f,V^f)}=1]- \Pr[ M^{(\simulator_1,\simulator_2)}=1] \right|\leq \negl[n]
    \enspace .
  \end{equation*}
\end{defi}

\begin{figure}
  \centering
  \begin{minipage}{0.3\linewidth}
    \begin{tikzpicture}
      [every node/.style={minimum size=7mm, node distance=1cm},thick]

      \node[draw, minimum width=27mm] (redux) {$\mathcal{B}$} ; \node[draw,
      below of=redux, minimum width=27mm] (challenger) {$\Gamma$} ; \node[draw,
      above of=redux] (verifier) {$\verifier^f$}; \node[draw, left of=verifier]
      (prover) {$\prover$} ; \node[draw, right of=verifier] (adv) {$\adv^f$};

      \draw (redux.north)+(left:1cm) -- (prover); \draw (redux.north) --
      (verifier); \draw (redux.north)+(right:1cm) -- (adv); \draw (redux) --
      (challenger);

      \node[above of=verifier] (g1) {};
      \node[above of=adv] (g2) {};
      \begin{pgfonlayer}{background}
        \node[draw, dashed, rounded corners=3pt, fill=blue!25,
        fit=(verifier) (adv) (g1) (g2)] (sim) {};
      \end{pgfonlayer}
      \node[below] at (sim.north) {$\simulator$};
    \end{tikzpicture}
  \end{minipage}
  \begin{minipage}{0.69\linewidth}
    \caption{Visualization of the proof of Theorem~\ref{thm:qfs-imposs}. The
      black-box reduction $\mathcal{B}^{(\cdot,\cdot,\cdot)}$ wins the game
      $\mathcal{G}=(\Gamma,c)$ if $(\prover,\verifier^f)$ forms a \spr{}
      $\Sigma^f$ and $\adv^f$ breaks the soundness of $\Pi_\fsl[\Sigma^f]$.
      Since $\simulator=(\simulator_{\verifier},\simulator_{\adv})$ jointly
      simulates $\verifier^f$ and $\adv^f$, neither $\mathcal{B}$ nor $\Gamma$
      can distinguish if $\simulator$ or $(\verifier^f,\adv^f)$ is being used.
      Since $\simulator$ is efficient, this means
      $\mathcal{B}^{(\prover,\simulator_{\verifier},\simulator_{\adv})}$ is an
      efficient machine that wins game $\mathcal{G}$.}
    \label{fig:qfs-redux}
  \end{minipage}
\end{figure}

\begin{theorem}\label{thm:qfs-imposs}
Let $\mathcal{G}=(\Gamma,c)$ be a cryptographic game assumption, let $n,m$ be
such that $n-m\in\omega(\lg n)$ and let $\Pi_\fsl^{n,m}$ be a Fiat-Shamir
instantiation in the \crqss{} model. There does not exist a black-box reduction
$\mathcal{B}^{(\cdot,\cdot,\cdot)}$ showing the $\Sigma_{n,m}$--universality of
$\Pi_\fsl^{n,m}$ from the security of game $\mathcal{G}$, unless assumption
$\mathcal{G}$ is false.
\end{theorem}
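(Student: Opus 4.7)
The plan is to mirror the black-box impossibility of \wotro{} (Corollary~\ref{cor:bb-imposs-wotro}) by exhibiting a family of \sprs{} for which the Chernoff-style attack is jointly simulatable together with the accompanying (inefficient) verifier. Since a $\Sigma_{n,m}$--universal Fiat-Shamir transform must preserve soundness against every \sprnm{}, it is enough to find a single family defeating any given black-box reduction. For each $f:\bool^n\to\bool^m$, define $\Sigma^f = (\prover, \verifier^f)$ for the empty language $L = \emptyset$ as follows: on any public input $x$, $\prover$ sends an arbitrary commitment $a \in \bool^n$, the verifier picks a uniformly random $c \in \bool^m$, the prover sends a trivial response $z$, and $\verifier^f$ accepts iff $c = f(a)$. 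Perfect correctness is vacuous, special soundness holds because for every $a$ only $c = f(a)$ is accepting, and the prover $\prover$ is efficient and independent of $f$ while $\verifier^f$ depends on $f$ only through oracle access. By construction, the bad-challenge function of $\Sigma^f$ is exactly $f$.

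I will then apply the Chernoff attack of Theorem~\ref{thm:attaque-simulable} to the honest Fiat-Shamir prover's POVM $\mathcal{N}^a = \{N^a_{y,w}\}$ for $\Pi_\fsl^{n,m}$: the adversary $\adv^f$ applies the POVM $\{P^f_{a,w}\}$ with $P^f_{a,w} \propto N^a_{f(a),w}$ to its share of the \crqss{}, producing $(a, c, v)$ with $c = f(a)$ such that $\verifier_\fsl(a,c,v)$ accepts with probability negligibly close to the honest acceptance probability of $\Pi_\fsl^{n,m}$, on average over $f \in_R \mathcal{F}$. Paired with any $z$, the tuple $(a, c, v, z)$ passes both the Fiat-Shamir check $\verifier_\fsl(a,c,v) = \texttt{accept}$ and $\verifier^f(x,a,c,z) = \texttt{accept}$ (because $c = f(a)$), so $\adv^f$ breaks the computational soundness of $\Pi_\fsl^{n,m}[\Sigma^f]$ with advantage $1 - \negl[n-m]$ for all but a negligible fraction of $f$. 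By Definition~\ref{def:qfs-redux}, the machine $\mathcal{B}^{\prover, \verifier^f, \adv^f}$ then wins $\mathcal{G}$ with advantage at least $1/\poly[n]$ for almost every $f$.

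The crux of the argument is to show that the family $\{(\verifier^f, \adv^f)\}_f$ is jointly simulatable in the sense of Definition~\ref{def:sim-sim}. The simulator maintains a lazy table $T$ of observed $(a, c)$ pairs and answers queries as follows. On query $\simulator_\verifier(a)$: if $(a, c) \in T$ return $c$; otherwise sample $c \in_R \bool^m$, insert $(a, c)$ into $T$, and return $c$. On query $\simulator_\adv()$: invoke the stateless simulator from Theorem~\ref{thm:simulation-wotro}, i.e.\ pick $a \in_R \bool^n$ and apply the honest POVM $\mathcal{N}^a$ to the \crqss{} to obtain $(c, v)$; if $a \notin T$, insert $(a, c)$ and output $(a, c, v, z_{\mathrm{triv}})$; otherwise abort. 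Indistinguishability follows a hybrid argument analogous to the proof of Theorem~\ref{thm:simulation-wotro}, extended to handle interleaved queries: the $\simulator_\verifier$ responses are consistent with lazy-sampling a truly random $f$, each $\simulator_\adv$ query is, after averaging over $f$, negligibly close in trace norm to a real $\adv^f$ query by the same operator-Chernoff analysis, and the collision event for the marginal of $\simulator_\adv$ on $a$ occurs with probability at most $\poly[n]/2^n$ across all queries. The main technical obstacle will be controlling the correlations between $\simulator_\verifier$ queries and prior $\simulator_\adv$ outputs, which requires tracking the joint state of the reduction and the lazy table and invoking the operator Chernoff bound conditionally on the evolving table. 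Once joint simulatability is established, $\mathcal{B}^{\prover, \simulator_\verifier, \simulator_\adv}$ is an efficient algorithm with non-negligible advantage in $\mathcal{G}$, contradicting the security of $\mathcal{G}$.
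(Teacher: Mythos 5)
Your proposal is correct and follows essentially the same route as the paper: the same family $\Sigma^f$ for the empty language with bad-challenge function $f$, the same Chernoff attack reused as the cheating Fiat-Shamir prover, and the same joint simulation of $(\verifier^f,\adv^f)$ via a shared lazily-sampled partial function with failure only on an $a$-collision of probability $\poly[n]^2\cdot 2^{-n}$. The paper resolves the "correlation" obstacle you flag exactly as you suggest, by conditioning on no collision so that both simulators' answers remain consistent with a single random $f\in\mathcal{F}^*$.
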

\begin{proof}
  Assume there exists a black-box reduction $\mathcal{B}^{(\cdot,\cdot,\cdot)}$
  showing the $(n,m)$--universality of $\Pi_\fsl^{n,m}$ from the security of
  game $\mathcal{G}$. We will show that game $\mathcal{G}$ is insecure.
  
  We begin by constructing a family of \sprs{} that has bad challenge function
  $f$ for any function $f\in\mathcal{F}^*$ where $\mathcal{F}^*$ is the set of
  functions for which the operators $P^f_{a,w}$ defined in
  Theorem~\ref{thm:attaque-simulable} form a POVM with $P^f_\bot$.
  The \spr{} $\Sigma^f$ defined below is an interactive proof of language membership for the empty language.  On public input $x$,
    \begin{enumerate}
    \item \prover: does nothing.
    \item $\verifier^f$: interact with a potentially malicious prover in the
      following way.
      \begin{enumerate}
      \item On first message $a\in\bool^n$, pick $c\in_R\bool^m$ uniformly at
        random and send $c$ to the prover.
      \item On response $z$ from the prover, accept iff $c=f(a)$.
      \end{enumerate}
    \end{enumerate}
  This is indeed a \spr{} as it satisfies perfect correctness and special
  soundness.

  Next, we build a dishonest prover that breaks the soundness of the QFS
  transform $\Pi_\fsl^{n,m}[\Sigma^f]$ of this \spr{}. Recall that an
  implementation of Fiat-Shamir is also an implementation of \wotro{}, but that
  might not be secure. Let $\{\adv_\wotro^f\}_f$ be the attack from
  Theorem~\ref{thm:simulation-wotro} against $\Pi_\fsl$ (seen as an
  implementation of \wotro{}), such that there exists a $\negl[n-m]$--simulatable such
  that $\adv^f_\wotro$ produces $(a,f(a),v)$ that $\verifier_\fsl$ accepts with
  probability $1-\negl[n-m]$. Let $\simulator_\wotro$ be the simulator for
  $\{\adv_\wotro^f\}_f$. For a function $f\in\mathcal{F}^*$, define the
  adversarial prover $\mathcal{P}^f$ that attacks protocol
  $\Pi_\fsl^{n,m}[\Sigma^f]$ as follows:
  \begin{enumerate}
  \item Invoke $\adv_\wotro^f$ on register $P$ of the CRQS to obtain $(a,c,v)$.
  \item Send $a,c,v$ and $z=\bot$ to the verifier.
  \item Recall that the verifier for $\Pi_\fsl^{n,m}[\Sigma^f]$ runs
    $\verifier_\fsl$ of $\Pi_\fsl$ with message $(a,c,v)$ on register $V$ of the
    CRQS and then runs $\verifier^f$ of $\Sigma^f$ on input $(a,c,v,z)$.
  \end{enumerate}
  The probability that the verifier accepts in protocol
  $\Pi_\fsl^{n,m}[\Sigma^f]$ is equal to the probability that $\verifier_\fsl$
  accepts and that $c=f(a)$, which by construction of $\adv_\wotro^f$ happens
  with probability at least $1-\negl[n-m]$.

  Plugging \prover, $\verifier^f$ and $\mathcal{P}^f$ into the reduction
  $\mathcal{B}^{(\cdot,\cdot,\cdot)}$ gives an algorithm
  $\mathcal{B}^{\prover,\verifier^f,\mathcal{P}^f}$ that breaks the security of
  game $\mathcal{G}$, and yet that is not efficient. Using the simulator
  $\simulator_\wotro$ for the adversary $\adv^f_\wotro$ allows us to replace the
  inefficient malicious prover $\mathcal{P}^f$ against the QFS transform with an
  indistinguishable efficient simulator, but $\verifier^f$ is still not
  efficiently computable.
  
  We now show how $\mathcal{P}^f$ and $\verifier^f$ can be jointly simulated
  (Definition~\ref{def:sim-sim}) using the stateless simulator $\simulator_\wotro$ for
  $\{\adv_\wotro^f\}_f$. The two stateful algorithms $\simulator_{\mathcal{P}}$
  and $\simulator_\verifier$ are defined as follows
  \begin{enumerate}
  \item Common State: a partial function $f_A:\bool^n\rightarrow\bool^m$ defined
    on an initially empty set $A=\emptyset$.
  \item $\simulator_{\mathcal{P}}$: when invoked on a quantum register $P$, call
    the simulator $\simulator_\wotro$ for the family of adversaries
    $\{\adv^f_\wotro\}_{f\in\mathcal{F}^*}$. Let $(a,c,v)\leftarrow
    \simulator_\wotro$, set $A\leftarrow A\cup \{a\}$ and $f_A(a)=c$, and return
    $(a,c,v,\bot)$. If $\simulator_\wotro$ produces an $a$ that is already in
    $A$, the simulation fails.
  \item $\simulator_\verifier$: when invoked on classical message $(a,c,v,z)$
    and quantum register $V$, run $\verifier_\fsl$ on register $V$ of the CRQS
    with input $(a,c,v)$. If $a\notin A$, pick $x\in_R\bool^m$ uniformly at
    random, set $A\leftarrow A\cup \{a\}$ and $f_A(a)=x$. Output $\tt reject$ if
    $\verifier_\fsl$ rejects or if $c\neq f_A(a)$, otherwise output $\tt
    accept$.
  \end{enumerate}
  \begin{claim}
    The pair of stateful (with common state) algorithms
    $(\simulator_{\mathcal{P}},\simulator_\verifier)$ jointly simulates
    $\{(\mathcal{P}^f,\verifier^f)\}_{f\in\mathcal{F}^*}$.
  \end{claim}
  \begin{proof}
    Let $M^{(\cdot,\cdot)}$ be an oracle-access machine and let $q=\poly[n]$ be
    an upper-bound on the number of queries made by $M$ to either of its
    oracles. We first bound the probability that the simulation fails and then
    condition on the simulation succeeding. Let $\alpha$ denote the random
    variable of the value $a$ produced by $\simulator_\wotro$. Since $\alpha$ is
    uniformly distributed (by the definition of $\simulator_\wotro$ in the proof
    of Theorem~\ref{thm:simulation-wotro}), on any given query, the probability that
    $\simulator_\wotro$ produces $a$ that is already in the set $A$ is
    upper-bounded by
    \begin{equation*}
      \Pr[\alpha\in A]= \sum_{a\in A} \Pr[\alpha=a]\leq q\cdot 2^{-n}\enspace .
    \end{equation*}
    A union bound over the $q$ queries allows us to upper-bound the probability
    that any of the queries returns an $a$ that was already in $A$ by $q^2\cdot
    2^{-n}$ which is $\negl[n]$.

    Conditionned on the event that $\simulator_\wotro$ never produces $a\in A$,
    we show that black-box query access to
    $(\simulator_{\mathcal{P}},\simulator_\verifier)$ is indistinguishable on
    average over $f\in\mathcal{F}^*$ from black-box query access to
    $(\mathcal{P}^f,\verifier^f)$. First, observe that
    $\simulator_{\mathcal{P}}$ behaves exactly as $\mathcal{P}^f$, except that
    it invokes $\simulator_\wotro$ instead of $\adv_\wotro^f$. Therefore the
    BB-indistinguishability of $\simulator_{\mathcal{P}}$ and $\mathcal{P}^f$
    follows from that of $\simulator_\wotro$ and $\adv^f_\wotro$. Second, we
    note that $\simulator_\verifier$ picks each new point of the partial
    function $f_A$ uniformly at random, so that $f_A$ is identically distributed
    to a random function $f$ restricted to $A$. Since a uniformly random
    function $f$ is in $\mathcal{F}^*$ with probability at least $1-\negl$, we
    have that $\simulator_\verifier$ is indistinguishable from $\verifier^f$ on
    average over $f\in\mathcal{F}^*$. Finally, since we condition on the event
    $\alpha \notin A$ at every call of $\simulator_\wotro$, the answers of
    $\simulator_{\mathcal{P}}$ and $\simulator_\verifier$ are always consistent
    with the same function $f$ (i.e.\ the simulation doesn't fail).

    Therefore, the probability that $M^{(\cdot,\cdot)}$ distinguishes
    $(\mathcal{P}^f,\verifier^f)$ from
    $(\simulator_{\mathcal{P}},\simulator_\verifier)$ is at most the probability
    that $\simulator_\wotro$ and $\adv_\wotro^f$ can be distinguished plus the
    probability that the simulation fails, which sum to at most $\negl[n-m]$.
    \qed
  \end{proof}

  We are now ready to conclude the proof. Given the reduction
  $\mathcal{B}^{(\cdot,\cdot,\cdot)}$ we construct an efficient algorithm for
  winning game $\mathcal{G}$ as follows. The machine
  $\mathcal{B}^{(\prover,\simulator_{\verifier},\simulator_{\mathcal{P}})}$
  either:
  \begin{enumerate}
  \item wins game $\mathcal{G}$, or
  \item if it does not, allows to distinguish
    $(\simulator_\verifier,\simulator_{\mathcal{P}})$ from
    $\{(\verifier^f,\mathcal{P}^f)\}_f$.
  \end{enumerate}
  Since we have established the black-box indistinguishability of
  $(\simulator_\verifier,\simulator_{\mathcal{P}})$ and
  $\{(\verifier^f,\mathcal{P}^f)\}_{f\in\mathcal{F}^*}$, we conclude that a
  BB-reduction $\mathcal{B}^{(\cdot,\cdot,\cdot)}$ from the QFS-universality of
  $\Pi_\wotro^{n,m}$ to game $\mathcal{G}$ would allow to win the game.
  \qed
\end{proof}

\section{A Quantum Assumption Allowing for $\wotro^{n,m}$}
\label{sec:impl-wotro-when-m-equal}
 In \cite{barak_lower_2006}, Barak, Lindell, and Vadhan introduce a computational 
 assumption allowing for 
\siguni\ Fiat-Shamir in the \crs\ model. 
It assumes the existence of a family of \emph{entropy preserving}
hash functions. 
In \cite{drv12}, Dodis, Ristenpart, and Vadhan showed that a family 
of entropy preserving hash functions is necessary for a \siguni\ 
implementation of Fiat-Shamir in the \crs\ model. Of course, 
it follows from \cite{barak_lower_2006, bitansky_why_2013} that this assumption 
cannot be black-box reduced to any 
cryptographic game.
In this section, we define a different computational assumption  
allowing for $\wotro^{n,m}$ in the \crqs\ model (and therefore allowing for
\siguni\ Fiat-Shamir). Our assumption is a quantum assumption on 
 hash functions called a \emph{collision-shelter}. 
  We first show in Section~\ref{wotronn} how to construct $\wotro^{n,n}$
with unconditional security  in the \crqs\ model.  In Section~\ref{fshelt},
we define the collision-shelter assumption  and we show 
how to use it to convert $\wotro^{n,n}$ into a computationally 
secure $\wotro^{n,m}$ as long as $m\in\Omega(n)$. 
We conclude in Section~\ref{disccs} by a short discussion about 
some relations and distinctions between  collision-shelters
and collision resistant families of hash functions.

\subsection{Unconditionally Secure $\wotro^{n,n}$ in the $\crqs$ Model}\label{wotronn}

Let us get back to the implementation of  $\wotro^{n,n}$ roughly described in the introduction.
The  result stated in Theorem~\ref{thm:wotro-security-mequaln} requires
the set of MUB to be the one 
introduced by Wootters and Fields in \cite{WF89}.
These bases are for the tensor product of $n$  Hilbert spaces, each
 of odd prime dimension $p$. Let $\Gamma=\{0,\ldots,p-1\}$ denote
 the elements of the finite field $\mathbb{F}_p$ for $p\geq 3$ prime.
 We refer to the Wootters and Fields MUB  for  $\Gamma^n$ as
$\Theta_{\textsf{WF}}^{p,n} = \{{\theta_a}\}_{a\in\Gamma^n}$ where 
$\theta_a = \{\ket{x_a}\}_{x\in\Gamma^n}$ is an orthonormal basis for $\Gamma^n$
that, by virtue of mutual unbiasedness, 
satisfies $|\bracket{x_a}{x'_{a'}}|=p^{-\frac{n}{2}}$ when $a\neq a'$.
The formal 
definition of $\Theta_{\textsf{WF}}^{p,n}$ can be seen in Appendix~\ref{sec:proof-of-m-equals-n}.
The \crqs\ we use to implement 
$\wotro^{n,n}$
is composed of $3n$ $p$--dimensional EPR pairs, each denoted by
$\ket{\epr{}{\Gamma}}_{PV} :=  \frac{1}{\sqrt{p}}\sum_{j\in\Gamma}\ket{jj}_{PV}$. The \crqs\
is then set to 
$\ket{\epr{3n}{\Gamma}}_{PV}:=\ket{\epr{}{\Gamma}}^{\otimes 3n}$.
Henceforth, we denote by $\wotro^{n,n}_\Gamma$ the primitive $\wotro^{n,n}$
where both the  input  and the output  are in $\Gamma^n$.

Before giving our protocol $\pwotro{n,n}=(\prover',\verifier')$ for 
 $\wotro^{n,n}_{\Gamma}$, we first consider a simpler (but insecure)
 version of it where the $\crqs$ is  $\ket{\epr{n}{\Gamma}}_{PV}$ rather
 than  $\ket{\epr{3n}{\Gamma}}_{PV}$. Upon input $a\in\Gamma^n$,
 the simpler scheme asks the prover to measure register $P$ of the \crqs\
 in basis $\theta_a\in \Theta_{\textsf{WF}}^{p,n}$ to obtain outcome 
 $c\in\Gamma^n$. The prover then announces $(a,c)$ to the verifier 
 who verifies that when measuring register $V$ of the \crqs\ the outcome
 $c$ is obtained. If the test is perform with success then the output of the
 primitive is set to $c$. 

This simple protocol cannot be proven secure as it stands. Instead, 
$\pwotro{n,n}$ asks $\prover'$ to 
measure 3 batches 
of EPR pairs $\ket{\epr{n}{\Gamma}}_{PV}$ in the same basis $\theta_a$ to get
outcomes $x_1,x_2,x_3\in\Gamma^n$. The challenge produced by the primitive is 
then  $c=x_3(x_1+x_2)^{-1}$ (where the  operations are done in $\mathbb{F}_{p^n}$). 
This choice for determining $c$  
follows from our proof technique.  $\prover'$ announces $(a,x_1,x_2,x_3)$
that is checked by $\verifier'$ after measuring register $V$ for 
each of the three instances of  $\ket{\epr{n}{\Gamma}}_{PV}$
in basis $\theta_a$. If the test is successful then the output of the primitive is 
set to $c$.
  
\begin{blockquote}
  \rule{\linewidth}{1pt}
  {\bf Protocol} $\pwotro{n,n}$ for $\wotro^{n,n}_\Gamma$\\
  {\bf Setup: }  A CRQ\$ $\ket{\epr{3n}{\Gamma}}_{PV}$. 
  \vspace{1em}

  {\bf Prover: } On input $a\in\Gamma^n$,
  \begin{enumerate}
  \item Measures its part of $\ket{\epr{3n}{\Gamma}}$ in basis
    $\theta_a^{\otimes 3}$, let
    $x=(x_1,x_2,x_3)\in \Gamma^{3n}$ be the result.
  \item If $x_1+ x_{2}=0$, set $c=0$. Otherwise, output $
    c:=x_3(x_1+ x_{2})^{-1}$  and sends $(a,x)$ to verifier.
  \end{enumerate}

  \vspace{1em}

  {\bf Verifier: } Upon reception of $(a,x)$,
  \begin{enumerate}
  \item Measure its part of $\ket{\epr{3n}{\Gamma}}$ in basis $\theta_a^{\otimes 3}$,
    let $x'=(x_1',x_2',x_3')\in \Gamma^{3n}$ be the result. 
  \item Output ${\tt reject}$ if $x\neq x'$ and
  output $(a,c')$ where $c'=x'_3(x'_1+ x'_{2})^{-1}$ otherwise.
  \end{enumerate}

  \rule{\linewidth}{1pt}
\end{blockquote}

Next theorem establishes that  $\pwotro{n,n}$ is $\frac{1}{4}$--secure
against all adversaries. The proof is given in Appendix~\ref{sec:proof-of-m-equals-n}
and may be of independent interest. It consists in showing that the best measurement
to distinguish the state transmitted by a quantum source 
that selects a basis $a\in_R \Gamma^n$ 
at random and sends $\ket{x(a)_a}$ for any set $\{(a,x(a))\}_{a\in\Gamma^n}$ 
cannot be recognized with probability better than $\frac{3}{4}$. Wootters and 
Fields'  MUBs are useful here as this probability is given by a Weil sum that can be
bounded by Deligne's resolution of Weil third conjecture\footnote{Weil's third conjecture is analogue to the Riemann hypothesis over finite fields and is called as such.}~\cite{d74}.

\begin{theorem}\label{thm:wotro-security-mequaln}
  Let $\Gamma=\{0,\ldots,p-1\}$ be the set of
  elements in finite field $\mathbb{F}_p$ for $p\geq 3$ a prime number. Protocol
  $\pwotro{n,n}$, presented above, is a  statistically correct and statistically
  $(\frac 14-\negl)$--secure implementation of $\wotro^{n,n}_{\Gamma}$.
\end{theorem}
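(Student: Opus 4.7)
The proof splits naturally into correctness and security. For correctness, since the prover and verifier each measure their halves of $\ket{\epr{3n}{\Gamma}}_{PV}$ in the same basis $\theta_a^{\otimes 3}$, they obtain identical uniformly random outcomes $x=(x_1,x_2,x_3)\in\Gamma^{3n}$ and the verifier always accepts. Conditioned on the non-degenerate event $x_1+x_2\neq 0$, which fails only with probability $p^{-n}$, the challenge $c=x_3(x_1+x_2)^{-1}$ is uniformly distributed over $\Gamma^n$, so the conditional distribution $\Pi(\cdot\mid a)$ is $p^{-n}$--close to uniform; combined with perfect verifier acceptance this gives statistical correctness in the sense of Definition~\ref{def:correct-impl}.

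For security, I would first cast the most general attack as a semidefinite program. A dishonest prover applies some POVM $\{M_{a,x}\}_{a,x}$ on register $P$ of the \crqs{} $\ket{\epr{3n}{\Gamma}}_{PV}$ and announces $(a,x)$; by the transpose trick for maximally entangled states, the verifier's conditional (unnormalised) state on $V$ is $\tfrac{1}{p^{3n}}M_{a,x}^{T}$ and its probability of accepting the announcement is $\tfrac{1}{p^{3n}}\bra{\overline{x_a}}M_{a,x}\ket{\overline{x_a}}$, where $\ket{x_a}$ denotes the Wootters--Fields basis vector for $\theta_a^{\otimes 3}$ and $\ket{\overline{x_a}}$ its complex conjugate. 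Hence the probability of hitting a target function $\sff{c}:\Gamma^n\to\Gamma^n$ is
\[ P_{\text{succ}}\;=\;\frac{1}{p^{3n}}\sum_{(a,x)\text{ valid}}\bra{\overline{x_a}}M_{a,x}\ket{\overline{x_a}}, \]
where $(a,x)=(a,x_1,x_2,x_3)$ is \emph{valid} iff $x_3=\sff{c}(a)(x_1+x_2)$, with the degenerate case $x_1+x_2=0$ treated separately. The security claim is equivalent to bounding the value of this SDP by $\tfrac{3}{4}+\negl[n]$.

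Following the sketch in Section~\ref{wotronn}, I would reduce this three-register SDP to the single-register statement that, for \emph{any} function $x:\Gamma^n\to\Gamma^n$, the optimal probability of distinguishing the ensemble $\{(p^{-n},\ket{x(a)_a})\}_{a\in\Gamma^n}$ of $n$-qudit Wootters--Fields vectors is at most $\tfrac{3}{4}+\negl[n]$. The reduction exploits both the tensor structure $\theta_a^{\otimes 3}$ of the verification basis and the algebraic form $c=x_3(x_1+x_2)^{-1}$: conditioning on the adversary's announced $(a,x_1,x_2)$, which plays the role of a randomized prior, the task of producing a valid $x_3$ reduces to steering the third register into $\ket{y_a}$ with $y=\sff{c}(a)(x_1+x_2)$, i.e.\ to single-register discrimination of one MUB vector per basis label $a$.

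The final ingredient is the Weil-sum bound on this single-register discrimination. Using the explicit expression $\ket{x_a}=p^{-n/2}\sum_{y\in\mathbb{F}_{p^n}}\omega^{\operatorname{tr}(\tfrac{1}{2}ay^2+xy)}\ket{y}$ with $\omega=e^{2\pi i/p}$, the SDP dual (equivalently, the operator norm of the $\sff{c}$--dependent Gram operator against the dual witness) can be written as a character sum of the form $\sum_{y\in\mathbb{F}_{p^n}}\omega^{\operatorname{tr}(g(y))}$ for a low-degree polynomial $g\in\mathbb{F}_{p^n}[y]$ whose coefficients depend on $x(\cdot)$ and on the dual variables. Deligne's resolution of Weil's third conjecture~\cite{d74} bounds such a sum by $(\deg g-1)\,p^{n/2}$, and after renormalization this yields precisely the $\tfrac{3}{4}$ constant, with a negligible additive slack accounting for the $x_1+x_2=0$ boundary case and lower-order error terms. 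The main obstacle is twofold: first, carrying out the reduction from the three-register SDP to single-register discrimination tightly enough to preserve the $\tfrac{3}{4}$ constant; second, organising the character sum so that Deligne's non-degeneracy hypothesis is satisfied \emph{uniformly} over every target function $\sff{c}$. These calculations form the technical heart of the argument and are deferred to Appendix~\ref{sec:proof-of-m-equals-n}.
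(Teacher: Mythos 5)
Your correctness argument and your SDP formulation of the optimal cheating probability match the paper's. The gap is in the middle step: the claimed reduction of the three-register SDP to single-register discrimination of one Wootters--Fields vector per basis label. First, this reduction is not sound as described: the adversary's POVM $\{M_{a,x}\}$ is a single coherent measurement on all $3n$ qudits, and its choice of $a$ is correlated with all three batches at once, so you cannot ``condition on the announced $(a,x_1,x_2)$ as a randomized prior'' and then treat the third register as a fresh steering problem --- that analysis only covers adversaries who measure the first two batches before the third. Second, even granting the reduction, the target statement is the wrong one: the set of accepting outcomes for a fixed $a$ is the whole set $\Bg(a)$ of size $p^{2n}$ (any $x$ with $x_3=\sff{c}(a)(x_1+x_2)$), not a single vector, and the paper explicitly notes that the one-batch protocol ``cannot be proven secure as it stands.'' The cancellation that delivers the constant $\tfrac34$ comes from the third moment of the operator $S=\sum_a\sum_{x\in\Bg(a)}\proj{x}_a$, and it is precisely the constraint $x_3=\sff{c}(a)(x_1+x_2)$ coupling the three registers that makes the relevant quadratic forms non-degenerate (the matrices $B_{a,b,c}$ have rank at least $4$, which is what lets Corollary~\ref{cor:deligne2} beat the trivial bound). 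For a single register the analogous triple sum $\sum_{a\neq b\neq c}\bra{x(a)}_a\ket{x(b)}_b\cdots$ has $p^{3n}$ terms of magnitude $p^{-3n/2}$ with phases controlled by the adversarially chosen $x(\cdot)$, and no comparable cancellation is available.

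What the paper actually does, and what your write-up is missing, is the construction of an explicit dual feasible witness for the three-register SDP: since $f_\alpha(x)=x/(\alpha+x)$ is operator monotone, $Z=(\alpha+1)f_\alpha(S)$ dominates every $\proj{x}_a$ with $x\in\Bg(a)$, so $P_w\leq\frac{\alpha+1}{p^{3n}}\trace{f_\alpha(S)}$. A degree-three Taylor upper bound on $f_\alpha$ centered at $\lambda=1$ reduces everything to the moments $\trace{S}=p^{3n}$, $\trace{S^2}=2p^{3n}-p^{2n}$, and $\trace{S^3}\leq 4p^{3n}+p^{2n}$; Deligne's bound enters only in the last of these, and optimizing over $\alpha$ (at $\alpha=1$) yields $P_w\leq\frac34+\negl$. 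To repair your proof you would need to either abandon the single-register reduction and carry out this moment computation on $S$ directly, or supply a genuinely new argument for why a joint measurement gains nothing over a sequential one --- the latter is not something the paper provides and I see no reason to believe it.
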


We use a  set of mutually unbiased bases (MUBs)
introduced by Wootters and Fields in~\cite{WF89}.
These bases of dimension $p^n$ are for $n$ instances of $p$--level 
quantum mechanical systems with $p\geq 3$ prime. 
The construction is as follows:
\begin{defi}[Mutually Unbiased Bases of~\cite{WF89}]\label{def:mubs}
  Let $p\geq 3$ be  prime. Define the set of mutually unbiased bases
  $\Theta[\mathbb{F}_{p^n}]=\{\theta_a\}_{a\in \mathbb{F}_{p^n}}$ for a Hilbert
  space of dimension $p^n$ where $\theta_a=\{\ket u_a\}_{u\in\mathbb{F}_{p^n}}$
  is composed of vectors $\ket u_a$ expressed in the computational basis as
  \begin{equation}
    \label{eq:mub}
    \ket u_a= {p^{-\frac n2}} \sum_{x\in \mathbb{F}_{p^n}} \exp\left(\frac{2\pi i}p \cdot \trace{a x^2 + ux}\right)\ket x \enspace,
  \end{equation}
  where $\tr:\mathbb{F}_{p^n}\rightarrow \mathbb{F}_p$ denotes the \emph{field trace} $\trace{x} := x+x^p+x^{p^2}+\dots+x^{p^{n-1}}$.
\end{defi}
Notice that Klappenecker and R\"{o}tteler in~\cite{klappenecker_constructions_2004} 
have shown a very similar construction 
for the case $p=2$ (mutually unbiased bases of qubits). Unfortunately, 
our results do not apply to this construction as Weil sums need 
a field of odd characteristics.

\begin{proof}[of Theorem~\ref{thm:wotro-security-mequaln}]
For correctness, observe that if both parties are honest, their
measurement triplets $X$ and $X'$ will be uniformly distributed and perfectly
correlated unless $X_1+X_2=0$. Since $X_1+X_2$ is a random element of $\Gamma^{n}$ due to it
being the result of the measurement of EPR pairs, it holds that this event occurs
with probability at most $|\Gamma|^{-n}$, which is negligible in $n$.

Now onto security. Let $\sff{c}:\Gamma^n\rightarrow\Gamma^n$ be an arbitrary
target function. In order to cheat, i.e. to bias the output challenge towards
$\sff{c}(a)$, a dishonest prover must produce a basis selected by $a$ (the
commitment) and measurement outcome $x_1,x_2,x_3$ such that
\begin{enumerate}
\item $x_3(x_1+ x_{2})^{-1}= \sff{c}(a)$ and
\item \verifier{} obtains the same outcomes $x_1, x_2, x_3$ when he
  measures his part of $\ket{\epr{3n}{\Gamma}}$ in basis $\theta^{\otimes 3}_a$.
\end{enumerate}
We say that $x$ is a \emph{bad} outcome if $x_3(x_1+x_2)^{-1}=
\sff{c}(a)$. Let $\Bg(a)\subseteq \Gamma^{3n}$ denote the set of bad
outcomes for commitment $a$. Note that $|\Bg(a)|=p^{2n}$ for any $a\in\Gamma^n$.

The most general strategy for the prover is to apply a POVM
$\{M_{a,x}\}_{a\in\Gamma^n, x\in \Gamma^{3n}}$ to its part of the EPR pairs to
determine its message to $\verifier$. The probability that \prover{} can bias
the output towards $\sff{c}(a)$ when $\verifier$ accepts is then the probability
that it can produce a commitment (i.e. a basis) such that a bad outcome will be
observed by \verifier{} in that basis. 
\begin{align}
  P_w &= \Pr[X\in \mathcal{B}(A)]\\
      &=  \sum_{a \in \Gamma^n,  x\in \Bg(a)} \trace{\left(
        \proj{x}_a \otimes M_{a,x}\right)
        \cdot \proj{\epr{3n}{\Gamma}}}\label{eq:Pg-m-equal-n}\\
      &= \frac 1{p^{3n}} \sum_{a \in \Gamma^n,x\in \Bg(a)} 
        \trace{ M_{a,x}\proj{x}_a }\enspace. \label{eq:prob-triche1}
\end{align}
To simplify our computations, we have slightly abused notation by writing $\ket
x_a:= \ket{x_1}_a\otimes \ket{x_2}_a\otimes\ket{x_3}_a$ when $x\in\Gamma^{3n}$
and $x_1,x_2,x_3\in\Gamma^n$. Using this notation, for $x,y\in\Gamma^{3n}$ we
have $|\bra{x}_a\ket{y}_b|^2=p^{-3n}$ whenever $a\neq b$.

The optimal cheating strategy for $\prover$ can be framed as the solution to the
following semidefinite program (SDP):
\begin{maxi}
  {\{ M_{a,x} \}}{  \frac 1{p^{3n}} \sum_{a\in\Gamma^n}\sum_{x\in\Bg(a)} \trace{M_{a,x} \proj x_a}}
  {}{}
  \addConstraint{\sum_{a\in\Gamma^n}\sum_{x\in\Bg(a)} M_{a,x}}{\leq \id}\enspace.
\end{maxi}
The dual of this SDP is:
\begin{mini}
  {Z \geq 0}{\frac{1}{p^{3n}}\trace{Z}}
  {}{}
  \addConstraint{\forall a\in \Gamma^n,x\in\Bg(a)\quad}{\proj{x}_a}{\leq Z\enspace.}
\end{mini}
By the duality of semidefinite programming, a feasible solution to the dual
will yield an upper-bound on the optimal solution of the primal. We now show
how to construct a feasible solution that has constant value for $p^{-3n}
\trace{Z}$.

Let $S = \sum_{a\in\Gamma^n,x\in\Bg(a)} \proj{x}_a$ and define $f_\alpha(x)=\frac x{\alpha+x}$ for $\alpha\in\reals$. Since $f_\alpha$ is
an operator monotone function (meaning that $A \leq B \Rightarrow f_\alpha(A)
\leq f_\alpha(B)$ for $A,B$ positive semidefinite)\footnote{We can see this by noting that $\frac{x}{\alpha + a} = \alpha\left( \frac{1}{\alpha} - \frac{1}{\alpha + x} \right)$ and that $x \mapsto -x^{-1}$ is operator monotone, see e.g.~\cite[Lemma 2.7]{c09}}, we have that $\frac
1{\alpha+1}\proj{x}_a\leq f_\alpha(S)$ for any $0 < \alpha \leq 1$. The operator  $Z = (\alpha+1)f_\alpha(S)$
 is thus a feasible solution to the dual with associated value
$\frac{\alpha+1}{p^{3n}}\trace{f_\alpha(S)}$.

We now proceed to upper-bound this probability. Since $f_\alpha$ is difficult to deal with directly, we will bound it using Taylor's theorem, yielding powers of $Z$ that will then be easier to compute. To get a good bound, we will use a third degree Taylor bound for $f_\alpha$ centered around $\lambda\in \reals$:
  \begin{equation*}
    f_\alpha(x)\leq \frac{\lambda}{\alpha + \lambda} + \frac{\alpha}{(\alpha+\lambda)^2} (x-\lambda) - \frac{\alpha}{(\alpha+\lambda)^3}(x-\lambda)^2 + \frac{\alpha}{(\alpha+\lambda)^4}(x-\lambda)^3\enspace. 
  \end{equation*}
  Using the Taylor approximation defined above,
  \begin{align}
    \frac {1}{p^{3n}}\trace{Z}&\leq \frac{\alpha+1}{p^{3n}}\trace{f_\alpha(S)} \nonumber\\
       &\leq \frac{\alpha+1}{p^{3n}}
         \bigg(
         \frac{\lambda}{\alpha + \lambda}\trace{\id} + \frac{\alpha}{(\alpha+\lambda)^2} \trace{S-\lambda \id}\nonumber\\
    &\quad\quad\quad\quad\quad\quad- \frac{\alpha}{(\alpha+\lambda)^3}\trace{(S-\lambda \id)^2} + \frac{\alpha}{(\alpha+\lambda)^4}\trace{(S-\lambda \id)^3}
         \bigg)\enspace. \label{eq:taylor-pg}
  \end{align}
  We can rewrite the above traces in the powers of $S-\lambda \id$ in the
  following way.
  \begin{align}
  \left.
  \begin{aligned}
    \trace{\id}&= p^{3n}\enspace,\\
    \trace{S-\lambda \id}&= \trace{S} - \lambda p^{3n}\enspace,\\
    \trace{(S-\lambda \id)^2}&= \trace{S^2}-2\lambda \trace{S}+\lambda^2p^{3n} \enspace,\\
    \trace{(S-\lambda \id)^3}&= \trace{S^3}-3\lambda\trace{S^2}+3\lambda^2 \trace{S}-\lambda^3p^{3n}\enspace.
  \end{aligned}
  \right\}\label{transfo}
  \end{align}

  We refer to Lemmas~\ref{lem:tr-s}, \ref{lem:tr-s2} and~\ref{lem:tr-s3} in
  Appendix~\ref{sec:proof-of-m-equals-n} for the proofs that the following
  relations hold:
  \begin{equation*}
     \trace{S}=p^{3n},\,\trace{S^2}= 2\cdot p^{3n} - p^{2n}
    \text{ and } \tr(S^3)\leq 4p^{3n} + p^{2n}\enspace .
  \end{equation*}
  Choosing to center the Taylor
  approximation around $\lambda=1$ gives the following bounds for~(\ref{transfo}):
\begin{align*}
    \trace{\id}&= p^{3n}\enspace, \\
  \trace{S-\lambda \id}&= 0\enspace,\\
\trace{(S-\lambda \id)^2}&=  2p^{3n} -p^{2n} -2p^{3n}+p^{3n}\\ 
                &= p^{3n}-p^{2n}\enspace, \text{ and} \\
\trace{(S-\lambda \id)^3}&\leq 4p^{3n}+p^{2n}-3( 2p^{3n}-p^{2n}) +3p^{3n}-p^{3n}  \\
                &= 4 p^{2n}\enspace.
  \end{align*}
  Substituting these values into (\ref{eq:taylor-pg}), we get 
  \begin{align}\label{eq:borne-en-ts2-ts3}
    P_w\leq\frac 1{p^{3n}}\trace{Z}
        &\leq \frac{\alpha+1}{p^{3n}}
         \bigg(
          \frac{  p^{3n}}{\alpha + 1} - \frac{\alpha(p^{3n}-p^{2n})}{(\alpha+1)^3} + \frac{\alpha\cdot 4p^{2n}}{(\alpha+1)^4}
         \bigg)\nonumber \enspace.
\end{align}
Looking only at the non-negligible terms, we have
\begin{equation*}
  P_w\leq 1-\frac \alpha{(\alpha+1)^2}+\negl
\end{equation*}
which is minimized at $\alpha=1$ with value $P_w\leq \frac 34+\negl$. Since this probability is the same for all functions $\sff{c}(\cdot)$, it follows that the protocol $(\frac 14-\negl)$--avoids all functions. 

\qed

\end{proof}

\subsection{Collision-Shelters}\label{fshelt}

We are now ready to define a quantum computational assumption that allows for a
secure implementation of $\wotro^{n,m}$ for $m<n$. A collision-shelter for
security parameter $n$, is is a family $\collshn =
\{G^n_s:\Gamma^n\times\Gamma^n \rightarrow \Gamma^{m}\}_{s\in\{0,1\}^{\ell(n)}}$
of hash functions that exhibits a strong quantum flavour of collision
resistance. Intuitively, $\collshn$ is a \emph{collision-shelter} if, for any
function $\mathsf{c}:\Gamma^n\rightarrow \Gamma^m$, no QPT adversary can produce
a state \emph{close} to
\begin{equation}\label{no-shelter}
\ket{\psi_s} = \sum_a \alpha_a \ket{a}_A \otimes \sum_{c:G^n_s(a,c)=\mathsf{c}(a)}\gamma^a_c
\ket{c}_C\otimes\ket{\varphi(a,c)}_{W'}  \enspace, 
\end{equation} 
for $s\in_R\{0,1\}^{\ell(n)}$ and in average over  
outcome $a$ when register $A$ is measured in the 
computational basis, 
$ \sum_{c:G^n_s(a,c)=\mathsf{c}(a)}\gamma^a_c\ket{c}_C\otimes\ket{\varphi(a,c)}_{W'}$
\emph{contains} collisions in superposition. 
Notice that no such state can be produced efficiently 
when the number of possible $a$ is in $O(\lg{n})$ 
and $G^n_s$ is collision resistant, as the generation of $2$ such
states would provide a collision for $G^n_s$ with good probability.

\begin{defi}[$\delta$--Colliding States]
Let $\mathsf{c}:\Gamma^n\rightarrow \Gamma^m$
be  arbitrary and let $G^n_s \in\Gamma^{n}\times\Gamma^n\rightarrow \Gamma^m$. Let
\[\ket{\psi} =\sum_a \alpha_a \ket{a}_A \otimes \sum_{c:G^n_s(a,c)=\mathsf{c}(a)}
\gamma^a_c \ket{c}_C \otimes \ket{\varphi(a,c)}_{W'} 
\]
 be a state hitting function $\mathsf{c}(\cdot)$.  
 Let ${c}^*(a)$ be such that $|\gamma^{a}_{c^*(a)}|^2=
\max_c{\{|\gamma^a_c|^2\}}$ for every $a\in\{0,1\}^n$ and let
$\ket{\tilde{\psi}^*}=\sum_a \alpha_a \ket{a}_A\otimes 
\gamma^a_{{c}^*(a)}\ket{{c}^*(a)}_X\otimes\ket{\varphi(a,c^*(a))}_W$
be the corresponding sub-normalized state obtained from $\ket{\psi}$. 
If $\|\ket{\tilde{\psi}^*}\|^2<1-\delta$ then $\ket{\psi}$
is said to be  \emph{$\delta$--colliding to $\mathsf{c}(\cdot)$ under $G^n_s$}. 
\end{defi}
A collision-shelter is a family of hash functions 
(efficiently samplable and efficiently evaluable)  
that prevents any QPT adversary 
from generating a 
$\delta$--colliding
state hitting any function $\mathsf{c}(\cdot)$. 
\begin{defi}[Collision Shelter]\label{fshelter}
  The efficiently samplable and efficiently evaluable hash function family
  $\collshn = \{G^n_s:\Gamma^n\times\Gamma^n \rightarrow
  \Gamma^{m}\}_{s\in\{0,1\}^{\ell(n)}}$ is a \emph{collision-shelter} if, for
  all $\delta>0$, all functions $\mathsf{c}:\Gamma^n\rightarrow \Gamma^m$, and
  all QPT adversaries $\adv=\{\adv_n\}$, the probability over
  $s\in_R\{0,1\}^{\ell(n)}$ that $\ket{\psi}_{ACW'}\leftarrow \adv_n(s)$ is
  $\delta$--colliding to $\mathsf{c}(\cdot)$ under $\collshn$ is negligible in
  $n$. The \emph{collision-shelter assumption} simply posits the existence of a
  collision-shelter $\collshn$ for $m\leq (1-\alpha)n$ with $0<\alpha<1$.
\end{defi}

We now consider the obvious implementation
of $\wotro^{n,m}_\Gamma$  using $\pwotro{n,n}$ and a function-shelter
 $\collshn$ that simply sets the  
challenge  $\hat{c}\in\Gamma^m$ 
as $\hat{c}=G^n_s(a,c)$ where $c\in\Gamma^n$ is the challenge
produced in $\pwotro{n,n}$ and $s$ is the \crrs. 
Let us denote this implementation of  $\wotro^{n,m}_\Gamma$ 
by  $\pwotro{n,m}[\collshn]$. 
The following theorem
is an easy consequence of Definition~\ref{fshelter} and Theorem~\ref{thm:wotro-security-mequaln}.
\begin{theorem}\label{thm:wotro-col-shelter}
Assuming that $\collshn=\{G^n_s\}_s$ is a collision-shelter with $\Gamma$
a set of elements in finite field $\mathbb{F}_p$ with $p\geq 3$  prime. Then,
  $\pwotro{n,m}[\collshn]$ is a computationally
  $\left(\frac 14- o(1)\right)$--secure implementation of $\wotro^{n,m}_{\Gamma}$.
\end{theorem}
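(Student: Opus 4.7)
The plan is to reduce the security of $\pwotro{n,m}[\collshn]$ to the statistical security of $\pwotro{n,n}$ (Theorem~\ref{thm:wotro-security-mequaln}) via the collision-shelter assumption, which serves as the bridge upgrading statistical to computational security. Suppose an efficient adversary $\adv$ hits some function $\hat{\mathsf{c}}:\Gamma^n\to\Gamma^m$ with probability $p$; the goal is to show $p\leq \frac{3}{4}+o(1)$.

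First I would purify $\adv$ and $\verifier'$ so that their joint coherent execution on $\ket{\epr{3n}{\Gamma}}$ produces, for each \crs{} value $s$, a pure state $\ket{\Phi_s}$ on registers $A$, $C$ (with $c=x_3(x_1+x_2)^{-1}$ computed coherently from the classical message), the verifier's accept bit, and ancilla. Let $\Pi^{\text{suc}}_s$ project onto $\{V=1\}\wedge\{G^n_s(A,C)=\hat{\mathsf{c}}(A)\}$ and set $\ket{\tilde\psi_s}=\Pi^{\text{suc}}_s\ket{\Phi_s}$; its squared norm is the success probability $p_s$, with $\mathbb{E}_s[p_s]=p$, and once renormalized it takes precisely the ``hitting $\hat{\mathsf{c}}$'' form required by Definition~\ref{fshelter}. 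I would then view the whole procedure (generate the \crqs{} locally, run $\adv$ and $\verifier'$ coherently, apply $\Pi^{\text{suc}}_s$, output the post-selected normalized state) as a QPT adversary $\adv''$ in the collision-shelter experiment targeting $\hat{\mathsf{c}}$. For every constant $\delta>0$, Definition~\ref{fshelter} then guarantees that the probability that $\adv''$ outputs a $\delta$-colliding state is $\negl[n]$, which unwinds to $\mathbb{E}_s\,\|\ket{\tilde\psi_s^*}\|^2\geq (1-\delta)p-\negl[n]$, where $\ket{\tilde\psi_s^*}$ keeps only the largest-amplitude $c$ per $a$.

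Next, define $\mathsf{c}^*_s:\Gamma^n\to\Gamma^n$ to pick, for each $a$, this largest-amplitude $c$ in $\ket{\tilde\psi_s}$; by construction $G^n_s(a,\mathsf{c}^*_s(a))=\hat{\mathsf{c}}(a)$ for every $a$, and $\|\ket{\tilde\psi^*_s}\|^2$ is exactly the probability that $\adv$, run with $s$ hard-coded and its classical output forwarded to the $\pwotro{n,n}$ verifier, produces $(A,\mathsf{c}^*_s(A))$ with $\verifier'$ accepting. This is an attack $\adv'_s$ against $\pwotro{n,n}$ hitting the fixed target $\mathsf{c}^*_s$, and Theorem~\ref{thm:wotro-security-mequaln} bounds its success by $\frac{3}{4}+\negl[n]$ for every $s$ (its statistical $(\frac{1}{4}-\negl[n])$-security holds against arbitrary, even non-uniform, attackers and every fixed target function). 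Averaging over $s$ and combining with the previous inequality gives $(1-\delta)p\leq \frac{3}{4}+\negl[n]$, hence $p\leq \frac{3}{4}+O(\delta)+\negl[n]$ for every constant $\delta>0$, and sending $\delta\to 0$ yields $p\leq \frac{3}{4}+o(1)$. The main subtlety I expect is the efficiency of $\adv''$: implementing $\Pi^{\text{suc}}_s$ requires a coherent evaluation of $\hat{\mathsf{c}}(A)$, which is not guaranteed to be polynomial-time computable. This is handled by treating Definition~\ref{fshelter} in the non-uniform model---the definition quantifies over all functions $\mathsf{c}$ \emph{before} quantifying over QPT adversaries, so $\adv''$ may depend on $\hat{\mathsf{c}}$ as non-uniform advice---and is the only non-routine point of the argument; everything after that is a mechanical translation of the quantum ``no-collisions-in-superposition'' guarantee into concentration of the effective classical distribution of the $\pwotro{n,n}$ challenge.
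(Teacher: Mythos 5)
Your proof follows essentially the same structure as the paper's: purify the adversary and verifier, post-select on the ``hitting $\hat{\mathsf{c}}$'' event to obtain a state $\ket{\psi_s}$ of the form required by the collision-shelter definition, invoke the assumption to conclude that $\ket{\psi_s}$ has negligible mass outside the single largest-amplitude challenge $\mathsf{c}^*_s(a)$ per $a$, reinterpret the truncated state as an (unbounded) attack on $\pwotro{n,n}$ targeting the fixed function $\mathsf{c}^*_s$, and apply Theorem~\ref{thm:wotro-security-mequaln} to bound its success by $\tfrac34+\negl[n]$. The decomposition, the role of the collision-shelter as a bridge from computational to statistical security, and the appeal to the $\pwotro{n,n}$ bound are all the same. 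The only presentational difference is that the paper decomposes the un-projected state as $\sqrt{p_{\text{hit}}}\ket{\psi_s}+\sqrt{1-p_{\text{hit}}}\ket{\boxtimes}$ and tracks $p_{\text{hit}}$ directly, whereas you average the post-selected norms over $s$; these are equivalent bookkeeping.

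One genuine point of contact worth noting: you flag the efficiency of the projection $\Pi^{\text{suc}}_s$ (which requires evaluating $\hat{\mathsf{c}}(A)$ in superposition) as the only non-routine step. The paper dismisses this with the assertion that ``projecting $\ket{\psi^{\adv_n}_s}$ into the subspace of states hitting $\mathsf{c}(\cdot)$ can be implemented efficiently'' without elaboration. Your proposed resolution---treating $\adv''$ as non-uniform with $\hat{\mathsf{c}}$ as advice---does not quite close the gap on its own, since an arbitrary $\mathsf{c}:\Gamma^n\to\Gamma^m$ has exponential description length and cannot be hardcoded into a polynomial-size quantum circuit. So the concern you raise is real and not fully discharged by either your argument or the paper's, though you at least make it explicit; the paper's proof shares whatever lacuna exists here. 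Apart from that shared soft spot, both proofs are the same argument.
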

\begin{proof}
  Let $\adv_n$ be a purified adversary against $\pwotro{n,m}[\collshn]$, i.e.\
  whose actions are described by a unitary transform up to the point where it
  needs to send a classical message to the verifier.
Let $s\in\{0,1\}^{\ell(n)}$ be the \crrs\ of an execution of 
$\pwotro{n,m}[\collshn]$ where the adversary $\adv_n$ hits function
$\mathsf{c}:\{0,1\}^n \rightarrow \{0,1\}^m$ with probability $p_{\textnormal{hit}}$.
It follows that  $\ket{{\psi}^{\adv_n}_s}$ -- the joint state of $\adv_n$ and
$\verifier$  generated from the \crqs\ by
$\adv_n$ after running $\pwotro{n,n}$ in
$\pwotro{n,m}[\collshn]$ (but before measuring) -- 
can be assumed w.l.g. to be of the following form
\[
\ket{{\psi}^{\adv_n}_s}_{ACXWV} =  \sqrt{p_{\textnormal{hit}}}\ket{\psi_s}_{ACXWV} +
\sqrt{1-p_{\textnormal{hit}}} \ket{\boxtimes}_{ACXWV}\enspace,
\]
 where $\ket{\boxtimes}_{ACXWV}$ results either in \verifier's rejection or acceptance
 without hitting target $\mathsf{c}(\cdot)$ and
 $\ket{\psi_s}_{AXWV}$ hits function $\mathsf{c}(\cdot)$,   
 \begin{align*}
\ket{{\psi}_s}_{ACXWV}  &= \sum_{a\in\{0,1\}^n} \alpha_a \ket{a}_{A} 
\hspace{-0.15in}  \sum_{x: G_s(a,x_3(x_1+x_2)^{-1})=\mathsf{c}(a)}\hspace{-0,3in} \beta^a_x 
\ket{x_3(x_1+x_2)^{-1}}_C \ket{x}_X  \ket{\varphi(a,x)}_W
\ket{\textnormal{v}(a,x)}_V  \\
&= \sum_{a\in\{0,1\}^n} \alpha_a \ket{a}_{A} \hspace{-0.15in} 
\sum_{\stackrel{\scriptstyle{c: G_s(a,c)=\mathsf{c}(a)}}{{x:x_3(x_1+x_2)^{-1}=c}}}
\hspace{-0,15in} \beta^a_x \ket{c}_C
\ket{{x}}_X  \ket{\varphi(a,x)}_W
\ket{\textnormal{v}(a,x)}_V  \\
&= \sum_{a\in\{0,1\}^n} \alpha_a \ket{a}_{A} \hspace{-0.15in} 
\sum_{c: G_s(a,c)=\mathsf{c}(a)}
 \gamma^a_c \ket{c}_C \hspace{-0.15in} 
\sum_{x:x_3(x_1+x_2)^{-1}=c} \hspace{-0.2in}  \hat\beta^a_{c,x}
\ket{\hat{\varphi}(a,x)}_{XWV} 
 \enspace.
\end{align*} 
Registers $A$ and $X$ contain the final results $a\in\Gamma^n$ 
and $x\in\Gamma^{3n}$ for the execution of $\pwotro{n,n}$
in $\pwotro{n,m}[\collshn]$ and $C$ contains the 
final challenge $c=x_3(x_1+x_2)^{-1}$,  $W$ is  $\adv_n$'s working register,
and $V$ is the register for the  state of the verifier $\ket{\textnormal{v}(a,x)}$ 
when $(a,x)$ is announced. We assume that when $\verifier$
measures $\ket{\textnormal{v}(a,x)}$,  it always accepts $(a,x)$ and therefore
sets the final challenge as $c=x_3(x_1+x_2)^{-1}$. 
Notice that if $\adv_n$ produces state $\ket{{\psi}^{\adv_n}_s}$ efficiently
then it can also produce $\ket{\psi_s}$ efficiently  as long as
$p_{\text{hit}}$ is polynomial since projecting $\ket{{\psi}^{\adv_n}_s}$
into the subspace of states hitting $\mathsf{c}(\cdot)$ can be implemented
efficiently.
Let
\[
\ket{\tilde{\psi}^*_s}_{ACXWV} = \sum_{a} \alpha_a \ket{a}_A\otimes \gamma^a_{\mathsf{c}^*(a)} 
\ket{\mathsf{c}^*(a)}_C \otimes \hspace{-0.1in}
\sum_{x:x_3(x_1+x_2)^{-1}=\mathsf{c}^*(a)}\hspace{-0.1in} \beta^a_{\mathsf{c}^*(a),x}
\ket{\hat{\varphi}(a,x)}_{XWV}
\]
be defined so that $\textsf{c}^*(a)\in\Gamma^{n}$
 is such that $|\gamma^a_{\textsf{c}^*(a)}|^2$ is maximum for each $a\in\Gamma^n$.
 By definition~\ref{fshelter},
 $\collshn=\{G^n_s\}_s$  being a collision-shelter implies  that 
\begin{equation}\label{inprod2}
\left|\bracket{{\psi}_s}{\tilde{\psi}^*_s}\right|^2 \geq 1-o(1) \enspace.
\end{equation}
$\adv_n$'s strategy is almost  the same as the one where $\ket{\psi_s}$ is
replaced by 
$\ket{\tilde{\psi}^*_s}$. 
By theorem~\ref{thm:wotro-security-mequaln}, the (subnormalized) state 
\[
\ket{\hat{\psi}^{\adv_n}_s}  = \sqrt{p_{\textnormal{hit}}} \ket{\tilde{\psi}^*_s} +
\sqrt{1-p_{\textnormal{hit}}}\ket{\boxtimes}
\]
allows to hit function $\mathsf{c}^*(\cdot)$ in $\pwotro{n,n}$
with probability $p_{\textnormal{hit}}\leq \frac{3}{4}+\negl$. From (\ref{inprod2}),
we therefore have that $\ket{{\psi}^{\adv_n}_s}$ hits target function $\mathsf{c}(\cdot)$ 
with probability no larger than $\frac{3}{4}+o(1)+\negl$. The result follows.
\qed
\end{proof}

\subsection{Is the Collision-Shelter Assumption Realistic?}\label{disccs}

While  
$h:\Gamma^{\ell(n)}\times\Gamma^n \rightarrow \Gamma^m$ is \emph{entropy-preserving} 
if no efficient adversary can, given the first argument $\mathbf{s}\in_R \Gamma^{\ell(n)}$ 
picked uniformly at
random, find $x\in\Gamma^n$ 
such that $h(\mathbf{s}, x)$ has almost no entropy (when $\mathbf{s}$ 
has been forgotten), collision-shelters prevent
efficient quantum adversaries from preparing a state 
with entropy in the second argument when the output of the hash function
applied to both arguments is fixed to a function of its first argument.
Why would it be possible for collision-shelters to exist?

Suppose that for all $a\in\Gamma^n$, the hash function $G^n_s(a,\cdot)$ is
collision-resistant against quantum adversaries. Let $\mathsf{c}(a)\in \Gamma^m$
be arbitrary. It follows that for any $a \in \Gamma^n$, no efficient quantum
adversary can produce a state of the form
$\ket{\psi_a}=\sum_{x:G^n_s(a,x)=\mathsf{c}(a)} \beta^a_x \ket{x}_X
\otimes\ket{\varphi(a,x)}$ where
$|\bracket{\psi_a}{\tilde{\psi}^*_a}|^2<1-\delta$ since two states of that form
would allow to find a collision with non-negligible probability. This, of
course, does not imply that $\{G^n_s(\cdot, \cdot)\}_s$ is a collision-shelter
as $G^n_s(a,\cdot)=G^n_s(a',\cdot)=:h_s(\cdot)$ for all $a,a'\in \Gamma^n$ is
such that $G^n_s(a,\cdot)$ is collision-resistant when $h_s(\cdot)$ is
collision-resistant but the following easy-to-generate state is
$o(1)$--colliding to $\mathsf{c}(a)=a_1\ldots a_m$ when $m<n$. We start with a
uniform superposition over $x$ over $\ket x\ket{h_s(x)}$, which ``fixes'' the
first $m$ values of $a$, and introduce a superposition over the $n-m$ remaining
values $a_{m+1}\dots a_n$:
\begin{align*}  p^{-n/2} \sum_x \ket{x}_X \otimes\ket{h_s(x)} &\mapsto 
  p^{-n+\frac{m}{2}} \sum_a \sum_{x:h_s(x)=a_1\ldots a_m}\ket{x}_X \otimes \ket{a}_A \\
  &= p^{-n/2}\sum_a \ket{a}_A \otimes  p^{\frac{-n+m}{2}}\sum_{x: G^n_s(a,x)=\mathsf{c}(a)}
  \ket{x}_X \enspace.
\end{align*}

Such an attack seems  difficult to conduct 
when $\{G^n_s(a,\cdot)\}_a$ is  a set of collision resistant hash functions
 that \emph{appear independent} of each other as far as collisions are concerned. 
What it means exactly for  hash functions in $\{G^n_s(a,\cdot)\}_a$ to \emph{appear independent}
is unclear. It is easy to see that a random 
oracle 
$\mathcal{O}^{n,m}_\Gamma:\Gamma^n \times\Gamma^n \rightarrow \Gamma^m$
acts as a collision-shelter, in fact, it is a much stronger assumption as
$\pwotro{n,m}[\mathcal{O}^{n,m}_\Gamma]$ is already statistically secure.

\begin{theorem}\label{th:romshelter}
Let $\mathcal{O}^{n,m}_\Gamma:\Gamma^n \times\Gamma^n \rightarrow \Gamma^m$
be a random oracle accepting quantum queries.
Then, $\mathcal{O}^{n,m}_\Gamma$ is a collision-shelter and moreover,
$\pwotro{n,m}[\mathcal{O}^{n,m}_\Gamma]$ is $(1-\negl)$--secure.
\end{theorem}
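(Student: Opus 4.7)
The statement has two parts; my plan is to prove them in sequence.

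\emph{Part 1: $\mathcal{O}^{n,m}_\Gamma$ is a collision-shelter.} I would argue by contradiction. Suppose some \qpt{} adversary $\adv$ making $q(n)\in\poly[n]$ oracle queries outputs, with non-negligible probability over the random oracle, a state $\ket{\psi}_{ACW'}$ that is $\delta$--colliding to some target $\mathsf{c}$. Unpacking the definition of $\delta$--colliding, for a fraction of $|\alpha_a|^2$--mass at least $\delta$, the conditional superposition on register $C$ is spread over at least two distinct values $c$ both satisfying $\mathcal{O}(a,c)=\mathsf{c}(a)$, i.e.\ collisions of $\mathcal{O}(a,\cdot)$. I would then build a collision-finder $\adv'$ against $\mathcal{O}$ that runs $\adv$ twice independently on the same oracle, measures the $A$ and $C$ registers of each output in the computational basis, and returns the pair $(c_1,c_2)$ whenever both copies yield the same $a$ with $c_1\neq c_2$. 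Using the $\delta$--colliding guarantee to lower bound the success probability of $\adv'$ by $1/\poly[n]$ would contradict Zhandry's $\Omega(|\Gamma|^{m/3})$ lower bound on quantum collision-finding in the QROM, since $\adv'$ uses $O(q(n))$ queries and $|\Gamma|^{m/3}$ is super-polynomial when $m\in\omega(\lg n)$.

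\emph{Part 2: $\pwotro{n,m}[\mathcal{O}^{n,m}_\Gamma]$ is $(1-\negl[n])$--secure.} Fix any function $\mathsf{c}:\Gamma^n\rightarrow\Gamma^m$ and any dishonest prover $\dprover$ making $q(n)\in\poly[n]$ oracle queries. The output challenge is $c=\mathcal{O}(a,x_3(x_1+x_2)^{-1})$, where $(a,x)$ is the classical message sent by $\dprover$. I would use a standard lazy-sampling/compressed-oracle argument: conditioned on $\dprover$'s history, the value of $\mathcal{O}$ at any unqueried point is uniform over $\Gamma^m$ and matches $\mathsf{c}(a)$ with probability $|\Gamma|^{-m}$; an O2H-style inequality then handles the contribution of queried points, yielding
\[
\Pr\!\left[\,\verifier \text{ accepts} \wedge C=\mathsf{c}(A)\,\right] \leq O\!\left(\tfrac{q(n)^2}{|\Gamma|^{m}}\right),
\]
which is $\negl[n]$ whenever $m\in\omega(\lg n)$. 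The EPR-consistency test inherited from $\pwotro{n,n}$ can only further restrict $\dprover$, so it does not help the attack.

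\emph{Main obstacle.} The delicate step is the collision extraction in Part 1: measuring $A$ on two independent copies yields the same $a$ only with probability $\sum_a |\alpha_a|^4$, which can be small if $\ket{\psi}$ spreads weight widely in $A$. I would address this either by amplifying the extraction via a more symmetric measurement (for instance a swap-test style projection onto the equal-$A$ subspace of two copies followed by gentle measurement of the $C$ registers), or by first arguing that one may assume without loss of generality that $|\alpha_a|^2$ concentrates on a polynomial-size set of $a$'s via a filtering step preserving non-negligible success probability. Either fix pushes the collision-finding probability up to a polynomial in $n$, and the contradiction with the QROM collision bound goes through.
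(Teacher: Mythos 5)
Your Part 2 is correct and reaches the paper's conclusion by a different but standard route: the paper bounds the cheating probability by invoking the quantum search lower bound of Boyer--Brassard--H{\o}yer--de Wolf for finding one element of the sparse random set $L_{\mathsf{c}}=\{(a,c):\mathcal{O}^{n,m}_\Gamma(a,c)=\mathsf{c}(a)\}$ (density $|\Gamma|^{-m}$, so error probability $1-\negl$ for any polynomial number of queries), whereas you use lazy sampling / O2H to get the equivalent $O(q^2/|\Gamma|^m)$ upper bound. Either tool works here, and your observation that the EPR-consistency test can only hurt the adversary is the right way to dispose of the rest of the protocol.

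Part 1 is where there is a genuine gap, and also an unnecessary detour. The gap is the one you flag yourself: running $\adv$ twice and measuring yields the same $a$ only with probability $\sum_a|\alpha_a|^4$, which can be as small as $|\Gamma|^{-n}$, and neither proposed repair closes it. A swap-test or projection onto the equal-$A$ subspace of two copies succeeds with exactly the same $\sum_a|\alpha_a|^4$, and there is no a priori reason the amplitude $|\alpha_a|^2$ concentrates on polynomially many values of $a$ --- a colliding state could legitimately spread uniformly over all of $\Gamma^n$, so the "filtering" step cannot be assumed without loss of generality. As written, the reduction to Zhandry's collision bound therefore does not go through. The detour is also unnecessary: the collisions relevant to the collision-shelter definition are not generic collisions of $\mathcal{O}(a,\cdot)$ but collisions onto the specific target value $\mathsf{c}(a)$, and any state of the hitting form
\[
\sum_a \alpha_a\ket{a}\otimes\sum_{c:\mathcal{O}(a,c)=\mathsf{c}(a)}\gamma^a_c\ket{c}\otimes\ket{\varphi(a,c)}\enspace,
\]
whether $\delta$--colliding or not, yields upon a single computational-basis measurement of $A$ and $C$ a pair $(a,c)\in L_{\mathsf{c}}$ with certainty. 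Hence an adversary producing such a state with non-negligible probability already finds one marked element of $L_{\mathsf{c}}$ with non-negligible probability, which is exactly what your Part 2 argument (or the paper's search lower bound) forbids. This is the paper's point: being a collision-shelter is the \emph{weaker} of the two conclusions, since producing collisions in superposition "at least requires finding $(a,c)\in L_{\mathsf{c}}$". Replacing your Part 1 with this one-line observation both closes the gap and shortens the proof.
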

\begin{proof}
As before, we set $p :=|\Gamma|$.
Let $\mathsf{c}:\Gamma^n\rightarrow\Gamma^m$ be an arbitrary 
target function and let 
$L_{\mathsf{c}} :=\{(a,c)\in \Gamma^n\times \Gamma^n \,|\, \mathcal{O}^{n,m}_\Gamma(a,c) = \mathsf{c}(a)\}$. 
In order to succeed, an adversary must announce $(a,c)\in L_{\mathsf{c}}$.
Suppose for a contradiction that $\adv_n$ hits $\mathsf{c}(\cdot)$ with non-negligible
probability $\delta(n):=\frac{1}{q(n)}$ for a positive polynomial $q(n)$.
Remember that $\adv_n$  wins  in  $\pwotro{n,m}[\mathcal{O}^{n,m}_\Gamma]$ 
when it can produce 
$(a,c)\in L_{\mathsf{c}}$. It corresponds to searching 
one element of $L_{\mathsf{c}}$ in a random database
containing $N=p^{2n}$ elements. Let $t=|L_{\mathsf{c}}|$ 
be the number of good elements for the adversary.
It is straightforward to see that 
 except with negligible probability, 
$t\leq N/p^{(1-\epsilon)m}$.
In \cite{BdW1998Lower}, it is shown that searching in a database 
of size $N$ for a solution when there are no more than $t$ solutions
using $T$ queries has a probability of error $p_e$ that satisfies
\begin{equation}\label{pelb}
p_e \geq \exp{\left(-4bT^2/(N-t) - 8T\sqrt{tN/(N-t)^2}\right)} \enspace,
\end{equation} 
for $b>0$ some fixed constant.
Since $e^{-x}\geq 1-x$, we then get  that
for $T\in \poly$, (\ref{pelb}) satisfies
\[
p_e \geq 1-\negl[n] \enspace.
\]
\qed
\end{proof}
The random oracle makes it difficult to produce a single 
$(a,c)\in L_{\mathsf{c}}$ while a collisions-shelter posits that
it is difficult to generate a collision on $\textsf{c}(a)$ for many $a\in\Gamma^n$,
which at least  requires finding $(a,c)\in L_{\mathsf{c}}$.

Notice that while any secure universal Fiat-Shamir transform in the \crs\
model requires the existence of an entropy-preserving family of hash functions\cite{drv12},
this does not  seem to be the case for collisions-shelters with respect to 
\wotro\ in the \crqss\ model.

\section{Black-Box Impossibility of a Flavour of Quantum Lightning}
\label{sec:black-box-imposs-lightning}
In this section, we show that a secure \wotro\ can be constructed 
from a quantum lightning scheme that satisfies a
slightly stronger security notion. 
Quantum lightning was introduced by Zhandry in \cite{zhandry_quantum_2019} as a primitive
allowing for  publicly verifiable quantum money schemes and provable randomness among others.

\subsection{Typed Quantum Lightning.}

Quantum lightning provides some fresh randomness that even an adversarial procedure cannot bias towards a certain value.  We present a strenghtened version of this property that requires that this randomness remains in the presence of an input to the lightning generation procedure. This notion is sufficiently strong to provide a secure \wotro{} protocol. 

\begin{defi}\label{def:typed-quant-lightn}
  A \emph{typed quantum lightning scheme} is a tuple of $\qpt$ algorithms $(\tqlsetup,\tqlgen,\tqlver)$ where
  \begin{itemize}
  \item $\tqlsetup(1^n)$ produces a storm $\thunder$.
  \item $\tqlgen(\thunder, a)$ takes an additional parameter  $a\in\bool^n$, and produces a lightning state $\ket{\lightn_a}$.
  \item $\tqlver(\thunder,\ket\lightn)$ returns the type $a$, a serial number $s$
    or $\bot$ if the state is not valid, and a leftover quantum register.
  \end{itemize}
  Correctness is defined similarly to regular \ql{}: serial numbers are deterministic for honestly generated bolts and verification does not noticeably affect the bolt.
  The security properties of a \tql{} scheme are as follows:
  For any $\qpt$ adversary $\adv$ that on input $\thunder$ produces a type
  $A\in\bool^n$ and a state $\ket{\lightn}$, if we let $\rho_{QSA'}=\tqlver(\thunder,\ket{\lightn})$, then
    \begin{equation*}
      \Pr\left[H_\infty(S\mid A \wedge (S \neq \bot)\wedge (A=A'))\leq \lg p(n)\right] \leq \negl
      \enspace.
    \end{equation*}
\end{defi}

Based on Definition~\ref{def:typed-quant-lightn}, typed quantum lightning
provides randomness in the serial number conditionned on the type. 
It is the ability of the adversary to choose the type $a$ that makes this primitive stronger
than regular \ql{}.
 A natural
\wotro{} protocol in the \crs{}+\crqs{} model based on this new primitive is presented
below.
\begin{blockquote}
  \rule{\linewidth}{1pt}
  {\bf Protocol} $\Pi_{\mathsf{WRO}}^{\tql}$ for $\wotro^{n,m}$\\
  {\bf Setup: }  A \crs{} containing $\thunder\leftarrow\tqlsetup(1^n)$ for a
  \tql{} scheme with $n$--bit types and $m$--bit serial numbers. A
  \crqs{} containing $\ket{\epr{}{}}^{\otimes q}$ where $q$ is the qubit size of a \tql{} state.
  \begin{enumerate}
  \item On input $a\in\bool^n$, \prover{} calls $\ket{\lightn_a}\leftarrow
    \tqlgen(\thunder,a)$, sets $\rho_{QSA}=\tqlver(\thunder,\ket{\lightn_a})$,
    teleports register $Q$ to \verifier{} using the EPR pairs and sends $(A,S)$ to
    \verifier{}.
  \item Upon reception of $(a,s,\rho_Q)$, \verifier{} calls $\sigma_{Q'S'A'}\leftarrow
    \tqlver(\thunder,\rho_Q)$ and tests that $A'=a$ and $S'=s$. \verifier{} aborts if
    the tests failed, otherwise \verifier{} sets $c=s$ and outputs $(a,c)$.
  \end{enumerate}
  \rule{\linewidth}{1pt}
\end{blockquote}

\begin{theorem}\label{thm:lightning}
  The above protocol is a secure instantiation of $\wotro^{n,m}$.
\end{theorem}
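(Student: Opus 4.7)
The plan is to prove security by a direct black-box reduction: from any \qpt{} dishonest prover $\dprover$ breaking \wotro{} security of the above protocol, I build a \tql{} attacker $\adv_{\tql}$ that violates the serial-number unpredictability condition of Definition~\ref{def:typed-quant-lightn}. Suppose, for contradiction, that there exist a function $\sff{c}:\bool^n\to\bool^m$ and a \qpt{} $\dprover$ for which the verifier accepts $(A,\sff{c}(A))$ with non-negligible probability $\epsilon(n)$, taken jointly over the \crs{} (containing $\thunder$), the \crqs{} (the EPR pairs used for teleportation), and $\dprover$'s internal randomness.

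The attacker $\adv_{\tql}$, on input $\thunder\leftarrow\tqlsetup(1^n)$, internally prepares $q$ fresh EPR pairs, hands $\thunder$ and one half of the pairs to $\dprover$ as its view of the \crs{}+\crqs{} setup, and runs $\dprover$. From $\dprover$'s output, $\adv_{\tql}$ recovers the classical pair $(a,s)$ and the two teleportation correction bits, applies the correction to its remaining halves of the EPR pairs, and thereby reconstructs the register $\rho_Q$ that the honest verifier would have received. Finally $\adv_{\tql}$ declares type $A:=a$ and outputs the bolt $\ket{\lightn}:=\rho_Q$. Since $\tqlsetup$, EPR preparation, the teleportation corrections, and $\dprover$ are all efficient, $\adv_{\tql}$ is \qpt{}.

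The crucial observation is that the \wotro{} verifier accepts if and only if $\tqlver(\thunder,\rho_Q)=(A',S,\cdot)$ with $A'=a$ and $S=s$; thus \wotro{}-acceptance coincides with the \tql{} good event $E\equiv (S\neq\bot)\wedge(A=A')$ together with $S=s$. On the sub-event that $\dprover$ additionally hits $\sff{c}$, we have $S=\sff{c}(A)$ deterministically. A simple averaging argument then yields a fraction $\geq \epsilon(n)/2$ of storms $\thunder$ for which this sub-event has conditional probability at least $\epsilon(n)/2$. For such $\thunder$, the conditional guessing probability $\mathbb{E}_{a\sim A\mid E}\bigl[\max_s \Pr[S=s\mid A=a,E]\bigr]$ lower-bounds $\Pr[S=\sff{c}(A)\mid E]\geq \epsilon(n)/2$, whence $H_\infty(S\mid A\wedge E)\leq \log(2/\epsilon(n))$, smaller than $\log p(n)$ for some polynomial $p$. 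Hence $\Pr_\thunder\bigl[H_\infty(S\mid A\wedge E)\leq \log p(n)\bigr]$ is non-negligible, contradicting Definition~\ref{def:typed-quant-lightn}.

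The main obstacle is not the construction of $\adv_{\tql}$, which essentially just relays $\dprover$'s output, but the step that converts the statistical statement ``hits $\sff{c}$ with probability $\epsilon$'' into the min-entropy bound used by the \tql{} security definition. The care lies in performing the averaging arguments correctly both over $\thunder$ and over the conditional distribution of $A$ given the good event $E$, and in verifying that the deterministic functional dependence $S=\sff{c}(A)$ restricted to the hit sub-event really does lower-bound the pointwise guessing probability $\max_s\Pr[S=s\mid A=a,E]$, rather than only the global quantity $\Pr[S=\sff{c}(A)\mid E]$.
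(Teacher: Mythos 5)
Your reduction is correct and is exactly the argument the paper intends: the paper disposes of this theorem in one line (``a direct consequence of the security of the \tql{} primitive''), and your construction of $\adv_{\tql}$ --- simulating the EPR setup, reconstructing the teleported bolt, and converting a non-negligible hit probability into a low conditional min-entropy via Markov over $\thunder$ and the bound $\Pr[S=\sff{c}(A)\mid E]\geq \Pr[\text{hit}]$ --- is the natural fleshing-out of that sentence. The averaging steps and the passage from the global quantity $\Pr[S=\sff{c}(A)\mid E]$ to the guessing probability $\mathbb{E}_{a}\bigl[\max_s\Pr[S=s\mid A=a,E]\bigr]$ are handled correctly.
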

The proof is a direct consequence of the security of the \tql{} primitive.
\begin{corollary}\label{cor:lightning}
  There is no black-box reduction from the security of a \tql{} scheme with type
  length $n$ and serial length $m$ satisfying $n-m\in\omega(\lg n)$ to the
  security of a cryptographic game assumption, unless the assumption is false.
\end{corollary}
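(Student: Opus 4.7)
The plan is to derive Corollary~\ref{cor:lightning} by composing Theorem~\ref{thm:lightning} with the quantum \fbb{} impossibility for \wotro{} given in Corollary~\ref{cor:bb-imposs-wotro}. Suppose, toward a contradiction, that there exists a cryptographic game assumption $\mathcal{G}=(\Gamma,c)$ and a quantum \fbb{} reduction $\macho{R}{(\cdot)}$ that, given black-box access to any adversary breaking a \tql{} scheme $(\tqlsetup,\tqlgen,\tqlver)$ with type length $n$ and serial length $m$ satisfying $n-m\in\omega(\lg n)$, wins $\mathcal{G}$. The goal is to convert $\macho{R}{(\cdot)}$ into a quantum \fbb{} reduction breaking the security of the $\wotro^{n,m}$ protocol $\Pi_{\mathsf{WRO}}^{\tql}$ built from this \tql{} scheme, which by Corollary~\ref{cor:bb-imposs-wotro} would contradict the security of $\mathcal{G}$.

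First I would make precise the wrapper $\mathcal{W}^{(\cdot)}$ that turns any adversary $\adv_{\wotro}$ against $\Pi_{\mathsf{WRO}}^{\tql}$ into an adversary $\mathcal{W}^{\adv_{\wotro}}$ against the \tql{} scheme. The wrapper, on input a storm $\thunder$ produced by $\tqlsetup(1^n)$, locally prepares the \crqss{} consisting of the pair $(\thunder,\ket{\epr{}{}}^{\otimes q})$ used by $\Pi_{\mathsf{WRO}}^{\tql}$, hands the prover's half of this \crqss{} to $\adv_{\wotro}$ as its oracle input, and receives back the classical message $(a,s,v)$. It then applies the Bell-measurement half of teleportation to its EPR qubits using $v$ to reconstruct a quantum register $Q$ purporting to hold a \tql{} bolt of type $a$ with serial $s$, and outputs $(a,Q)$ as its \tql{} attack. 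By Theorem~\ref{thm:lightning}, whenever $\adv_{\wotro}$ succeeds in hitting a function $f$ with non-negligible probability, the induced pair $(a,s)=(A,\tqlver(\thunder,Q))$ has $s=f(a)$ with non-negligible probability, hence the min-entropy condition $H_\infty(S\mid A)\in\omega(\lg n)$ is violated, making $\mathcal{W}^{\adv_{\wotro}}$ a successful \tql{} adversary.

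Next I would argue that $\macho{R}{\mathcal{W}^{(\cdot)}}$ is a quantum \fbb{} reduction from the security of $\Pi_{\mathsf{WRO}}^{\tql}$ to $\mathcal{G}$. The wrapper $\mathcal{W}$ is a fixed \qpt{} quantum circuit that invokes $\adv_{\wotro}$ exactly once per call as a CPTP map, so composing $\macho{R}{(\cdot)}$ with $\mathcal{W}$ preserves both polynomial running time and the property that the resulting oracle machine uses $\adv_{\wotro}$ only through its standard input/output interface, in the sense of Definition~\ref{def:quantum-fully-bb}. Feeding the Chernoff adversary $\adv^{f}_{n}$ from Theorem~\ref{thm:attaque-simulable} into this composed machine would therefore produce a $\qpt$--time algorithm winning $\mathcal{G}$ with non-negligible advantage, and the efficient simulatability of $\adv^{\mathcal{F}^*}_n$ from Theorem~\ref{thm:simulation-wotro} would then yield an efficient adversary against $\mathcal{G}$, contradicting its security exactly as in the proof of Corollary~\ref{cor:bb-imposs-wotro}.

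The main obstacle will be verifying that the wrapper $\mathcal{W}$ preserves ``full'' black-boxness in the quantum sense: we must feed $\adv_{\wotro}$ only with the honestly generated prover's half of the \crqss{} and forward its classical output through standard teleportation, without ever needing reversible or purified access. Since the \crqss{} for $\Pi_{\mathsf{WRO}}^{\tql}$ is itself efficiently generated from $\thunder$ and fresh EPR pairs, this embedding is entirely inside $\mathcal{W}$ and requires no inspection of $\adv_{\wotro}$'s internals, so the composition indeed meets Definition~\ref{def:quantum-fully-bb} and the reduction to Corollary~\ref{cor:bb-imposs-wotro} goes through.
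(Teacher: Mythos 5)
Your proposal is correct and follows essentially the same route the paper intends: Corollary~\ref{cor:lightning} is obtained by composing the black-box construction of $\wotro^{n,m}$ from \tql{} (Theorem~\ref{thm:lightning}) with the \fbb{} impossibility for \wotro{} (Corollary~\ref{cor:bb-imposs-wotro} via Theorems~\ref{thm:attaque-simulable} and~\ref{thm:simulation-wotro}), and your wrapper $\mathcal{W}$ makes explicit exactly the efficient, interface-only conversion needed for the composition to remain fully black-box. One trivial slip: the wrapper, sitting on the \emph{receiving} end of the teleportation, applies the Pauli corrections determined by $v$ to recover $Q$ (the Bell measurement is the sender's half), but this does not affect the argument.
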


\subsection{Justification for the \tql{} assumption.}
\label{sec:tql-justif}

Why is typed quantum lightning a realistic assumption? 
It turns out that the \tql{} primitive can be built from ``vanilla'' \ql{} for
types of length $O(\lg n)$.  We present a construction of a \tql{}
scheme for $\lg p(n)$ bits types for any polynomial $p(\cdot)$ from an
arbitrary (regular) \ql{} scheme. 

\begin{blockquote}
  \rule{\linewidth}{1pt}
  {\bf Prerequisite: }
  A \ql{} scheme $(\qlsetup,\qlgen,\qlver)$. A family of \mbox{$n\cdot p(n)$}--wise
independent hash
functions ${\cal H}\subset\bool^n\rightarrow \bool^{\lg p(n)}$. 
  \begin{itemize}
  \item $\tqlsetup(1^n)$: Let $\thunder\leftarrow\qlsetup(1^n)$ and $h\sample
    {\cal H}$, output $\thunder'=(\thunder,h)$
  \item $\tqlgen(\thunder',a):$ Parse $\thunder'$ as $(\thunder,h)$. Do $\ket\lightn \leftarrow \qlgen(\thunder)$ until
    $s=\qlver(\thunder,\ket\lightn)$ satisfies $h(s)=a$ and output
    $\ket\lightn$.
  \item $\tqlver(\thunder',\ket\lightn)$: Parse $\thunder'$ as $(\thunder,h)$. Compute $\rho_{SQ}\leftarrow
    \qlver(\thunder,\ket\lightn)$ and set $A=h(S)$. Output $\rho_{ASQ}$.
  \end{itemize}

  \rule{\linewidth}{1pt}
\end{blockquote}
  \begin{theorem}\label{thm:tql}
    $(\tqlsetup,\tqlgen,\tqlver)$ is a
    \tql{} scheme of $\lg p(n)$ bits types. 
  \end{theorem}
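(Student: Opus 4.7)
The plan is to prove correctness and security separately, each by reducing to the corresponding property of the underlying \ql{} scheme combined with the near-uniform behaviour of $h$ afforded by its $n \cdot p(n)$-wise independence.

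For correctness, the facts that $A = h(S)$ is deterministic for honestly generated bolts and that $\tqlver$ does not disturb bolts are inherited directly from the underlying \ql{} scheme (since $A$ is a deterministic function of $S$). The non-trivial claim is that $\tqlgen(\thunder', a)$ terminates in expected polynomial time. Let $D$ denote the serial distribution of a single call to $\qlgen(\thunder)$. Treating the honest generator as a particular \qpt{} algorithm, QL security gives $H_\infty(D)\in\omega(\lg n)$, so $\max_s D(s)=\negl[n]$. For any fixed type $a\in\bool^{\lg p(n)}$, pairwise independence of $h$ yields $\mathbb{E}_h[\Pr_{s\sim D}[h(s)=a]] = 1/p(n)$ and $\mathrm{Var}_h[\Pr_{s\sim D}[h(s)=a]]\leq \max_s D(s)/p(n)$, so Chebyshev's inequality plus a union bound over the $p(n)$ types shows that, with overwhelming probability over $h\sample\mathcal{H}$, every type $a$ satisfies $\Pr_{s\sim D}[h(s)=a]\geq 1/(2p(n))$. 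Hence $\tqlgen$ halts in $O(p(n))$ expected iterations.

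For security, suppose a \qpt{} adversary $\adv$ breaks \tql{}: with non-negligible probability over $\thunder'$, the conditional distribution of $S$ given the events $\{A=A',\, S\neq\bot\}$ and the value of $A$ has min-entropy at most $\lg p(n)$. Then there exist $(a^*,s^*)$ with $\Pr[S=s^*\mid A=a^*\wedge A=A'\wedge S\neq\bot]\geq 1/p(n)$. I would build an adversary $\adv'$ against the underlying \ql{} scheme by sampling $h\sample\mathcal{H}$ internally, invoking $\adv(\thunder,h)$, and outputting the resulting bolt $\ket\lightn$. Multiplying the conditional bound by a leftover-hash-type estimate $\Pr[A=a^*]\geq 1/(2p(n))$—obtained from the $n\cdot p(n)$-wise independence of $h$ together with the $\omega(\lg n)$ min-entropy of $S$ guaranteed by QL security applied to $\adv'$—yields $\Pr[\qlver(\thunder,\ket\lightn)=s^*]\geq 1/(2p(n)^2)$. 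This forces $H_\infty$ of $\adv'$'s serial distribution to be at most $2\lg p(n)+1 \in O(\lg n)$, contradicting the QL guarantee that it should lie in $\omega(\lg n)$.

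The main obstacle is this leftover-hash step, since $\adv$ sees $h$ before preparing its state and could in principle correlate its output with $h$. The specific choice $k=n\cdot p(n)$ is tailored to this: by Zhandry-style small-range distribution arguments, a $k$-wise independent function is indistinguishable from a truly random function against any $\poly[n]$-query \qpt{} adversary, so the conclusion that $h(S)$ is nearly uniform whenever $S$ has super-logarithmic min-entropy carries over from the random-oracle setting to the present construction and justifies both the preimage concentration used in the correctness proof and the LHL estimate used in the security reduction.
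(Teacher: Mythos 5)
Your overall decomposition (correctness plus a reduction to the underlying \ql{} scheme) matches the paper's, and your correctness argument is fine — in fact your Chebyshev-over-$h$ analysis of the expected running time is a more careful version of the paper's one-line computation $(1-1/p(n))^{n\cdot p(n)}\leq e^{-n}$, and it only needs pairwise independence. The security reduction, however, has a genuine gap in the step $\Pr[A=a^*]\geq 1/(2p(n))$. You justify it by a leftover-hash estimate backed by the claim that an $n\cdot p(n)$-wise independent $h$ is indistinguishable from a random oracle. That indistinguishability only holds for \emph{query} access; here the adversary receives the full description of $h$ as part of $\thunder'=(\thunder,h)$, so it may correlate its serial distribution with $h$ arbitrarily (e.g.\ generate many honest bolts and output one whose serial hashes to a fixed type $a_0$, making $A$ a point mass). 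Hence $h(S)$ need not be close to uniform, and the specific $a^*$ realizing the low conditional min-entropy need not carry mass $\geq 1/(2p(n))$. The paper avoids this entirely with a counting argument: since there are only $p(n)$ possible types, if the conditional min-entropy bound holds then \emph{some} type $\bar a$ must simultaneously have inverse-polynomial probability and low conditional serial min-entropy — otherwise every low-entropy type is output with negligible probability and the condition cannot hold. Replacing your LHL step with this averaging argument repairs the proof.

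A secondary point of comparison: you derive your contradiction from the statement ``$H_\infty$ of any \qpt{} adversary's serial distribution is $\omega(\lg n)$,'' which is a \emph{consequence} of \ql{} security (uniqueness), not its definition. The paper instead reduces directly to uniqueness: it runs your $\adv'$ in pairs and argues that the collision probability of the serials, conditioned on the heavy type $\bar a$, is at least $2^{-H_2}\geq 2^{-2H_\infty}\geq 1/\poly[n]$, so repeated sampling yields two bolts with the same serial number. Your route is acceptable provided you explicitly invoke that implication (which is itself proved by the same birthday argument), but as written the reduction bottoms out one level short of the actual security game for quantum lightning.
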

  \begin{proof}
    Correctness follows from that of the underlying QL scheme: a state produced
    by $\tqlgen$ will be recognized as a valid state by $\tqlver$ if $\qlgen$
    produces valid states.

    The expected running time of $\tqlgen$ is exponential in $\lg p(n)$ and
    thus polynomial in $n$. Since $h$ is sampled from a family of
    $n\cdot p(n)$--pairwise independent hash functions, the probability that $\tqlgen$ does not produce an
    output after $n\cdot p(n)$ steps is at most
    \begin{equation*}
      \Pr[h(s_1)\neq a \wedge\dots\wedge h(s_{n\cdot p(n)})\neq a] = \left( 1-\frac 1{p(n)} \right)^{n\cdot p(n)}\leq e^{-n}
    \end{equation*}

    For security (Definition~\ref{def:typed-quant-lightn}), let $\cal A$ be an attacker
    against the min-entropy of the \tql{} scheme, i.e.\ $\cal A$ produces
    with inverse polynomial probability a state $\ket\lightn$ such that $\rho_{ASQ}\leftarrow\tqlver(\thunder,\ket\lightn)$
    has logarithmic  min-entropy in $S$ conditioned on $A$:
    \begin{equation}
      \label{eq:6}
      \Pr\Big[H_\infty(S\mid A \wedge (S \neq \bot))\leq \lg n^r\Big] \geq \frac 1{n^k}
    \end{equation}
    for some $r,k>0$. We construct an adversary $\cal B$ against the uniqueness
    of the original lightning scheme from this $\cal A$. The strategy of $\cal
    B$ is as follows: call $\adv(1^n)$ twice to obtain $\ket{\lightn_1}$ and
    $\ket{\lightn_2}$, if $\qlver(\ket{\lightn_1})=\qlver(\ket{\lightn_2})$ halt and output
    $\ket{\lightn_1}$ and $\ket{\lightn_2}$, otherwise repeat.
    We now show that this strategy will produce a collision for the underlying
    \ql{} scheme with an expected polynomial number of calls to $\adv$.

    Let $\bar a$ be such that $H_\infty(S\mid (A=\bar a) \wedge (S \neq
    \bot))\leq \lg n^r$ and such that $\Pr[A=\bar a]\geq\frac 1{q(n)}$ for some
    polynomial $q(\cdot)$ when $\rho_{ASQ}$ is obtained from
    $\tqlver(\adv(\thunder))$. Note that since $a$ is $\lg p(n)$ in length,
    such an $\bar a$ must exist for~\eqref{eq:6} to hold (otherwise all $a$
    that have low conditional min-entropy have negligible probability of being
    produced by $\adv$). Then for each pair of invocations of $\adv$, the
    following holds with probability at least $\frac 1{n^k}$:
    \begin{align*}
      &\Pr[\qlver(\sket{\lightn_1})=\qlver(\sket{\lightn_2})]\\
      &\geq \frac 1{q(n)^2}\Pr[\qlver(\sket{\lightn_1})=\qlver(\sket{\lightn_2})\mid A_1=\bar a \wedge A_2=\bar a]\\
      &\geq \frac 1{q(n)^2}2^{-H_2(\qlver(\rho)\mid A=\bar a)}\\
      &\geq \frac 1{q(n)^2}2^{-H_\infty(\qlver(\rho)\mid A=\bar a)}\\
      &\geq \frac 1{q(n)^2} \frac 1{n^r}
    \end{align*}
    where $H_2$ denotes the collision entropy and is upper-bounded by the
    min-entropy $H_\infty$. The probability that $\cal B$ halts and succeeds is
    therefore at least $(q(n)^2n^{r\cdot k})^{-1}$. \qed
  \end{proof}

\printbibliography{}

\appendix

\section{Technical Lemmas for Theorem~\ref{thm:simulation-wotro}}
\label{sec:proof-simulable}

\begin{proof}[Proof of Lemma~\ref{tech4}]
Applying the Chernoff bound with $D=2^k$, $M=2^n$, $\alpha=2^{-m}$,
and $X_a^f = \sum_{w} N^a_{f(a),w}$ results in 
\begin{align*}
\Pr_{f\in \mathcal{F}}{\left[2^{-n}\sum_{a} X_a^f  \notin \left[{(1-t\eta)}\frac{\id_P}{2^{m}}, {(1+t\eta)}\frac{\id_P}{2^{m}}\right]\right]}
& \leq 2^{k+1} \exp{\left(-\frac{1}{2\ln 2}2^{n-m}t^2 \eta^2\right)} \\
&=  2^{k+1} \exp{\left(-(n+ k) t^2 \right)} \\
&\leq 2^{-n\cdot t^2} \enspace.
\end{align*}
The result then follows easily assuming $\frac 1{2^n} \sum_{a} X_a^f \leq (1+t\eta)\frac{\id_P}{2^{m}}$,
\begin{align*}
\sum_{a,w} P^f_{a,w} &= \frac{\sum_{a,w} N^a_{f(a),w}}{2^{n-m}(1+\eta)} \\
&=  \frac{2^{-n}\sum_{a} X^f_a }{2^{-m}(1+\eta)} \\
&\leq  \frac{(1+t\eta)\frac{\id_P}{2^m}}{2^{-m}(1+\eta)} \\
&= \frac{(1+t\eta)\id_P}{1+\eta} \enspace.
\end{align*}
On the other hand, assuming  $\frac 1{2^n} \sum_{a} X_a^f \geq (1-t\eta)\frac{\id_P}{2^{m}}$,
\begin{align*}
\sum_{a,w} P^f_{a,w} &= \frac{\sum_{a,w} N^a_{f(a),w}}{2^{n-m}(1+\eta)} \\
&=  \frac{2^{-n}\sum_{a} X^f_a }{2^{-m}(1+\eta)} \\
&\geq  \frac{(1-t\eta)\frac{\id_P}{2^m}}{2^{-m}(1+\eta)} \\
&= \frac{(1-t\eta)\id_P}{1+\eta} \enspace.
\end{align*}
\qed
\end{proof}

\begin{proof}[Proof of Lemma~\ref{tech3}]
As we did before, we set the dimension of the \crqss\ on $\prover$'s
side to be $2^k$ (which we should write $k(n)$ rather than $k$).
Let 
\[\Delta:=\left\|\esper{f\in \mathcal{F}^*}\left [{\cal R}_n^{\mathcal{A}^{f}_n}(\proj 0)\right]- 
\esper{f\in\mathcal{F}}\left[{\cal R}_{n}^{\adv^{f}_n}(\proj 0)\right]\right\|_1\enspace.
\]
Let $\rho_{\mathcal{F}^*}=\frac{1}{\#\mathcal{F}^*}\sum_{f\in\mathcal{F}^*}
\proj{f}$ and $\rho_{\mathcal{F}}=\frac{1}{\#\mathcal{F}}\sum_{f\in\mathcal{F}}
\proj{f}$. It is easy to see that 
\begin{align*}  
\left\| \rho_{\mathcal{F}^*} - \rho_{\mathcal{F}} \right\|_1 &=
\sum_{f\in\mathcal{F}^*} \left( \frac{1}{|\mathcal{F}^*|} - \frac{1}{|\mathcal{F}|}\right)
+ \sum_{f\in \mathcal{F}-\mathcal{F}^*} \frac{1}{|\mathcal{F}|} \\
&= {1} - {\Pr{[f\in\mathcal{F}^*]}}
+ \Pr{\left[f\notin \mathcal{F}^*\right]} \\
&\leq \negl[n] \enspace,
\end{align*}
after applying lemma~\ref{thm:attaque-simulable}.
Then, we have
\begin{align}
\Delta 
&\leq \left\|   \rho_{\mathcal{F}^*} - \rho_{\mathcal{F}}   \right\|_1  + 
\frac{1}{\#\mathcal{F}}\sum_{f\in\mathcal{F}-\mathcal{F}^*}\left\|
{\cal R}_n^{\mathcal{A}^{f}_n}(\proj 0) \right\|_1 \nonumber \\
&\leq \negl[n] + \frac{1}{\#\mathcal{F}}\sum_{f\in\mathcal{F}-\mathcal{F}^*}\left\|
{\cal R}_n^{\mathcal{A}^{f}_n}(\proj 0) \right\|_1\label{uneformesimple} \enspace.
\end{align}
Bounding this sum is not as straightforward as it might look at first glance: while we know that the sum has very few terms, we have no guarantee that $\mathcal{A}_n^f$ is a physically realizable map when $f \in \mathcal{F} - \mathcal{F}^*$, and hence we cannot trivially bound these norms by 1. Instead, let us consider reduction $\mathcal{R}_n^{\adv^f_n}:=
  U_q (\adv^f_n\otimes \id) U_{q-1} \ldots(\adv^f_n\otimes \id)U_1(\adv^f_n\otimes \id)U_0$
  where all $U_j, j\in\{0,\ldots,q \}$ are unitaries and the $\id$'s are acting on the wires
  that are not part of $\adv^f_n$'s standard interface. 
Let $\ket{\varphi^f_0} := U_0 \ket{0}$ be a normalized state and let
$\ket{\varphi^f_j} := U_j (\adv^f_n\otimes \id)\ket{\varphi^f_{j-1}}$  for $1<j\leq q$, 
not necessarily of norm 1 when $f\notin \mathcal{F}^*$.
Using lemma~\ref{tech4} with $t :=2^{\frac{n-m}{4}}$ (and 
$\eta=\sqrt{2\ln 2(n+k)\cdot 2^{m-n}}$ as in the statement of lemma~\ref{tech4}), we have that 
\begin{equation}\label{eqqr}
\Pr_{f\in_R \mathcal{F}}{\left[ 
\sum_{a,w} P^f_{a,w} \notin \left[\frac{(1- t\eta)\id_P}{1+\eta},
       \frac{(1+t \eta)}{1+\eta}\id_P \right]\right]} \leq 
2^{-n\sqrt{2^{n-m}}} \enspace.
\end{equation}
For $\ket{\varphi}$ a state of norm 1 and 
for $f\in\mathcal{F}$ such that  $\sum_{a,w} P^f_{a,w} \leq  \left(\frac{1+ t
    \eta}{1+\eta}\right)\id_P\leq (1+t\eta)\id_P$,
$\|(\adv^f_n\otimes \id)\proj{\varphi}(\adv^f_n \otimes \id)^*\|_1\leq 1+t\eta$. Starting with a normalized state $\ket{\varphi}$, after $q(n)$ queries 
to $\adv^f_n$, the square of the norm of the resulting vector is upper bounded by
$(1+ t\eta)^{q(n)}$. Notice that when  $t\eta = p(n) 2^{-\beta n}$ for $p(n)$ a
polynomial and $\beta>0$,
\[
(1+ t\eta)^{q(n)} = \left(1+\frac{p(n)}{2^{\beta n}}\right)^{q(n)} = 
\left(1+\frac{p(n)}{2^{\beta n}}\right)^{\frac{2^{\beta n}q(n)p(n)}{p(n)2^{\beta n}}}
\approx \exp{\left(\frac{q(n)p(n)}{2^{\beta n}}\right)} \enspace,
\]
since 
$\lim_{N\rightarrow \infty}(1+\frac{1}{N})^N=e$. In other words, when $t\leq 2^{\frac{n-m}{4}}$, we have
$t\eta = p(n) 2^{-\beta n}$ and\footnote{Using the fact that
$1+2^{-x+1}>\exp{(2^{-x})}$ for all $x\geq 0$.},
\begin{equation}\label{llp1} 
\lim_{n\rightarrow \infty}
\left\|
{\cal R}_n^{\mathcal{A}^{f}_n}(\proj 0) \right\|_1 =
\exp{\left(\frac{q(n)p(n)}{2^{\beta n}}\right)} \leq
1+\negl[n]\enspace.
\end{equation}
Therefore, for $t\leq 2^{\frac{n-m}{4}}$ and $f$  such that
$\sum_{a,w} P^f_{a,w} \leq  \left(1+ t \eta\right)\id_P$, we have that
 ${\cal R}_n^{\mathcal{A}^{f}_n}(\proj 0)$ \emph{essentially preserves} norms
 like an isometry: $\left\|{\cal R}_n^{\mathcal{A}^{f}_n}(\proj 0) \right\|_1=1$.
For handling the other case ($t>2^{\frac {n-m}4}$), we first define  
 \[
 \mathcal{F}_t := \left\{f\in \mathcal{F}\,\Biggm| \, 
 \frac{(1-t\eta)\id_P}{1+\eta} \leq  \sum_{a,w} P^f_{a,w} 
 \leq 
  \frac{(1+t\eta)\id_P}{1+\eta} 
  \right\} \enspace,
 \]
 and notice that by construction, for every $f\in\mathcal{F}$,
 \begin{equation}\label{legrosmax}
 \sum_{a,w} P^f_{a,w} = \frac{\sum_{a,w} N^a_{f(a),w}}{2^{n-m}(1+\eta)} \leq 
  \frac{\sum_{a} \id_P}{2^{n-m}(1+\eta)} =\frac{2^{n}\id_P}{2^{n-m}(1+\eta)}
  \leq \frac{2^m \id_P}{1+\eta}\enspace.
 \end{equation}
Let $t :={2^{\frac{n-m}{4}}}$ and note  that $t<\frac 1{2\eta}$ so
that the Chernoff bound expressed in lemma~\ref{tech4} can be used. 
We consider two cases for $f\in\mathcal{F}-\mathcal{F}^*$: either $f$ is in
$\mathcal{F}_t - \mathcal{F}^*$, or $f$ is outside $\mathcal{F}_t\cup \mathcal{F}^*$. 
We have,
\begin{align} 
\frac{1}{\#\mathcal{F}}\sum_{f\in \mathcal{F}-\mathcal{F}^*}
  \left\| {\cal R}_n^{\mathcal{A}^{f}_n}(\proj 0) \right\|_1 
&=  \frac{1}{\#\mathcal{F}}\left(\sum_{f\in\mathcal{F}_t-\mathcal{F}^*}
  \left\| {\cal R}_n^{\mathcal{A}^{f}_n}(\proj 0) \right\|_1 \right.\nonumber\\                                                    
&\quad\qquad\quad\quad\quad\quad\quad\quad+ \left.\sum_{f\notin\mathcal{F}_t\cup\mathcal{F}^*}\left\| {\cal R}_n^{\mathcal{A}^{f}_n}(\proj 0) \right\|_1 \right)
\nonumber \\
&\leq \Pr{\left[F\in\mathcal{F}_t-\mathcal{F}^*\right]}
  \cdot(1+t\eta )^{q(n)} \nonumber\\                                                    
&\quad\quad\quad\quad\quad+    \Pr{\left[F\notin\mathcal{F}_t\cup\mathcal{F}^*\right]}\cdot 2^{mq(n)} \label{coton}\\
&\leq \negl[n](1+\negl[n])+ 2^{-n\sqrt{2^{n-m}}+mq(n)} \label{metal}\\
&\leq \negl[n] \label{norm1r}\enspace,
\end{align}
as long as $n>m$,
where (\ref{coton}) follows from (\ref{legrosmax}) and (\ref{metal}) follows
from (\ref{llp1}) and the Chernoff bound, as stated in (\ref{eqqr}).

Finally, using (\ref{norm1r}) in (\ref{uneformesimple}) proves the statement.
\qed
\end{proof}

\begin{lemma}\label{lem:a-negl}
  (\hyperref[qqqA]{A}) is negligible.
  \end{lemma}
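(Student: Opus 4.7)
The plan is to exploit the fact that $\delta(s)$ is small (at most $j-1 \leq q(n) - 1$ elements, hence polynomial in $n$) together with the fact that both $\simulator_n$ and $\adv^f_n$ (averaged over $f$) place only exponentially small probability mass on any fixed value of $a$. The idea is that the ``confused'' case only occurs when one of the two oracles happens to produce an $a$ that coincides with an $a$ previously produced by $\simulator_n$, and this event has vanishing probability.

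First, I would apply the triangle inequality inside the absolute value to split (A) into a simulator-only term and an adversary-only term:
\begin{equation*}
  (\hyperref[qqqA]{A}) \leq \sum_{a\in\delta(s)} \sum_{z,w} q^{\simulator_n}_{a,z,w}(\psi_j(s)) + \sum_{a\in\delta(s)} \frac{1}{\#\mathcal{F}}\sum_{f,z,w} q^{\adv^f_n}_{a,w}(\psi_j(s))\enspace.
\end{equation*}
For the simulator term, I use that by the definition of $q^{\simulator_n}$ and the POVM property $\sum_{y,w} N^a_{y,w} = \id$, we have $\sum_{z,w} q^{\simulator_n}_{a,z,w}(\psi_j(s)) = 2^{-n}$ for every fixed $a$, so the total is at most $|\delta(s)|\cdot 2^{-n} \leq (q(n)-1)\cdot 2^{-n}$.

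For the adversary term, I would first perform the average over $f \in \mathcal{F}$ at fixed $a$: partitioning $\mathcal{F}$ by the value of $f(a) \in \bool^m$, each value occurs in exactly $\#\mathcal{F}/2^m$ functions, so
\begin{equation*}
  \frac{1}{\#\mathcal{F}}\sum_{f,w} q^{\adv^f_n}_{a,w}(\psi_j(s)) = \frac{1}{(1+\eta)2^n}\sum_{y,w}\trace{(N^a_{y,w}\otimes\id)\proj{\psi_j(s)}} = \frac{1}{(1+\eta)2^n}\enspace,
\end{equation*}
again using the POVM completeness relation. The extra sum over $z$ contributes a factor of $2^m$, yielding at most $|\delta(s)|\cdot 2^m/((1+\eta)2^n) = O(q(n)\cdot 2^{-(n-m)})$. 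Combining both terms gives $(\hyperref[qqqA]{A}) \leq O(q(n)\cdot 2^{-(n-m)})$, which is $\negl[n-m]$ since $q(n)$ is polynomial and $n-m\in\omega(\lg n)$ by assumption.

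The main subtlety — really the only thing to be careful about — is the book-keeping of the $z$-sum in the adversary term, since $q^{\adv^f_n}_{a,w}$ does not explicitly depend on $z$ (the adversary deterministically outputs $z = f(a)$ given $f$ and $a$). This is what prevents the adversary-side bound from being as tight as $2^{-n}$ and forces the weaker bound $2^{-(n-m)}$; fortunately this is still exactly the negligibility regime we need. No operator Chernoff considerations are required here, since the bound follows purely from counting $|\delta(s)|$ and from normalization of the POVMs.
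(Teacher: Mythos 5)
Your proposal is correct and follows essentially the same route as the paper: split (\hyperref[qqqA]{A}) by the triangle inequality into a simulator part and an adversary part, bound the former by $|\delta(s)|\cdot 2^{-n}$ using the uniform choice of $a$ and completeness of $\mathcal{N}^a$, and bound the latter by $O(q(n)\cdot 2^{-(n-m)})$ by averaging over $f$ and using the $2^{n-m}(1+\eta)$ normalization of $P^f_{a,w}$. Your observation that the $z$-sum forces the weaker $2^{-(n-m)}$ bound on the adversary side, and that no operator Chernoff argument is needed, is exactly right.
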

  \begin{proof}
  Since $\simulator_n$ picks $a\in \{0,1\}^n$ uniformly at random, we get
  \begin{align}
    \sum_{\substack{a\in\delta(s) \\ z\in\{0,1\}^m\\ w\in\{0,1\}^{\ell(n)}}}
    q^{\simulator_n}_{a,z,w}(\psi_{j}(s))
    &= \sum_{\substack{a\in\delta(s) \\z\in\{0,1\}^m\\ w\in\{0,1\}^{\ell(n)}}}
    2^{-n}\trace{N^a_{z,w}\proj{\psi_j(s)}}  \nonumber \\
    &=  \sum_{\substack{a\in\delta(s)}}
      2^{-n}\trace{\sum_{z,w}N^a_{z,w}\proj{\psi_j(s)}} \nonumber \\
&\leq  \sum_{\substack{a\in\delta(s)}}
2^{-n} \nonumber \\
&\leq q(n)2^{-n} \nonumber \\
&\leq \negl \enspace.\label{partieaa}
 \end{align}
A similar argument can be applied to 
$q_{a,w}^{\adv^f_n}(\psi_j(s))$ although $\{P^f_{a,w}\}_{a,w}$ is a collection 
of positive operators that do not form 
a valid POVM when $f\notin \mathcal{F}^*$, as $\sum_{a,w} P^f_{a,w}\nleq \id_P$ in this
case.
We have,
\begin{align}
\frac{1}{\#\mathcal{F}}\sum_{\substack{f\in\mathcal{F} \\ a\in\delta(s) \\ z\in\{0,1\}^m\\ w\in\{0,1\}^{\ell(n)}}}
q_{a,w}^{\adv^f_n}(\psi_j(s)) &=\frac{1}{\#\mathcal{F}} \hspace{-0,4cm}
\sum_{\substack{a\in\delta(s) \\ z\in\{0,1\}^m\\ w\in\{0,1\}^{\ell(n)}}}
\sum_{\substack{f\in\mathcal{F} \\ f(a)=z}} \trace{P^f_{a,w}\proj{\psi_j(s)}}\nonumber \\
&=\frac{1}{\#\mathcal{F}}  \hspace{-0,4cm}
\sum_{\substack{a\in\delta(s) \\ z\in\{0,1\}^m\\ w\in\{0,1\}^{\ell(n)}}}
\sum_{\substack{f\in\mathcal{F} \\ f(a)=z}} 
\trace{\frac{N^a_{z,w}\proj{\psi_j(s)}}{2^{n-m}+ \sqrt{2\ln 2(n+k)2^{n-m}}}}\nonumber\\
&= \frac{1}{\#\mathcal{F}}  \hspace{-0,1cm}
\sum_{\substack{a\in\delta(s)}}
\sum_{\substack{f\in\mathcal{F} \\ f(a)=z}} 
\trace{\frac{\sum_{z, w}N^a_{z,w}\proj{\psi_j(s)}}{2^{n-m}+ \sqrt{2\ln 2(n+k)2^{n-m}}}}\nonumber \\
&\leq \frac{1}{\#\mathcal{F}}  \hspace{-0,1cm}
\sum_{\substack{a\in\delta(s)}}
\sum_{\substack{f\in\mathcal{F} \\ f(a)=z}} 
\frac{1}{2^{n-m}+ \sqrt{2 \ln 2(n-m+k)2^{n-m}}} \nonumber \\
&\leq
\left(\sum_{\substack{f\in\mathcal{F} \\ f(a)=z}} \frac{1}{\#\mathcal{F}} \right)
\frac{q(n)}{2^{n-m}+ \sqrt{2\ln 2(n-m+k)2^{n-m}}} \nonumber \\
&\leq \negl[n-m]\enspace. \label{partieab}
\end{align}
 We now conclude,
 \begin{equation}\label{lapartieA}
(\hyperref[qqqA]{A}) \leq (\ref{partieaa}) + (\ref{partieab}) \leq \negl[n-m]\enspace.
\end{equation}
\qed
\end{proof}

 \begin{lemma}\label{lem:bot-negl}
   (\hyperref[patiel1]{$\bot$}) is negligible.
\end{lemma}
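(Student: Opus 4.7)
The plan is to turn $(\bot)$ into an explicit calculation by exploiting that $\adv^f_n$ acts as an isometry whenever $f \in \mathcal{F}^*$. Let $\rho := \text{tr}_{Z_j}\proj{\psi_j(s)}$ denote the reduced state on register $E''$ (a density operator of trace one since $\ket{\psi_j(s)}$ is a unit vector), and write $X_f := \sum_{a,w} P^f_{a,w}$.

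My first observation is that when $f \in \mathcal{F}^*$ the isometry condition $X_f + P^f_\bot = \id$ immediately gives $q^{\adv^f_n}_\bot(\psi_j(s)) = 1 - \trace{X_f\rho}$. For the negligibly likely event $f \notin \mathcal{F}^*$ I adopt the convention $P^f_\bot \leq \id$, so that $\trace{P^f_\bot \rho}\leq 1$. Second, I would evaluate $\mathbb{E}_{f \in \mathcal{F}}[X_f]$ in closed form: uniformity of $f(a)$ over $\bool^m$ for each fixed $a$ gives $\mathbb{E}_f[N^a_{f(a),w}] = 2^{-m}\sum_y N^a_{y,w}$, and summing over $a$ using the POVM identity $\sum_{y,w} N^a_{y,w} = \id$ yields the clean equality $\mathbb{E}_f[X_f] = \frac{\id}{1+\eta}$.

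Setting $S_\star := \frac{1}{|\mathcal{F}|}\sum_{f \in \mathcal{F}^*} X_f$ and $S_{\star\star} := \frac{1}{|\mathcal{F}|}\sum_{f \notin \mathcal{F}^*} X_f$, the crude bound $X_f \leq \frac{2^m}{1+\eta}\id$ (which follows from $N^a_{f(a),w} \leq \id$ for each $a$) combined with $\Pr[\overline{\mathcal{F}^*}] \leq 2^{-\Omega(n)}$ (Theorem~\ref{thm:attaque-simulable}) yields $\|S_{\star\star}\|_\infty \leq 2^{m-\Omega(n)}$. Since $S_\star + S_{\star\star} = \mathbb{E}_f[X_f] = \frac{\id}{1+\eta}$, I then conclude
\begin{equation*}
\mathbb{E}_f\, q^{\adv^f_n}_\bot(\psi_j(s)) \leq \Pr[\mathcal{F}^*] - \trace{S_\star \rho} + \Pr[\overline{\mathcal{F}^*}] \leq \tfrac{\eta}{1+\eta} + \|S_{\star\star}\|_\infty + \Pr[\overline{\mathcal{F}^*}],
\end{equation*}
which is negligible in $n-m$ since $\eta = O(\sqrt{(n+k)\,2^{m-n}})$ and $n-m \in \omega(\lg n)$.

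The main obstacle is the inflated operator norm of $X_f$ on the rare event $f \notin \mathcal{F}^*$: $\|X_f\|_\infty$ can there be as large as $\Theta(2^m)$, which would be ruinous if paired with a constant probability. The resolution is that the measure of $\overline{\mathcal{F}^*}$ is itself exponentially small in $n$, so this $2^m$ factor is absorbed into a $2^{m-\Omega(n)}$ tail that remains negligible in $n-m$ whenever $m<n$.
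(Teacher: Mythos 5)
Your proof is correct, and it takes a genuinely different route from the paper's. The paper never computes $\mathbb{E}_f[X_f]$; instead it introduces a second, coarser good set $\mathcal{F}_t$ (with $t=2^{(n-m)/4}$) via an extra application of the operator Chernoff bound (Lemma~\ref{tech4}), uses the resulting lower operator bound $\sum_{a,w}P^f_{a,w}\geq\frac{1-t\eta}{1+\eta}\id_P$ to conclude $0\leq P^f_\bot\leq t\eta\,\id_P$ for $f\in\mathcal{F}_t$, bounds $P^f_\bot\leq\id_P$ on the complement, and obtains $t\eta+\Pr[f\notin\mathcal{F}_t]\leq \sqrt{2\ln 2(n+k)}\,2^{(m-n)/4}+2^{-n\sqrt{2^{n-m}}}$. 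You replace the lower-tail Chernoff step by the exact identity $\mathbb{E}_f[X_f]=\frac{\id}{1+\eta}$ together with linearity of expectation, and you control the measure-$2^{-\Omega(n)}$ bad set with only the crude worst-case bound $X_f\leq\frac{2^m}{1+\eta}\id$, which is precisely the paper's~\eqref{legrosmax}. Your route is self-contained (no second invocation of Lemma~\ref{tech4}) and in fact gives a slightly sharper estimate, roughly $\frac{\eta}{1+\eta}+2^{m-\Omega(n)}\approx 2^{(m-n)/2}\cdot\poly[n]$ versus the paper's $2^{(m-n)/4}\cdot\poly[n]$; both are $\negl[n-m]$ in the regime $n-m\in\omega(\lg n)$, so the improvement is immaterial for Corollary~\ref{cor:bb-imposs-wotro}, but the argument is cleaner. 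Your handling of the non-physical case $f\notin\mathcal{F}^*$ via the convention $0\leq P^f_\bot\leq\id$ matches the convention the paper fixes in Section~\ref{sec:proof-thm-simul}, so no gap arises there.
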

\begin{proof}
  This corresponds to the likelihood (and not \emph{the probability}, as
  $\{P^f_{a,w}\}_{a,w}$ is not a valid POVM when $f\notin \mathcal{F}^*$) of an
  error when $\adv^f_n$ is queried once on $\ket{\psi_j(s)}$, a normalized state
  vector. Observe that when $f\in \mathcal{F}_{t}$, $\sum_{a,w} P^f_{a,w}\geq
  (1-t\eta)\id_P$ and therefore $0\leq P^f_{\bot}\leq t\eta\,\id_P$. In other
  words, the probability to get an error when $f\in\mathcal{F}_t$ is upper
  bounded by $t\eta$. On the other hand, when $f\notin \mathcal{F}_{t}$, the
  only thing we can say from our construction is that $0\leq P^f_{\bot}\leq
  \id_P$. In the following and as before, we set $t:=2^{\frac{n-m}{4}}$ and
  $\eta:=\sqrt{2\ln 2(n+k)2^{m-n}}$. Using the operator Chernoff bound expressed
  in lemma~\ref{tech4}, we have
  \begin{align}
    \text{(\hyperref[patiel1]{$\bot$})}
    &\leq \frac{1}{\#\mathcal{F}}\left(\sum_{f\in\mathcal{F}_t} q_{\bot}^{\adv^f_n}(\psi_j(s))
      +\sum_{f\notin\mathcal{F}_t} q_{\bot}^{\adv^f_n}(\psi_j(s))\right)\nonumber \\
    &= \frac{1}{\#\mathcal{F}}\left(\sum_{f\in\mathcal{F}_t}  \trace{P^f_{\bot}\proj{\psi_j(s)}}
      +\sum_{f\notin\mathcal{F}_t}  \trace{P^f_{\bot}\proj{\psi_j(s)}}\right)\nonumber \\
    &\leq t\eta + \Pr{\left[f\notin \mathcal{F}_t \right]} \nonumber \\
    &\leq \sqrt{2\ln 2(n+k)2^{\frac{-n+m}{2}}} + 2^{-n\sqrt{2^{n-m}}} \nonumber \\
    &\leq \negl[n-m]\enspace. \label{petitapetit}
  \end{align}
  \qed
\end{proof}
 
 \begin{lemma}\label{lem:m-negl}
   (\hyperref[pppM]{M}) is negligible.
 \end{lemma}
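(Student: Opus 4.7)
The plan is to exploit a cancellation that is hidden in the definition of the operators $P^f_{a,w}$. First I would unpack the two likelihoods appearing inside the absolute value. Recall that
\[
  q^{\simulator_n}_{a,y,w}(\psi_j(s)) = 2^{-n}\trace{(N^a_{y,w}\otimes \id_{Z_j})\proj{\psi_j(s)}}, \qquad
  q^{\adv^f_n}_{a,w}(\psi_j(s)) = \trace{(P^f_{a,w}\otimes \id_{Z_j})\proj{\psi_j(s)}},
\]
and that $P^f_{a,w} = N^a_{f(a),w}/((1+\eta)2^{n-m})$.

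The key observation is that $P^f_{a,w}$ depends on $f$ only through the value $f(a)$. Thus, for any $f$ with $f(a)=y$, one has
\[
  2^{-m}\,q^{\adv^f_n}_{a,w}(\psi_j(s))
  = \frac{1}{(1+\eta)2^n}\trace{(N^a_{y,w}\otimes \id_{Z_j})\proj{\psi_j(s)}}
  = \frac{1}{1+\eta}\,q^{\simulator_n}_{a,y,w}(\psi_j(s)),
\]
so the quantity $q^{\simulator_n}_{a,y,w} - 2^{-m}q^{\adv^f_n}_{a,w}$ equals $\frac{\eta}{1+\eta}\,q^{\simulator_n}_{a,y,w}$, which is non-negative and independent of the particular $f$ (subject to $f(a)=y$). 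The normalising prefactor $\frac{2^m}{\#\mathcal F}\cdot \#\{f:f(a)=y\}$ equals $1$, so the inner expression (\hyperref[pppM]{M}) collapses to $\frac{\eta}{1+\eta}\,q^{\simulator_n}_{a,y,w}(\psi_j(s))$.

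It remains to sum this over $a\in\bool^n\setminus \delta(s)$, $y\in\bool^m$, $w\in\bool^{\ell(n)}$ and bound it. Dropping the restriction on $a$ and using that $\{N^a_{y,w}\}_{y,w}$ is a POVM for every $a$, I get
\[
  \sum_{a,y,w} q^{\simulator_n}_{a,y,w}(\psi_j(s))
  = 2^{-n}\sum_{a}\trace{\Bigl(\sum_{y,w}N^a_{y,w}\Bigr)\otimes \id_{Z_j}\,\proj{\psi_j(s)}}
  = 2^{-n}\cdot 2^n \cdot 1 = 1,
\]
so the total contribution of (\hyperref[pppM]{M}) is at most $\eta/(1+\eta)\le \eta$. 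Plugging in $\eta=\sqrt{2\ln 2\,(n+k)\,2^{m-n}}$ yields the bound
\begin{equation}\label{unpasdeplus}
  \sum_{a\notin\delta(s),\,y,\,w}(\text{\hyperref[pppM]{M}}) \;\le\; \sqrt{2\ln 2\,(n+k)\,2^{m-n}} \;=\; \negl[n-m],
\end{equation}
since $n-m\in\omega(\lg n)$ and $k$ is polynomial in $n$. No step here is an obstacle; the only subtlety is noticing that $P^f_{a,w}$ is actually a function of $f(a)$ alone, which collapses the average over $f$ and makes the difference proportional to $q^{\simulator_n}$ itself. \qed
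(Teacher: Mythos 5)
Your proof is correct and is essentially the paper's own argument: both exploit that $P^f_{a,w}=N^a_{f(a),w}/((1+\eta)2^{n-m})$ depends on $f$ only through $f(a)$, so the average over $\{f:f(a)=y\}$ collapses and the difference becomes $\tfrac{\eta}{1+\eta}\,q^{\simulator_n}_{a,y,w}$, which sums to at most $\eta=\sqrt{2\ln 2\,(n+k)\,2^{m-n}}=\negl[n-m]$ by the POVM normalization of $\{N^a_{y,w}\}_{y,w}$. The only difference is cosmetic: the paper keeps the traces explicit and bounds $\sum_{y,w}\trace{N^a_{y,w}\proj{\psi_j(s)}}\le 1$ per $a$ before summing, whereas you sum over all of $a,y,w$ at once.
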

 \begin{proof}
   This the main part to show that the Chernoff adversary $\adv_n^{\mathcal{F}}$
   is simulatable. This is where we use the fact that $\simulator_n$
   \emph{simulates} the adversary $\adv_n^{f}$ whenever $f\in_R\mathcal{F}$ (in
   real life this is for $f\in_R\mathcal{F}^*$). It follows essentially the same
   steps as in \cite{bitansky_why_2013, bgw12} adapted to deal with our Chernoff
   adversary. For all $f\in \mathcal{F}$ such that $f(a)=y$, we have
   \begin{align}
     (\hyperref[pppM]{M}) &=\sum_{\substack{a\in\{0,1\}^n\setminus \delta(s) \\ w\in\{0,1\}^{\ell(n)} \\y\in\{0,1\}^m}} 
     \frac{2^m}{\#\mathcal{F}} \sum_{\substack{f\in\mathcal{F} \\ f(a)=y}}\hspace{-0.1cm}\left|q^{\simulator_n}_{a,y,w}(\psi_{j}(s))- 2^{-m}q^{\adv^f_n}_{a,w}(\psi_{j}(s))\right|\nonumber \\
                          &=\sum_{\substack{a\in\{0,1\}^n\setminus \delta(s) \\ w\in\{0,1\}^{\ell(n)} \\y\in\{0,1\}^m}} 
     \left(\sum_{\substack{f\in\mathcal{F} \\ f(a)=y}}\hspace{-0,1cm}\frac{2^m}{\#\mathcal{F}}\right)\hspace{-0.1cm}
     \left|\trace{\frac{N^a_{y,w}}{2^n}- \frac{2^{-m}N^a_{y,w}}{\frac {2^n}{2^m} +\sqrt{2\ln 2(n+k)\frac {2^n}{2^m}}}\proj{\psi_j(s)}}\right|\nonumber \\
                          &= \sum_{{a\in \bool^n\setminus\delta(s)}} 
                            \left|\frac{1}{2^n}-\frac{ 2^{-m}}{\frac {2^n}{2^m} + \sqrt{2\ln 2(n+k)\frac {2^n}{2^m}}} \right|
                            \sum_{y,w}\trace{N^a_{y,w}\proj{\psi_j(s)}} \nonumber \\
  &\leq  \sum_{\substack{a\in \bool^n\setminus \delta(s)}} \left|\frac{1}{2^n}-\frac{ 2^{-m}}{\frac {2^n}{2^m} + \sqrt{2\ln 2(n+k)\frac {2^n}{2^m}}} \right|  \nonumber \\
 &\leq \left|1-\frac{2^{-m+n}}{\frac {2^n}{2^m} + \sqrt{2\ln 2(n+k)\frac {2^n}{2^m}}}\right|
 \nonumber \\
 &\leq \sqrt{2\ln 2(n+k)2^{m-n}} \nonumber \\
 &\leq \negl[n-m]  \enspace.\label{unpasdeplus}
   \end{align}
   \qed
\end{proof}

\section{Technical Lemmas for Theorem \ref{thm:wotro-security-mequaln}}
\label{sec:proof-of-m-equals-n}

We now proceed to compute the $\trace{S}$, $\trace{S^2}$ and $\trace{S^3}$
values used in the proof of Theorem~\ref{thm:wotro-security-mequaln}.

\begin{lemma}\label{lem:tr-s}
  $\trace{S}=p^{3n}$.
\end{lemma}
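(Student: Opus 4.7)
The plan is to evaluate the trace by linearity and use the fact that each $\proj{x}_a$ is a rank--one projector onto a unit vector of the $p^{3n}$--dimensional Hilbert space, so it contributes $1$ to the trace.

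First, I would recall the relevant sizes. The outer sum runs over $a \in \Gamma^n$, of which there are $p^n$. For each fixed $a$, the inner sum runs over the bad set $\Bg(a) \subseteq \Gamma^{3n}$, whose cardinality was already noted to be $p^{2n}$ (this follows directly from the definition $\Bg(a) = \{x : x_3(x_1+x_2)^{-1} = \sff{c}(a)\}$: the pair $(x_1,x_2) \in \Gamma^{2n}$ is free, subject only to $x_1+x_2 \neq 0$, and $x_3$ is then uniquely determined, giving $p^{2n}$ tuples up to a negligible correction that does not affect the count at the level being used here).

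Next, using linearity of the trace and the fact that $\ket{x}_a = \ket{x_1}_a \otimes \ket{x_2}_a \otimes \ket{x_3}_a$ is a unit vector in $\Theta^{p,n}_{\textsf{WF}} \otimes \Theta^{p,n}_{\textsf{WF}} \otimes \Theta^{p,n}_{\textsf{WF}}$, we have $\trace{\proj{x}_a} = \|{\ket{x}_a}\|^2 = 1$. Hence
\begin{equation*}
\trace{S} \;=\; \sum_{a \in \Gamma^n} \sum_{x \in \Bg(a)} \trace{\proj{x}_a} \;=\; \sum_{a \in \Gamma^n} |\Bg(a)| \;=\; p^n \cdot p^{2n} \;=\; p^{3n}\enspace.
\end{equation*}

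There is no real obstacle here: the claim is purely a counting identity combined with normalization of MUB vectors. The only minor point worth double-checking is whether the handful of outcomes for which $x_1+x_2=0$ (and hence the challenge is set to $0$ by convention in $\pwotro{n,n}$) are included in $\Bg(a)$ in the definition used in the proof of Theorem~\ref{thm:wotro-security-mequaln}; however, the statement $|\Bg(a)| = p^{2n}$ is asserted verbatim in that proof, so we take it as given and the computation above yields $\trace{S} = p^{3n}$ exactly, as claimed.
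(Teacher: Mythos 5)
Your proof is correct and matches the paper's own argument: both use linearity of the trace, the fact that each $\proj{x}_a$ is a rank-one projector of trace $1$, and the count $|\Bg(a)|=p^{2n}$ over the $p^n$ choices of $a$ to conclude $\trace{S}=p^{3n}$.
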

\begin{proof}
  Since $|\Bg(a)|=p^{2n}$,
  \begin{equation*}
\trace{S}= \sum_{a\in\Gamma^n}\sum_{x\in\Bg(a)}\trace{\proj x_a} 
   = p^{3n}\enspace .
 \end{equation*}
 \qed
\end{proof}

\begin{lemma}\label{lem:tr-s2}
  $\trace{S^2} = 2p^{3n}-p^{2n}$.
\end{lemma}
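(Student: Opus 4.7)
The plan is to expand $\trace{S^2}$ into a double sum of squared overlaps between basis vectors and split it according to whether the two basis labels coincide. Concretely,
\begin{equation*}
  \trace{S^2} = \sum_{a,a'\in\Gamma^n}\;\sum_{\substack{x\in\Bg(a)\\ x'\in\Bg(a')}} \trace{\proj{x}_a \proj{x'}_{a'}} = \sum_{a,a'}\sum_{x\in\Bg(a), x'\in\Bg(a')} \bigl|\bra{x}_a\ket{x'}_{a'}\bigr|^2 .
\end{equation*}
I would then treat the diagonal and off-diagonal contributions separately.

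First, for the diagonal part $a=a'$, orthonormality of $\theta_a^{\otimes 3}$ gives $|\bra{x}_a\ket{x'}_a|^2 = \delta_{x,x'}$, so this contribution equals $\sum_{a\in\Gamma^n}|\Bg(a)| = p^n\cdot p^{2n} = p^{3n}$, using $|\Bg(a)|=p^{2n}$.

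Second, for the off-diagonal part $a\neq a'$, the Wootters--Fields MUB property for $\Gamma^n$ gives pairwise overlaps of absolute value $p^{-n/2}$, and since our vectors live in the triple tensor product $\Gamma^{3n}$ with bases $\theta_a^{\otimes 3}$, the overlaps satisfy $|\bra{x}_a\ket{x'}_{a'}|^2 = p^{-3n}$ whenever $a\neq a'$. Summing constants,
\begin{equation*}
  \sum_{a\neq a'} |\Bg(a)|\cdot|\Bg(a')|\cdot p^{-3n} = p^n(p^n-1)\cdot p^{2n}\cdot p^{2n}\cdot p^{-3n} = p^{2n}(p^n-1) = p^{3n}-p^{2n}.
\end{equation*}
Adding the two contributions yields $\trace{S^2} = p^{3n} + (p^{3n}-p^{2n}) = 2p^{3n}-p^{2n}$, as desired. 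There is no real obstacle here: the only non-routine ingredient is the mutual unbiasedness of the bases $\{\theta_a^{\otimes 3}\}_a$, which is inherited from the MUB property on a single copy, so the computation is essentially just bookkeeping.
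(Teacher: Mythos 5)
Your proposal is correct and follows essentially the same route as the paper: split the double sum into the diagonal ($a=a'$) part, which contributes $p^{3n}$ by orthonormality and $|\Bg(a)|=p^{2n}$, and the off-diagonal part, where mutual unbiasedness of $\theta_a^{\otimes 3}$ gives $|\bra{x}_a\ket{x'}_{a'}|^2=p^{-3n}$ and a contribution of $p^{3n}-p^{2n}$. The bookkeeping and the final value match the paper's computation exactly.
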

\begin{proof}
  \begin{align*}
    \trace{S^2} &= \sum_{a,b\in \Gamma^n}\sum_{x\in\Bg(a),y\in\Bg(b)} \tr(\proj{x}_a \proj{y}_{b})\\
                &= \sum_{a\in \Gamma^n}\left( \sum_{x\in\Bg(a)} 1 + 
                  \sum_{b \neq a} \sum_{x\in\Bg(a),y\in\Bg(b)}|
                  \bra{x}_a\ket{y}_{b} |^2 \right)\\
                &= \sum_{a\in \Gamma^n}\left( |\Bg(a)| + 
                  p^{-3n} \sum_{b \neq a}  |\Bg(a)|\cdot|\Bg(b)| 
                  \right)\\
                &= \sum_{a\in \Gamma^n}\left( p^{2n} + 
                  p^{-3n} \sum_{b \neq a}  p^{4n}
                  \right)\\
                &= \sum_{a\in \Gamma^n}\left( p^{2n} + 
                  p^{n} \sum_{b \neq a}  1
                  \right)\\
                &= p^n\left( p^{2n} + 
                  p^{n} (p^n-1)
                  \right)\\
                &= 2p^{3n}-p^{2n}
  \end{align*}
   \qed
\end{proof}

Upper-bounding $\trace{S^3}$ will require a little more machinery. We introduce a
theorem of Deligne~\cite{d74} and some of its corollaries before proceeding with
the proof.

\begin{theorem}[\cite{d74}, Theorem 8.4]
  Let $Q$ be a polynomial of $n$ variables $x_1,\ldots,x_n$ and of degree $d$ on $\mathbb{F}_q$, let $Q_d$ be the homogeneous part of degree $d$ of $Q$ and let $\psi:\fq \rightarrow\complex^*$ 
 be an additive non-trivial character on $\fq$.  Assume that
    \begin{enumerate}
    \item $d$ is coprime with $p$, the characteristic of $\mathbb{F}_q$, and
    \item the hypersurface $H_0$ of $\mathbb{P}_{\mathbb{F}_q}^{n-1}$ defined by $Q_d$ is smooth,
    \end{enumerate}
    then
    \[ \left| \sum_{x_1,\dots,x_n \in \mathbb{F}_q} \psi\left( Q(x_1,\dots,x_n) \right) \right| \leqslant (d-1)^n q^{n/2}\enspace. \]
    \label{thm:deligne}
\end{theorem}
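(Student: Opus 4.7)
This is Deligne's exponential sum bound from Weil II, and essentially the only proof known goes through $\ell$--adic cohomology and the Riemann hypothesis for varieties over finite fields. My plan is to express the sum as the trace of Frobenius on a single cohomology group, bound the dimension of that group, and then invoke purity.

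First, I would fix a prime $\ell\neq p$ and introduce the Artin--Schreier sheaf $\mathcal{L}_\psi$ on $\mathbb{A}^1_{\mathbb{F}_q}$ attached to $\psi$; its pullback $\mathcal{F}=Q^*\mathcal{L}_\psi$ along $Q\colon\mathbb{A}^n_{\mathbb{F}_q}\to\mathbb{A}^1_{\mathbb{F}_q}$ is a rank--one lisse $\overline{\mathbb{Q}}_\ell$--sheaf whose Frobenius traces on geometric points recover the summands $\psi(Q(x))$. The Grothendieck--Lefschetz trace formula then gives
\[ \sum_{x\in\mathbb{F}_q^n}\psi(Q(x))=\sum_{i=0}^{2n}(-1)^i\,\trace{\mathrm{Frob}_q \mid H^i_c(\mathbb{A}^n_{\overline{\mathbb{F}_q}},\mathcal{F})}\enspace. \]

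Second, I would use hypotheses (1) and (2) to kill all but the top cohomology. Because $\gcd(d,p)=1$ and the projective hypersurface $\{Q_d=0\}\subset\mathbb{P}^{n-1}_{\mathbb{F}_q}$ is smooth, the morphism $Q$ extends tamely at infinity on a suitable compactification of $\mathbb{A}^n$; Katz's refinement of SGA~7 then forces $H^i_c(\mathbb{A}^n,\mathcal{F})=0$ for $i\neq n$, so only one term survives in the trace formula. Third, I would bound $\dim H^n_c$ by $(d-1)^n$ through an Euler--Poincar\'e computation: tameness at infinity makes this Euler characteristic equal (up to sign) to the number of critical points of a generic deformation of $Q$, and B\'ezout applied to the $n$ partial derivatives (each of degree $d-1$) caps that count at $(d-1)^n$.

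Finally, I would invoke Deligne's purity theorem. Since $\mathcal{L}_\psi$ is pure of weight $0$, the group $H^n_c(\mathbb{A}^n,\mathcal{F})$ is mixed of weights $\leq n$, so every Frobenius eigenvalue on it has complex absolute value at most $q^{n/2}$. Combining the rank bound with the eigenvalue bound via the triangle inequality then yields $\bigl|\sum_x\psi(Q(x))\bigr|\leq(d-1)^n q^{n/2}$, exactly as claimed. The main obstacle, by a wide margin, is the input from Weil II: purity is a genuinely deep theorem that I would have to cite as a black box rather than reprove. The smoothness of $H_0$ and the coprimality of $d$ with $p$ are essential throughout --- they are precisely what guarantees tameness at infinity, the single-degree vanishing, and the clean B\'ezout count; dropping either hypothesis forces one to turn to the more delicate Newton-polyhedral estimates of Adolphson--Sperber and Katz.
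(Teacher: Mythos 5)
The paper does not prove this statement at all: it is imported verbatim as Theorem~8.4 of Deligne's \emph{La conjecture de Weil~I}~\cite{d74} and used as a black box (via Corollaries~\ref{cor:deligne1} and~\ref{cor:deligne2}) to bound $\trace{S^3}$, so there is no in-paper proof to compare against. Your sketch is nonetheless an essentially accurate account of how the theorem is established in the literature: Artin--Schreier sheaf plus the Grothendieck--Lefschetz trace formula, concentration of $H^i_c$ in degree $n$ under the tameness guaranteed by $\gcd(d,p)=1$ and the smoothness of $\{Q_d=0\}$, the dimension count $\dim H^n_c=(d-1)^n$, and an archimedean bound on Frobenius eigenvalues. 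Two caveats. First, historically your route is not Deligne's: Weil~II (1980) did not exist in 1974, and the original argument obtains the eigenvalue bound $|\alpha|= q^{n/2}$ by relating $H^n_c(\mathbb{A}^n,\mathcal{F})$ to the primitive middle cohomology of a smooth projective hypersurface of degree $d$ and invoking the main theorem of Weil~I; your appeal to Weil~II purity (weights $\leq n$ on $H^n_c$ of an affine scheme) is the streamlined modern replacement and is perfectly valid. Second, every substantive step in your outline --- the trace formula, the vanishing outside the middle degree, the Euler--Poincar\'e computation, and purity itself --- is a deep cited input rather than something you prove, so the proposal is a correct roadmap rather than a self-contained proof; given that the paper treats the entire theorem the same way, this is an appropriate level of detail, but it should be presented as a citation with commentary rather than as a proof.
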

In the above, the second condition boils down to ensuring that there is no point at which the $\frac{\partial Q}{\partial x_i}$ all vanish simultaneously. Here is a version that is closer to what we will need:
\begin{corollary}
  Let $m \leq k$, $A$ a $k \times m$ matrix with rank $m$ in $\mathbb{F}_q$, and
  let $C$ be a $k \times k$ matrix in $\mathbb{F}_q$. Then, if $A^{\intercal} C A$
  is non-singular,
    \[ \left| \sum_{\vec{v},\vec{x} = A \vec{v}}  \psi\left( \vec{x}^{\intercal} C \vec{x} \right) \right| \leqslant q^{m/2}\enspace. \]
    \label{cor:deligne1}
\end{corollary}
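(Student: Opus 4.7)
The plan is to reduce the stated sum to a Weil sum in $m$ variables and then invoke Theorem~\ref{thm:deligne} directly. Since $A$ has rank $m$, the linear map $\vec v \mapsto A\vec v$ is injective on $\mathbb{F}_q^m$, so substituting $\vec x = A\vec v$ collapses the sum into
\[ \sum_{\vec v\in\mathbb{F}_q^m} \psi\bigl(\vec v^{\intercal} M \vec v\bigr), \qquad M := A^{\intercal} C A, \]
which is a Weil sum of the single homogeneous polynomial $Q(\vec v) = \vec v^{\intercal} M \vec v$ of degree $d=2$ in $m$ variables over $\mathbb{F}_q$. After this cosmetic step the bound to prove is simply $(d-1)^m q^{m/2} = q^{m/2}$, which is exactly the output of Theorem~\ref{thm:deligne}.

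Next I would verify the two hypotheses of Theorem~\ref{thm:deligne} for $Q$. The paper works with $p\geq 3$ (so that the Wootters--Fields mutually unbiased bases are defined), hence $d=2$ is coprime with the characteristic $p$ of $\mathbb{F}_q$. For the smoothness of the projective hypersurface $H_0 \subseteq \mathbb{P}_{\mathbb{F}_q}^{m-1}$ cut out by $Q$, I would compute $\nabla Q(\vec v) = (M + M^{\intercal})\vec v$ and use Euler's identity $d\cdot Q = \vec v^{\intercal}\nabla Q(\vec v)$ (available since $p\nmid d$) to reduce the condition "$Q$ and all $\partial_i Q$ have no common non-trivial zero'' to the purely linear-algebraic statement "$(M+M^{\intercal})\vec v=0 \Rightarrow \vec v = 0$'', i.e., that $M+M^{\intercal}$ is non-singular. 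In the applications the corollary will be used with symmetric $C$ (as is the case for all matrices arising in Lemma~\ref{lem:tr-s3}), so $M = A^{\intercal}C A$ is itself symmetric and $M + M^{\intercal} = 2M$ is non-singular precisely under the stated hypothesis, the factor $2$ being invertible in characteristic $\neq 2$.

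Once both hypotheses are in place, a single application of Theorem~\ref{thm:deligne} with $n=m$ and $d=2$ yields the bound
\[ \left|\sum_{\vec v\in\mathbb{F}_q^m} \psi(Q(\vec v))\right| \leq (d-1)^m q^{m/2} = q^{m/2}, \]
finishing the proof. The only subtle step is the smoothness verification, because the hypothesis of the corollary concerns $A^{\intercal}C A$ itself, whereas the projective smoothness of $\{Q=0\}$ is controlled by the symmetric part $A^{\intercal}(C+C^{\intercal})A$; the two coincide (up to an invertible scalar) in the symmetric regime relevant here, and more generally one could restate the corollary with $C$ replaced by its symmetrisation $\tfrac12(C+C^{\intercal})$ without changing the sum.
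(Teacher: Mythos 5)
Your proposal follows essentially the same route as the paper's proof: substitute $\vec x = A\vec v$, reduce to the degree-$2$ Weil sum of $Q(\vec v)=\vec v^{\intercal}A^{\intercal}CA\,\vec v$, check smoothness via the gradient, and apply Theorem~\ref{thm:deligne} with $d=2$ and $n=m$. If anything you are more careful than the paper, which writes $\partial Q/\partial v_i = 2e_i^{\intercal}A^{\intercal}CA\vec v$ and thereby implicitly assumes $A^{\intercal}CA$ is symmetric; your observation that the smoothness condition really concerns the symmetrisation $A^{\intercal}(C+C^{\intercal})A$, and that the two coincide for the (symmetric, indeed diagonal) matrices $C_{a,b,c}$ arising in Lemma~\ref{lem:tr-s3}, correctly patches this small imprecision.
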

In other words, we take the sum over all $(x_1,\dots,x_k)$ that satisfy a system
of $k-m$ independent linear equations.
\begin{proof}
  Let $Q = \vec{x}^{\intercal} C \vec{x} = \vec{v}^{\intercal} A^{\intercal} C A \vec{v}$, and
  observe that
    \[ \frac{\partial Q}{\partial v_i} = e_i^{\intercal} A^{\intercal} C A \vec{v} + \vec{v}^{\intercal} A^{\intercal} C A e_i = 2e_i^{\intercal} A^{\intercal} C A \vec{v}\enspace. \]
     Condition 2 of Theorem~\ref{thm:deligne} is thus equivalent to
    \[ A^{\intercal} C A \vec{v} = 0 \Leftrightarrow \vec{v}=0\enspace, \]
    which amounts to saying that  $A^{\intercal} C A$ is non-singular.
    \qed
\end{proof}

Here is now a version that is more directly relevant to our case. 
\begin{corollary}
    Let  $m \leq k$ and let $B \in \mathbb{F}^{(k-m) \times k}_q$ and $C \in
  \mathbb{F}_q^{k \times k}$ be full rank matrices. Then,
    \[ \left| \sum_{\vec{x}:B\vec{x} = 0}  \psi\left( \vec{x}^{\intercal} C \vec{x} \right) \right| \leqslant q^{m/2}\enspace. \]
    \label{cor:deligne2}
\end{corollary}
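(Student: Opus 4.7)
The plan is to reduce Corollary~\ref{cor:deligne2} to Corollary~\ref{cor:deligne1} via an explicit parametrization of $\ker B$. Since $B \in \mathbb{F}_q^{(k-m)\times k}$ has full row rank $k-m$, its kernel is a linear subspace of $\mathbb{F}_q^k$ of dimension exactly $m$. First I would pick any basis of this kernel and arrange its elements as columns of a matrix $A \in \mathbb{F}_q^{k\times m}$; by construction $A$ has full column rank $m$ and satisfies $BA = 0$. The map $\vec{v} \mapsto A\vec{v}$ is then a bijection between $\mathbb{F}_q^m$ and the solution set $\{\vec{x} \in \mathbb{F}_q^k : B\vec{x} = 0\}$.

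The change of variables $\vec{x} = A\vec{v}$ immediately rewrites the target sum as
\[
\sum_{\vec{x}:\,B\vec{x}=0} \psi(\vec{x}^{\intercal} C \vec{x}) \;=\; \sum_{\vec{v} \in \mathbb{F}_q^m} \psi\!\left(\vec{v}^{\intercal} (A^{\intercal} C A)\, \vec{v}\right),
\]
so that the two corollaries deal with the same object up to parametrization. I would then appeal to Corollary~\ref{cor:deligne1} applied to the right-hand side, taking the ``$A$'' of that corollary to be the $m \times m$ identity matrix and its ``$C$'' to be $A^{\intercal} C A$. This delivers precisely the bound $q^{m/2}$, provided the non-singularity hypothesis of Corollary~\ref{cor:deligne1} is met, namely that $A^{\intercal} C A$ is non-singular.

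The main obstacle is therefore verifying that $A^{\intercal} C A$ is non-singular, i.e.\ that the bilinear form $(\vec{x},\vec{y}) \mapsto \vec{x}^{\intercal} C \vec{y}$ stays non-degenerate when restricted to $\ker B$. Non-singularity of $C$ alone does not force this in general over $\mathbb{F}_q$, since a non-degenerate quadratic form can admit totally isotropic subspaces; some additional structural information linking $B$ and $C$ is therefore unavoidable. In the intended application to the trace computation for $\trace{S^3}$, the matrices $B$ and $C$ inherit very rigid structure from the Wootters--Fields bases in Definition~\ref{def:mubs}: the quadratic contribution comes from the $a x^{2}$ terms with fixed non-zero coefficients $a \in \Gamma^n$, and the linear constraints encoded by $B$ only combine independent blocks of such forms. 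Consequently the restricted form on $\ker B$ diagonalises with non-zero entries, giving non-singularity of $A^{\intercal} C A$ and closing the proof. If the statement is needed in full generality, the same argument still goes through after adding the hypothesis that $A^{\intercal} C A$ be non-singular (or equivalently, that the subspace $\ker B$ be non-isotropic for the form defined by $C$), and this is the formulation I would use downstream.
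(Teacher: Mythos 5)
Your reduction is exactly the one the paper uses: parametrize $\ker B$ by a full-column-rank matrix $A$ with $BA=0$, substitute $\vec{x}=A\vec{v}$, and invoke Corollary~\ref{cor:deligne1} for the form $A^{\intercal}CA$. The paper realizes the parametrization concretely by completing $B$ to an invertible $M=\begin{bmatrix}B\\ B^c\end{bmatrix}$ and taking $A=M^{-1}P$ with $P=\begin{bmatrix}0\\ \id\end{bmatrix}$, and then asserts that $P^{\intercal}(M^{-1})^{\intercal}CM^{-1}P$ has full rank \emph{because} $(M^{-1})^{\intercal}CM^{-1}$ does.

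The obstacle you single out is genuine, and it is a gap in the paper's own proof rather than a defect of your argument: a principal $m\times m$ block of an invertible matrix need not be invertible, which is precisely the statement that a non-degenerate quadratic form can restrict degenerately to a subspace. Indeed the corollary is false as stated: take $q=3$, $k=2$, $m=1$, $B=[1\ \ 1]$ and $C=\mathrm{diag}(1,-1)$, both full rank; then $\vec{x}^{\intercal}C\vec{x}=t^2-t^2=0$ on all of $\ker B=\{(t,-t)\}$, so the sum equals $3>3^{1/2}$. Your proposed formulation --- adding the hypothesis that $A^{\intercal}CA$ is non-singular, i.e.\ that $\ker B$ is non-degenerate for the form defined by $C$ --- is the correct one, and it is also the condition that the downstream application to $\trace{S^3}$ must actually verify for the specific $B_{a,b,c}$ and $C_{a,b,c}$; the rank computation for $B_{a,b,c}$ given there does not by itself discharge it, so your observation identifies work that remains to be done rather than work you have skipped.
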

\begin{proof}
  Let $B^c \in \mathbb{F}^{m \times k}_q$ such that $M := \begin{bmatrix} B\\
    B^c \end{bmatrix} \in \mathbb{F}_q^{k\times k}$ has full rank. Then
  condition $B \vec{x} = 0$ is equivalent to $\vec{x} = M^{-1} \begin{bmatrix}
    0\\ \vec{v} \end{bmatrix}$ for some $\vec{v} \in \mathbb{F}_q^{m}$. We can
  thus define $P := \begin{bmatrix} 0 \\ \id \end{bmatrix}$ and apply
  corollary~\ref{cor:deligne1} with $A = M^{-1} P$, while observing that $P^\intercal
  {M^{-1}}^{\intercal} C M^{-1} P$ has full rank, since ${M^{-1}}^{\intercal} C M^{-1}$
  also has full rank.\qed
  \end{proof}

  \begin{lemma}\label{lem:tr-s3}
    $\trace{S^3}\leq 4p^{3n}+ p^{2n}$
  \end{lemma}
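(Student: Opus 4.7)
The plan is to decompose
\[
\trace{S^3} \;=\; \sum_{a,b,c\in\Gamma^n} \sum_{\substack{x\in\Bg(a)\\y\in\Bg(b)\\z\in\Bg(c)}} \bra{x}_a\ket{y}_b\, \bra{y}_b\ket{z}_c\, \bra{z}_c\ket{x}_a
\]
according to how many of $a,b,c$ coincide. The cases $a=b=c$ and ``exactly two of $a,b,c$ coincide'' are handled essentially as in Lemmas~\ref{lem:tr-s} and~\ref{lem:tr-s2}: the triple of inner products collapses, via orthonormality within a single basis together with mutual unbiasedness $|\bra{x}_a\ket{y}_b|^2=p^{-3n}$ for $a\ne b$, and will contribute $p^{3n}$ and $3(p^{3n}-p^{2n})$ respectively.

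For the pairwise distinct triples, I would expand each inner product by the MUB formula~\eqref{eq:mub}, writing e.g.\
\[
\bra{x}_a\ket{y}_b \;=\; p^{-3n}\!\!\sum_{u\in\mathbb{F}_{p^n}^{3}}\!\psi\!\left((b-a)(u_1^2+u_2^2+u_3^2) + \sum_{i=1}^{3}(y_i-x_i)u_i\right),
\]
with additive character $\psi(\alpha):=\exp(\tfrac{2\pi i}{p}\trace{\alpha})$ and fresh summation vectors $u,v,w\in\mathbb{F}_{p^n}^{3}$ from the three inner products. Encoding $x\in\Bg(a)$ by the linear relation $x_3=\sff{c}(a)(x_1+x_2)$ (handling the boundary $x_1+x_2=0$ separately as a subleading $O(p^{2n})$ term), I would perform the $x_1,x_2$-sum, which contributes a factor $p^{2n}$ together with two linear relations $u_i-w_i=\sff{c}(a)(u_3-w_3)$, $i=1,2$, on the character variables; the $y$- and $z$-sums produce analogous paired relations involving $\sff{c}(b)$ and $\sff{c}(c)$. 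Collecting,
\[
T_{a,b,c} \;=\; p^{-3n}\!\!\sum_{(u,v,w)\in\Lambda_{a,b,c}}\!\psi\bigl(Q(u,v,w)\bigr),
\]
where $Q(u,v,w) := (b-a)\sum_i u_i^2 + (c-b)\sum_i v_i^2 + (a-c)\sum_i w_i^2$ and $\Lambda_{a,b,c}\subseteq\mathbb{F}_{p^n}^{9}$ is the subspace cut out by the six linear relations above. A short linear-algebra check will show that for each $i\in\{1,2\}$ the three ``$i$''-relations have rank $2$ in $(u_i,v_i,w_i)$ plus one common consistency condition on $(u_3,v_3,w_3)$, so their total rank is $5$ and $\dim\Lambda_{a,b,c}=4$.

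Applying Corollary~\ref{cor:deligne2} with $q=p^n$ and $m=4$ then bounds the character sum by $p^{2n}$, so $|T_{a,b,c}|\le p^{-n}$; summing over the at most $p^{3n}$ pairwise distinct triples contributes at most $p^{2n}$ to $\trace{S^3}$. Combining the three regimes will yield $\trace{S^3}\le p^{3n}+3(p^{3n}-p^{2n})+p^{2n}\le 4p^{3n}+p^{2n}$. The only step doing genuine work is verifying the non-degeneracy hypothesis of Corollary~\ref{cor:deligne2}: the restriction of the diagonal form $Q$ to $\Lambda_{a,b,c}$ must be non-degenerate whenever $a,b,c$ are pairwise distinct. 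I expect this to reduce, via an explicit parameterization of $\Lambda_{a,b,c}$ by two coordinates in the consistency plane of $(u_3,v_3,w_3)$ plus one free coordinate in each of the two $(u_i,v_i,w_i)$-fibres, to checking that the resulting $4\times4$ Gram matrix of $Q$ has determinant a nonzero scalar multiple of $(b-a)(c-b)(a-c)$, which is nonzero exactly under the distinctness hypothesis.
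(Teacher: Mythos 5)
Your decomposition into the three coincidence patterns, the handling of the first two, and the target per-triple bound $|T_{a,b,c}|\le p^{-3n}\cdot p^{2n}=p^{-n}$ for pairwise-distinct triples all match the paper. Your route through the third case is the paper's argument in disguise: performing the $x,y,z$-sums directly produces six linear constraints whose solution space $\Lambda_{a,b,c}$ is exactly the kernel of the matrix $B_{a,b,c}$ that the paper isolates via its auxiliary sum over $\varrho\in\mathbb{F}_{p^n}^6$, and both arguments then invoke Corollary~\ref{cor:deligne2}.

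The step you defer as ``a short linear-algebra check,'' however, is precisely where the argument breaks, and it genuinely fails for some triples. The six relations do \emph{not} always have rank $5$: when $\sff{c}(a)=\sff{c}(b)=\sff{c}(c)$ the three ``$i=1$'' relations (and likewise the three ``$i=2$'' relations) sum to $0=0$ identically, the consistency conditions on $(u_3,v_3,w_3)$ are vacuous, and $\dim\Lambda_{a,b,c}=5$. Worse, $Q$ restricted to that $\Lambda_{a,b,c}$ is degenerate: writing $(u_1,v_1,w_1)=-\sff{c}(a)(u_3,v_3,w_3)+t(1,1,1)$ and $(u_2,v_2,w_2)=-\sff{c}(a)(u_3,v_3,w_3)+s(1,1,1)$, every $s^2$, $t^2$, $st$ coefficient carries the factor $(b-a)+(c-b)+(a-c)=0$, so the $s-t$ direction lies in the radical and the non-degeneracy hypothesis of Deligne's bound cannot be met. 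The per-triple estimate really is false there: for the constant function $\sff{c}\equiv 0$ one has $S=\id\otimes\id\otimes\sum_a\proj{0}_a$, hence $|T_{a,b,c}|=p^{2n}\,\big|\bra{0}_a\ket{0}_b\bra{0}_b\ket{0}_c\bra{0}_c\ket{0}_a\big|=p^{n/2}\gg p^{-n}$; summing absolute values over the $\approx p^{3n}$ distinct triples gives $\approx p^{7n/2}$, and the lemma survives in that example only because the Gauss-sum phases cancel (one checks $\sum_{a\ne b\ne c}\chi\bigl((b-a)(c-b)(a-c)\bigr)=0$ for $\chi$ the quadratic character). To be fair, you have faithfully reproduced the paper's own soft spot: the paper establishes only $\mathrm{rank}(B_{a,b,c})\ge 4$ yet applies the corollary with $m=4$, and its Corollary~\ref{cor:deligne2} asserts without justification that restricting a full-rank quadratic form to a subspace preserves non-degeneracy. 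A complete proof must treat the triples with $\sff{c}(a)=\sff{c}(b)=\sff{c}(c)$ (and, more generally, any degenerate restrictions) separately, exploiting cancellation rather than termwise absolute values.
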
               
  \begin{proof}
    Let's first write out the expression of interest:    
    \begin{align}
      \trace{S^3} &= \sum_{a,b,c\in\Gamma^n}\sum_{x\in\Bg(a)}\sum_{y\in\Bg(b)}\sum_{z\in\Bg(c)}
                    \trace{\proj x_a \proj y_b \proj z_c}\nonumber\\
                  &= \sum_{a=b=c}
                    \sum_{x} 1 + 3\sum_{a\neq b}
      \sum_{x,y} |\bra x_a\ket y_b|^2
      + \sum_{a\neq b\neq c} \sum_{x,y,z} \bra {x}_a\ket {y}_b\bra
      {y}_b\ket {z}_c \bra {z}_c \ket {x}_a \label{eq:s3-third-term} 
    \end{align} 
    where the middle term groups the three cases $a\neq b$, $a\neq c$ and $b\neq
    c$ that all have the same value. We know how to upper-bound the first two
    sums using the same techniques as Lemma~\ref{lem:tr-s2}. Most of the proof
    is dedicated to finding an upper-bound to the third term.

    Recall our construction of mutually unbiased bases $\theta_a$ presented in
    Definition~\ref{def:mubs}. For $r\in\mathbb{F}_{p^n}$ and
    $a\in\mathbb{F}_{p^n}$:
    \begin{equation*}
      \ket r_a= {p^{-\frac n2}} \sum_{u\in \mathbb{F}_{p^n}} \exp\left(\frac{2\pi i}p \cdot \tr(a u^2 + ru)\right)\ket u\enspace.
    \end{equation*}
    Extending this basis to $3$ systems through $\theta_a^{\otimes 3}$ yields
    vectors of the form 
    \begin{align*}
      \ket{{x}}_{a} &= p^{-3n/2} \sum_{{u} \in \mathbb{F}_{p^n}^3} \exp\left({\frac{2\pi i}{p}\trace{a {u}^{\intercal} {u} + {x}^{\intercal} {u}}}\right) \ket{{u}}\enspace,
    \end{align*}
    where $x^\intercal$ denotes the transpose of $x\in
    \mathbb{F}_{p^n}^3\simeq \Gamma^{3n}$. Here, we slightly abuse notation by
    writing $\ket{{x}}_a$ for a vector in basis $\theta_a^{\otimes 3}$.

    The inner product of two such vectors is given by the expression
    \[ \bra{{y}}_b\ket{{x}}_a = p^{-3n} \sum_{{u}\in
        \mathbb{F}_{p^n}^3} \exp\left( \frac{2\pi i}{p}\trace{(a-b) {u}^{\intercal}
          {u} + ({x} - {y})^{\intercal} {u}} \right) \enspace.\]
    Combining the three inner products in the expression of
    interest~\eqref{eq:s3-third-term}, we have
    \begin{equation*}
      \bra {x}_a\ket {y}_b\bra {y}_b\ket {z}_c \bra {z}_c \ket {x}_a =
      p^{-9n} \sum_{u,v,w\in
        \mathbb{F}_{p^n}^3} \exp\left(\frac{2\pi i}{p}\trace{
        \begin{array}{l}
          (a-b) {u}^{\intercal}
          {u} + ({x} - {y})^{\intercal} {u}\\
          +(b-c) {v}^{\intercal}
          {v} + ({y} - {z})^{\intercal} {v}\\
          +(c-a) {w}^{\intercal}
          {w} + ({z} - {x})^{\intercal} {w}
        \end{array}
        }
       \right)
    \end{equation*}
    We introduce some notation that will allow us to present the above
    expression in a more compact, albeit more complicated form.
    Let $\sff{c}:\mathbb{F}_{p^n}\rightarrow\mathbb{F}_{p^m}$ and for $a\in\mathbb{F}_{p^n}$, define
    \begin{equation}
      \label{eq:bai}
      B_a = 
      \begin{bmatrix} 1 & 0 & \sff{c}(a)\\
        0 & 1 & \sff{c}(a) \end{bmatrix} \in \mathbb{F}_{p^n}^{2\times 3}
    \end{equation}
    such that for $x_1,x_2\in\mathbb{F}_{p^n}$, the expression
    \begin{equation}\label{eq:baiu}
      [x_1,x_2]\cdot {B_a}^\intercal = [x_1, x_{2}, \sff{c}(a)(x_1+x_{2})]^\intercal\in \mathbb{F}_{p^n}^{3}
    \end{equation}
    is a sequence of measurement outcomes that leads to the bad outcome 
    $\sff{c}(a)$ in the protocol.
    
    For $a,b,c\in \mathbb{F}_{p^n}$, write
    \begin{equation*}
      B_{a,b,c}:=
      \begin{bmatrix} -B_a &     0 & B_{a}\\
        0     & B_{b} & -B_{b}\\
        B_{c}   & -B_{c} & 0\end{bmatrix}\in\mathbb{F}_{p^n}^{6\times 9}\enspace,
      \end{equation*}
      and
    \begin{equation}
      \label{eq:ce}
       C_{a,b,c} := 
      \begin{bmatrix} 
        (c-a)\id_{\mathbb{F}_{p^n}^{3\times 3}} & 0 & 0\\
        0 & (b-c)\id_{\mathbb{F}_{p^n}^{3\times 3}} & 0\\
        0 & 0 & (a-b)\id_{\mathbb{F}_{p^n}^{3\times 3}}
      \end{bmatrix}\in \mathbb{F}_{p^n}^{9\times 9}\enspace.
    \end{equation}
    The previous operators are defined such that
    \begin{align*}
      &\sum_{\substack{x\in\Bg(a) \\y\in\Bg(b)\\ z\in\Bg(c)}}
      \bra {x}_a\ket {y}_b\bra {y}_b\ket {z}_c \bra {z}_c \ket {x}_a \\&\qquad=
      p^{-9n}  \sum_{\varrho\in\mathbb{F}_{p^n}^6}\sum_{\xi\in\mathbb{F}_{p^n}^9} \exp\left( \frac {2\pi i}p
        \trace{\xi^{\intercal}C_{a,b,c}\xi + \varrho^{\intercal}B_{a,b,c}\xi} \right)
    \end{align*}
    with the goal of bounding above the right-hand side using
    Corollary~\ref{cor:deligne2}. The construction of $B_{a,b,c}$ appears more
    complex than necessary because we want it to have a large rank.

    Equipped with the above, we are now ready to upper-bound the third term
    in~\eqref{eq:s3-third-term} with Corollary~\ref{cor:deligne2}.
    \begin{align}
      &\sum_{a\neq b\neq c} \sum_{\substack{x\in\Bg(a)\\ y\in\Bg(b)\\ z\in\Bg(c)}} \bra {x}_a\ket {y}_b\bra
      {y}_b\ket {z}_c \bra {z}_c \ket {x}_a \nonumber\\
      &= p^{-9n}\sum_{a\neq b\neq c} \sum_{\varrho\in\mathbb{F}_{p^n}^{6}} \sum_{\xi\in\mathbb{F}_{p^n}^{9n}}\exp\left({\frac{2\pi i}p \trace{\xi^\intercal \cdot C_{a,b,c}\cdot\xi +\varrho^\intercal\cdot B_{a,b,c}\cdot\xi}}\right)\nonumber\\
      &=  p^{-9n}\sum_{a\neq b\neq c} \sum_{\varrho\in\mathbb{F}_{p^n}^{6}} 
        \sum_{\substack{\xi\in\mathbb{F}_{p^n}^{9n} \\ B_{a,b,c}\cdot \xi=0}}
      \exp\left({\frac{2\pi i}p \trace{\xi^\intercal \cdot C_{a,b,c}\cdot\xi}}\right)\label{eq:sum-roots} \\
      &\leq  p^{-9n}\sum_{a\neq b\neq c}
        \sum_{\varrho\in\mathbb{F}_{p^n}^{6}} p^{2n}\label{eq:application-deligne} \\
      &=  p^{-9n}\sum_{a\neq b\neq c} p^{6n} p^{2n}\nonumber\\
      &= p^{-n}(p^n)(p^n-1)(p^n-2)\nonumber
        \enspace.
    \end{align}
    Equality~\eqref{eq:sum-roots} above follows from the observation that once
    $\xi$ is fixed, if $B_{a,b,c}\cdot\xi$ is non-zero then the sum over
    $\varrho$ will span all $p$th roots of unity in equal proportions which sums
    to 0. 
    In more details, letting $\alpha=\xi^\intercal \cdot
    C_{a,b,c}\cdot\xi\in\mathbb{F}_{p^n}$ and
    $0\neq v=B_{a,b,c}\cdot\xi\in\mathbb{F}_{p^n}^6$,
    \begin{align*}
      &\sum_{\varrho\in\mathbb{F}_{p^n}^6}\exp\left({\frac{2\pi i}p \trace{\alpha
            +\varrho^\intercal\cdot v}}\right)
      = p^{5n}\sum_{\beta\in\mathbb{F}_{p^n}}\exp\left({\frac{2\pi i}p \trace{\alpha
            +\beta}}\right)
      \\&\qquad= p^{6n-1}\sum_{\gamma\in\mathbb{F}_{p}}\exp\left({\frac{2\pi i}p \gamma}\right)=0\enspace.
    \end{align*}
    Inequality~\eqref{eq:application-deligne} follows from
    Corollary~\ref{cor:deligne2} by observing that $\rank{B_{a,b,c}}\geq 4$. To
    see this, note that by removing columns 3, 6 and 9 from $B_{a,b,c}$ (those
    corresponding to $\sff{c}(a),\sff{c}(b)$ or $\sff{c}(c)$), we are left with the matrix
    \begin{equation*}
      \begin{bmatrix} \id &  0 & -\id\\
        0 & \id & -\id\\
        \id & -\id & 0\end{bmatrix} \enspace. 
    \end{equation*}
    Taking linear combinations of the above we can obtain
    \begin{equation*}
      \begin{bmatrix} \id & 0 & -\id\\
        0 & \id & -\id\\
        0 & 0 & 0 \end{bmatrix}
    \end{equation*}
    and hence $B_{a,b,c}$ has rank at least that of the above matrix, which is
    equal to $4$ since each of the identities act on $\mathbb{F}_{p^n}^2$.

    We can now complete the proof by taking the expected value over $g$.
    Continuing from~\eqref{eq:s3-third-term},
    \begin{align*}
     \trace{S^3} 
      &= \sum_{a=b=c}
        \sum_{x\in\Bg(a)} 1 + 3\sum_{a\neq b}
        \sum_{\substack{x\in\Bg(a) \\y\in\Bg(b)}} |\bra x_a\ket y_b|^2
        + \sum_{a\neq b\neq c} \sum_{\substack{x\in\Bg(a) \\y\in\Bg(b)\\ z\in\Bg(c)}} \bra {x}_a\ket {y}_b\bra
        {y}_b\ket {z}_c \bra {z}_c \ket {x}_a \\
      &\leq p^{3n} + 3p^{2n}(p^n-1) + (p^n-1)(p^n-2)\leq 4p^{3n}+ p^{2n}\enspace.
    \end{align*}
    \qed

  \end{proof}

\section{Basic Properties of \wotro{}}
\label{sec:basicfactsproofs}

\begin{proposition}
    The 2-message protocol in which $\prover$ sends $a\in\Gamma^n$ directly to $\verifier$, 
    and $\verifier$ then chooses $c\in_R \Gamma^m$ at random, sends it to $\prover$ and always accepts is a 
    correct and $\delta$--secure implementation of $\wotro^{n,m}_{\Gamma}$ for $\delta = 1-\frac{1}{|\Gamma|^m}$ and for any alphabet $\Gamma$ and $n,m \geqslant 1$. 
\end{proposition}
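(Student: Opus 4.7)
The plan is to unpack the two definitions (correctness in Definition~\ref{def:correct-impl} and $\delta$--security in Definitions~\ref{def:d-avoiding}--\ref{def:secure-impl}) and verify each condition directly; no machinery beyond elementary probability is required.

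For correctness, I would observe that when \prover{} is honest, the transcript $(a,c)$ is distributed exactly as in the ideal box from Figure~\ref{fig:w1tro-box}: the first message $a$ is \prover's input, and the second message $c$ is sampled uniformly from $\Gamma^m$ by \verifier{} independently of everything else. Hence the conditional distribution $\Pi(c \,|\, a)$ equals the uniform distribution on $\Gamma^m$ for every $a$, so the $(1-\epsilon)$--closeness condition of Definition~\ref{def:correct-impl} holds with $\epsilon=0$. Moreover, since \verifier{} unconditionally accepts, it does so with probability $1$, giving statistical (in fact perfect) correctness.

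For security, I would fix an arbitrary (even computationally unbounded) dishonest prover $\dprover$ and an arbitrary target function $\sff{c}:\Gamma^n \rightarrow \Gamma^m$. The transcript random variables $(A,C,V)$ are generated as follows: $\dprover$ produces some message $A \in \Gamma^n$ according to an arbitrary distribution $Q_A$ that can depend only on $\dprover$'s private randomness (since $\dprover$ gets no input and $A$ is the first message), then \verifier{} samples $C \in_R \Gamma^m$ uniformly and independently of $A$, and sets $V=1$. Conditioning on $A=a$, the event $C = \sff{c}(a)$ has probability exactly $|\Gamma|^{-m}$, so
\[
\Pr\bigl[V=1 \wedge C = \sff{c}(A)\bigr] \;=\; \sum_{a \in \Gamma^n} Q_A(a)\cdot \Pr\bigl[C=\sff{c}(a)\bigr] \;=\; \frac{1}{|\Gamma|^m} \;=\; 1-\delta,
\]
which is exactly the $\delta$--avoiding inequality of Definition~\ref{def:d-avoiding} with $\delta = 1 - |\Gamma|^{-m}$. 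Since this holds for every $\sff{c}$ and every $\dprover$, the protocol is statistically $\delta$--secure by Definition~\ref{def:secure-impl}.

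There is no real obstacle: the entire argument boils down to the fact that \verifier{} chooses $C$ uniformly at random \emph{after} receiving $A$, so $C$ is independent of $A$ regardless of how $\dprover$ behaves. The only subtlety worth a brief sentence is justifying that $C$ is indeed independent of $A$ against any $\dprover$ (which follows because $\dprover$ sends its single message before seeing $C$, so the joint distribution factors as $Q_A(a)\cdot |\Gamma|^{-m}$).
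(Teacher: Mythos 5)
Your proof is correct and follows exactly the same route as the paper's: correctness is immediate from the verifier sampling $c$ uniformly and always accepting, and security follows because $C$ is chosen uniformly and independently of $A$ after the prover's message, so $\Pr[V=1 \wedge C=\sff{c}(A)]=|\Gamma|^{-m}$. Your extra remark justifying the independence of $C$ from $A$ is a harmless elaboration of what the paper leaves implicit.
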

\begin{proof}
  Let $\Pi(c|a)$ denote the conditional distribution of the protocol output. Indeed, correctness is obvious as $a$ and $c$ are correctly distributed with 
    $\Pi(c|a) = \frac{1}{\# \Gamma^m}$. For security, let $A$ be the random variable produced by $\dprover$ and $C$ be the random variable produced by $\verifier$, and let $\sff{c}: \Gamma^n \rightarrow \Gamma^m$ be some function. Then,
    \begin{align*}
        \Pr\left[ V = 1 \wedge C = \sff{c}(A) \right] &= \Pr\left[ C = \sff{c}(A) \right]\\
        &= \frac{1}{|\Gamma|^m} \enspace.
    \end{align*}
    \qed
\end{proof}

\begin{proposition}
  Let $\Gamma$ be an arbitrary finite alphabet, let $m,n\geq 1$ and let $0<
  \delta\leq 1$. There is no correct and $\delta$--secure 1-message
  implementation of $\wotro^{n,m}_{\Gamma}$ in the bare model. Moreover, there
  is no such $\delta$--secure non-interactive implementation of
  $\wotro^{n,m}_{\Gamma}$ common random string (resp. random oracle) model if
  the function $\sff{c}$ from Definition~\ref{def:secure-impl} can depend on the
  CR\$ $r$ (resp. the random oracle \oracle).
\end{proposition}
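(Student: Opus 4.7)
The plan is to exploit the observation, made explicitly in the paragraph preceding Definition~\ref{def:correct-impl}, that the strings $a$ and $c$ can be determined from the protocol's transcript. In a one-message bare protocol the entire transcript is the prover's single message $M$, so there exist deterministic functions $A(\cdot)$ and $C(\cdot)$, together with an acceptance predicate, such that the verifier either rejects $M$ or outputs $(A(M), C(M))$. In particular, once the prover's random tape is fixed, the challenge seen by the verifier is pinned down by $a$ alone.

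With this in hand I would build the attack as follows. Fix any input $a^* \in \bool^n$ (say $a^* = 0^n$). By $\epsilon$-correctness, $\verifier$ accepts on the honest message $\prover(a^*, r_P)$ with probability at least $1-\negl[n]$ over the internal randomness $r_P, r_V$, so by Markov's inequality there exists a fixed value $r_P^*$ for which the verifier accepts $M^* := \prover(a^*, r_P^*)$ with probability at least $1-\negl[n]$ over $r_V$ alone. Define the target function $\sff{c}^* : \bool^n \to \bool^m$ by $\sff{c}^*(a^*) = C(M^*)$ (arbitrary elsewhere). The dishonest prover $\dprover$ -- which takes no input at all -- is hard-coded simply to send the fixed message $M^*$. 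By construction the verifier either rejects or outputs exactly $(a^*, \sff{c}^*(a^*))$, so
\[\Pr[V = 1 \wedge C = \sff{c}^*(A)] \geq 1 - \negl[n],\]
which violates $\delta$-avoidance of $\sff{c}^*$ for every non-negligible $\delta$, contradicting Definition~\ref{def:secure-impl}.

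The extensions to the \crrs{} and ROM models are immediate: the transcript additionally contains the common random string $r$ (respectively, the prover's queries to the oracle $\oracle$), so the functions $A, C$, and hence $\sff{c}^*$, now depend on $r$ (resp.\ on $\oracle$). Since the proposition explicitly allows $\sff{c}$ to depend on $r$ or $\oracle$, the attacker simply specifies $\sff{c}_r^{*}$ (resp.\ $\sff{c}^{*,\oracle}$) for each value of the public randomness by the same rule as above, and the attack succeeds with the same overwhelming probability.

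The only conceptual subtlety is making precise that the challenge $c$ is a deterministic function of the transcript; I expect to simply invoke the sentence preceding Definition~\ref{def:correct-impl}, which already asserts exactly this. Once that point is granted, the rest of the argument -- a Markov averaging over $r_P$ followed by hard-coding the resulting message into $\dprover$ -- is completely routine.
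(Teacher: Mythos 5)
Your proof is correct and takes essentially the same approach as the paper: in a one-message protocol, once the prover's randomness is hard-coded, the challenge seen by the verifier is pinned down deterministically by the transcript, so the dishonest prover can always replay a fixed accepting message and the protocol fails to avoid the function determined by that message. The only cosmetic difference is that you anchor the attack to a single input $a^*$ (and an averaging argument over $r_P$), whereas the paper instead fixes an accepting random tape $s(a)$ for every $a$ and sets $\sff{c}(a) := c(a, s(a))$ across the whole domain; the underlying idea and the handling of the CR\$ and ROM cases are the same.
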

\begin{proof}
  Consider the message sent from the prover \prover{} to the verifier
  \verifier{}. Without loss of generality, it is of the form $(a,c,w)$ where $a$
  is \prover{}'s input, $c$ is the joint output and $w$ is additional
  information for \verifier{} to decide whether to accept or reject. Let
  $\Pi=(\prover,\verifier)$ (resp. $\Pi^r=(\prover^r,\verifier^r)$ and
  $\Pi^\oracle=(\prover^\oracle,\verifier^\oracle)$) be a correct implementation
  of \wotro{} in the bare model (resp. CR\$ model and ROM). Define the first
  message of the prover in each model by
  \begin{align*}
    P(a,s)&:= (a, c(a,s), w(a,s))\tag{bare}\\
    P^r(a,s)&:= (a, c^r(a,s,r), w^r(a,s,r))\tag{CR\$}\\
    P^{\oracle}(a,s)&:= (a, c^\oracle(a,s,v), w^\oracle(a,s,v))\tag{ROM}
  \end{align*}
where $s$ is the random tape of the prover, $r$ is the value of the CR\$ and
$v=(\oracle(a_1), \oracle(a_2), \dots, \oracle(a_{\kappa(n)}))$ where
$a_1,\dots,a_{\kappa(n)}\in\Gamma^n$ are chosen using $s$ for some upper bound
$\kappa(n)$ on the number of oracle queries performed by $\prover{}$ in
$\Pi^\oracle$.

Since the protocol is correct, it must hold that
\begin{equation}
  \Pr[V(P(A,S))=1]=\Pr[V^r(P^r(A,S))=1]= \Pr[V^\oracle(P^\oracle(A,S))=1]=1
\end{equation}
where the probability is taken over the values of $A$ and $S$. Then for each $a$
with non zero probability, there exist a value $s(a)$, $s^r(a)$ and
$s^\oracle(a)$ such that
\begin{equation}
  V(P(a,s(a)))= V^r(P^r(a,s^r(a)))= V^\oracle(P^\oracle(a,s^\oracle(a)))=1 \enspace.
\end{equation}
Define malicious prover $\tilde \prover{}$ (resp. $\tilde \prover{}^r$ and
$\tilde \prover{}^\oracle$) that on input $a$ uses random tape value $s(a)$ (resp.
$s^r(a)$ and $s^\oracle(a)$). Then the protocol $\Pi$ (resp. $\Pi^r$ and
$\Pi^\oracle$) does not avoid the functions $\sff{c}(a):= c(a,s(a))$ (resp.
$\sff{c}^r(a):= c^r(a,s^r(a),r)$ and $\sff{c}^\oracle(a):=
c^\oracle(a,s^\oracle(a),v)$\ ).
\qed
\end{proof}

\begin{proposition}\label{prop:crs-for-big-challenges}
  Let $m>n$. The protocol for $\wotro{}_{\Gamma}^{n,m}$ in the \crrs{} model where both
  parties output the CR\$ $r\in\Gamma^m$ for any $a\in\Gamma^n$ and \verifier{}
  always accepts is correct and $\delta$--secure, for $\delta = 1-|\Gamma|^{n-m}$.
\end{proposition}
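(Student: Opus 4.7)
The plan is to verify correctness and $\delta$--security directly from the definition, since the protocol is almost trivial: both parties output the CR\$ $r$ regardless of the prover's input $a$, and the verifier always accepts. For correctness, I would observe that since $r\in\Gamma^m$ is drawn uniformly from the CR\$, the induced conditional distribution $\Pi(c|a)$ is the uniform distribution on $\Gamma^m$ for every $a\in\Gamma^n$, so the protocol is $1$--correct in the sense of Definition~\ref{def:correct-impl} (and the verifier accepts with probability $1$).

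For security, I would argue as follows. Fix a target function $\sff{c}:\Gamma^n\rightarrow\Gamma^m$ and an arbitrary (possibly inefficient) dishonest prover $\dprover$. Since the verifier ignores any message and outputs $(A,C)$ with $C=r$, the random variable $C$ is simply the CR\$ $r$, and $V=1$ always. The only degree of freedom left to $\dprover$ is the choice of $A=A(r)$ as a function of the CR\$ (and its own randomness). Therefore
\[
\Pr[V=1 \wedge C=\sff{c}(A)] = \Pr_r[r=\sff{c}(A(r))] \leq \Pr_r[r\in\mathrm{Im}(\sff{c})].
\]
The key step is the final inequality: since $\sff{c}$ has domain $\Gamma^n$, its image contains at most $|\Gamma|^n$ elements, so $\Pr_r[r\in\mathrm{Im}(\sff{c})]\leq |\Gamma|^{n-m}$ when $r$ is uniform on $\Gamma^m$. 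Hence the protocol $\delta$--avoids $\sff{c}$ for $\delta=1-|\Gamma|^{n-m}$, and since $\sff{c}$ was arbitrary, it is $\delta$--secure.

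There is no real obstacle here; the only subtlety is noticing that the adversary's only power in this model is to choose $A$ adaptively as a function of $r$, and that this still cannot help unless $r$ happens to lie in the image of the (at most $|\Gamma|^n$--sized) range of the target function. Tightness would follow from taking $\sff{c}$ to be any injection $\Gamma^n\hookrightarrow\Gamma^m$, but this is not needed for the stated proposition.
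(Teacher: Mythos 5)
Your proof is correct and follows exactly the same route as the paper's: the dishonest prover's only hope is that the uniformly chosen CR\$ $r$ lands in the image of $\sff{c}$, which has at most $|\Gamma|^n$ elements out of $|\Gamma|^m$, giving the bound $|\Gamma|^{n-m}$. Your write-up is in fact slightly more explicit than the paper's about the adversary's adaptive choice of $A$ as a function of $r$, but the argument is the same.
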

\begin{proof}
    Correctness is obvious, and security is easy to prove as well: suppose that $\dprover$ wants to steer the output of the protocol towards some function $\sff{c}$. He must then look at the \crrs\ $r$, and announce an $a$ such that $\sff{c}(a)=r$. Hence, $r$ must happen to be in the image of $\sff{c}$. However, since $\sff{c}$ is a function from $\Gamma^n$ to $\Gamma^m$ and $m>n$, there are at most $|\Gamma|^n$ strings in the image of $\sff{c}$, and the probability that a uniformly chosen $r$ falls into that set is at most $|\Gamma|^{n-m}$.\qed
\end{proof}

\begin{proposition}
    Let $\Gamma$ be an arbitrary finite alphabet of size $q\geq 2$. Then, for any $m,n$ with $m\leq n$, there exists no $\exp{(-q^{n-m})}$--secure implementation of $\wotro_{\Gamma}^{n,m}$ in the \rom.
\end{proposition}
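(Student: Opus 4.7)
The plan is to exhibit, for every purported ROM protocol $\Pi=(\prover,\verifier)$ implementing $\wotro^{n,m}_\Gamma$, a family of dishonest provers $\{\tilde\prover_\sff{c}\}_{\sff{c}\in\mathcal{F}^{n,m}}$ indexed by target functions, such that averaged uniformly over $\sff{c}$ the attack hits $\sff{c}$ with probability strictly greater than $1-\exp(-q^{n-m})$. By averaging, some specific $\sff{c}^{*}$ achieves this bound, which violates $\exp(-q^{n-m})$--security.

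The attack $\tilde\prover_\sff{c}$ is an exhaustive-search attack enabled by public access to the random oracle $H$: it enumerates $a\in\Gamma^n$ (in any fixed order), and for each $a$ runs the honest prover $\prover$ internally on input $a$ relative to the true oracle $H$ using fresh, independent local randomness $r_a$, producing a candidate transcript $(a,c_a,v_a)$. As soon as $c_a=\sff{c}(a)$ it forwards this transcript to $\verifier$ and halts; if the enumeration completes without a match, it gives up. Since the internal simulation uses the very oracle that $\verifier$ will query, statistical correctness guarantees that the forwarded transcript is accepted except with negligible probability. Thus the only event preventing a hit is that no $a$ produced the correct $c_a$ during enumeration.

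Let $p_H(c\,|\,a)$ denote the probability, over the honest prover's coins, that the transcript produced on input $a$ with oracle $H$ yields $c$ and is accepted by $\verifier$. By statistical correctness we have $\sum_c p_H(c\,|\,a)\ge 1-\negl[n]$ for every $(H,a)$, and by the correctness of the challenge distribution, $\mathbb{E}_H[p_H(c\,|\,a)]=q^{-m}\pm\negl[n]$ for every pair $(a,c)$. Because the $r_a$ are independent across $a$, the failure probability factorises:
\[
\Pr[\text{fail}\mid H,\sff{c}]=\prod_{a\in\Gamma^n}\bigl(1-p_H(\sff{c}(a)\,|\,a)\bigr).
\]
The key manoeuvre is to average over $\sff{c}$ \emph{before} averaging over $H$: since $\sff{c}$ assigns an independent uniform value in $\Gamma^m$ to each input,
\[
\mathbb{E}_{\sff{c}}\!\bigl[\Pr[\text{fail}\mid H,\sff{c}]\bigr]=\prod_a\Bigl(1-q^{-m}\sum_c p_H(c\,|\,a)\Bigr)\le (1-q^{-m}(1-\negl))^{q^n},
\]
which, remarkably, is (essentially) \emph{independent of $H$}. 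Using $(1-x)^{N}<e^{-Nx}$ and absorbing the $\negl$ slack, this is strictly less than $\exp(-q^{n-m})$ for sufficiently large $n$. Hence there exists a specific $\sff{c}^{*}$ with $\Pr[\text{hit }\sff{c}^{*}]>1-\exp(-q^{n-m})$, contradicting $\exp(-q^{n-m})$--security.

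The only conceptual subtlety is that $p_H$ need not be close to uniform for a \emph{fixed} $H$, and correctness only gives uniformity on average. Averaging over $\sff{c}(a)$ first turns $\mathbb{E}_{\sff{c}(a)}[p_H(\sff{c}(a)\,|\,a)]$ into $q^{-m}\sum_c p_H(c\,|\,a)$, which is essentially $q^{-m}$ \emph{pointwise} in $H$, neatly bypassing the obstacle; everything else is a direct computation and a union-style probabilistic pigeonhole argument. (Note that the statement of Proposition~\ref{prop:crs-for-big-challenges} on the \crrs\ side confirms the bound is tight when $m>n$: there the attack succeeds with probability at most $q^{n-m}$, which matches the $\exp(-q^{n-m})$ rate once one observes the inefficient ROM attack simulates exponentially many honest runs.)
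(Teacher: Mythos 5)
Your proposal is correct and follows essentially the same route as the paper's proof: an exhaustive-search adversary that runs the honest prover on every input relative to the true oracle, an average over a uniformly random target function $\sff{c}$ exploiting the independence of the values $\sff{c}(a)$ to factorize the failure probability into $(1-q^{-m})^{q^n}<\exp(-q^{n-m})$, and a final averaging argument to extract a specific $\sff{c}^*$. Your treatment is somewhat more explicit than the paper's on two points it leaves implicit (that the verifier accepts the forwarded transcript by correctness, and that one must average over $\sff{c}(a)$ rather than assume $p_H(\cdot\,|\,a)$ is uniform for a fixed oracle), but the underlying argument is the same.
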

\begin{proof}
    We will show that a cheating prover that is unbounded in time can search for an $a$ that will satisfy $\verifier$. Consider a dishonest prover $\dprover$ who uses the following strategy: run the honest prover $\prover$ on all possible inputs $a$ in lexicographic order, and declare victory if it ever outputs $\sff{c}(a)$. We will also assume that function $\sff{c}(\cdot)$ is chosen uniformly at random, and show that the expected winning probability of the cheating prover is at least $1 - \exp(-q^{n-m})$. We have the following:
    \begin{align*}
        \Pr_{\mathcal{O}, \sff{c}}[\text{$\dprover$ loses}] &= \Pr_{\mathcal{O},\sff{c}}\left[ \text{$\dprover$ loses at step $a=0$} \wedge \text{$\dprover$ loses at $a=1$} \wedge \ldots \right]\\
        &= \prod_{a \in \Gamma^n} \Pr_{\mathcal{O},\sff{c}}\left[ \text{$\dprover$ loses at step $a$} \middle| \text{$\dprover$ loses at all steps before $a$}  \right]\\
        &= \prod_{a \in \Gamma^n} \Pr_{\mathcal{O},\sff{c}}\left[ \text{$\prover$ does not output $c(a)$ on input $a$} \middle| \text{$\dprover$ loses at all steps before $a$}  \right]\\
        &= \prod_{a \in \Gamma^n} \frac{q^m - 1}{q^m}\\
        &= \left( 1 - \frac{1}{q^m} \right)^{q^n}\\
        &= \left[ \left( 1 - \frac{1}{q^m} \right)^{q^m} \right]^{q^{n-m}}\\
        &< \exp(-q^{n-m}).
    \end{align*}
    since $\sff{c}(a)$ is chosen uniformly at random for each $a$. Hence, $\dprover$'s winning probability is at least $1 - \exp(-q^{n-m})$ as advertised, and there must exist a choice of function $\sff{c}(\cdot)$ that achieves this bound. \qed
\end{proof}

\begin{proposition}\label{prop:rom-for-big-challenges}
  The protocol for $\wotro{}_{\Gamma}^{n,m}$ in the \rom{} model where both
  parties output the $\mathcal{O}(a)$ for any $a\in\Gamma^n$ and \verifier{}
  always accepts is correct and statistically $\delta$--secure, for $\delta =
  1-|\Gamma|^{n-m}$.
\end{proposition}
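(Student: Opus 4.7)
The plan is to mirror the argument used for Proposition~\ref{prop:crs-for-big-challenges} but replace the CR\$ with the random oracle, relying only on the uniform independence of oracle values on distinct inputs. Correctness is immediate: when the honest prover on input $a$ simply announces $a$ and outputs $c=\oracle(a)$, the verifier always accepts, and since $\oracle(a)$ is uniformly distributed in $\Gamma^m$, the conditional distribution $\Pi(c|a)$ is exactly uniform, so there is nothing to show there.

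For security, I would fix an arbitrary target function $\sff{c}:\Gamma^n\to\Gamma^m$ (independent of $\oracle$, since the hypothesis of the proposition does not grant $\sff{c}$ access to $\oracle$) and upper-bound the probability that a dishonest prover $\dprover$ can announce some $a\in\Gamma^n$ making the verifier accept $c=\sff{c}(a)$. Since the verifier's output on input $a$ is deterministically $\oracle(a)$, the event that $\dprover$ succeeds is exactly the event that the $a$ chosen by $\dprover$ satisfies $\oracle(a)=\sff{c}(a)$. As $\sff{c}$ is independent of $\oracle$, I would bound this via a union bound over all $a\in\Gamma^n$:
\begin{equation*}
\Pr_{\oracle}\!\left[\exists a\in\Gamma^n:\oracle(a)=\sff{c}(a)\right]
\leq\sum_{a\in\Gamma^n}\Pr_{\oracle}[\oracle(a)=\sff{c}(a)]
=|\Gamma|^n\cdot|\Gamma|^{-m}=|\Gamma|^{n-m},
\end{equation*}
which uses only that $\oracle(a)$ is independent and uniform in $\Gamma^m$. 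Since whatever strategy $\dprover$ uses, the probability of $V=1\wedge C=\sff{c}(A)$ is at most the probability that such a witness $a$ exists at all, we obtain the claimed statistical $\delta$-security with $\delta=1-|\Gamma|^{n-m}$.

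There is no real obstacle here: the adversary is unconstrained in query complexity, but even an adversary that inspects the entire random function cannot do better than the union-bound estimate above. The only subtlety worth flagging in the write-up is that this argument explicitly requires $\sff{c}$ not to depend on $\oracle$ (consistent with the stipulation made in the preceding proposition that \emph{disallows} $\sff{c}$ from depending on $\oracle$); if one allowed adaptive $\sff{c}(\cdot)=\oracle^{-1}(\cdot)$-style dependencies, the bound would break, which is exactly why non-adaptivity appears as a hypothesis.
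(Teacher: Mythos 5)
Your proof is correct and matches the paper's intent: the paper proves this proposition by simply referring back to the CR\$ version (Proposition~\ref{prop:crs-for-big-challenges}) ``by considering $r=\mathcal{O}(a)$'', and your union bound over $a\in\Gamma^n$ with $\Pr_{\oracle}[\oracle(a)=\sff{c}(a)]=|\Gamma|^{-m}$ is exactly the natural expansion of that one-liner, yielding the same count of $|\Gamma|^n$ events of probability $|\Gamma|^{-m}$ as the image-counting argument in the CR\$ case. Your remark that $\sff{c}$ must be independent of $\oracle$ is also the right caveat and is consistent with the paper's preceding impossibility proposition.
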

  The proof is identical to that of
  Proposition~\ref{prop:crs-for-big-challenges} by considering $r=\mathcal{O}(a)$.

\begin{proposition}
  The protocol described in Proposition~\ref{prop:rom-for-big-challenges} is
  $1-\negl$--secure in the \rom{} against polynomial-time provers as long as $m$ is
  at least linear in $n$.
\end{proposition}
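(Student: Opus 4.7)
The plan is to reduce the adversary's winning condition to a search problem against a sparse ``marked'' set generated by the random oracle, and then invoke a standard query lower bound. First, I would fix an arbitrary target function $\sff{c}:\Gamma^n\to\Gamma^m$ (assumed, following the convention of Definition~\ref{def:secure-impl}, to be independent of the oracle $\oracle$) and an arbitrary polynomial-time adversary $\dprover^{\oracle}$ making at most $q=q(n)\in\poly[n]$ oracle queries and outputting some $A\in\Gamma^n$. Since the honest verifier always accepts and outputs $(A,\oracle(A))$, the adversary hits $\sff{c}$ exactly when $A$ lies in the random set $T_\oracle:=\{a\in\Gamma^n:\oracle(a)=\sff{c}(a)\}$. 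Because $\sff{c}$ is fixed and $\oracle$ is uniform, the events $\{a\in T_\oracle\}_{a\in\Gamma^n}$ are mutually independent, each with probability $|\Gamma|^{-m}$.

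The second step is to bound the probability of the adversary producing an element of $T_\oracle$ after $q$ queries. For a classical adversary, a union bound over the at most $q+1$ values of $a$ that the adversary touches---its $q$ oracle queries and its final output---directly gives $\Pr[A\in T_\oracle]\leq (q+1)\cdot |\Gamma|^{-m}$, which is $\negl[n]$ whenever $m\in\Omega(n)$. For a quantum polynomial-time adversary, I would instead invoke the standard optimality of Grover search (the BBBV hybrid-method lower bound, or equivalently Ambainis' adversary argument): any $q$-query quantum algorithm that locates a marked element in an $N$-size database where each element is independently marked with probability $p$ succeeds with probability at most $O(q^2 p)$. Plugging in $p=|\Gamma|^{-m}$ yields $O(q^2|\Gamma|^{-m})$, which is again $\negl[n]$ whenever $m$ is at least linear in $n$.

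Combining the two cases, no polynomial-time prover wins with more than negligible probability against any fixed $\sff{c}$, which is exactly $(1-\negl[n])$--security in the sense of Definition~\ref{def:secure-impl}. The only mildly subtle point is the clean invocation of the quantum search lower bound over a \emph{random} marked set of unknown size rather than a worst-case marked set of a fixed size; however, this reduces to the worst-case statement by conditioning on $|T_\oracle|$ and applying a Markov bound to the tail event $|T_\oracle|\gg |\Gamma|^{n-m}$, so no real obstacle is anticipated.
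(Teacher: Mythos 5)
Your proof is correct, and its classical core is exactly the paper's argument: fix $\sff{c}$ independently of the oracle, observe that hitting $\sff{c}$ means producing an element of the sparse random set $T_\oracle=\{a:\oracle(a)=\sff{c}(a)\}$, and union-bound over the polynomially many points the prover touches to get $\poly[n]\cdot|\Gamma|^{-m}=\negl[n]$ for $m\in\Omega(n)$. Two remarks on where you diverge. First, you include the prover's \emph{final output} as a $(q+1)$-st point in the union bound; the paper only union-bounds over the oracle queries themselves, which strictly speaking leaves open the (trivially handled, but unaddressed) case where $\dprover$ outputs an $a$ it never queried — your version is the cleaner statement of the same argument. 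Second, and more substantively, you add a quantum branch via the BBBV/Grover optimality bound $O(q^2 p)$ for search over a set whose elements are independently marked with probability $p=|\Gamma|^{-m}$, reducing the random-marked-set case to the worst case by conditioning on $|T_\oracle|$ and a Markov tail bound. The paper's proof is purely classical (its random variables $A_i$ are classical adaptive queries), so in a paper whose computational adversaries are elsewhere \qpt{}, your addition genuinely strengthens the proposition rather than merely restating it; the cost is reliance on a quantum query lower bound that the paper never needs to import. Both routes give $(1-\negl[n])$--security for $m$ linear in $n$; yours covers quantum provers, the paper's does not.
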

\begin{proof}
    Let $\ell(n)$ be a polynomial which bounds the number of oracle queries that $\dprover$ can make. Furthermore, without loss of generality we will assume that $\dprover$ never makes the same oracle call twice. Then, given any function $\sff{c}:\Gamma^n \rightarrow \Gamma^m$, in order to cheat successfully, $\dprover$ must be able to find an $a$ such that $\mathcal{O}(a) = \sff{c}(a)$. 
    
    Now, let $A_1,\cdots,A_{\ell(n)}$ be random variables taking values in
    $\Gamma^n$ where $A_i$ represents the $i$th query to the oracle (if
    $\dprover$ makes fewer than $\ell(n)$ queries, let $A_i$ be any string that
    was not queried so far). These random variables are functions of the oracle
    $\mathcal{O}$, in that they can depend on the results of previous queries. We then
    have by the union bound that
    \begin{align*}
        \Pr_\mathcal{O}\left[ \text{$\dprover$ wins} \right] &\leqslant \Pr\left[ \mathcal{O}(A_1) = \sff{c}(A_1) \vee \mathcal{O}(A_2) = \sff{c}(A_2) \vee \ldots \vee \mathcal{O}(A_{\ell(n)}) = \sff{c}(A_{\ell(n)}) \right]\\
            &\leqslant \sum_{i=1}^{\ell(n)} \Pr\left[ \mathcal{O}(A_i) = \sff{c}(A_i) \right]\\
            &= \ell(n) q^{-m}\\
            &\leqslant \negl.
    \end{align*}
    \qed
\end{proof}
\begin{proposition}
There are one-message implementations  of $\wotro^{n,n}_{\Gamma}$ 
 arbitrarily close to be $\frac{1}{\EE}$--avoiding against unbounded provers in the \crrs\ model.
 \end{proposition}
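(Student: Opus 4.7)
The plan is to instantiate the protocol with a $k$--wise independent hash function family, using the seed as the shared random string. This lets us approximate, with polynomial-length randomness, the ideal protocol in which the CRRS would be a uniformly random function $\Gamma^n \to \Gamma^n$, for which the bound $1 - (1 - 1/|\Gamma|^n)^{|\Gamma|^n} \to 1 - 1/e$ on the adversary's success probability is immediate.

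Concretely, I would fix a $k$--wise independent family $\{h_r : \Gamma^n \to \Gamma^n\}_{r \in R}$ (constructed, say, by polynomial hashing over $\mathbb{F}_{|\Gamma|^n}$), so that each seed $r \in R$ has length $O(k \cdot n \log |\Gamma|)$ and is a valid CRRS. The honest prover on input $a \in \Gamma^n$ sends $(a, h_r(a))$; the verifier accepts iff $c = h_r(a)$. Correctness is immediate: pairwise independence implies that $h_r(a)$ is uniformly distributed in $\Gamma^n$ for any fixed $a$, so $\Pi(c \mid a) = |\Gamma|^{-n}$, and the honest verifier always accepts.

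For security, fix any target function $\sff{c} : \Gamma^n \to \Gamma^n$. An unbounded dishonest prover $\dprover$ wins iff he can produce $(a, c)$ with both $c = h_r(a)$ and $c = \sff{c}(a)$; since he may choose $a$ based on $r$, his success probability is $\Pr_r[\exists a : h_r(a) = \sff{c}(a)]$. Let $E_a := \{h_r(a) = \sff{c}(a)\}$; by $k$--wise independence of the hash family, for any $j \leq k$ and any $j$ distinct $a_1, \dots, a_j$ we have $\Pr[E_{a_1} \cap \dots \cap E_{a_j}] = |\Gamma|^{-nj}$. Applying Bonferroni's inequality with an odd cutoff $k$ yields
\[
\Pr\!\left[\bigcup_{a \in \Gamma^n} E_a\right] \leq \sum_{j=1}^{k} (-1)^{j+1} \binom{|\Gamma|^n}{j} |\Gamma|^{-nj}.
\]
As $|\Gamma|^n \to \infty$ each term converges to $(-1)^{j+1}/j!$, and the truncated sum $1 - \sum_{j=0}^{k} (-1)^j/j!$ converges to $1 - 1/e$ as $k \to \infty$ with error $O(1/(k+1)!)$. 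Hence for any $\varepsilon > 0$ it suffices to take $k$ constant (depending only on $\varepsilon$) and $n$ large enough, and the resulting protocol is statistically $(1/e - \varepsilon)$--avoiding against all functions $\sff{c}$.

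The main obstacle I anticipate is justifying that polynomial-length randomness suffices to mimic the behaviour of a fully random function for this particular quantity. The key tool is exactly the Bonferroni inequalities, which depend only on the low-order marginals of the events $\{E_a\}$ and hence are compatible with $k$--wise independence for constant $k$; the remaining work is a routine verification that the finite-$|\Gamma|^n$ correction, which replaces each $\binom{|\Gamma|^n}{j}|\Gamma|^{-nj}$ by $1/j!$, contributes at most $\varepsilon/2$ once $n$ is taken sufficiently large.
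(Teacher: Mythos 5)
Your proof is correct, but it takes a genuinely different route from the paper's. The paper partitions the domain $\Gamma^n$ into $\ell(n)$ blocks, lets the CR\$ consist of $\ell(n)$ independent uniform strings $r_1,\dots,r_{\ell(n)}\in\Gamma^n$, and assigns challenge $r_i$ to every input in block $i$; the optimal unbounded prover then wins iff some $r_i$ lands in the (at most $p^n/\ell(n)$-element) image of $\sff{c}$ restricted to block $i$, which happens with probability exactly $1-(1-1/\ell(n))^{\ell(n)}\searrow 1-1/\EE$. Your construction instead derandomizes the ideal ``fully random function'' protocol via a $k$-wise independent family and controls the union $\Pr_r[\exists a:\,h_r(a)=\sff{c}(a)]$ with the odd-order Bonferroni truncation, which only consults the $j$-wise marginals for $j\le k$ and hence is insensitive to the replacement of a truly random function by a $k$-wise independent one; the two error terms (truncation $O(1/(k+1)!)$ and the finite-$N$ correction $O(k^3/|\Gamma|^n)$ from $\binom{N}{j}N^{-j}$ versus $1/j!$) are handled correctly. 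The paper's analysis buys an exact characterization of the optimal cheating probability with an essentially trivial construction (a table lookup into the CR\$), whereas your argument buys a more standard, modular derandomization template and a protocol whose challenge actually ``looks like'' a hash of $a$; both yield $\delta$-avoidance for $\delta$ arbitrarily close to $1/\EE$, which is all the proposition requires. One small point worth stating explicitly in a final write-up: the unbounded prover's success probability is \emph{exactly} $\Pr_r[\exists a:\,h_r(a)=\sff{c}(a)]$ because acceptance forces $c=h_r(a)$, so the Bonferroni bound on that union is precisely the quantity defining $\delta$-avoidance, and correctness is immediate since $h_r(a)$ is exactly uniform for each fixed $a$.
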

 \begin{proof}
 Let $\ell(n)$ be the length of the \crrs\ (i.e. $r\in_R \Gamma^{\ell(n)}$) upper bounded by some polynomial.
 Let $\prover^r:\Gamma^n\rightarrow\Gamma^m\times\Gamma^*$ denote  
 \prover's message to \verifier\ upon \crrs\ $r$ and input $a\in \Gamma^n$. For $a\in \Gamma^n$ and 
 \crrs\ $r\in \Gamma^{\ell(n)}$, we have $\prover^r(a)=(c(r,a) ,v(r,a))$ which defines
 announcement $(a,c(r,a),v(r,a))$ to \verifier. 
 The verifier's algorithm $\verifier^r:\Gamma^{n}\times\Gamma^n\times\Gamma^*\rightarrow
\{0,1\}$ upon \crrs\ $r$ accepts $(\alpha,\beta,\gamma)$ when $\verifier^r(\alpha,\beta,\gamma)=1$. 
The prover's algorithm can be considered deterministic given $r$,
all randomness being provided by $r$. 
For $\{1,\ldots, {p^n}\}= \Gamma^n$ an enumeration of all  elements
in $\Gamma^n$,
let
\[ C^r := c(r,1) \| c(r, {2}) \| c(r, 3) \| \ldots \| c(r,{p^n}) 
\]
be the sequence of all challenges announced by $\prover$ upon \crrs\ $r$,
one for each possible input $a\in \Gamma^n$. Let $\mathcal{C}:= \{C^r\}_{r\in \Gamma^{\ell(n)}}$.
For $\omega\in (\Gamma^n)^{p^n}$, we define
\[ H_\omega := \left\{ C \in \mathcal{C} \,|\, \left(\exists  j\in [p^n]\right)\left[C_j = \omega_j\right]\right\}
\]
 as the set of sequences containing challenges  {\em hitting} $\omega$
 somewhere. If $\Pi$ is $\delta$--avoiding then for all $\omega\in (\Gamma^n)^{p^n}$,
 $|H_\omega| \leq  \delta \cdot p^{\ell(n)}$.
 
 We define $\Pi$ and then show it is $\frac{3}{4}$--avoiding 
 using a \crrs\ $r\in (\Gamma^n)^2$.  $\Pi$ is simply defined from 
 $r=r_1 \| r_2\in (\Gamma^n)^2$ as
 \[  C^r =\underbrace{r_1,r_1,\ldots, r_1}_{\text{$\frac{p^n}{2}$ times}}, \underbrace{r_2,r_2,\ldots, r_2}_{\text{$\frac{p^n}{2}$ times}} \enspace.
 \]
 We denote the elements of $\Gamma^n$ by $\{1,2,\ldots, p^n\}$.
Let $\omega^*\in (\Gamma^n)^{p^n}$ be defined as
 \[  \omega^* := 1,2,3,\ldots, \frac{p^n}{2}, 1,2,3,\ldots, \frac{p^n}{2} \enspace. 
 \]
 It is not difficult to see that $\omega^*$ maximizes the probability to be hit by $C^R$. 
 We have,
 \begin{align*}
\prob{C^R\in H_{\omega^*}} &= \prob{\left(R_1 \leq \frac{p^n}{2}\right) \vee \left(R_2 \leq \frac{p^n}{2}\right) } \\
 &= 1- \prob{ \left(R_1 > \frac{p^n}{2}\right) \wedge \left(R_2 > \frac{p^n}{2}\right) } \\
 &= 1-\frac{1}{4}=  \frac{3}{4}\enspace.
 \end{align*}
 By considering longer \crrs\ $r=r_1,r_2,\ldots, r_{\ell(n)}$ where $r_i\in \Gamma^n$, it is possible to get arbitrarily close 
 to a correct $\frac{1}{\EE}$--avoiding scheme with 
 \[C^r= \underbrace{r_1,r_1,\ldots, r_1}_{\text{$\frac{p^n}{\ell(n)}$ times}}, 
 \underbrace{r_2,r_2,\ldots, r_2}_{\text{$\frac{p^n}{\ell(n)}$ times}}, \ldots, 
 \underbrace{r_{\ell(n)},r_{\ell(n)},\ldots, r_{\ell(n)}}_{\text{$\frac{p^n}{\ell(n)}$ times}} \enspace .\]
\qed
\end{proof}

\end{document}